\newcommand{\blind}{1}
\newtheorem{assumption}{Assumption}
\newtheorem{theorem}{Theorem}
\newtheorem{definition}{Definition}
\newtheorem{lemma}{Lemma}
\newtheorem{corollary}{Corollary}
\newtheorem*{remark}{Remark}
\newcommand{\R}[1][q]{r_{#1}}
\newcommand{\aop}{a^*}
\newcommand{\bop}{b^*}
\newcommand{\tauid}[1][q]{\tau_i^d(#1)}
\newcommand{\taud}[1][q]{\tau^d(#1)}
\newcommand{\tauidhat}[1][q]{\hat \tau_i^d(#1)}
\newcommand{\taudhat}[1][q]{\hat \tau^d(#1)}
\newcommand{\tauis}[1][z]{\tau_i^s(#1)}
\newcommand{\taus}[1][z]{\tau^s(#1)}
\newcommand{\tauishat}[1][z]{\hat \tau_i^s(#1)}
\newcommand{\taushat}[1][z]{\hat \tau^s(#1)}
\newcommand{\yit}[3]{Y_{i,t}({#1} \mathbf 1,{#2}\mathbf {#3})}
\newcommand{\yitp}[3]{Y_{i,t'}({#1} \mathbf 1,{#2}\mathbf {#3})}
\newcommand{\yjt}[3]{Y_{j,t}({#1} \mathbf 1,{#2}\mathbf {#3})}
\newcommand{\yjtp}[3]{Y_{j,t'}({#1} \mathbf 1,{#2}\mathbf {#3})}
\newcommand{\Iit}[3]{I_{i,t}({#1} \mathbf 1,{#2}\mathbf {#3})}
\newcommand{\Pit}[3]{\mathrm{pr}_{i,t}({#1} \mathbf 1,{#2} \mathbf {#3})}
\newcommand{\yitmis}[4]{Y_{i,t}(#1 \mathbf 1_{#4+1}, {#2}\mathbf {#3}_{#4+1})}
\newcommand{\Iitmis}[4]{I_{i,t}(#1 \mathbf 1_{#4+1}, {#2}\mathbf {#3}_{#4+1})}
\newcommand{\Pitmis}[4]{\mathrm{pr}_{i,t}(#1 \mathbf 1_{#4+1},{#2}\mathbf {#3}_{#4+1})}
\newcommand{\yik}[3]{\tilde Y_{i,k}({#1} \mathbf 1, {#2} \mathbf {#3})}
\newcommand{\yk}[3]{\tilde {\mathbf Y}_{k}({#1} \mathbf 1,{#2} \mathbf {#3})}
\newcommand{\yikc}[3]{\check Y_{i,k}({#1} \mathbf 1, {#2} \mathbf {#3})}
\newcommand{\yikcplus}[3]{\check Y_{i,k+1}({#1} \mathbf 1, {#2} \mathbf {#3})}
\newcommand{\yjkc}[3]{\check Y_{j,k}({#1} \mathbf 1, {#2} \mathbf {#3})}
\newcommand{\ykc}[3]{\check {\mathbf Y}_{k}({#1} \mathbf 1,{#2} \mathbf {#3})}
\newcommand{\ykcplus}[3]{\check {\mathbf Y}_{k+1}({#1} \mathbf 1,{#2} \mathbf {#3})}
\newcommand{\ykd}[3]{\Delta {\mathbf Y}_{k}({#1} \mathbf 1,{#2} \mathbf {#3})}
\newcommand{\Iik}[3]{\tilde{I}_{i,k}({#1} \mathbf 1, {#2} \mathbf {#3})}
\newcommand{\Iikc}[3]{\check{I}_{i,k}({#1} \mathbf 1, {#2} \mathbf {#3})}
\newcommand{\Ijkc}[3]{\check{I}_{j,k}({#1} \mathbf 1, {#2} \mathbf {#3})}
\newcommand{\Ik}[3]{\tilde {\mathbf I}_{k}({#1} \mathbf 1,{#2} \mathbf {#3})}
\newcommand{\Ikc}[3]{\check {\mathbf I}_{k}({#1} \mathbf 1,{#2} \mathbf {#3})}
\newcommand{\Ikcplus}[3]{\check {\mathbf I}_{k+1}({#1} \mathbf 1,{#2} \mathbf {#3})}
\newcommand{\ykco}{\check {\mathbf Y}_{k}}
\newcommand{\ykcoplus}{\check {\mathbf Y}_{k+1}}
\newcommand{\ykdo}{\Delta {\mathbf Y}_{k}}
\newcommand{\taudorder}[1][q]{\tau^d_{[#1]}(q)}
\newcommand{\taudhatorder}[1][q]{\hat \tau^d_{[#1]}(q)}
\newcommand{\tausorder}[1][q]{\tau^s_{[#1]}(z)}
\newcommand{\taushatorder}[1][q]{\hat \tau^s_{[#1]}(z)}
\DeclareMathOperator*{\argmin}{arg\,min}
\begin{document}

\def\spacingset#1{\renewcommand{\baselinestretch}%
{#1}\small\normalsize} \spacingset{1}


\if1\blind
{
  \title{\bf Minimax Optimal Design with Spillover and Carryover Effects}
  \author{Haoyang Yu\\
			Department of Statistics and Data Science, Tsinghua University, \\
            Beijing, 100084, China\vspace{.5cm}\\
   Wei Ma\\
   Institute of Statistics and Big Data,  Renmin University of China, \\
   Beijing, 100872, China\vspace{.5cm}\\
	Hanzhong Liu\thanks{
    Corresponding author: lhz2016@tsinghua.edu.cn. Dr. Liu was supported by \textit{the National Natural Science Foundation of China (12071242).}}\\
			Department of Statistics and Data Science, Tsinghua University, \\ Beijing, 100084, China\vspace{.5cm}\\
			}
    \date{}
  \maketitle
}\fi

\if0\blind
{
	\title{\bf Minimax Design with Spillover and Carryover Effects}
	\date{}
	\maketitle
}\fi


\begin{abstract}
In various applications, the potential outcome of a unit may be influenced by the treatments received by other units, a phenomenon known as interference, as well as by prior treatments, referred to as carryover effects. These phenomena violate the stable unit treatment value assumption and pose significant challenges in causal inference. To address these complexities, we propose a minimax optimal experimental design that simultaneously accounts for both spillover and carryover effects, enhancing the precision of estimates for direct and spillover effects. This method is particularly applicable to multi-unit experiments, reducing sample size requirements and experimental costs. We also investigate the asymptotic properties of the Horvitz--Thompson estimators of direct and spillover effects, demonstrating their consistency and asymptotic normality under the minimax optimal design. To facilitate valid inferences, we propose conservative variance estimators. Furthermore, we tackle the challenges associated with potential misspecifications in the order of carryover effects. Our approach is validated by comprehensive numerical studies that demonstrate superior performance compared to existing experimental designs.
\end{abstract}

\noindent%
{\it Keywords:} carryover effects; causal inference; design-based inference; minimax optimal design; spillover effects.
\vfill

\newpage
\spacingset{1.9} 

\section{Introduction}

Causal inference is a fundamental aspect of statistics and data science, providing essential tools for understanding intervention effects across various fields. The potential outcomes framework, introduced by \citet{Neyman1923} and popularized by \citet{rubin1974}, is central to this methodology. A key assumption within this framework is the Stable Unit Treatment Value Assumption (SUTVA) \citep{rubin1980}, which posits that a unit's potential outcomes are unaffected by the treatment assignments of others, thus ruling out interference. While SUTVA simplifies causal effect estimation, it often overlooks the complexities of real-world interactions.
In practice, interactions among units are both common and significant. For instance, in public health, vaccination protects not just the unit but also enhances herd immunity, affecting the health outcomes of the members of the same family or community. Similarly, in economics, the policy in one region may influence the behaviors of other areas. 


To address scenarios where SUTVA does not hold, researchers have developed adaptations to causal inference methods that recognize and account for interactions. Specific assumptions regarding interference, such as partial interference and stratified interference, are often employed to manage interdependencies and structure interactions within groups or clusters \citep{hudgens2008, tchetgen2012, liu2014}. Partial interference posits that a unit's potential outcomes are influenced only by treatment assignments within its own cluster, eliminating interference between different clusters. In addition, stratified interference indicates that a unit's potential outcomes depend solely on its own treatment and the proportion of treated units within the same cluster. These assumptions are essential for identifying both direct and spillover effects.
\citet{hudgens2008} defined direct effects as the impact of an individual's treatment, while spillover effects refer to the influence of others' treatments, proposing a two-stage design for estimating both. \citet{tchetgen2012} expanded these methods to observational studies and introduced a valid variance estimator. Additionally, \citet{liu2014} investigated the asymptotic properties of causal effect estimators and the construction of confidence intervals. 
Building on these concepts, \citet{aronow2017} utilized ``exposure mapping'' to analyze various forms of interference, which has been applied to complex systems, including network interference \citep{leung2020} and scenarios with unknown interference \citep{savje2021}.

On the other hand, causal inference research increasingly emphasizes the importance of temporal dependencies, referred to as carryover effects, where potential outcomes are influenced by both current and past treatments \citep[see, e.g.,][]{boruvka2018,bojinov2019,bojinov2021,viviano2023}. For instance, a patient's response to treatment can significantly depend on their treatment history, complicating the estimation of causal effects.
Many studies simplify their analysis by assuming ``no carryover effects'', which can neglect real-world complexities and result in biased estimates. 


The presence of carryover effects complicates the process of randomization and experimental design, shifting the focus from merely determining how to randomize treatments to also considering when to randomize them in order to ensure accurate estimates of causal effects. This added complexity has spurred research into optimal design strategies, particularly minimax optimal designs. These minimax optimal designs seek to minimize the variance of effect estimators in worst-case scenarios \citep{bojinov2023, ni2023}, thereby improving the reliability and robustness of experimental results. Moreover, from an optimization perspective, minimax optimal designs can be interpreted as solutions to a class of robust optimization problems, as highlighted by \citet{zhao2024}.
\citet{bojinov2023} advanced the field by developing a minimax optimal design that accommodates carryover effects. Their approach assumes uniform treatment assignment across all units at each time point, mirroring a single-unit experiment. 

However, the minimax optimal design that facilitates the identification of spillover and carryover effects has yet to be thoroughly explored, which is crucial for policy evaluation across various domains. For example, in public health, particularly within vaccination programs, it is essential to understand the dynamics of herd immunity, where protective effects transcend the vaccinated individuals to benefit families and broader communities. Similarly, in the business sector, platforms like ride-sharing and e-commerce face significant challenges in evaluating incentive strategies, as subsidies or promotions create complicated networks of influence where one user's behavior affects others both spatially and temporally. Educational environments also present similar complexities, where the assessment of intervention programs must account for both peer effects among students and the temporal persistence of learning outcomes, especially in evaluating educational innovations. These widespread challenges across interconnected systems necessitate a rigorous framework that can precisely measure and optimize intervention strategies while accounting for both spillover and carryover effects.

To address this gap, we propose a minimax optimal design that effectively integrates spillover and carryover effects within a multi-unit two-stage framework. The first stage involves assigning treated probabilities over a specified time period, while the second stage entails assigning treatment statuses to each unit based on the treated probability. This minimax optimal design aims to identify the optimal time points for randomization to minimize the weighted sum of the mean squared errors of causal effect estimators under worst-case scenarios. Notably, while the design proposed by \citet{bojinov2023} is pioneering for single-unit experiments, it may not preserve its minimax optimality in multi-unit experiments under certain practical scenarios, such as when the order of carryover effects is large or when there is a substantial emphasis on estimating direct effects. Our proposed design addresses this issue and offers a broader generalization that maintains optimality across these challenging scenarios.




Our second major contribution is the analysis of the asymptotic behavior of the Horvitz--Thompson estimators for both direct and spillover effects, along with establishing estimable upper bounds for their variances, under the proposed minimax optimal design. We demonstrate that these estimators are consistent and asymptotically normal as the duration of observation or population size approaches infinity at an appropriate rate. Additionally, we address the potential misspecification of the order of carryover effects. We show that our estimators remain asymptotically normal despite these misspecifications and introduce a method for accurately identifying the order of carryover effects.

To validate our theoretical findings, we conduct comprehensive simulation studies and application-based data analysis, demonstrating the practical efficacy of our proposed design. These results show that the proposed minimax optimal design significantly outperforms existing designs, with the Horvitz--Thompson estimators remaining unbiased and asymptotically normal across a range of scenarios.


The remainder of this paper is organized as follows: In \Cref{sec:setup}, we introduce the framework and notation. In \Cref{sec:minimaxdesign}, we derive unbiased causal effect estimators and present the minimax optimal design. In \Cref{sec:inference}, we examine the asymptotic behavior of the causal effect estimators. In \Cref{sec:simulation} and \Cref{sec:empirical}, we evaluate the performance of the proposed methods through simulation studies and application-based analysis, respectively. We conclude the paper in \Cref{sec:conclusion}. Proofs are relegated to the Supplementary Material.


\section{Framework and notation}
\label{sec:setup}

\subsection{Notation and assumptions}

Consider a public health program aimed at evaluating the effectiveness of a new medication in preventing the spread of an infectious disease. In this study, let $ N $ denote the number of participants enrolled, and $ T $ represent the number of days over which the participants are monitored for their exposure to either the medication or a placebo. At each time point $ t $ (where $ t = 1, \dots, T $), participants are assigned to either the treatment group (receiving the medication) or the control group (receiving the placebo). The assignment status for each participant $ i $ at time $ t $ is denoted by a binary variable $ Z_{i,t} $, where $ Z_{i,t} = 1 $ indicates that participant $ i $ receives the treatment on day $ t $, and $ Z_{i,t} = 0 $ indicates that the placebo is received.
The treatment path for participant $ i $ over a specified time interval from $ t_1 $ to $ t_2 $ is represented by the vector $ \mathbf{Z}_{i,t_1:t_2} = (Z_{i,t_1}, Z_{i,t_1+1}, \dots, Z_{i,t_2})^{\top} $. Furthermore, the treatment matrix $ \mathbf{Z}_{1:N,t_1:t_2} = (\mathbf{Z}_{1,t_1:t_2}, \mathbf{Z}_{2,t_1:t_2}, \dots, \mathbf{Z}_{N,t_1:t_2})^{\top} $ is constructed to present the treatment assignments for all participants over the specified time window. Each row of this matrix corresponds to an individual participant, while each column signifies a distinct time point, thereby providing a comprehensive overview of the treatment assignment dynamics for the entire population throughout the time period.

We define causal effects using the Neyman--Rubin potential outcomes framework \citep{Neyman1923,rubin1974}. Let $ Y_{i,t}(\mathbf{Z}_{1:N,1:T}) $ denote the potential outcome of unit $ i $ at time $ t $ under the entire treatment matrix of all units over the whole time period. Since the number of potential outcomes grows exponentially with the number of units and time periods, it becomes challenging to identify causal effects. To manage this complexity, the traditional causal inference literature often invokes the Stable Unit Treatment Value Assumption (SUTVA) \citep{rubin1980}, which assumes that (i) a unit’s outcome is influenced only by its own treatment (no interference), and (ii) the outcome does not vary depending on how the treatment is administered (no hidden variations).

However, in public health and various other applications, particularly in the context of community-wide interventions, the ``no interference'' assumption required by SUTVA may not be applicable. For instance, when a participant receives a medication to prevent disease, the benefits may extend beyond the individual, potentially reducing transmission to others and resulting in spillover effects. Furthermore, if the treatment induces long-term effects, such as enhancing immunity over time, this can lead to carryover effects, where the impact of a prior treatment endures and continues to influence subsequent outcomes.

In this paper, we relax SUTVA by allowing for both spillover and carryover effects. Specifically, we assume that a unit's outcome at any time $ t $ can be influenced not only by its own treatment but also by the treatments received by others, as well as by their past treatments. Below, we outline three key assumptions that form the basis of our analysis.

\begin{assumption}[Non-anticipativity]
    \label{assumption.nonanticipativity}
    The potential outcome for any unit $i$ at time $ t $ is not influenced by future treatment assignments of any unit. Specifically, for any $ t \in [T-1] = \{1,\dots,T-1\} $ and for all units $ i = 1, \dots, N $, we have
    $Y_{i,t}(\mathbf{Z}_{1:N,1:t}, \mathbf{Z}'_{1:N,(t+1):T}) = Y_{i,t}(\mathbf{Z}_{1:N,1:t}, \mathbf{Z}''_{1:N,(t+1):T})$
    for any $ \mathbf{Z}_{1:N,1:t}$, $ \mathbf{Z}'_{1:N,(t+1):T}$ and $ \mathbf{Z}''_{1:N,(t+1):T}$.
\end{assumption}

\Cref{assumption.nonanticipativity} stipulates that the potential outcome for a unit at time $ t $ depends only on the treatment assignments up to and including time $ t $, and is independent of future treatment assignments. This assumption is common in dynamic treatment regimes and has been used in previous studies \citep{bojinov2019, bojinov2023, han2024}. Under \Cref{assumption.nonanticipativity}, the potential outcome can be simplified to depend only on past and present treatment assignments, denoted as $ Y_{i,t}(\mathbf{Z}_{1:N,1:t}) $. 

\begin{assumption}[$m$-carryover effects]
    \label{assumption.carryover}
    The potential outcome for any unit $i$ at time $t$ depends solely on the treatment history of the past $m+1$ time periods. Specifically, for $t = m+1, \ldots, T$ and $i = 1, \ldots, N$, we have
    $Y_{i,t}(\mathbf{Z}'_{1:N,1:(t-m-1)}, \mathbf{Z}_{1:N,(t-m):t}) = Y_{i,t}(\mathbf{Z}''_{1:N,1:(t-m-1)}, \mathbf{Z}_{1:N,(t-m):t})$
    for any $\mathbf{Z}_{1:N,(t-m):t}$, $\mathbf{Z}'_{1:N,1:(t-m-1)}$, and $\mathbf{Z}''_{1:N,1:(t-m-1)}$.
\end{assumption}

\Cref{assumption.carryover} implies that a unit's potential outcome at a given time is influenced exclusively by the treatment matrix within a pre-defined temporal window. This assumption is essential in capturing carryover effects and is prevalent in related studies \citep{basse2018, imai2021a, bojinov2021, jiang2023, bojinov2023, han2024}. Under Assumptions~\ref{assumption.nonanticipativity} and \ref{assumption.carryover}, the potential outcome simplifies to $Y_{i,t}(\mathbf{Z}_{1:N,(t-m):t})$ when $t \geq m+1$.

\begin{assumption}[Stratified interference]
    \label{assumption.stratifiedinterference}
    The potential outcome for unit $i$ is influenced by other units through treated probabilities of the entire population. That is, for $t\in[m+1,T]$,
    $Y_{i,t}(\mathbf{Z}_{1:N,(t-m):t}) = Y_{i,t}(\mathbf{Z}_{1:N,(t-m):t}')$ if $\mathbf Z_{i,(t-m):t} =\mathbf Z_{i,(t-m):t}'$ and $E(N^{-1}\sum_{j=1}^NZ_{j,t'}) = E(N^{-1}\sum_{j=1}^NZ_{j,t'}')$ for $t' \in [t-m,t]$.
\end{assumption}

\Cref{assumption.stratifiedinterference} extends the stratified interference assumption \citep{hudgens2008, tchetgen2012, liu2014, basse2018, imai2021a, jiang2023} to account for carryover effects. It posits that units are influenced by others only through treated probabilities during the time window $[t-m,t]$. This assumption is meaningful in various real-world contexts. For instance, in vaccination programs, a higher community vaccination rate reduces disease transmission risks, benefiting both vaccinated and unvaccinated units, a phenomenon known as herd immunity. In such cases, a unit's potential outcome is not only influenced by their own treatment status, but also by the community's overall vaccination rate, i.e., the treated probabilities of the entire population. 

\begin{remark}
    In \Cref{assumption.stratifiedinterference}, we stipulate that the potential outcome of each unit is influenced by other units solely through the expected proportion of treated units. However, in certain applications, it might be more fitting to assume that the potential outcome depends on the actual proportion of treated units, i.e., $N^{-1}\sum_{j=1}^N Z_{j,t'}$ instead of its expectation. This distinction tends to be negligible when $N$ is large.
\end{remark}

Assume that at each time point $t$, all units share the same treated probability, denoted by $Q_t = E(Z_{i,t})$ for $i=1,\ldots,N$. Let $\mathbf{Q}_{(t-m):t}$ represent the treated probability path over the time window $[t-m,t]$.
Under Assumptions~\ref{assumption.nonanticipativity}--\ref{assumption.stratifiedinterference}, the potential outcome is further simplified to $Y_{i,t}(\mathbf{Q}_{(t-m):t},  \mathbf{Z}_{i,(t-m):t})$ when $t\geq p+1$. Collectively, these three assumptions extend our analytical framework beyond the confines of SUTVA, enabling a more dynamic exploration of how treatment assignments affect potential outcomes while accounting for interactions across the population.

In practice, we may not always know the true value of $ m $; we can only obtain a specified order, denoted as $ p $, through empirical evidence. In the content preceding \Cref{theorem.misspecified}, we assume $ p = m $. In \Cref{theorem.misspecified}, we specifically address the case where $ p \neq m $ and then present a methodology for identifying $ m $.

\subsection{Causal effects}

We aim to analyze two primary types of effects: direct effects and spillover effects. Under Assumptions~\ref{assumption.nonanticipativity}--\ref{assumption.stratifiedinterference}, direct effects are defined as the causal impact resulting from a unit's treatment status, while maintaining a constant probability of receiving that treatment across the population. For example, in a public health program, a direct effect could be observed as a reduction in disease risk for units who receive the treatment compared to those who do not.
Spillover effects, on the other hand, capture how variations in treated probabilities or proportions influence a unit's outcome when the unit's own treatment status remains unchanged. In the context of a public health program, spillover effects manifest when untreated units benefit from a reduced risk of disease transmission due to others receiving the treatment, illustrating the concept of herd immunity.
Let $\mathbf{1}_{k}$ and $\mathbf{0}_{k}$ denote the $k$-dimensional vectors of all ones and zeros, respectively. For simplicity, the subscript is omitted when $k = p + 1$. The precise formulations for the direct and spillover effects are provided as follows:

\textbf{Direct Effect}: This effect represents the impact of treatment on the unit while keeping treated probabilities constant. The unit-level lag-$p$ direct effect is defined as $\tauid = (T-p)^{-1}\sum_{t=p+1}^{T} \left\{ \yit{q}{}{1} - \yit{q}{}{0} \right\}$,
where $q$ can be $q_1$ or $q_2$ with $0< q_1,q_2<1$, representing different treated probability scenarios. The population-level lag-$p$ direct effect is then obtained by averaging the unit-level effects across all units: $\taud = N^{-1} \sum_{i=1}^{N} \tauid$.

\textbf{Spillover Effect}: This effect examines how treated probabilities affect outcomes for units with the same treatment status. In this paper, we focus on spillover effects between two predefined fixed treated probabilities $q_1$ and $q_2$. Our methods can be extended to the case of multiple treated probabilities. The unit-level lag-$p$ spillover effect is given by $\tauis = (T-p)^{-1} \sum_{t=p+1}^{T} \left\{ \yit{q_1}{z}{1} - \yit{q_2}{z}{1} \right\}$,
where $z$ can be 0 or 1, indicating two different treatment statuses. Similarly, the population-level lag-$p$ spillover effect is the average of the unit-level lag-$p$ spillover effects: $\taus = N^{-1} \sum_{i=1}^{N} \tauis$.

\subsection{Assignment mechanism}

To estimate and infer these causal effects, we employ a treatment assignment mechanism consisting of three key aspects:

\textbf{Decision Points}: There are $L+1$ decision points $1 = t_0 < t_1 < \cdots < t_L \leq T$, which divide the time frame $[T] = \{1, 2, \ldots, T\}$ into $L+1$ intervals: $[t_0, t_1 - 1], [t_1, t_2 - 1], \ldots, [t_L, t_{L+1} - 1]$, where $t_{L+1} = T + 1$. Treatment assignment is made independently at each decision point $t_l$, and the treated probability and treatment statuses remain fixed within each time interval $[t_l, t_{l+1}-1]$ for $l = 0, \ldots, L$. 

\textbf{Treated Probability}: At each decision point $t_l$, we randomly choose the treated probability $Q_{t_l}$ from two options, $q_1$ and $q_2$, with pre-specified probabilities $\R[q_1] \in (0, 1)$ and $\R[q_2] = 1 - \R[q_1]$. Once selected, treated probability remains constant within each time interval $[t_l, t_{l+1}-1]$, such that $Q_t = Q_{t_l}$ for $t \in [t_l, t_{l+1}-1]$.

\textbf{Treatment Status}: At each decision point $t_l$, units are randomly assigned to treatment or control independently based on the treated probability, following $Z_{i,t_l} \stackrel{\mathrm{i.i.d.}}{\sim} \mathrm{Bernoulli}(Q_{t_l})$, where $i = 1, \ldots, N$ and i.i.d. stands for independent and identically distributed. This treatment status persists across the interval, i.e., $Z_{i,t} = Z_{i,t_l}$ for all $t \in [t_l, t_{l+1}-1]$. 

\begin{remark}
    Our proposed methodology could potentially be extended to completely randomized experiments, though the variance expressions under such settings are considerably more complex and may require additional detailed analysis.
\end{remark}

Under this mechanism, all units are assigned to the treatment group with the same treated probability at any given time point, though treatment may not be reassigned at every time step. If the time is not a decision point, the treated probability and treatment statuses from the previous step are retained. This framework encompasses various design options.
For instance, the independent design $\mathbb{T}^1 = \{1, 2, 3, \ldots, T\}$ is the most common design used in previous literature (e.g., \citealt{han2024}). Another design, $\mathbb{T}^2 = \{1, p + 2, 2p + 3, \ldots\}$, divides $[T]$ into periods of length $p + 1$, thereby reducing the influence of previous time points. In this paper, we will obtain a novel design to minimize the mean squared error or risk of causal effect estimators under worst-case scenarios \citep{bojinov2023,zhao2024}.

\begin{remark}
While sharing similarities with two-stage experiments \citep{hudgens2008} and switchback designs \citep{bojinov2023}, our framework differs in significant ways. In two-stage experiments, clusters are first randomized to treatment or control groups, and then within treated clusters, units are randomized to receive the treatment. Our method is distinct in that it is conducted over a time series, requiring careful consideration of when to assign treatments and controls while also accounting for time effects. 
Moreover, our approach does not simply switch between states but rather incorporates a dynamic allocation strategy. This enables us to address specific challenges, particularly minimizing the maximum risks of spillover effects. Thus, our approach is not just an extension of these existing designs but a targeted innovation tailored to the unique conditions of our study.
\end{remark}

Based on this setup, the entire experiment can be characterized by the following components: a set of decision points ($\mathbb{T} = \{t_0, t_1, \ldots, t_{L}\}$); chances of selecting treated probabilities from $q_1,q_2$ ($\R[q_1]$ and $\R[q_2]$); a set of treated probabilities ($\mathbb{Q} = \{Q_{t_0}, Q_{t_1}, \ldots, Q_{t_{L}}\}$); an assignment matrix ($\mathbf{Z}_{1:N, 1:T}$); and a set of all potential outcomes ($\mathbb{Y} = \{Y_{i,t}(\mathbf q_{(t-p):t},\mathbf z_{i,(t-p):t}): i\in [N], \  t\in [T], \ \mathbf q_{(t-p):t}\in \{q_1,q_2\}^{p+1},\ \mathbf z_{i,(t-p):t}\in \{0,1\}^{p+1} \}$). We refer to $\mathbb{T}$ as a design when no confusion arises.

\section{Minimax optimal design}
\label{sec:minimaxdesign}

\subsection{Causal effect estimators}

To estimate the causal effects of interest, we utilize the Horvitz--Thompson estimators, which are designed to account for complex sampling and treatment assignment mechanisms. Let $Y_{i,t}$ represent the observed outcome for the $i$-th unit at time $t$. 

\textbf{Direct Effect Estimator}: The Horvitz--Thompson estimator of the unit-level direct effect is defined as
\begin{align*}
    \tauidhat=\frac 1{T-p}\sum_{t=p+1}^T\left\{Y_{i,t}\frac{\Iit{q}{}{1}}{\Pit{q}{}{1}}-Y_{i,t}\frac{\Iit{q}{}{0}}{\Pit{q}{}{0}}\right\},\quad q=q_1,q_2,
\end{align*}
where $I_{i,t}(\mathbf Q_{(t-p):t},\mathbf Z_{i,(t-p):t})$ is an indicator function that equals 1 if the treated probabilities $\mathbf Q_{(t-p):t}$ and the treatment statuses $\mathbf Z_{i,(t-p):t}$ over the previous $p+1$ time periods match the observed sequences, and 0 otherwise. The term $\mathrm{pr}_{i,t}(\cdot,\cdot)$ denotes the corresponding probability. The estimator for the population-level direct effect is obtained by averaging the unit-level estimators across all units: $\taudhat=N^{-1}\sum_{i=1}^N\tauidhat$.

\textbf{Spillover Effect Estimator}: The Horvitz--Thompson estimator for the unit-level spillover effect is defined as
\begin{align*}
    \tauishat=\frac 1{T-p}\sum_{t=p+1}^T\left\{Y_{i,t}\frac{\Iit{q_1}{z}{1}}{\Pit{q_1}{z}{1}}-Y_{i,t}\frac{\Iit{q_2}{z}{1}}{\Pit{q_2}{z}{1}}\right\}, \quad z=0,1,
\end{align*}
where $q_1$ and $q_2$ represent two different treated probabilities.
The population-level spillover effect estimator is then obtained by averaging the unit-level spillover effect estimators: $\taushat=N^{-1}\sum_{i=1}^N\tauishat$.

\subsection{Minimax optimal design}

The utilization of Horvitz--Thompson estimators guarantees unbiasedness, as the expected value of the indicator function $I_{i,t}(\cdot, \cdot)$ is equal to the corresponding probability $\mathrm{pr}_{i,t}(\cdot, \cdot)$. Consequently, we can evaluate the performance of these estimators by defining the associated risk. One of the most common choices for the risk is the mean squared error, expressed as: $\mathrm{risk}^d(q) = E\left\{ \taudhat - \taud \right\}^2$ and $\mathrm{risk}^s(z) = E\left\{ \taushat - \taus \right\}^2$.
Based on the defined risk functions, experimental design problems can be framed as optimization problems. In this context, our objective function is formulated as $\mathcal L(\psi_d,\psi_s)=\psi_d\{\mathrm{risk}^d(q_1)+\mathrm{risk}^d(q_2)\}+\psi_s\{\mathrm{risk}^s(1)+\mathrm{risk}^s(0)\}$, where $\psi_d\geq 0$, $\psi_s\geq 0$ and $\psi_d+\psi_s=1$. This formulation represents a weighted sum of the risks associated with direct and spillover effects. When $\psi_d = 1$ and $\psi_s = 0$, the focus is solely on the risks associated with direct effects. Conversely, when $\psi_d = 0$ and $\psi_s = 1$, the emphasis shifts entirely to the risks of spillover effects. When $\psi_d = \psi_s = 0.5$, the risks are equally weighted between direct and spillover effects. In this case, our minimax optimization problem features an objective function that resembles the form used in A-optimal design criterion for $\hat{\boldsymbol \tau}=(\taudhat[q_1],\taudhat[q_2],\taushat[1],\taushat[0])$ \citep{atkinson2007,xiong2024}.

In this article, we primarily concentrate on the finite population framework, in which the potential outcomes are treated as fixed, with treatment assignment representing the sole source of randomness. We investigate a minimax optimal design, which can be regarded as a form of robust optimization as discussed by \citet{zhao2024}. Specifically, this approach seeks to minimize the maximum value of the objective function (combined risk) across a given range of potential outcomes, as shown in \Cref{assumption.bounded}.

\begin{assumption}[Bounded potential outcomes]
\label{assumption.bounded}
The potential outcomes are uniformly bounded, meaning there exists a constant $ B > 0 $, such that $| Y_{i,t}(\mathbf q_{(t-p):t},\mathbf z_{i,(t-p):t}) | \leq B$ for $\ i\in [N], \ t\in [T], \ \mathbf q_{(t-p):t}\in \{q_1,q_2\}^{p+1},\ \mathbf z_{i,(t-p):t}\in \{0,1\}^{p+1}$.

\end{assumption}

The bounded potential outcomes assumption is a common condition used in prior analyses of randomized experiments with interference \citep{aronow2017, imai2021a, leung2022,bojinov2023, ni2023, han2024}. We further define the range of potential outcomes as 
$\mathcal{Y} = \{ \mathbb{Y} : | Y_{i,t}(\mathbf q_{(t-p):t},\mathbf z_{i,(t-p):t}) | \leq B \textnormal{ for all } i\in [N], \ t\in [T], \ \mathbf q_{(t-p):t}\in \{q_1,q_2\}^{p+1},\ \mathbf z_{i,(t-p):t}\in \{0,1\}^{p+1}\}$.




To derive the minimax optimal design, i.e., $\argmin_{\mathbb T}\min_{\R[q_1],\R[q_2]}\max_{\mathbb Y\in \mathcal Y}\mathcal L(\psi_d,\psi_s)$, we require some additional notation. Given a design $\mathbb{T}$, we define $\mathcal F_{\mathbb T}(t) = \max\{j \mid j \in \mathbb{T}, j \leq t\}$, which represents the decision point corresponding to time $t$; define $\mathcal F_{\mathbb T}^p(t) = \{j \mid \exists i \in \{t - p, \ldots, t\} \text{ such that } j = \mathcal F_{\mathbb T}(i)\}$, which represents the set of decision points corresponding to time period $[t - p, t]$. Let $J_t = |\mathcal F_{\mathbb T}^p(t)|$ and $J_{t,t'}^{\circ} = |\mathcal F_{\mathbb T}^p(t) \cap \mathcal F_{\mathbb T}^p(t')|$. Finally, we define $\mathcal{J}_j=|\{t\mid J_t=j\}|+|\{(t,t')\mid J_{t,t'}^{\circ}=j,t\neq t'\}|$ as the sum of the number of time points $t$ such that $J_t = j$ and the number of pairs $(t, t')$ with $t \neq t'$ such that $J_{t,t'}^{\circ} = j$. Let $\bar{q}_1 = 1 - q_1$, $\bar{q}_2 = 1 - q_2$, and $\zeta_j=\R[q_1]^{-j}q_1^{-j} + \R[q_1]^{-j}\bar{q}_1^{-j} + \R[q_2]^{-j}q_2^{-j} + \R[q_2]^{-j}\bar{q}_2^{-j}$.

\begin{theorem}[Optimal probability of selecting $q_1$ and $q_2$]
    \label{theorem.r}
    Under Assumptions~\ref{assumption.nonanticipativity}--\ref{assumption.bounded}, for any given design $\mathbb T$,  
 (1) when $N \geq (1 - \max\{\R[q_1], \R[q_2]\})^{-1}$, we have 
    $\argmin_{\R[q_1],\R[q_2]}\max_{\mathbb Y\in \mathcal Y}\mathcal L(\psi_d,\psi_s) = \argmin_{\R[q_1], \R[q_2]}\sum_{j=1}^{p+1}\mathcal J_j \{(4\psi_d+2\psi_s)(\R[q_1]^{-j} + \R[q_2]^{-j})(1-N^{-1})-8\psi_d +\zeta_j N^{-1} \}$; and (2) when $N \leq (1 - \min\{\R[q_1]^{p+1}, \R[q_2]^{p+1}\})^{-1}$, we have 
    $\argmin_{\R[q_1],\R[q_2]}\max_{\mathbb Y\in \mathcal Y}\mathcal L(\psi_d,\psi_s) = \argmin_{\R[q_1], \R[q_2]}\sum_{j=1}^{p+1} $ $\mathcal J_j \{2\psi_s(\R[q_1]^{-j} + \R[q_2]^{-j})(1-N^{-1})+ \zeta_j N^{-1} \}$. 
    In particular, when either $N\to \infty$ or $q_1+q_2=1$, the solution to the minimization problems in both (1) and (2) is $\R[q_1]=\R[q_2]=0.5$.
\end{theorem}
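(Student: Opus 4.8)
The plan is to turn the inner problem $\max_{\mathbb Y\in\mathcal Y}\mathcal L(\psi_d,\psi_s)$ into an explicit closed form in $(\R[q_1],\R[q_2])$ and then read off the two regimes. Because the Horvitz--Thompson estimators are unbiased, each risk is a variance, so I would first write $\mathrm{risk}^d(q)$ and $\mathrm{risk}^s(z)$ as double sums of covariances of the inverse-probability-weighted terms over pairs $(i,t),(j,t')$ and split them, under exchangeability, into a per-unit part $N^{-1}V$ (from $i=j$) and a cross-unit part $(1-N^{-1})C$ (from $i\neq j$). Using the hierarchical assignment (the treated-probability path is drawn first, then i.i.d.\ Bernoulli statuses), I would evaluate every covariance in closed form from $\mathrm{pr}_{i,t}(q\mathbf 1,\mathbf 1)=\R[q]^{J_t}q^{J_t}$. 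The decisive identities are: for $i\neq j$ the direct terms collapse to $(\R[q]^{-J_{t,t'}^{\circ}}-1)$ times the product of unit-level direct contrasts $\{Y_{i,t}(q\mathbf 1,\mathbf 1)-Y_{i,t}(q\mathbf 1,\mathbf 0)\}$, the spillover cross-unit term becomes $A^2\R[q_1]^{-J_{t,t'}^{\circ}}+C^2\R[q_2]^{-J_{t,t'}^{\circ}}-(A-C)^2$ (the cross-probability covariance being $-1$), and for $i=j$ one additionally picks up $q^{-J_{t,t'}^{\circ}}$ factors. A simplification I would verify and then exploit is that the diagonal blocks ($t=t'$, indexed by $J_t$) and the off-diagonal blocks ($t\neq t'$, indexed by $J_{t,t'}^{\circ}$) have \emph{identical} functional form in the overlap count, which is precisely why the two tallies fuse into $\mathcal J_j$.

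The heart of the argument is the worst case over $\mathcal Y$. After the expansion, $\mathcal L$ is a quadratic form in the four outcomes $Y_{i,t}(q_1\mathbf 1,\mathbf 1),Y_{i,t}(q_1\mathbf 1,\mathbf 0),Y_{i,t}(q_2\mathbf 1,\mathbf 1),Y_{i,t}(q_2\mathbf 1,\mathbf 0)$, arranged as a $2\times2$ array with columns indexed by $q_1,q_2$ and rows by treatment status. Squared magnitudes enter with positive coefficients, so any maximizer sets all outcomes to $\pm B$, and the combined per-unit sum of squares contributes exactly $\zeta_jN^{-1}$ (the identity $\psi_d+\psi_s=1$ collapses the direct and spillover sums of squares into the single $\zeta_j$). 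What remains is a sign choice. I would show that the coefficient of each squared spillover contrast $\{Y_{i,t}(q_1\mathbf 1,z\mathbf 1)-Y_{i,t}(q_2\mathbf 1,z\mathbf 1)\}^2$ is $-\psi_s\le0$ in every regime, so the worst case always equalizes the two columns and kills the spillover contrasts, whereas the coefficient of each squared direct contrast is $\psi_d\{(1-N^{-1})\R[q]^{-j}-1\}$, whose sign is governed by $N$.

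The two regimes are then exactly the conditions making this last coefficient single-signed across all overlap counts $j\in\{1,\dots,p+1\}$ and both probabilities. It is nonnegative for every such $j$ iff $(1-N^{-1})\R[q]^{-j}\ge1$ at the binding case $j=1$ with the larger $\R[q]$, i.e.\ iff $N\ge(1-\max\{\R[q_1],\R[q_2]\})^{-1}$; the maximizer then drives all direct contrasts to $\pm2B$ (producing the factors $4$ and $-8$), which together with the vanishing spillover contrasts and $\zeta_jN^{-1}$ reproduces the bracket in (1). It is nonpositive for every $j$ iff $(1-N^{-1})\R[q]^{-j}\le1$ at the binding case $j=p+1$ with the smaller $\R[q]$, i.e.\ iff $N\le(1-\min\{\R[q_1]^{p+1},\R[q_2]^{p+1}\})^{-1}$; the maximizer then sets all direct contrasts to zero (all four outcomes equal), annihilating the direct cross-unit covariance and leaving only the spillover cross-unit term $2\psi_s(\R[q_1]^{-j}+\R[q_2]^{-j})(1-N^{-1})$ and $\zeta_jN^{-1}$, which is the bracket in (2). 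Summing over time points and ordered pairs through $\mathcal J_j$ and factoring out the positive constant $B^2/(T-p)^2$ expresses $\max_{\mathbb Y}\mathcal L$ as that constant times the displayed sums; since positive scaling and the $r$-free term $-8\psi_d$ do not move the minimizer, $\argmin_{\R[q_1],\R[q_2]}\max_{\mathbb Y}\mathcal L$ coincides with the stated $\argmin$, giving (1) and (2).

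For the ``in particular'' claim I would substitute into the explicit objectives: as $N\to\infty$ the $N^{-1}$ terms vanish and each objective reduces to $\sum_j\mathcal J_jc_j(\R[q_1]^{-j}+\R[q_2]^{-j})$ with $c_j\ge0$, while if $q_1+q_2=1$ then $\bar q_2=q_1,\bar q_1=q_2$, so $\zeta_j=(q_1^{-j}+\bar q_1^{-j})(\R[q_1]^{-j}+\R[q_2]^{-j})$ and again all $r$-dependence is through $\R[q_1]^{-j}+\R[q_2]^{-j}$ with positive weights; minimizing over $\R[q_1]+\R[q_2]=1$ is then immediate from the strict convexity and symmetry of $r\mapsto r^{-j}+(1-r)^{-j}$, whose unique stationary point on $(0,1)$ is $r=1/2$. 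The step I expect to be the main obstacle is the worst-case maximization itself, specifically justifying that the box-constrained quadratic maximum over $\mathcal Y$ is attained at the single uniform-in-$t$ sign pattern described above: since the off-diagonal covariances couple outcomes across time, the form is not block diagonal and cannot be optimized time-by-time, so I would instead verify that the spillover- and direct-contrast coefficients keep the same sign on the diagonal and on every cross-time block (they depend on the overlap only through $j$), so that one common vertex simultaneously extremizes all blocks, and then confirm that no mixed sign pattern can improve on it given these coefficient signs.
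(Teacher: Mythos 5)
Your proposal is correct and takes essentially the same route as the paper's own proof (Lemmas~S1--S2 and the Supplementary Material argument for Theorem~1): the same per-unit/cross-unit covariance expansion indexed by the overlap counts $J_t$ and $J_{t,t'}^{\circ}$ fused into $\mathcal J_j$, the same worst-case vertex analysis --- your direct-contrast coefficient $\psi_d\{(1-N^{-1})\R^{-j}-1\}$ is algebraically identical to the paper's endpoint comparison of $f(-NB)$ versus $f(NB)$ in Lemma~S2 and yields the same two thresholds --- and the same convexity/symmetry argument on $r\mapsto r^{-j}+(1-r)^{-j}$ for the $\R[q_1]=\R[q_2]=0.5$ conclusion. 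The obstacle you flag at the end (that a single uniform-sign vertex must simultaneously extremize the diagonal and all cross-time blocks, with no mixed sign pattern across units improving) is precisely the step the paper discharges via the $(a_i,b_i)\mapsto(a_i'',b_i'')$ rearrangements and the completion-of-squares identity for $E\{\mathbf D_t(\mathbb T,q)\mathbf D_{t'}(\mathbb T,q)\}$ in Lemma~S2, so your verification plan closes exactly as you anticipate.
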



Theorem~\ref{theorem.r} presents how to derive the optimal values of $\R[q_1]$ and $\R[q_2]$ that minimize the maximum combined risk over bounded potential outcomes. When $N\to \infty$ or $q_1+q_2=1$, the optimal values are $\R[q_1]=\R[q_2]=0.5$. However, for a fixed $N$ and $q_1 + q_2 \neq 1$, the exact formula for the optimal $\R[q_1]$ and $\R[q_2]$ is complex and may vary across different scenarios. This conclusion differs from previous findings by \citet{bojinov2023}, highlighting the added complexity introduced by spillover effects. Notably, the optimal values of $\R[q_1]$ and $\R[q_2]$ approach 0.5 as $N \rightarrow \infty$, as shown in \Cref{fig.r} in the Supplementary Material. Thus, for larger values of $N$, $\R[q_1]=\R[q_2]=0.5$ becomes nearly optimal, and we will focus on this scenario for the remainder of the paper.

\begin{theorem}[Minimax optimal design]  
    \label{theorem.minimaxdesign}  
    Under Assumptions~\ref{assumption.nonanticipativity}--\ref{assumption.bounded} and $\R[q_1] = \R[q_2] = 0.5$, the minimax optimal design defined by $\argmin_{\mathbb T} \max_{\mathbb Y\in \mathcal Y}\mathcal L(\psi_d,\psi_s)$ can be obtained by minimizing the following quantity:
    \begin{align*}
        \left\{\sum_{l=0}^{L}(t_{l+1}-t_{l})^2 + (L-1)p^2 + 2p(t_L - t_1)\right\} \gamma_1^* + Lp^2 \gamma_2^* + \left(\sum_{l=2}^{L} \left[\{(p - t_{l} + t_{l-1})^+\}^2\right]\right) \gamma_3^*,
    \end{align*}
    where $\gamma_J^*=\psi_d\gamma_J^d+\psi_s\gamma_J^s$ with $\gamma_J^d$ and $\gamma_J^s$ ($J=1,2,3$) being defined in  \eqref{gamma.d}--\eqref{gamma.s} in the Supplementary Material.
\end{theorem}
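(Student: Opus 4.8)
The plan is to evaluate the worst-case combined risk $\max_{\mathbb{Y} \in \mathcal{Y}} \mathcal{L}(\psi_d,\psi_s)$ as an explicit function of the design $\mathbb{T}$ under $r_{q_1} = r_{q_2} = 0.5$, show that it equals a design-independent constant plus a positive multiple of the displayed quantity, and conclude that minimizing the latter is equivalent to solving the minimax problem. First I would use the unbiasedness of the Horvitz--Thompson estimators to write each of the four risks as a variance and expand it as a double sum over time indices $t,t' \in \{p+1,\dots,T\}$ and over unit pairs $(i,i')$. The covariance of the time-$t$ and time-$t'$ summands vanishes unless the windows $[t-p,t]$ and $[t'-p,t']$ share at least one decision point; when they do, the inverse-probability weights and the joint assignment probabilities factorize over the distinct decision points each window touches, so every variance and covariance contribution is a power of $r_q q$ and $r_q \bar q$ whose exponents are exactly $J_t$, $J_{t'}$, and the overlap $J^\circ_{t,t'}$. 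The cross-unit terms enter only through the shared treated-probability draws, producing the $N^{-1}$ corrections already visible in \Cref{theorem.r}.

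Next I would maximize over $\mathcal{Y}$. Each risk, and hence $\mathcal{L}$, is a positive semidefinite quadratic form in the uniformly bounded potential outcomes, so its maximum over the box $\{-B,B\}^{\,\mathbb{Y}}$ is attained at a vertex; the work is to exhibit the maximizing configuration of signs and to verify that its value is $B^2$ times a combinatorial quantity depending on $\mathbb{T}$ only through the multisets $\{J_t\}_t$ and $\{J^\circ_{t,t'}\}_{t\neq t'}$. Specializing to $r_{q_1}=r_{q_2}=0.5$ and collecting terms by the common value $j$ of these overlap counts reproduces the reduced objective $\sum_{j=1}^{p+1}\mathcal{J}_j c_j$ underlying \Cref{theorem.r}, where each coefficient $c_j$ is a fixed combination of $\psi_d$, $\psi_s$, and the per-window weights $\zeta_j$.

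The decisive step is the combinatorial re-expression of $\sum_j \mathcal{J}_j c_j$ in terms of the decision points $t_0 < \cdots < t_L$. Because the treated probability and treatment status are constant on each interval $[t_l,t_{l+1}-1]$, the count $J_t = 1 + |\{l : t-p < t_l \le t\}|$ equals the number of intervals met by the window $[t-p,t]$, and $J^\circ_{t,t'}$ equals the number of intervals met by both windows. I would enumerate times and ordered pairs by the intervals their windows occupy: times and pairs confined to a single interval generate the length-square term $\sum_l (t_{l+1}-t_l)^2$, together with the boundary contributions $(L-1)p^2$ and $2p(t_L - t_1)$ coming from windows that spill across one decision point; pairs straddling exactly one decision point inside the $p$-window produce the $Lp^2$ term; and windows reaching across two or more decision points, which can occur only when an interval is shorter than $p$, produce the correction $\sum_{l=2}^{L}[\{(p - t_l + t_{l-1})^+\}^2]$. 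Grouping the coefficients $c_j$ according to these three configurations defines $\gamma_J^* = \psi_d \gamma_J^d + \psi_s \gamma_J^s$ for $J=1,2,3$ and yields the stated quantity up to an additive constant and a positive scale, both independent of $\mathbb{T}$.

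I expect the main obstacle to be this last counting step: correctly handling the windows near the first and last decision points and, above all, the short intervals (length below $p$), where a single window touches three or more intervals and the naive level-one/level-two bookkeeping fails. Verifying that these contributions assemble exactly into the $(p - t_l + t_{l-1})^+$ corrections rather than into an uncontrolled sum over all overlap depths, and that the resulting coefficients collapse into just the three weights $\gamma_1^*,\gamma_2^*,\gamma_3^*$, is the technical heart of the argument. A secondary difficulty is obtaining the clean product form of the worst case: although convexity guarantees the maximum of $\mathcal{L}$ over $\mathcal{Y}$ is attained at a vertex, exhibiting the maximizing outcome configuration explicitly and showing its value factorizes through the overlap counts for all four risks simultaneously requires care.
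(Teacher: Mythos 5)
Your outline tracks the paper's proof through its first two stages: the moment computation showing that every variance and covariance contribution depends on $\mathbb T$ only through $J_t$ and $J^\circ_{t,t'}$, and the vertex argument identifying the worst-case potential outcomes (the paper's Lemmas S1 and S2, which feed the proof of Theorem 1). The genuine gap is exactly where you flagged the ``main obstacle'': the combinatorial re-expression. The three-term formula with coefficients $\gamma_1^*,\gamma_2^*,\gamma_3^*$ is \emph{not} valid for an arbitrary design. If decision points cluster — concretely, if $t_{l+1}-t_{l-1}\le p-1$ for some $l$ — a single window $[t-p,t]$ meets four or more decision points, $J_t$ and $J^\circ_{t,t'}$ range up to $p+1$, and the worst-case objective genuinely involves the coefficients $\alpha_j(q)$ and $\beta_j(z)$ for every $j\le p+1$, not just $j\le 3$. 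Those higher-depth contributions cannot be ``assembled'' into the $(p-t_l+t_{l-1})^+$ correction; the displayed formula simply fails for such designs, so a proof that only establishes it pointwise cannot conclude that minimizing it solves the minimax problem.

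The paper resolves this not by cleverer counting but by a preliminary elimination step absent from your proposal. Two exchange lemmas in the Supplementary Material (Lemmas S3 and S4) show, by deleting a decision point and comparing the objective term by term, that any minimizer of $\max_{\mathbb Y\in\mathcal Y}\mathcal L(\psi_d,\psi_s)$ must satisfy $t_1\ge p+2$, $t_L\le T-p$, and $t_{l+1}-t_{l-1}\ge p$ for all $l$. On this restricted class every window meets at most three decision points, the enumeration of $(t,t')$ pairs by overlap depth terminates at $J=3$, and the case analysis of Lemma S5 then yields exactly the stated quantity. So your argument needs (i) the strict-improvement perturbation showing short-gap designs are dominated, which is what licenses restricting the minimization to the class where your bookkeeping closes; and (ii) for the worst case over $\mathcal Y$, more than convexity: the risk is a positive semidefinite quadratic form so a maximizing vertex exists, but exhibiting a \emph{common} maximizer for all four risks simultaneously requires the paper's sign-rearrangement construction, and the answer changes between $N\ge 2$ and $N=1$ (which is why $\gamma_J^d$ carries the indicator $I(N\ge 2)$).
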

Theorem~\ref{theorem.minimaxdesign} presents the optimal design based on the minimax criterion, which typically requires solving an integer optimization problem. Obtaining a general explicit solution for this problem is challenging. Instead, we offer a polynomial-time algorithm with a time complexity of $O(T^2)$ to achieve the minimax optimal design, as detailed in \Cref{poly.algorithm} in the Supplementary Material. In certain special cases, we can derive the explicit form of the minimax optimal design. Specifically, define
\begin{align*}
    \theta^*=\frac{\gamma_2^*}{\gamma_1^*}=\frac{4N\psi_d I(N\geq 2)+(2q_1^{-2}+2\bar q_1^{-2}-2q_1^{-1}-2\bar q_1^{-1}+2q_2^{-2}+2\bar q_2^{-2}-2q_2^{-1}-2\bar q_2^{-1})}{(4N-4-4\psi_d)I(N\geq 2)+(q_1^{-1}+\bar q_1^{-1}+q_2^{-1}+\bar q_2^{-1})},
\end{align*}
where $I(\cdot)$ is the indicator function. Based on $\theta^*$, the quantity in \Cref{theorem.minimaxdesign} can be rewritten as $\{\sum_{l=0}^{L}(t_{l+1}-t_{l})^2 + (\theta^*L+L-1)p^2 + 2p(t_L - t_1)\} \gamma_1^*+ \sum_{l=2}^{L} [\{(p - t_{l} + t_{l-1})^+\}^2] \gamma_3^*$.
\Cref{corollary.minimaxdesign} below provides additional insights and explicit solutions for specific scenarios. 

\begin{corollary}
    \label{corollary.minimaxdesign}
    Under Assumptions~\ref{assumption.nonanticipativity}--\ref{assumption.bounded} and $\R[q_1] = \R[q_2] = 0.5$, we have
    
    (1) when $p=0$, the minimax optimal design is given by $\mathbb T^1=\{1,2,3,\ldots,T\}$; 
    
    (2) when $p>0$, let $b^*,a^*\in \mathbb N$ be defined such that $b^*\in[\{-1+\sqrt{1+4(\theta^*+1)p^2}\}/2,\{1+\sqrt{1+4(\theta^*+1)p^2}\}/2]$ and $a^*\in[\{2p+b^*+(\theta^*+1)p^2/b^*-1\}/2,\{2p+b^*+(\theta^*+1)p^2/b^*+1\}/2]$. When $T-2a^*$ is a multiple of $b^*$ with $(T-2a^*)/b^*=K-4\geq 0$, where $K$ is an integer, the minimax optimal design is $\mathbb T^*=\{1,a^*+1,a^*+b^*+1,\ldots,a^*+(K-4)b^*+1\}$. Two typical minimax optimal designs are given by:

    (2.1) when $\theta^*\leq 1/p$, $\aop=2p$ and $\bop=p$. If $T-4p$ is a multiple of $p$ with $(T-4p)/p=K-4\geq 0$, the minimax optimal design is $\mathbb T^*_1=\{1,2p+1,3p+1,\ldots,(K-2)p+1\}$;

    (2.2) when $\theta^*\in(1/p, (3p+2)/p^2]$, $\aop=2p+1$ and $\bop=p+1$. If $T-4p-2$ is a multiple of $p+1$ with $(T-4p-2)/(p+1)=K-4\geq 0$, the minimax optimal design is $\mathbb T^*_2=\{1,2p+2,3p+3,\ldots,(K-2)(p+1)\}$.
\end{corollary}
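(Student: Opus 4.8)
The plan is to work directly with the objective of Theorem~\ref{theorem.minimaxdesign}, written as a function of the consecutive interval lengths $d_l = t_{l+1}-t_l$ (so $\sum_{l=0}^{L} d_l = T$ with each $d_l \ge 1$ an integer), and to minimize jointly over the number of decision points $L+1$ and their spacing. Part~(1) is immediate: when $p=0$ all $p$-dependent terms and the hinge penalty $\{(p-t_l+t_{l-1})^+\}^2$ vanish, leaving $\gamma_1^*\sum_{l=0}^{L} d_l^2$; since $\gamma_1^*>0$ and $\sum_l d_l = T$, the sum of squares over integers $d_l\ge 1$ is minimized by using the maximal number of unit intervals, forcing $\mathbb T=\{1,2,\dots,T\}=\mathbb T^1$.

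For part~(2) I would first reduce to a canonical ``two boundaries plus equal interior'' shape. Fixing $L+1$, the objective is separable and convex in $(d_0,\dots,d_L)$ under the linear constraint $\sum_l d_l = T$ (each $d_l^2$ is convex, the squared hinge $\{(p-d_l)^+\}^2$ is convex and enters with weight $\gamma_3^*\ge 0$, and the remaining terms are affine), and it is symmetric separately in the two boundary intervals $d_0,d_L$ and in the interior intervals $d_1,\dots,d_{L-1}$. Hence at the optimum $d_0=d_L=:a$ and $d_1=\dots=d_{L-1}=:b$. I would then verify that the optimal interior length satisfies $b\ge p$, so the penalty is inactive and may be dropped, because the penalty only raises the cost on $\{b<p\}$ while the continuous optimum computed below lies in $\{b\ge p\}$. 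Writing $n:=L-1$, so that $2a+nb=T$, and using $d_l^2+2p\,d_l=(d_l+p)^2-p^2$ on the interior intervals (the $2p\,d_l$ coming from $2p(t_L-t_1)=2p\sum_{l=1}^{L-1}d_l$), the objective collapses up to the factor $\gamma_1^*$ to
\begin{equation*}
F(a,b,n) = 2a^2 + n(b+p)^2 + \theta^*(n+1)p^2, \qquad 2a+nb=T .
\end{equation*}

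Next I would solve the continuous relaxation of $\min F$. A Lagrange computation gives $a=b+p$ together with $(b+p)^2+\theta^* p^2 = 2ab$, which combine to $b^2=(1+\theta^*)p^2$, i.e.\ $b=p\sqrt{1+\theta^*}\ge p$ (confirming the penalty is inactive) and $a=b+p$. Eliminating $a$ and $n$ reduces $F$ to $(T-2p)\{g(b)+2p\}-\theta^* p^2$ with $g(b)=b+(1+\theta^*)p^2/b$, so, as $T-2p>0$, the problem in $b$ is exactly the minimization of $g$. For the integer problem I would minimize $g$ over integers: the sign of $g(b+1)-g(b)=1-(1+\theta^*)p^2/\{b(b+1)\}$ shows the minimizer $b^*$ satisfies $(b^*)^2-b^*\le(1+\theta^*)p^2\le(b^*)^2+b^*$, which is exactly the stated interval for $b^*$. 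Given $b^*$, the map $a\mapsto F(a,b^*,(T-2a)/b^*)$ is a convex parabola with vertex $\{2p+b^*+(1+\theta^*)p^2/b^*\}/2$, so the optimal integer $a^*$ is the nearest integer to this vertex, yielding the stated interval for $a^*$; the divisibility hypothesis $(T-2a^*)/b^*=K-4\ge 0$ guarantees that the balanced design with these $a^*,b^*$ is realizable, and reading off the decision points gives $\mathbb T^*$. I would then specialize: for $\theta^*\le 1/p$ one checks $(1+\theta^*)p^2\in[p^2-p,\,p^2+p]$ so $b^*=p$, and the vertex lies within $1/2$ of $2p$, giving $(a^*,b^*)=(2p,p)$ and $\mathbb T_1^*$; for $\theta^*\in(1/p,(3p+2)/p^2]$ one checks $(1+\theta^*)p^2\in(p(p+1),\,(p+1)(p+2)]$ so $b^*=p+1$, and the vertex rounds to $2p+1$, giving $(a^*,b^*)=(2p+1,p+1)$ and $\mathbb T_2^*$.

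The main obstacle is the integrality: passing from the clean continuous optimum $(a,b)=(b+p,\,p\sqrt{1+\theta^*})$ to a \emph{globally} optimal integer design. Two points need care. First, one must show that the procedure ``round $b$ via $g$, then round $a$ via the parabola'' attains the joint integer minimum rather than a merely coordinatewise one; this uses that $F$ inherits the convexity and near-separability of its relaxation, so the coupling through $n=(T-2a)/b$ perturbs the optimum by less than one lattice step. Second, one must verify that the divisibility hypothesis is precisely what makes the balanced configuration (all interior intervals $b^*$, both boundaries $a^*$) feasible, so that the rounded candidate is not only a lower bound but is attained. The convexity reduction and the one-dimensional analyses of $g$ and of the parabola in $a$ are routine; marshalling them into a proof that $\mathbb T^*$ is optimal over all integer designs is the delicate step.
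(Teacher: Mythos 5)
Your proposal is correct and follows essentially the same route as the paper's proof (Lemmas on equal spacing, the $b^*\geq p$ reduction, and the explicit optimization in the Supplementary Material): reduce to near-equal boundary and interior gaps, drop the hinge penalty because the optimizer satisfies $b^*\geq p$, minimize $g(b)=b+(\theta^*+1)p^2/b$ over integers and then the convex parabola in $a$, and specialize to the two regimes of $\theta^*$. The one step you flag as delicate---joint versus sequential integer optimization---is resolved in the paper more cleanly than by ``near-separability'': after eliminating $n$, the objective equals $2a^2+\theta^*p^2+(T-2a)\{2p+g(b)\}$ with $T-2a>0$, so the minimizing $b$ is literally independent of $a$ and sequential optimization is exact rather than approximate.
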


\Cref{corollary.minimaxdesign} provides a comprehensive overview of the minimax optimal design under various scenarios. When $ p = 0 $ (indicating the absence of carryover effects), the minimax optimal design simplifies to the independent design $ \mathbb{T}^1 $. When $ p > 0 $ (indicating the presence of carryover effects), the minimax optimal design varies depending on the value of $\theta^*$.
Specifically, when $\theta^*\leq 1/p$, the proposed minimax optimal design is given by $\mathbb T^*_1=\{1,2p+1,3p+1,\ldots,(K-2)p+1\}$, which is the same as the design in single-unit experiments proposed by \citet{bojinov2023}. However, when $\theta^*>1/p$, the structure of the minimax optimal design exhibits increased complexity. It manifests manifests under two conditions: either in the presence of high-order carryover effects, or when the estimation of direct effects receives greater priority. The latter scenario occurs because $\gamma_2^d/\gamma_1^d$ surpasses $\gamma_2^s/\gamma_1^s$ for sufficiently large values of $N$. In this case, a typical minimax optimal design is $\mathbb T^*_2=\{1,2p+2,3p+3,\ldots,(K-2)(p+1)\}$ for $\theta^*\in(1/p, (3p+2)/p^2]$. Table~\ref{table.example} below presents an example for $T = 16$ and $p = 2$, comparing $\mathbb{T}^*_1$, $\mathbb{T}^*_2$, $\mathbb{T}^1$ and $\mathbb{T}^2$. In this table, decision points are marked with ``\ding{51}'' and non-decision points with ``\ding{55}''.

\begin{table}[h!]
    \centering
    \caption{An example of $\mathbb T^*_1$, $\mathbb T^*_2$, $\mathbb T^1$ and $\mathbb T^2$ when $T=16$ and $p=2$}\label{table.example}
    \begin{tabular}{|c|c|c|c|c|c|c|c|c|c|c|c|c|c|c|c|c|}
        \hline
        \textbf{Design} & 1 & 2 & 3 & 4 & 5 & 6 & 7 & 8 & 9 & 10 & 11 & 12 & 13 & 14 & 15 & 16
        \\ \hline
        $\mathbb{T}^*_1$ & \ding{51} & \ding{55} & \ding{55} & \ding{55} & \ding{51} & \ding{55} & \ding{51} & \ding{55} & \ding{51} & \ding{55} & \ding{51} & \ding{55} & \ding{51} & \ding{55} & \ding{55} & \ding{55} \\ \hline
        $\mathbb{T}^*_2$ & \ding{51} & \ding{55} & \ding{55} & \ding{55} & \ding{55} & \ding{51} & \ding{55} & \ding{55} & \ding{51} & \ding{55} & \ding{55} & \ding{51} & \ding{55} & \ding{55} & \ding{55} & \ding{55} \\ \hline
        $\mathbb{T}^1$ & \ding{51} & \ding{51} & \ding{51} & \ding{51} & \ding{51} & \ding{51} & \ding{51} & \ding{51} & \ding{51} & \ding{51} & \ding{51} & \ding{51} & \ding{51} & \ding{51} & \ding{51} & \ding{51} \\ \hline
        $\mathbb{T}^2$ & \ding{51} & \ding{55} & \ding{55} & \ding{51} & \ding{55} & \ding{55} & \ding{51} & \ding{55} & \ding{55} & \ding{51} & \ding{55} & \ding{55} & \ding{51} & \ding{55} & \ding{55} & \ding{55} \\ \hline
    \end{tabular}
    \label{tab:design_comparison}
\end{table}

For different values of $ N $, $ \psi_d $ and $ \psi_s $, we can calculate the exact value of $ \theta^* $. Based on the computed value of $ \theta^* $, we can subsequently determine $ b^* $ and $ a^* $, which allows us to obtain the minimax optimal design $\mathbb T^*$ based on \Cref{corollary.minimaxdesign}. In the next section, we will derive the design-based asymptotic properties of causal effect estimators under the minimax optimal design $\mathbb T^*$.

\section{Asymptotic properties and inference}
\label{sec:inference}

\subsection{Variances and their estimators}

Under $\mathbb T^*$, if $p>0$ and $T-2a^*$ is a multiple of $b^*$, we can derive the exact expression for the variances of the Horvitz--Thompson estimators. Due to its lengthy form, the detailed expression is relegated to \Cref{theorem.variance.sm} in the Supplementary Material.


The variances of the Horvitz--Thompson estimators involve terms that account for the variability of outcomes across time periods and assignment paths. However, due to the inherent nature of randomization, these variances also depend on covariances of potential outcomes that cannot be unbiasedly estimated since only one outcome can be observed per unit at each time. This limitation necessitates the use of upper bounds derived from the Cauchy--Schwarz inequality to conservatively estimate the variances. Due to space limitations, the explicit formulas of the conservative variance estimators are provided in~\Cref{corollary.estimator} in the Supplementary Material. Notably, the variances and their conservative estimators can be applied to all minimax optimal designs discussed in \Cref{corollary.minimaxdesign}, including two typical designs $\mathbb T^*_1$ and $\mathbb T^*_2$.

\subsection{Asymptotic normality}
\label{sec:normality}

In some experimental scenarios, geographical constraints may necessitate conducting studies across multiple independent centers. In these cases, we can independently conduct the minimax optimal design within each center and estimate causal effects separately, and then combine these estimates. This approach allows us to account for potential heterogeneity across different centers while still providing a unified estimate. In this section, we will discuss the asymptotic properties of the Horvitz--Thompson estimators within this multi-center framework.
It is important to note that the single-center case can be considered a special instance of the multi-center setting, where the number of centers is simply one. 
The direct and spillover effects can be estimated separately in each center, indexed by $g = 1, \ldots, G$. The proportion of units in center $ g $ is denoted by $\pi_{[g]} = N_{[g]}/N$, with $ N_{[g]} $ indicating the number of units in center $ g $ and $ N $ representing the total number of units across all centers.
The overall causal effect is then estimated by the weighted estimator $\hat{\tau}^{*}(\dagger) = \sum_{g=1}^{G} \pi_{[g]} \hat{\tau}^{*}_{[g]}(\dagger)$ for $* = d,s$ corresponding to $\dagger = q,z$, where $\hat{\tau}^{*}_{[g]}(\dagger)$ is the Horvitz--Thompson estimator obtained in center $ g $.

With an additional assumption that the centers are independent, for example, they are far enough from each other, the variance of the weighted estimator is $\mathrm{var}\{\hat{\tau}^{*}(\dagger)\}=\sum_{g=1}^{G} \pi_{[g]}^2 \mathrm{var}\{\hat{\tau}^{*}_{[g]}(\dagger)\}$. It can be estimated by $\widehat{\mathrm{var}}^U\{\hat{\tau}^{*}(\dagger)\} = \sum_{g=1}^{G} \pi_{[g]}^2 \widehat{\mathrm{var}}^U\{\hat{\tau}^{*}_{[g]}(\dagger)\}$, where $\widehat{\mathrm{var}}^U\{\hat{\tau}^{*}_{[g]}(\dagger)\}$ is the variance estimator for the Horvitz--Thompson estimator in center $g$ defined in the Supplementary Material. This formula captures the variability within each center while accounting for the contribution of each center to the overall estimate, weighted by the squared proportion of units. In the following discussion, we will explore the asymptotic behavior of the weighted estimator $\hat{\tau}^{*}(\dagger)$ under the minimax optimal design, laying out the necessary assumptions for ensuring consistency and asymptotic normality. 

\begin{remark}
    In the multi-center framework, we essentially extend the single-center framework by assuming partial interference \citep{hudgens2008}. Specifically, the units are grouped into $G$ clusters, and both interference and carryover effects are restricted to occur within the same cluster.
\end{remark}


\begin{assumption}
    \label{assumption.variance}
    As $NT\rightarrow \infty$, $\mathrm{var}\{\sqrt{N(T-p)}\hat \tau^d(q)\}\rightarrow \sigma_{d}^2(q)>0$ and $\mathrm{var}\{\sqrt{N(T-p)}\hat \tau^s(z)\}\rightarrow \sigma_{s}^2(z)>0$ for $q=q_1,q_2$ and $z=0,1$.
\end{assumption}

\Cref{assumption.variance} ensures that the variability associated with the estimators does not vanish or diverge as the number of units ($N$) and time periods ($T$) increase. It rules out situations where the variance could be affected by only a few time points, thereby ensuring robust asymptotic inference. 

\begin{assumption}
    \label{assumption.NT}
    Let $N_{\max} = \max_{g=1,\ldots,G} N_{[g]}$.
    As $NT\rightarrow \infty$, there exists a pair of constants $(\alpha,\beta)$, such that $N^{-\alpha}T^{-\beta}\rightarrow 0$ and $N^{\alpha-1}T^{\beta-1}N_{\max}\rightarrow 0$.
\end{assumption}

\Cref{assumption.NT} imposes a growth rate condition on the total number of units ($N$) and time periods ($T$), requiring that the maximum number of units in any single center ($N_{\max}$) does not grow too rapidly relative to the overall population size. It is satisfied if (i) $N$ is fixed and $T$ tends to infinity; (ii) $T$ is fixed and $N$ tends to infinity with $N_{\max} = o(N^{1-\alpha})$ for $\alpha > 0$; or (iii) both $N$ and $T$ tend to infinity. \citet{bojinov2023} considered a single-unit experiment, which is a special case of (i). Notably, when $N$ tends to infinity with $T$ fixed, the experiment needs to be conducted in multiple centers to ensure the validity of the asymptotic results.
Therefore, in practice, when the total number of units is large, we recommend extending the time horizon or conducting the experiment in multiple independent centers if possible to satisfy \Cref{assumption.NT}.
Based on Assumptions~\ref{assumption.nonanticipativity}--\ref{assumption.NT}, we can derive the asymptotic normality of the Horvitz--Thompson estimators.

\begin{theorem}[Asymptotic normality]
    \label{theorem.CLT}
    Suppose that $p$ is fixed and Assumptions~\ref{assumption.nonanticipativity}--\ref{assumption.NT} hold. Under the minimax optimal design $\mathbb T^*$ with $\R[q_1] = \R[q_2] = 0.5$ and as $NT\rightarrow\infty$, the Horvitz--Thompson estimators for the direct and spillover effects satisfy
    \begin{align*}
        \frac{\taudhat-\taud}{\sqrt{\mathrm{var}\{\taudhat\}}}\stackrel{d}{\rightarrow}\mathcal N(0,1),\quad q = q_1, q_2, \quad \frac{\taushat-\taus}{\sqrt{\mathrm{var}\{\taushat\}}}\stackrel{d}{\rightarrow}\mathcal N(0,1), \quad z=0,1.
    \end{align*}
\end{theorem}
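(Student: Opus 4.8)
The plan is to prove asymptotic normality through a martingale central limit theorem for triangular arrays, using the fact that the only randomness in the design is carried by the independent decision-point draws. Within each center $g$ the draw at decision point $t_l$ consists of the selected probability $Q_{t_l}$ together with the conditionally independent statuses $\{Z_{i,t_l}\}_{i\in g}$; these draws are mutually independent across $l$ and, by the multi-center construction, across $g$. I would first fix one estimator, say $\hat\tau^d(q)$, and write $\hat\tau^d(q)-\tau^d(q)=\sum_{g=1}^G\pi_{[g]}\{\hat\tau^d_{[g]}(q)-\tau^d_{[g]}(q)\}$. Ordering the atoms $(g,l)$ lexicographically and letting $\mathcal G_{g,l}$ be the $\sigma$-field they generate up to $(g,l)$, the Doob differences $\Delta_{g,l}=E[\hat\tau^d(q)-\tau^d(q)\mid\mathcal G_{g,l}]-E[\hat\tau^d(q)-\tau^d(q)\mid\mathcal G_{g,l-1}]$ form a martingale-difference array whose sum is exactly the centered estimator. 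Because the Horvitz--Thompson estimator is unbiased and martingale differences are orthogonal, $s_n^2:=\sum_{g,l}E[\Delta_{g,l}^2]=\mathrm{var}\{\hat\tau^d(q)\}$. The spillover estimator $\hat\tau^s(z)$ is handled by the identical construction.

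A preliminary structural step is to record that under the minimax optimal design $\mathbb T^*$ the contribution of unit $i$ at time $t$ depends only on the decision points in $\mathcal F_{\mathbb T}^p(t)$, a set of cardinality $J_t\le p+1$, and that the regularly spaced decision points of $\mathbb T^*$ make each atom influence only a bounded window of times; consequently $\Delta_{g,l}$ is a bounded functional of that atom over $O(p)$ time points, and two atoms whose windows do not overlap contribute independently. Boundedness is quantitative: \Cref{assumption.bounded} bounds the outcomes and the Horvitz--Thompson weights are bounded because the selection probabilities are fixed in $(0,1)$ and $J_t$ is bounded, so $|\Delta_{g,l}|\le C\,\pi_{[g]}(T-p)^{-1}$ for a constant $C$ depending only on $p,q_1,q_2$. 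This finite-range-in-time dependence, together with cross-center independence, is the property that will make both martingale-CLT conditions checkable.

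I would then invoke a standard martingale central limit theorem for triangular arrays, for which it suffices to verify (i) a Lindeberg condition and (ii) convergence in probability of the predictable quadratic variation $\sum_{g,l}E[\Delta_{g,l}^2\mid\mathcal G_{g,l-1}]/s_n^2$ to $1$. For (i), \Cref{assumption.variance} gives $s_n^2\asymp\{N(T-p)\}^{-1}$, so the standardized increments are of order $N_{[g]}\{N(T-p)\}^{-1/2}$; the Lindeberg sum is then driven by the largest centers, and \Cref{assumption.NT} --- whose two-exponent form is designed precisely to cover the fixed-$N$, fixed-$T$, and jointly-growing regimes --- is what forces this contribution to vanish, after separating the common-shock increments carried by $Q_{t_l}$ (which do not average over units) from the idiosyncratic increments carried by $\{Z_{i,t_l}\}$ (which do).

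The step I expect to be the main obstacle is (ii), the convergence of the predictable quadratic variation, because the conditional variances $E[\Delta_{g,l}^2\mid\mathcal G_{g,l-1}]$ are themselves random through the previously revealed atoms. The plan is to establish concentration by bounding $\mathrm{var}\{\sum_{g,l}E[\Delta_{g,l}^2\mid\mathcal G_{g,l-1}]\}$ via a fourth-moment computation: independence across centers and across decision points annihilates all cross terms except the $O(p)$-local overlaps permitted by the finite-range structure of \Cref{theorem.minimaxdesign}, and each surviving term is shown to be $o(s_n^4)$ using \Cref{assumption.bounded} for the boundedness and \Cref{assumption.NT} to prevent any single center from dominating. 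The delicate bookkeeping lies in tracking the common-shock and idiosyncratic parts of $\Delta_{g,l}$ separately and checking that their interactions across nearby times and across centers are negligible. Once this concentration is in place, \Cref{assumption.variance} identifies the limit after rescaling, the normalized quadratic variation converges to $1$, and the martingale CLT delivers the stated convergence for $\hat\tau^d(q)$, $q=q_1,q_2$, and for $\hat\tau^s(z)$, $z=0,1$, simultaneously.
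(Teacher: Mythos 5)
Your route is genuinely different from the paper's. The paper orders the unit--time contributions $\{N(T-p)\}^{-1/2}\mathbf D_{i,t}(\mathbb T^*,q)$ into a single sequence (time-major, units within time), observes that under $\mathbb T^*$ this sequence is $\phi$-dependent with $\phi=2b^*N_{\max}$, and then invokes the Romano--Wolf central limit theorem for $m$-dependent triangular arrays with unbounded $m$ (\Cref{lemma.CLT}), verifying its five moment conditions with $\gamma=0$ and $\bar\phi=8b^*N^{\alpha}T^{\beta}N_{\max}$; Assumptions~\ref{assumption.variance} and \ref{assumption.NT} enter exactly through conditions 3--5 there. You instead build a Doob martingale over the independent decision-point atoms and aim for a martingale CLT via Lindeberg plus concentration of the predictable quadratic variation. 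Both arguments rest on the same two structural facts --- bounded Horvitz--Thompson weights under \Cref{assumption.bounded} and the bounded temporal range of dependence enforced by the spacing $b^*$ of $\mathbb T^*$ --- so the approaches are cousins. What your route buys is self-containedness (no imported dependent-CLT with five bespoke conditions) and the exact identity $s_n^2=\mathrm{var}\{\hat\tau\}$ from martingale orthogonality; what it costs is that the quadratic-variation concentration, which you correctly identify as the hard step, has no counterpart in the paper's proof, where the $m$-dependent CLT absorbs that work into its hypotheses.

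One step is under-specified enough to flag. Your crude bound $|\Delta_{g,l}|\le C\pi_{[g]}(T-p)^{-1}$ gives standardized increments of order $N_{[g]}\{N(T-p)\}^{-1/2}$, and \Cref{assumption.NT} does \emph{not} in general imply $N_{\max}/\sqrt{NT}\to 0$ (take $N_{\max}\asymp N^{1/2}$ with $T$ fixed: any $\alpha\in(0,1/2)$, $\beta=0$ satisfies the assumption, yet $N_{\max}/\sqrt{NT}$ stays bounded away from zero). So the Lindeberg condition cannot be closed by the worst-case bound alone; the common-shock/idiosyncratic split you gesture at must actually be executed --- e.g., refine the filtration so that $Q_{t_l}$ and then each $Z_{i,t_l}$ are revealed one at a time, after which the idiosyncratic increments are $O\{1/(N(T-p))\}=o(s_n)$ individually, while the common-shock increments of size $\asymp\pi_{[g]}/(T-p)$ must be controlled by appealing to \Cref{assumption.variance} (which, under the $\sqrt{N(T-p)}$ normalization, implicitly constrains how large the between-unit covariance contributions of any single center can be). This is not a flaw unique to your plan --- the paper's verification of conditions 3--4 of \Cref{lemma.CLT} leans on the same interplay between the two assumptions at the analogous point --- but in your write-up it is the one place where ``delicate bookkeeping'' is doing the work of an actual argument, and it should be spelled out before the proof can be considered complete.
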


\Cref{theorem.CLT} implies that the Horvitz--Thompson estimators are consistent and asymptotically normal under the minimax optimal design when the order of carryover effects is known in advance.
Notably, the asymptotic normality holds not only for $\mathbb T^*_1$ or $\mathbb T^*_2$, but also for all minimax optimal designs discussed in \Cref{corollary.minimaxdesign}.
Using consistent upper bound estimates for the variances (as provided in \Cref{corollary.estimator} in the Supplementary Material), we can construct asymptotically conservative $1-\alpha$ confidence intervals for the direct and spillover effects.


In practice, however, the true order of carryover effects may not be known. Recall that $m$ is the true order and $p$ is the specified order. \Cref{theorem.misspecified} below presents the asymptotic results when the order of carryover effects is misspecified.

\begin{theorem}[Asymptotic normality when the order of carryover effects is misspecified]
    \label{theorem.misspecified}
    Suppose that $p$ and $m$ are fixed and Assumptions~\ref{assumption.nonanticipativity}--\ref{assumption.NT} hold. For the cases where either $p>m$ or $p<m$, under the minimax optimal design $\mathbb T^*$ with $\R[q_1] = \R[q_2] = 0.5$, and as $NT\rightarrow\infty$, the Horvitz--Thompson estimators are still asymptotically normal. Specifically, 
    \begin{align*}
        \frac{\taudhat-E\{\taudhat\}}{\sqrt{\mathrm{var}\{\taudhat\}}}\stackrel d{\rightarrow}\mathcal N(0,1), \quad q = q_1, q_2, \quad \frac{\taushat-E\{\taushat\}}{\sqrt{\mathrm{var}\{\taushat\}}}\stackrel d{\rightarrow}\mathcal N(0,1), \quad z=0,1.
    \end{align*}
\end{theorem}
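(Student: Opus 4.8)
The plan is to reduce the claim to a central limit theorem for a normalized sum of weakly dependent, bounded terms, following the same strategy as in the proof of \Cref{theorem.CLT}, with the only substantive change being that each estimator is now centered at its own mean, $E\{\taudhat\}$ or $E\{\taushat\}$, rather than at the causal estimand. The key conceptual point is that the asymptotic normality argument never invokes unbiasedness: it uses only that each estimator is a normalized sum of bounded summands possessing a finite-range dependence structure induced by the design, that the standardized variance is non-degenerate, and that the cross-center growth conditions hold. Under misspecification the estimator is generally biased, but subtracting $E\{\taudhat\}$ removes precisely this bias and leaves a standardized sum to which the same limit theorem applies.

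First I would write $\taudhat - E\{\taudhat\} = \sum_{g=1}^{G}\pi_{[g]}(\hat\tau^d_{[g]}(q) - E\{\hat\tau^d_{[g]}(q)\})$ and, within each center, expand the centered per-center estimator as a normalized sum over time $t$ and units $i$ of centered summands $\xi_{i,t}$. Each $\xi_{i,t}$ is a function of the design randomness---both the shared treated-probability selections $Q_{t_l}$ and the unit-level Bernoulli assignments---at the decision points whose intervals intersect the relevant window: the Horvitz--Thompson weight $\Iit{q}{}{1}/\Pit{q}{}{1}$ depends on the assignment over $[t-p,t]$, while the observed outcome $Y_{i,t}$ depends, by \Cref{assumption.carryover}, on the assignment over the true window $[t-m,t]$. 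Consequently the effective carryover order governing the dependence is $\max(m,p)$ rather than $p$, and the estimator's expectation $E\{\taudhat\}$ is a weighted average over the unmatched lags that need not equal $\taud$ when $p<m$.

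Next I would establish the dependence structure. Because treatment is reassigned independently across decision points, two summands $\xi_{i,t}$ and $\xi_{j,t'}$ are independent whenever the decision-point sets attached to $[t-\max(m,p),t]$ and $[t'-\max(m,p),t']$ are disjoint; in particular all summands in distinct centers are independent, while within a center summands at the same or nearby times are dependent through the common selections $Q_{t_l}$. Since $m$ and $p$ are fixed and the decision points under $\mathbb T^*$ are spaced by the bounded gap $b^*$, each summand depends on only a bounded number of decision-point blocks, so the induced dependency graph has uniformly bounded degree. I would then verify the ingredients of the local-dependence central limit theorem used for \Cref{theorem.CLT}: boundedness of the summands follows from \Cref{assumption.bounded}; non-degeneracy of the standardized variance is granted by \Cref{assumption.variance}; and \Cref{assumption.NT} controls $N_{\max}$ relative to $N$ and $T$, forcing the maximal per-center and per-block contribution to the total variance to be asymptotically negligible. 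Assembling these yields a Berry--Esseen-type bound tending to zero, hence convergence to $\mathcal N(0,1)$; the spillover estimators $\taushat$ are treated identically.

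The main obstacle is the enlarged, case-dependent dependence window. When $p>m$ the conditioning on $p+1$ lags is merely finer than necessary, the extra lags are irrelevant to the outcome by \Cref{assumption.carryover}, and the summand structure is essentially unchanged; but when $p<m$ the $(p+1)$-period matching is incomplete, so conditional on the matched path the outcome still fluctuates with the unmatched lags $p+1,\dots,m$, which simultaneously produces the bias absorbed into $E\{\taudhat\}$ and widens the dependence range from $p+1$ to $m+1$. The delicate step is therefore to re-derive the dependency graph with window $\max(m,p)+1$ and to confirm that its degree remains uniformly bounded under $\mathbb T^*$, so that the negligibility conditions certified by \Cref{assumption.NT} continue to hold verbatim. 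Once the bounded-degree property is re-established, the remainder of the argument is a direct transcription of the proof of \Cref{theorem.CLT}, now centered at the estimators' own means.
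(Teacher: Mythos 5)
Your proposal is correct and follows essentially the same route as the paper: the paper's proof also centers each estimator at its own expectation, observes that misspecification with $p<m$ widens the finite dependence range (the paper quantifies this as $\tilde\phi=(\lceil m/p\rceil+1)b^*N_{\max}$, the decision-block analogue of your $\max(m,p)+1$ window), and then re-verifies the same five conditions of the $\phi$-dependent CLT used for \Cref{theorem.CLT} under Assumptions~\ref{assumption.bounded}--\ref{assumption.NT}. Your dependency-graph/Berry--Esseen framing is just an equivalent packaging of the paper's $\phi$-dependent-sequence argument, and your remark that $E\{\taudhat\}$ averages over the unmatched lags corresponds to the paper's auxiliary \Cref{lemma.misspecified}.
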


\Cref{theorem.misspecified} shows that even if the specified order of carryover effects is incorrect, the Horvitz--Thompson estimators retain asymptotic normality. However, the expectations and the variances of the estimators differ depending on whether $p > m$ or $p < m$. Specifically,

(i) when $p > m$, the expectations satisfy $E\{\taudhat\} = \taud$ and $E\{\taushat\} = \taus$, with the variance formulas matching those when the order is correctly specified. Although the design remains optimal with respect to $p$, the true variances may be larger than those when the design is minimax optimal for $m$; see the simulation results in \Cref{sec:simulation};

(ii) when $p < m$, a non-negligible asymptotic bias is introduced, which makes it challenging to derive conservative variance estimators from the observed data. The precise expressions of expectations can be found in the Supplementary Material. 

A larger $p$ typically reduces the effective sample size, leading to a larger risk, but a smaller $p$ may introduce a non-negligible asymptotic bias when the true order of carryover effects is larger than $p$.
Therefore, it is advisable to choose $p$ larger than $m$ but as close as possible. Leveraging domain knowledge and prior experimental or observational data can help in accurately determining the true order $m$. In \Cref{section.order}, we will discuss a data-driven approach to identify $m$.

\subsection{Identifying the order of carryover effects}
\label{section.order}

In this section, we present a method similar to \cite{bojinov2023} for identifying the order of carryover effects $m$ based on asymptotic normality and hypothesis testing. 
The approach relies on comparing two different experimental designs, which can be applied to either distinct but comparable populations or the same population at different time points sufficiently spaced apart to avoid interference and carryover effects. 
The designs are based on minimax optimal designs with specified orders $ p_1 $ and $ p_2 $, where $ p_1 < p_2 $. We aim to test the null hypothesis $ H_0: m \leq p_1 $, indicating that the true order of carryover effects does not exceed $ p_1 $.
To facilitate the discussion, we use subscripts $ [p_1] $, $ [p_2] $, and $ [m] $ to denote quantities under the designs with orders $ p_1 $, $ p_2 $, and the true order $ m $, respectively. Under $ H_0: m \leq p_1 $, we have
(i) $\taudhatorder[p_1]$ and $\taudhatorder[p_2]$ are unbiased for $\taudorder[m]$, and $\taushatorder[p_1]$ and $\taushatorder[p_2]$ are unbiased for $\tausorder[m]$, and (ii) the following test statistics are asymptotically standard normal:
    \begin{align*}
        \frac{\taudhatorder[p_1]-\taudhatorder[p_2]}{\sqrt{\mathrm{var}\{\taudhatorder[p_1]\}+\mathrm{var}\{\taudhatorder[p_2]\}}},\quad \frac{\taushatorder[p_1]-\taushatorder[p_2]}{\sqrt{\mathrm{var}\{\taushatorder[p_1]\}+\mathrm{var}\{\taushatorder[p_2]\}}}.
    \end{align*}
These results allow us to construct test statistics for testing $ H_0: m \leq p_1 $, given by:
$$
T_d(q) = \frac{\taudhatorder[p_1]-\taudhatorder[p_2]}{\sqrt{\widehat{\mathrm{var}}^U\{\taudhatorder[p_1]\}+\widehat{\mathrm{var}}^U\{\taudhatorder[p_2]\}}}, \quad T_s(z) = \frac{\taushatorder[p_1]-\taushatorder[p_2]}{\sqrt{\widehat{\mathrm{var}}^U\{\taushatorder[p_1]\}+\widehat{\mathrm{var}}^U\{\taushatorder[p_2]\}}}.
$$
We reject the null hypothesis $ H_0: m \leq p_1 $ if either $ |T_d| $ or $ |T_s| $ exceeds $ \Phi^{-1}(1 - \alpha/2) $. This rejection indicates that the true order of carryover effects exceeds $ p_1 $.

The rationale behind this approach is that, if $ m \leq p_1 $, then the estimators from both designs ($ p_1 $ and $ p_2 $) should be consistent for the same target parameter, leading to similar estimates with variances reflecting the differences in design. However, if $ m > p_1 $, the estimators from the design with $ p_1 $ may exhibit systematic bias, resulting in significant deviations between the estimators from the two designs.
This method provides a systematic way to identify the order of carryover effects by leveraging different designs and asymptotic properties of the estimators.

\section{Simulation}
\label{sec:simulation}

In this section, we evaluate the finite sample performance of the Horvitz--Thompson estimators under different experimental designs through simulations. The study comprises three parts, each focusing on a specific aspect: minimax optimal design, inference, and carryover effects order misspecification.

\subsection{Minimax optimal design}
\label{sec:od}

We first assess the finite sample performance of two typical minimax optimal design $\mathbb T^*_1=\{1,2p+1,3p+1,\ldots,T-2p+1\}$ and $\mathbb T^*_2=\{1,2p+2,3p+3,\ldots,T-2p\}$, by comparing them with two alternative designs: $\mathbb T^1=\{1,2,3,\ldots,T\}$, a common design that assigns units in each time period, and $\mathbb T^2=\{1,p+2,2p+3,\ldots\}$, a common design that divides $[T]$ into periods of length $p+1$. It is important to note that $\mathbb T^*_1$ serves as the minimax optimal design when $\theta^*\leq 1/p$ and $\mathbb T^*_2$ is optimal when $\theta^*\in (1/p,(3p+2)/p^2]$.
The comparison focuses on how the value of objective function (combined risk) changes when using the minimax optimal design suggested by \Cref{corollary.minimaxdesign} versus alternative designs, under the condition that the order of carryover effects is correctly specified. We set parameters as $q_1=0.6$ and $q_2=0.4$. The potential outcomes are generated by two models:

Model 1 (worst case). We set $\yit{q}{}{1}=B$ and $\yit{q}{}{0}=-B$, aligning with the worst-case scenario. In this model, we set $B=1$ without loss of generality.

Model 2 (linear model with time-specific fixed effect and normal noise). 
The potential outcomes are generated by the following linear model: for $q\in\{q_1,q_2\}$ and $z\in\{0,1\}$, $\yit{q}{z}{1}=\alpha_t+\epsilon_{i,t}+\sum_{\Delta t=0}^m\{\delta_q^{(\Delta t)} I_{i,t-\Delta t}(q=q_1)+\delta_z^{(\Delta t)} I_{i,t-\Delta t}(z=1)+\delta_{q,z}^{(\Delta t)}I_{i,t-\Delta t}(q=q_1,z=1)\},$
where $\alpha_t$ is a time-specific fixed effect, $\epsilon_{i,t}\stackrel{\text{i.i.d.}}{\sim}\mathcal N(0,1)$ is random error, the coefficients $\delta_q^{(\Delta t)}, \delta_z^{(\Delta t)}, \delta_{q,z}^{(\Delta t)}$ represent the effects of treated probabilities, assignment statuses, and their interactions, and $I_{i,t}(\cdot)$ is the indicator function. We set all $\delta$ terms to 1 when $\Delta t\leq m$ and 0 otherwise, additionally with $\alpha_t = \log (t)$.

The potential outcomes remain fixed throughout the simulations, while the treatment assignment process is repeated 1,000 times to calculate three objective functions: $\mathcal L(1,0)$, $\mathcal L(0,1)$, and $\mathcal L(0.5,0.5)$.
The results across various scenarios are summarized in Table~\ref{table.optimaldesign}. Our analysis reveals that the minimax optimal design $\mathbb{T}^*$ ($\mathbb{T}^*_1$ or $\mathbb{T}^*_2$), consistently yields lower risk values compared to other design options. The optimal design reduces the risk by an average of 19.7\% and 3.7\%, respectively, compared to $\mathbb T^1$ and $\mathbb T^2$. Furthermore, the choice between the optimal designs $\mathbb{T}^*_1$ and $\mathbb{T}^*_2$ is contingent upon the value of $\theta^*$: $\mathbb{T}^*_1$ is optimal when $\theta^* \leq 1/p$, while $\mathbb{T}^*_2$ is preferred when $1/p < \theta^* \leq (3p+2)/p^2$. This finding underscores the superior performance of the minimax designs in terms of risk reduction, aligning well with the theoretical predictions outlined in \Cref{theorem.minimaxdesign} and \Cref{corollary.minimaxdesign}.


\begin{table}[ht]
    \centering
    \caption{Value of the objective function (combined risk) under different designs}\label{table.optimaldesign}
    \begin{threeparttable}
        \begin{tabular}{ccccccccccc} 
            \hline
            Model & $p$ & $N$ & $T$ & Obj. Func. & $\theta^*$ & $\mathbb T^*$ & $\mathbb T^*_1$ & $\mathbb T^*_2$ & $\mathbb T^1$ & $\mathbb T^2$\\
            \hline
            \multirow{6}{*}{1} & \multirow{3}{*}{1} & \multirow{3}{*}{20} & \multirow{3}{*}{100} & $\mathcal L(1,0)$ & 1.238 & $\mathbb T^*_2$ & 0.412 & \textbf{0.403} & 0.416 & 0.410\\
            &&&& $\mathcal L(0,1)$ & 0.296 & $\mathbb T^*_1$ & \textbf{0.351} & 0.382 & 0.354 & 0.388\\
            &&&& $\mathcal L(0.5,0.5)$ & 0.722 & $\mathbb T^*_1$ & \textbf{0.382} & 0.392 & 0.385 & 0.399\\
            \cline{2-11}
            & \multirow{3}{*}{2} & \multirow{3}{*}{20} & \multirow{3}{*}{160} & $\mathcal L(1,0)$ & 1.238 & $\mathbb T^*_2$ & 0.526 & \textbf{0.497} & 0.854 & 0.504\\
            &&&& $\mathcal L(0,1)$ & 0.296 & $\mathbb T^*_1$ & \textbf{0.447} & 0.455 & 0.634 & 0.459\\
            &&&& $\mathcal L(0.5,0.5)$ & 0.722 & $\mathbb T^*_2$ & 0.487 & \textbf{0.476} & 0.744 & 0.481\\
            \cline{1-11}
            \multirow{6}{*}{2} & \multirow{3}{*}{1} & \multirow{3}{*}{20} & \multirow{3}{*}{100} & $\mathcal L(1,0)$ & 1.238 & $\mathbb T^*_2$ & 2.949 & \textbf{2.895} & 3.003 & 2.954\\
            &&&& $\mathcal L(0,1)$ & 0.296 & $\mathbb T^*_1$ & \textbf{15.335} & 16.502 & 15.530 & 16.940\\
            &&&& $\mathcal L(0.5,0.5)$ & 0.722 & $\mathbb T^*_1$ & \textbf{9.142} & 9.698 & 9.266 & 9.947\\
            \cline{2-11}
            & \multirow{3}{*}{2} & \multirow{3}{*}{20} & \multirow{3}{*}{160} & $\mathcal L(1,0)$ & 1.238 & $\mathbb T^*_2$ & 6.876 & \textbf{6.471} & 13.975 & 6.566\\
            &&&& $\mathcal L(0,1)$ & 0.296 & $\mathbb T^*_1$ & \textbf{30.685} & 31.007 & 43.367 & 31.412\\
            &&&& $\mathcal L(0.5,0.5)$ & 0.722 & $\mathbb T^*_2$ & 18.781 & \textbf{18.739} & 28.671 & 18.989\\
            \cline{1-11}
        \end{tabular}
        \begin{tablenotes}
        \footnotesize
        \item[] Note: ``Obj. Func.'' is short for ``Objective Function''.
        \end{tablenotes}
    \end{threeparttable}
\end{table}

\subsection{Asymptotic normality and confidence interval}
\label{section.simulation.normal}

In this section, we conduct simulations to evaluate the asymptotic normality and confidence intervals under the minimax optimal design. We begin by utilizing the same parameters as outlined in \Cref{sec:od}. Furthermore, we extend our analysis to multi-center experiments and investigate a broader range of sample sizes and time periods to assess the impact of Assumption~\ref{assumption.NT}.
Our simulations encompass three scenarios regarding the specification of the carryover effects order:

Case 1 (Correct Specification): $p=m=2$. The estimands are well-defined, and the Horvitz--Thompson estimators exhibit asymptotic normality; 

Case 2 (Over-Specification): $p=3>m = 2$. The estimands remain well-defined, and the Horvitz--Thompson estimators still display asymptotic normality; 

Case 3 (Under-Specification): $p=1<m = 2$. Although the estimands are not well-defined, the Horvitz--Thompson estimators still exhibit asymptotic normality. 

The results presented in this section focus on Model 2 under the minimax optimal design $\mathbb{T}^*_1$, while analogous findings under another typical minimax optimal design $\mathbb{T}^*_2$ can be found in the Supplementary Material.
Table~\ref{table.single} presents the simulation results for Cases 1 and 2 under $\mathbb T^*_1$, including metrics such as bias, variance, variance estimator, and empirical coverage probability (CP) for 95\% confidence intervals. Key observations include: (1) The bias of the Horvitz--Thompson estimators is negligible, supporting the unbiasedness of the estimators under both correctly specified and over-specified carryover effects orders. (2) The variance estimators closely approximate the asymptotic variances, although they are slightly conservative. This observation is consistent with \Cref{theorem.variance.sm}, \Cref{corollary.upperbound}, and \Cref{corollary.estimator}. (3) The coverage probability of the confidence intervals approaches the target level of 95\%, validating the asymptotic normality of the estimators as stated in Theorem~\ref{theorem.CLT}. (4) The variance is larger in the case of over-specification (Case 2) compared to correct specification (Case 1), which is consistent with discussions in Section~\ref{sec:normality}.

\begin{table}[!h]
    \centering
    \caption{Simulation results in single-center randomized experiments under $\mathbb T^*_1$}\label{table.single}
    \begin{threeparttable}
        \begin{tabular}{ccccccccccc}
            \hline
            $N$ & $T$ & $p$ & Estimand & Value & Bias & $\mathrm{var}(\hat \tau)$ & $\widehat{\mathrm{var}}^U(\hat \tau)$ & CP\\
            \hline
            \multirow{4}{*}{10} & \multirow{4}{*}{480} & \multirow{4}{*}{2} & $\taud[q_1]$ & 6 & -0.05 & 3.20 & 3.16 & 0.953\\
            & & & $\taud[q_2]$ & 3 & 0.04 & 1.33 & 1.40 & 0.956\\
            & & & $\taus[1]$ & 6 & -0.08 & 9.21 & 10.66 & 0.969\\
            & & & $\taus[0]$ & 3 & 0.01 & 3.98 & 4.13 & 0.955\\
            \hline
            \multirow{4}{*}{10} & \multirow{4}{*}{480} & \multirow{4}{*}{3} & $\taud[q_1]$ & 6 & -0.04 & 4.44 & 4.76 & 0.955\\
            & & & $\taud[q_2]$ & 3 & 0.04 & 1.99 & 2.09 & 0.946\\
            & & & $\taus[1]$ & 6 & -0.05 & 14.99 & 16.05 & 0.953\\
            & & & $\taus[0]$ & 3 & 0.04 & 6.21 & 6.21 & 0.960\\
            \hline
            \multirow{4}{*}{20} & \multirow{4}{*}{480} & \multirow{4}{*}{2} & $\taud[q_1]$ & 6 & -0.02 & 1.84 & 1.95 & 0.950\\
            & & & $\taud[q_2]$ & 3 & -0.01 & 0.75 & 0.78 & 0.952\\
            & & & $\taus[1]$ & 6 & -0.03 & 8.68 & 9.78 & 0.964\\
            & & & $\taus[0]$ & 3 & -0.02 & 3.18 & 3.62 & 0.963\\
            \hline
            \multirow{4}{*}{20} & \multirow{4}{*}{480} & \multirow{4}{*}{3} & $\taud[q_1]$ & 6 & 0.03 & 2.94 & 2.96 & 0.944\\
            & & & $\taud[q_2]$ & 3 & 0.03 & 1.20 & 1.18 & 0.940\\
            & & & $\taus[1]$ & 6 & -0.03 & 13.77 & 14.76 & 0.959\\
            & & & $\taus[0]$ & 3 & -0.03 & 5.28 & 5.42 & 0.953\\
            \hline
            \multirow{4}{*}{20} & \multirow{4}{*}{720} & \multirow{4}{*}{2} & $\taud[q_1]$ & 6 & 0.04 & 1.43 & 1.38 & 0.939\\
            & & & $\taud[q_2]$ & 3 & -0.02 & 0.56 & 0.56 & 0.934\\
            & & & $\taus[1]$ & 6 & 0.11 & 6.93 & 7.01 & 0.949\\
            & & & $\taus[0]$ & 3 & 0.05 & 2.66 & 2.70 & 0.949\\
            \hline
            \multirow{4}{*}{20} & \multirow{4}{*}{720} & \multirow{4}{*}{3} & $\taud[q_1]$ & 6 & -0.02 & 1.94 & 2.04 & 0.950\\
            & & & $\taud[q_2]$ & 3 & -0.02 & 0.81 & 0.85 & 0.953\\
            & & & $\taus[1]$ & 6 & -0.03 & 9.32 & 10.46 & 0.962\\
            & & & $\taus[0]$ & 3 & -0.03 & 3.56 & 4.04 & 0.965\\
            \hline
        \end{tabular}
        \begin{tablenotes}
        \footnotesize
        \item[] Note: Value, true value; CP, coverage probability.
        \end{tablenotes}
    \end{threeparttable}
\end{table}

For Case 3 (under-specification of $m$), although the estimands are not well-defined, the Horvitz--Thompson estimators still demonstrate asymptotic normality, consistent with the predictions of Theorem~\ref{theorem.misspecified}. This behavior is visually confirmed in Figure~\ref{fig.mis}, which presents Q-Q plots for the Horvitz--Thompson estimators under $\mathbb T^*_1$ with $ N = 10 $ and $ T = 480 $. The points closely follow the diagonal line, indicating a good fit to the normal distribution.

\begin{figure}[!h]
    \centering
    
    \begin{subfigure}[b]{0.4\linewidth}
        \caption{Q-Q plot of $\taudhat[q_1]$}
    \includegraphics[width=\linewidth]{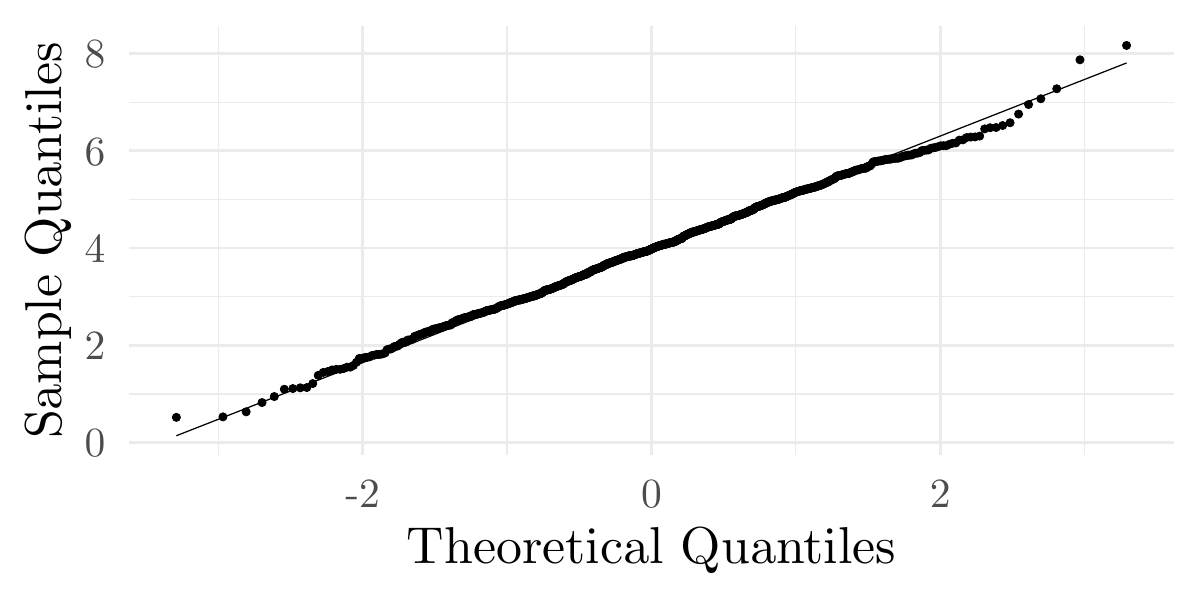}
    \end{subfigure}
    \quad
    \begin{subfigure}[b]{0.4\linewidth}
        \caption{Q-Q plot of $\taudhat[q_2]$}
    \includegraphics[width=\linewidth]{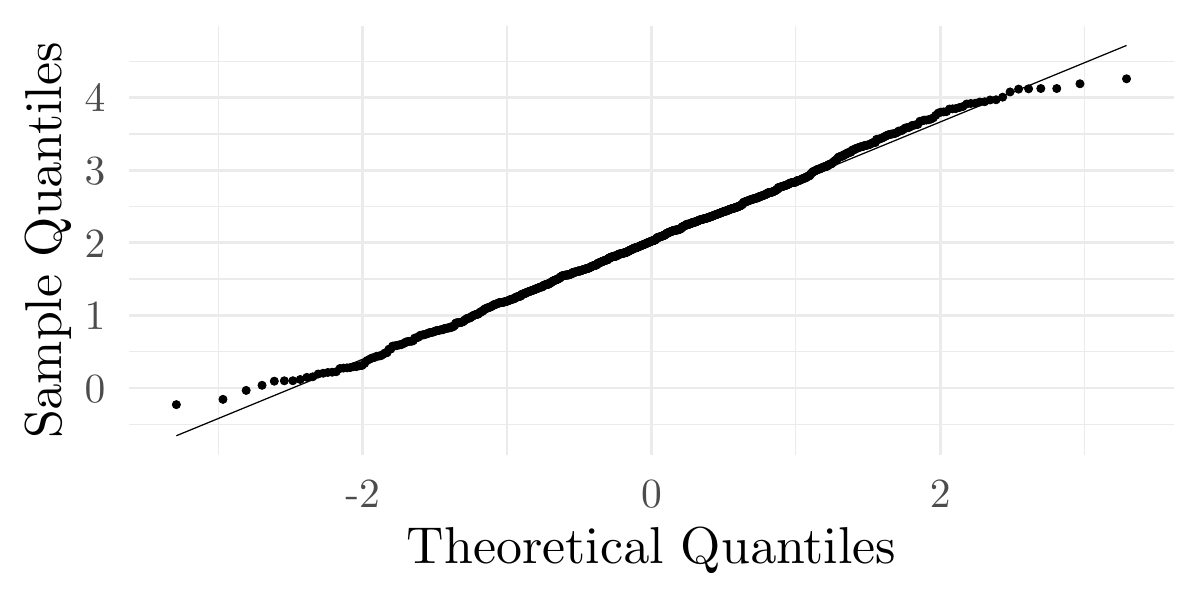}
    \end{subfigure}
    
    \medskip 
    
    \begin{subfigure}[b]{0.4\linewidth}
        \caption{Q-Q plot of $\taushat[1]$}
    \includegraphics[width=\linewidth]{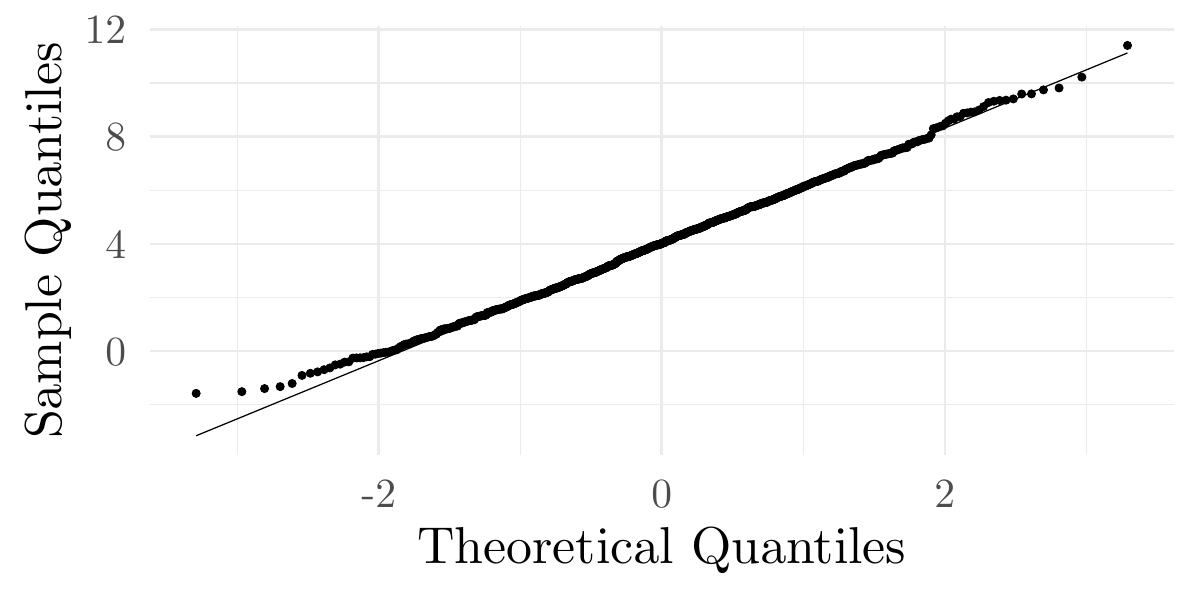}
    \end{subfigure}
    \quad
    \begin{subfigure}[b]{0.4\linewidth}
        \caption{Q-Q plot of $\taushat[0]$}
    \includegraphics[width=\linewidth]{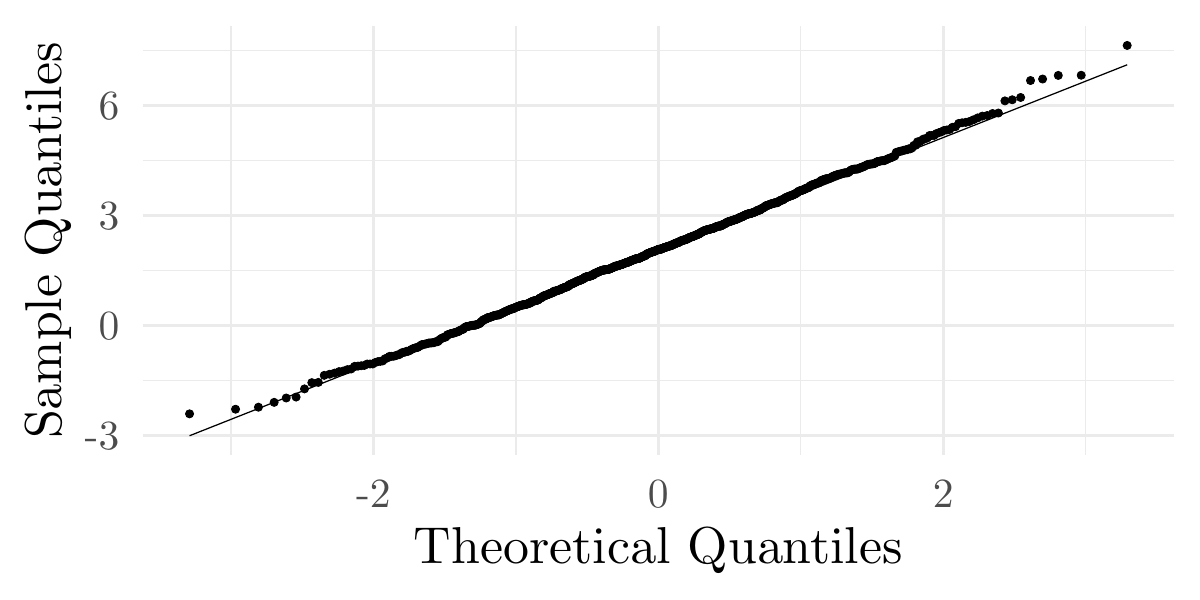}
    \end{subfigure}
    
    \caption{Q-Q plots of the estimators under $\mathbb T^*_1$ when $p=1$, $N=10$ and $T=480$. }\label{fig.mis}
\end{figure}

Table~\ref{table.multi} presents the simulation results in multi-center experiments with $G=48$ and $N_{[g]}=5$ under $\mathbb T^*_1$, showing similar findings to those in single-center experiments. The Horvitz--Thompson estimators exhibit asymptotic normality, and the variance estimation is reliable, with coverage probabilities close to the nominal level of 95\%. These findings further validate the effectiveness of the proposed estimation method in multi-center randomized experiments, particularly in handling both direct and spillover effects under diverse interference patterns.

\begin{table}[!h]
    \centering
    \caption{Simulation results in multi-center randomized experiments under $\mathbb T^*_1$}\label{table.multi}
    \begin{threeparttable}
        \begin{tabular}{ccccccccccc}
            \hline
            $N$ & $T$ & $p$ & Estimand & Value & Bias & $\mathrm{var}(\hat \tau)$ & $\widehat{\mathrm{var}}^U(\hat \tau)$ & CP\\
            \hline
            \multirow{4}{*}{$48\times 5$} & \multirow{4}{*}{120} & \multirow{4}{*}{2} & $\taud[q_1]$ & 6 & 0.00 & 0.38 & 0.37 & 0.945\\
            & & & $\taud[q_2]$ & 3 & -0.01 & 0.15 & 0.15 & 0.941\\
            & & & $\taus[1]$ & 6 & -0.02 & 0.77 & 0.80 & 0.959\\
            & & & $\taus[0]$ & 3 & -0.03 & 0.28 & 0.28 & 0.952\\
            \hline
            \multirow{4}{*}{$48\times 5$} & \multirow{4}{*}{120} & \multirow{4}{*}{3} & $\taud[q_1]$ & 6 & 0.00 & 0.56 & 0.55 & 0.952\\
            & & & $\taud[q_2]$ & 3 & 0.00 & 0.24 & 0.22 & 0.942\\
            & & & $\taus[1]$ & 6 & -0.05 & 1.20 & 1.20 & 0.954\\
            & & & $\taus[0]$ & 3 & -0.04 & 0.40 & 0.42 & 0.946\\
            \hline
            \multirow{4}{*}{$48\times 5$} & \multirow{4}{*}{480} & \multirow{4}{*}{2} & $\taud[q_1]$ & 6 & 0.00 & 0.11 & 0.12 & 0.941\\
            & & & $\taud[q_2]$ & 3 & 0.00 & 0.05 & 0.05 & 0.947\\
            & & & $\taus[1]$ & 6 & -0.02 & 0.26 & 0.25 & 0.939\\
            & & & $\taus[0]$ & 3 & -0.02 & 0.10 & 0.10 & 0.963\\
            \hline
            \multirow{4}{*}{$48\times 5$} & \multirow{4}{*}{480} & \multirow{4}{*}{3} & $\taud[q_1]$ & 6 & 0.01 & 0.16 & 0.18 & 0.965\\
            & & & $\taud[q_2]$ & 3 & 0.00 & 0.08 & 0.08 & 0.957\\
            & & & $\taus[1]$ & 6 & -0.01 & 0.36 & 0.38 & 0.957\\
            & & & $\taus[0]$ & 3 & -0.01 & 0.15 & 0.16 & 0.945\\
            \hline
        \end{tabular}
        \begin{tablenotes}
        \footnotesize
        \item[] Note: Value, true value; CP, coverage probability.
        \end{tablenotes}
    \end{threeparttable}
\end{table}

\begin{figure}[!h]
    \centering
    
    \begin{subfigure}[b]{0.45\linewidth}
        \caption{$N=10$ and $T$ varies (single-center)}
        \label{fig.compare.a}
    \includegraphics[width=\linewidth]{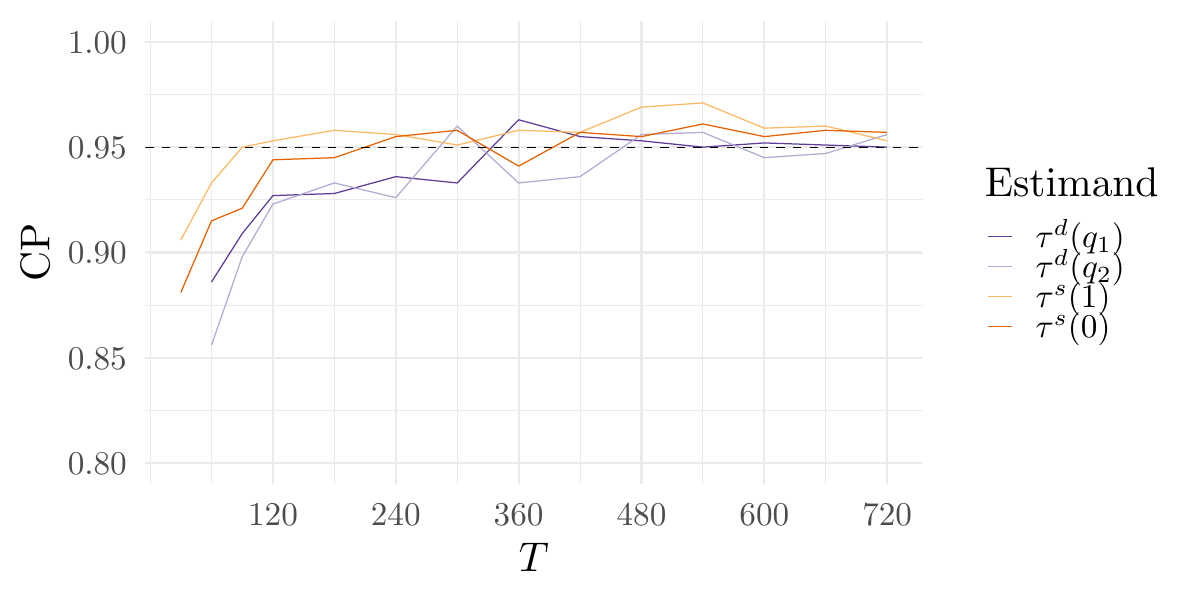}
    \end{subfigure}
    \quad 
    \begin{subfigure}[b]{0.45\linewidth}
        \caption{$T=480$ and $N$ varies (single-center)}
        \label{fig.compare.b}
    \includegraphics[width=\linewidth]{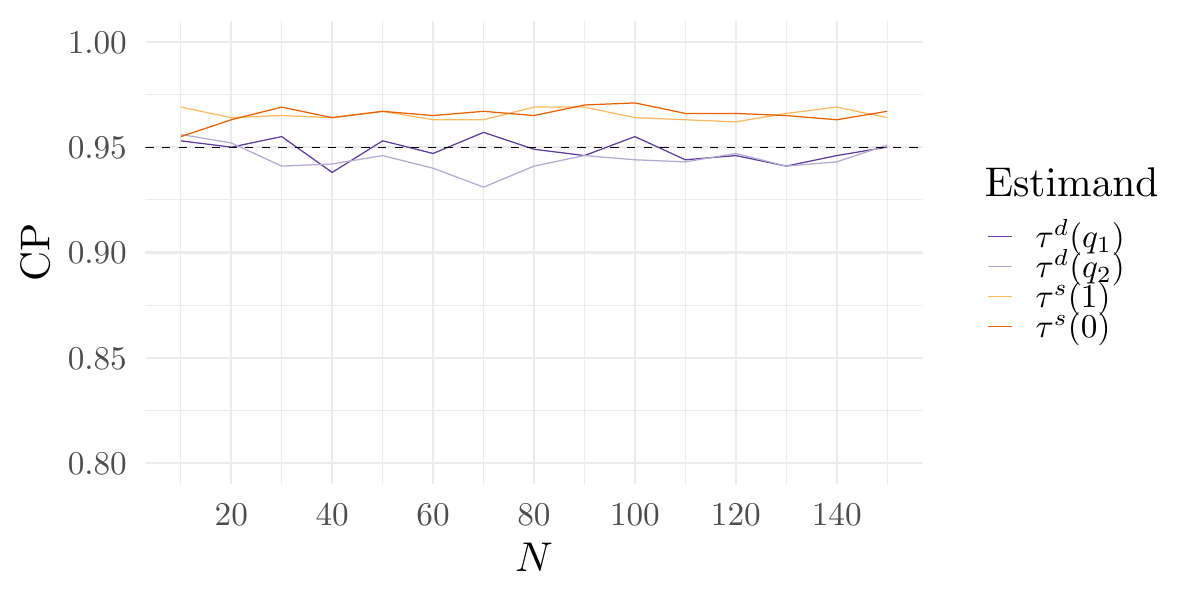}
    \end{subfigure}

    \medskip

    \begin{subfigure}[b]{0.45\linewidth}
        \caption{$N_{[g]}=5$ and $T$ varies (multi-center)}
        \label{fig.compare.c}
    \includegraphics[width=\linewidth]{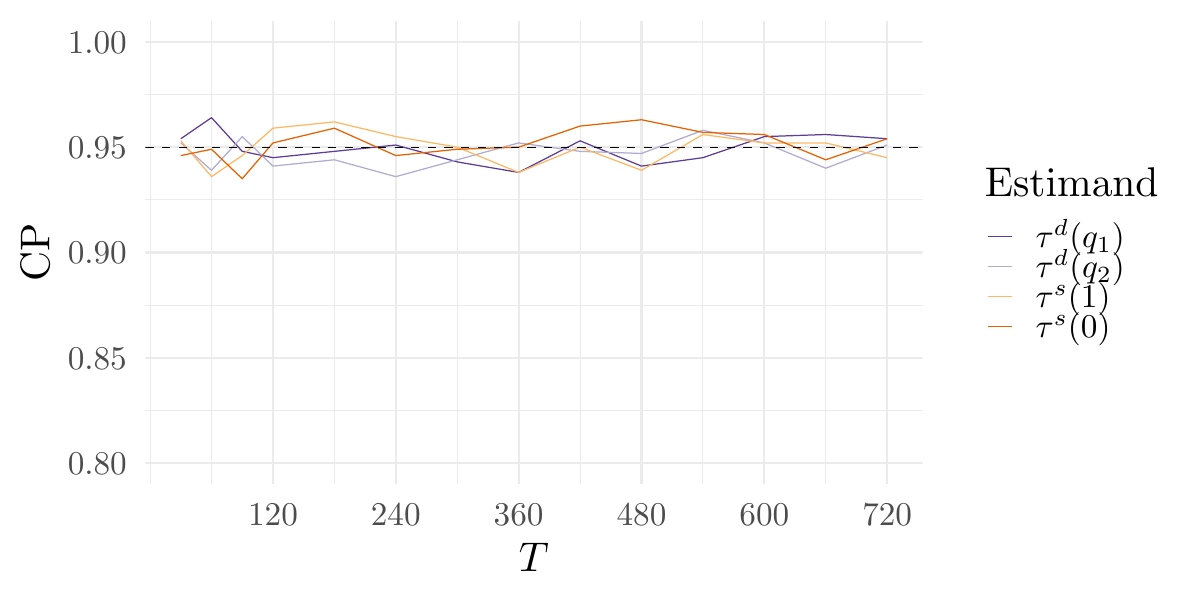}
    \end{subfigure}
    \quad
    \begin{subfigure}[b]{0.45\linewidth}
        \caption{$T=240$ and $N$ varies (multi-center)}
        \label{fig.compare.d}
    \includegraphics[width=\linewidth]{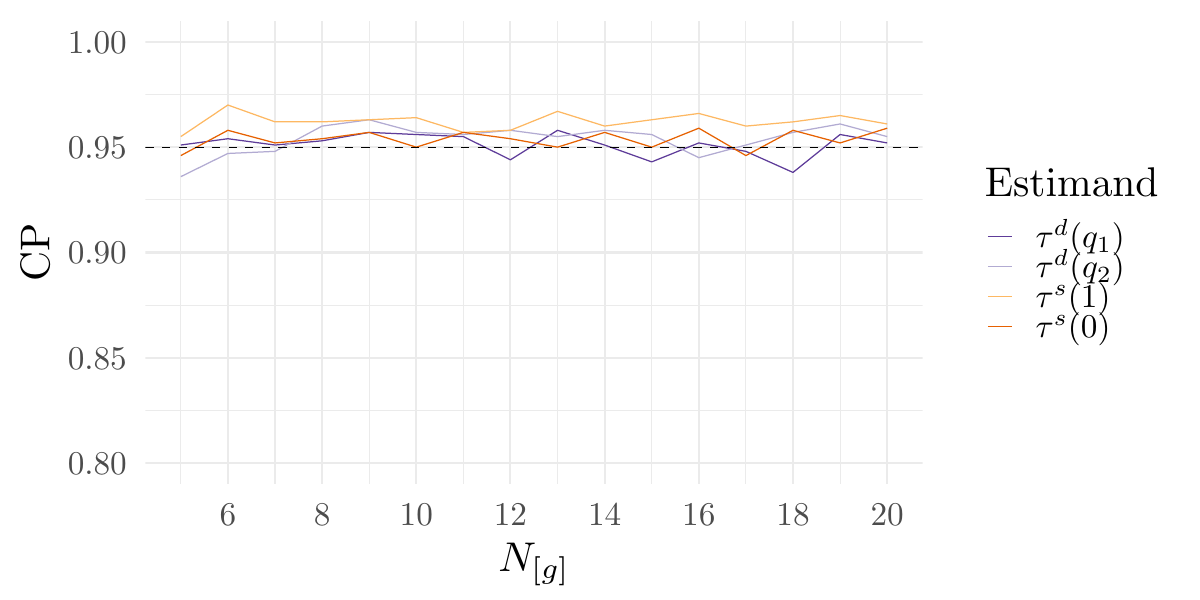}
    \end{subfigure}
    
    \caption{CP under $\mathbb T^*_1$ with different population size when $p=2$. }\label{fig.compare}
\end{figure}

Figure~\ref{fig.compare} compares the coverage probabilities (CPs) under $\mathbb T^*_1$ for different population sizes in single-center and multi-center experiments, providing insights into the impact of Assumption~\ref{assumption.NT}. Key observations include: (1) Multi-center experiments (\Cref{fig.compare.c} and \Cref{fig.compare.d}) exhibit significantly more stable CPs across various combinations of $ N $ and $ T $, indicating their adaptability to different experimental conditions and providing more reliable estimation results. (2) Single-center experiments show poor performance when $ T $ is small (\Cref{fig.compare.a}), with CPs falling far below the nominal level of 0.95. This underscores the limitations of single-center studies conducted over short time periods. (3) In single-center experiments (\Cref{fig.compare.a} and \Cref{fig.compare.b}), longer time periods ($ T $) are more beneficial than larger sample sizes ($ N $) for achieving higher CPs closer to the nominal level. These findings suggest that extending the duration of studies should be prioritized over increasing the number of participants in single-center designs whenever feasible.

\subsection{Order identification}

\begin{figure}[!h]
    \centering
    
    \begin{subfigure}[b]{0.45\linewidth}
        \caption{$N=240$ and $T$ varies (single-center)}
        \label{fig.order.a}
    \includegraphics[width=\linewidth]{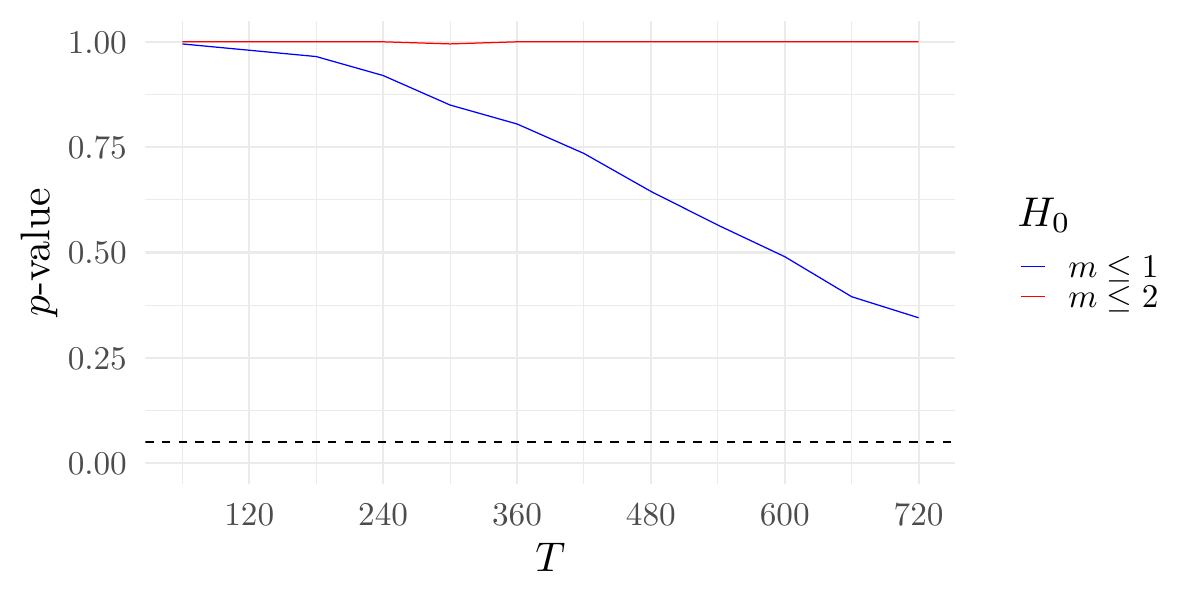}
    \end{subfigure}
    \quad
    \begin{subfigure}[b]{0.45\linewidth}
        \caption{$T=480$ and $N$ varies (single-center)}
        \label{fig.order.b}
    \includegraphics[width=\linewidth]{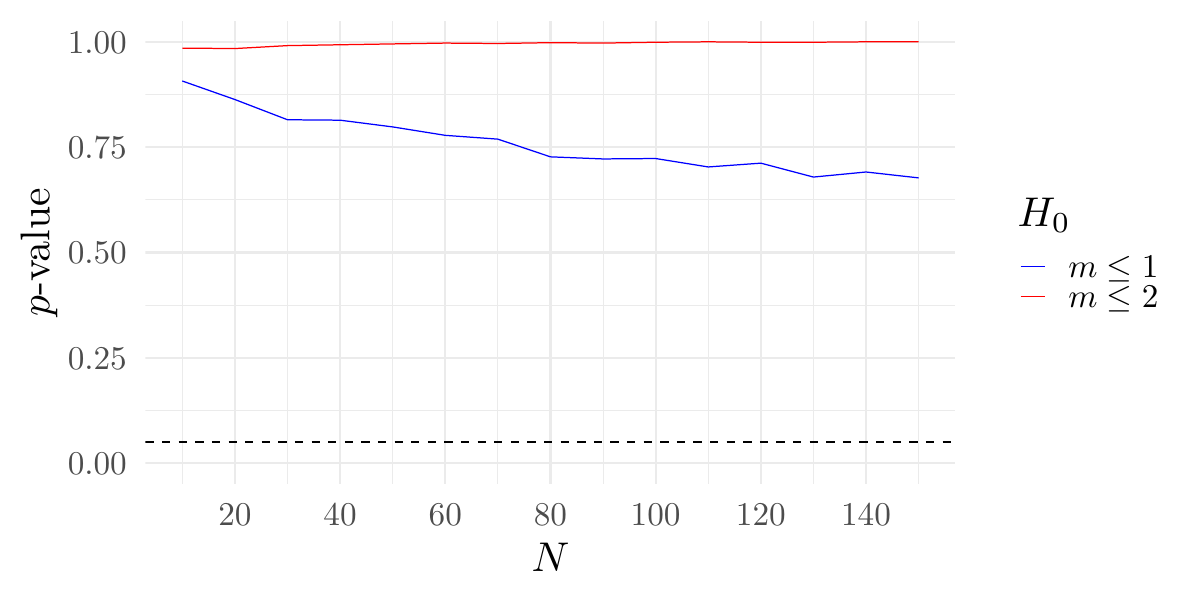}
    \end{subfigure}
    
    \medskip 

    \begin{subfigure}[b]{0.45\linewidth}
        \caption{$N_{[g]}=5$ and $T$ varies (multi-center)}
        \label{fig.order.c}
    \includegraphics[width=\linewidth]{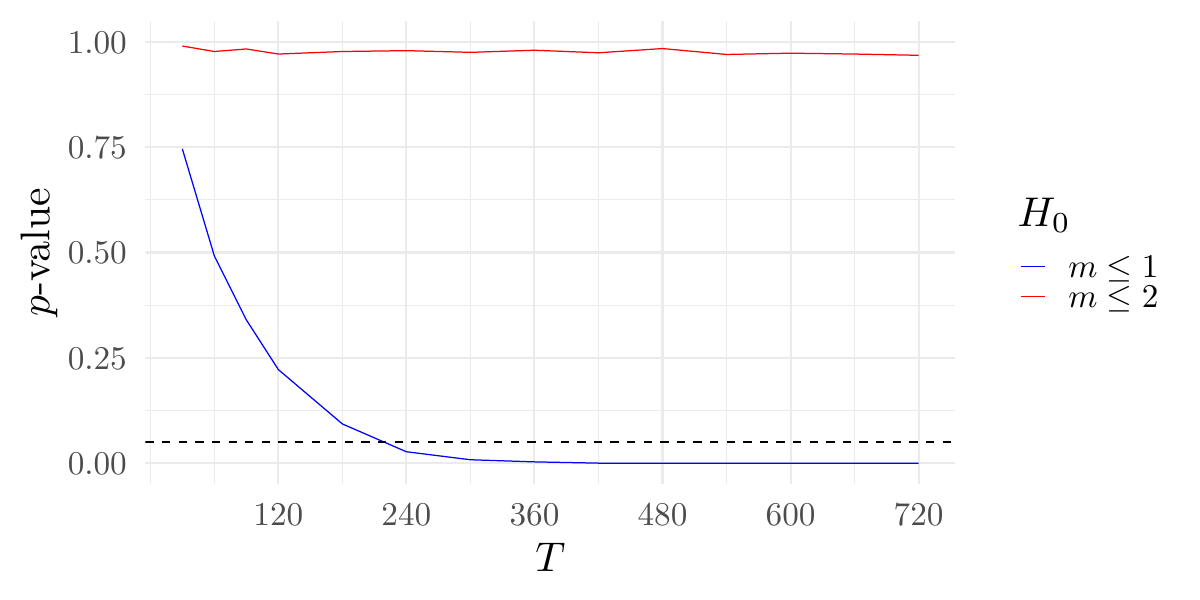}
    \end{subfigure}
    \quad
    \begin{subfigure}[b]{0.45\linewidth}
        \caption{$T=480$ and $N$ varies (multi-center)}
        \label{fig.order.d}
    \includegraphics[width=\linewidth]{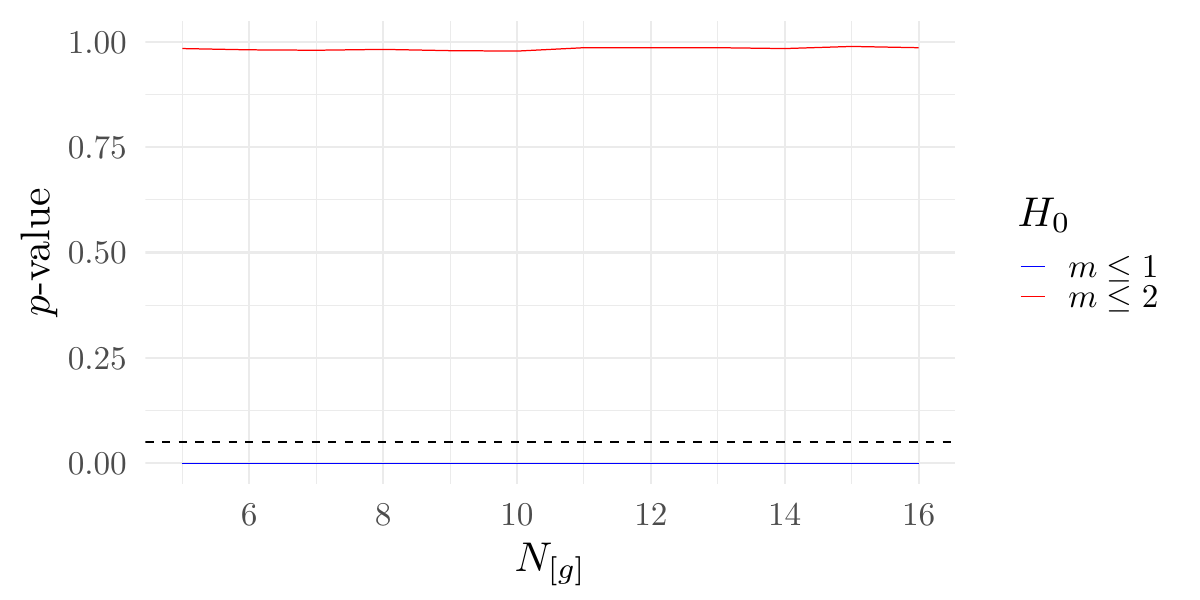}
    \end{subfigure}
    
    \caption{$p$-value from the Wald test for order identification under $\mathbb T^*_1$. }\label{fig.order}
\end{figure}

This section explores the identification of the carryover effects order based on the methods presented in Section~\ref{section.order}. Figure~\ref{fig.order} illustrates the $p$-values obtained from the Wald test for order identification across different experimental setups, where the true order is set at $m=2$. In single-center experiments (\Cref{fig.order.a} and \Cref{fig.order.b}), the $p$-values for the null hypothesis $ H_0: m \leq 2 $ consistently remain above the conventional significance level of 0.05, regardless of changes in $ T $ or $ N $. This result suggests that we cannot reject the hypothesis that the carryover effects order is at most 2. In contrast, for $ H_0: m \leq 1 $, the $p$-values show a decreasing trend as $ T $ or $ N $ increases (as shown in \Cref{fig.order.a} and \Cref{fig.order.b}). This suggests that larger sample sizes enhance the detection of carryover effects order; however, the rate of decline is relatively gradual, indicating the need for more samples to achieve greater statistical power.

Multi-center experiments (\Cref{fig.order.c} and \Cref{fig.order.d}), on the other hand, show a marked improvement in the ability to identify the carryover effects order. For $ H_0: m \leq 1 $, the $p$-values quickly fall below the 0.05 threshold as either $ T $ or $ N $ increases, allowing us to confidently reject the hypothesis that the carryover effects order is at most 1. This finding aligns with the true order of $ m = 2 $.

\section{Application based analysis}
\label{sec:empirical}

In this section, we analyze a real-world data comparing the performance of human traders and algorithmic traders in the stock markets.
The dataset, introduced by \citet{bojinov2019}, consists of trading data from 10 different markets across the US, Europe, and Asia in 2016. 
The experimental process involves two types of traders (human traders and algorithmic trades), labeled as ``A'' and ``B'', whose identities remain confidential due to privacy policies.
Whenever a stock needs to be traded, either trader ``A'' or ``B'' is randomly selected for the transaction. This allocation follows an independent and identically distributed (i.i.d.) Bernoulli process, with probabilities of selecting trader ``A'' set to $q_1=0.75$ or $q_2=0.5$.
Three of the ten markets change the selection probability in the middle of the year. 
The potential outcome of interest is slippage, defined as the difference between the quoted price of a stock and its actual traded price, which can also serve as an indicator of market liquidity. \Cref{assumption.nonanticipativity} and \Cref{assumption.carryover} are reasonable, as current slippage is usually not affected by future trader assignments, and may be influenced by recent trading history. These two assumptions are similarly employed in the analysis conducted by \citet{bojinov2019}, with an additional requirement that outcomes are unaffected by distant trading history. \citet{brogaard2023does} have demonstrated that different proportions of human traders relative to algorithmic traders can affect market quality, including liquidity and price efficiency. Consequently, \Cref{assumption.stratifiedinterference} is reasonable in this context, positing that trading outcomes are influenced by both the type of individual trader (treatment or control) and the ratio of traders, which can be represented by the selection probabilities.

To demonstrate the advantages of the proposed minimax optimal design, we generate a synthetic dataset based on real-world observations. We set $p=2$, which is also considered by \citet{bojinov2019}, and fit a model:
$Y_{i,t}^* = \alpha t + \sum_{\Delta t=0}^p \{\delta_q^{(\Delta t)} I_{i,t-\Delta t}(q=q_1) + \delta_z^{(\Delta t)} f_{i,t-\Delta t}(z=1) + \delta_{q,z}^{(\Delta t)}I_{i,t-\Delta t}(q=q_1)f_{i,t-\Delta t}(z=1)\}$,
where $Y_{i,t}^*$ denotes the slippage in market $i$ at time $t$, $f_{i,t}(z=1)$ represents the proportion of trader ``A'' in market $i$ at time $t$ and $I_{i,t}(q=q_1)$ is an indicator function for whether $q=q_1$ in market $i$ at time $t$.
Although the inclusion of the term $ t $ may seem counterintuitive, it is statistically significant in the model-fitting process. Let $\hat \alpha$, $\hat \delta_q^{(\Delta t)}$, $\hat \delta_z^{(\Delta t)}$, and $ \hat \delta_{q,z}^{(\Delta t)}$ represent the estimators for the coefficients of the fitted model.
In total, there are 247 time points; however, for simplicity, we analyze the last 244 time points in our study.
This choice facilitates the application of typical minimax optimal designs, $\mathbb{T}^*_1$ and $\mathbb{T}^*_2$, which are more straightforward when $T-4p$ is a multiple of $p$ and $T-(4p+2)$ is a multiple of $p+1$, respectively, as indicated by Theorem~\ref{theorem.minimaxdesign} and Corollary~\ref{corollary.minimaxdesign}.
The synthetic dataset is generated according to $\yit{q}{z}{1}= \hat \alpha t + \sum_{\Delta t=0}^p \{ \hat \delta_q^{(\Delta t)} I_{i,t-\Delta t}(q=q_1) + \hat \delta_z^{(\Delta t)} I_{i,t-\Delta t}(z=1) + \hat \delta_{q,z}^{(\Delta t)}I_{i,t-\Delta t}(q=q_1)I_{i,t-\Delta t}(z=1)\}+\varepsilon_{i,t}$, where $\varepsilon_{i,t}\stackrel{\text{i.i.d.}}{\sim}\mathcal N(0,1)$. 
Three objective functions, including $\mathcal{L}(1,0)$, $\mathcal L(0,1)$, and $\mathcal L(0.5,0.5)$, are calculated under the designs $\mathbb{T}^*_1$, $\mathbb{T}^*_2$, $\mathbb{T}^1$, and $\mathbb{T}^2$, as shown in \Cref{table.real}.
The design with the minimum risk is $\mathbb{T}^*_2$, which aligns with the findings presented in \Cref{corollary.minimaxdesign} when $\theta^*\in(1/p,(3p+2)/p^2]$.

\begin{table}[ht]
    \centering
    \caption{Value of the objective function under different designs}\label{table.real}
    \begin{threeparttable}
        \begin{tabular}{ccccccccc} 
            \hline
            $N$ & $T$ & Obj. Func. & $\theta^*$ & $\mathbb T^*$ & $\mathbb T^*_1$ & $\mathbb T^*_2$ & $\mathbb T^1$ & $\mathbb T^2$\\
            \hline
            \multirow{3}{*}{10} & \multirow{3}{*}{244} & $\mathcal L(1,0)$ & 1.763 & $\mathbb T^*_2$ & 0.265 & \textbf{0.225} & 0.981 & 0.239\\
            && $\mathcal L(0,1)$ & 0.902 & $\mathbb T^*_2$ & 0.528 & \textbf{0.510} & 1.230 & 0.537\\
            && $\mathcal L(0.5,0.5)$ & 1.221 & $\mathbb T^*_2$ & 0.396 & \textbf{0.368} & 1.106 & 0.388\\
            \hline
        \end{tabular}
        \begin{tablenotes}
        \footnotesize
        \item[] Note: ``Obj. Func.'' is short for ``Objective Function''.
        \end{tablenotes}
    \end{threeparttable}
\end{table}

Additionally, Table~\ref{table.finance} evaluates the normality of the Horvitz--Thompson estimators under the minimax optimal design $ \mathbb{T}^*_2$. The results confirm the consistency of the estimators, with negligible bias and conservative variance estimators. CPs are close to the nominal level, supporting the validity of the inferences.

\begin{table}[ht]
    \centering
    \caption{Simulation results in application based analysis}\label{table.finance}
    \begin{threeparttable}
        \begin{tabular}{cccccc}
            \hline
            Estimand & Value & Bias & $\mathrm{var}(\hat \tau)$ & $\widehat{\mathrm{var}}^U(\hat \tau)$ & CP \\
            \hline
            $\taud[0.75]$ & -0.994 & -0.009 & 0.143 & 0.154 & 0.932 \\
            $\taud[0.5]$ & -0.737 & -0.007 & 0.089 & 0.087 & 0.930 \\
            $\taus[1]$ & -0.547 & 0.007 & 0.108 & 0.118 & 0.946\\
            $\taus[0]$ & -0.290 & 0.009 & 0.416 & 0.396 & 0.956\\
            \hline
        \end{tabular}
        \begin{tablenotes}
        \footnotesize
        \item[] Note: The reported expectations and risks are based on 1,000 replications. Value, true value; CP, coverage probability.
        \end{tablenotes}
    \end{threeparttable}
\end{table}

\section{Conclusion}\label{sec:conclusion}


In conclusion, we propose a novel minimax optimal design that demonstrates significant potential across various domains of policy evaluation. The design can effectively account for spillover and carryover effects in diverse scenarios: from evaluating vaccination programs with herd immunity effects in public health, to optimizing incentive strategies on digital platforms, as well as assessing educational interventions where peer effects and learning persistence are crucial. Its ability to adapt to misspecified carryover effects makes it particularly valuable in these real-world applications.

From a theoretical perspective, our design enhances the estimation of direct and spillover effects by optimizing treatment time points to improve the reliability of causal effect estimations under worst-case scenarios. We develop a polynomial-time algorithm to implement this optimal design, making it computationally feasible for large-scale applications. Moreover, we establish the consistency and asymptotic normality of the Horvitz--Thompson estimators for both direct and spillover effects, providing theoretical guarantees for the design's performance. This combination of computational efficiency and theoretical robustness positions our design as a practical tool for complex policy evaluations. Future research could explore optimal designs by solving deterministic or stochastic optimization problems introduced by \citet{zhao2024}, incorporate time-varying covariates to improve efficiency, and investigate strategies to relax the assumptions related to stratified interference.

\section*{Supplementary Material}

The Supplementary Material provides the specific forms of variances and their estimators, additional simulation results, and proofs of the theoretical results in the main text.

\bibliographystyle{apalike.bst}



\newpage

\setcounter{page}{1}
\renewcommand{\thepage}{S\arabic{page}}
\setcounter{table}{0}
\renewcommand{\thesection}{\Alph{section}}
\setcounter{section}{0}
\renewcommand{\theassumption}{S\arabic{assumption}}
\setcounter{assumption}{0}
\renewcommand{\thetheorem}{S\arabic{theorem}}
\setcounter{theorem}{0}
\renewcommand{\theproposition}{S\arabic{proposition}}
\setcounter{proposition}{0}
\renewcommand{\thecondition}{S\arabic{condition}}
\setcounter{condition}{0}
\renewcommand{\thedefinition}{S\arabic{definition}}
\setcounter{definition}{0}
\renewcommand{\thelemma}{S\arabic{lemma}}
\setcounter{lemma}{0}
\renewcommand{\thecorollary}{S\arabic{corollary}}
\setcounter{corollary}{0}
\renewcommand{\thealgorithm}{S\arabic{algorithm}}
\setcounter{corollary}{0}
\renewcommand{\thefigure}{S\arabic{figure}}
\setcounter{figure}{0}
\renewcommand{\thetable}{S\arabic{table}}
\setcounter{table}{0}
\renewcommand{\theequation}{S\arabic{equation}}
\setcounter{equation}{0}

\begin{center}
\LARGE\bf Supplementary Material for ``Minimax Optimal Design with Spillover and Carryover Effects''
\end{center}

The Supplementary Material is organized as follows:

\Cref{sec:sm.variance} provides variances of the Horvitz--Thompson estimators under the minimax optimal design $\mathbb T^*$, their upper bounds, and their conservative estimators.

\Cref{sec:B} provides the exact values of the optimal $\R[q_1]$ under varying scenarios.

\Cref{sec:sm.theorem.r}--\ref{sec:sm.theorem.misspecified} provide proofs for the main theoretical results, along with a polynomial--time algorithm to identify the minimax optimal design. Specifically,
\Cref{sec:sm.theorem.r} presents the proof for \Cref{theorem.r};
\Cref{sec:sm.theorem.minimaxdesign} presents the proof for \Cref{theorem.minimaxdesign};
\Cref{sec:algorithm} introduces a polynomial--time algorithm to identify the minimax optimal design;
\Cref{sec:proof-lemma.optimal} presents the proof for \Cref{corollary.minimaxdesign};
\Cref{sec:sm.theorem.variance.sm} presents the proof for \Cref{theorem.variance.sm};
\Cref{sec:sm.corollary.upperbound} presents the proof for \Cref{corollary.upperbound};
\Cref{sec:sm.corollary.estimator} presents the proof for \Cref{corollary.estimator};
\Cref{sec:sm.theorem.CLT} presents the proof for \Cref{theorem.CLT};
\Cref{sec:sm.theorem.misspecified} presents the proof for \Cref{theorem.misspecified}. 

\Cref{sec:sm.sim} provides additional simulation results.


\section{Variances and their conservative estimators}
\label{sec:sm.variance}

Recall that under minimax optimal design $\mathbb T^*$, $(T-2\aop)/\bop=K-4\geq 0$ with $K\in\mathbb N$; the decision points include $t_1=1$, $t_k=\aop+1+(k-2)\bop$ for $k=1,\ldots,K-2$, and $t_{K-1}=T+1$.
Define
\begin{align*}
    \tilde Y_{i,1}(q\mathbf 1,z\mathbf 1)=\sum_{t=p+1}^{t_2-1}\yit{q}{z}{1},&\  
    \yik{q}{z}{1}=\sum_{t=t_k}^{t_{k+1}-1}\yit{q}{z}{1},\ \text{for}\ k=2,\ldots,K-2,\\
    \check Y_{i,1}(q\mathbf 1,z\mathbf 1)=0,&\
    \yikc{q}{z}{1}=\sum_{t=t_k}^{t_{k}+p-1}\yit{q}{z}{1},\ \text{for}\ k=2,\ldots,K-2,\\
    \yk{q}{z}{1}&=[\tilde Y_{1,k}(q\mathbf 1,z\mathbf 1),\ldots,\tilde Y_{N,k}(q\mathbf 1,z\mathbf 1)]^\top,\\
    \ykc{q}{z}{1}&=[\check Y_{1,k}(q\mathbf 1,z\mathbf 1),\ldots,\check Y_{N,k}(q\mathbf 1,z\mathbf 1)]^\top.
\end{align*}
Notably, $\yik{q}{z}{1}$ computes the sum of potential outcomes during the time period $[t_k,t_{k+1}-1]$, excluding the period $[1,p]$; whereas $\yikc{q}{z}{1}$ computes the sum of potential outcomes during the time period $[t_k,t_{k}+p-1]$, also excluding the period $[1,p]$. 
Denote $\eta_k=1$ when $k=1$ and $\eta_k=2$ when $k=2,\ldots,K-2$. Let $\mathbf J_N$ denote the $N \times N$ matrix of ones, $\mathbf I_N$ denote the $N \times N$ identity matrix and $\bar{q} = 1 - q$. Additionally, we define $q_{1,z} = zq_1 + (1 - z)(1 - q_1)$ and $q_{2,z} = zq_2 + (1 - z)(1 - q_2)$. Let $q_{1,1}$ and $q_{1,0}$ represent the probabilities of assigning a unit to the treatment and control, respectively, under the treated probability $q_1$. Similar arguments apply to $q_{2,1}$ and $q_{2,0}$. 

\begin{theorem}[Variances of the Horvitz--Thompson estimators under the minimax optimal design proposed in \Cref{corollary.minimaxdesign}]
    \label{theorem.variance.sm}
    Under Assumptions~\ref{assumption.nonanticipativity}--\ref{assumption.bounded}, $\R[q_1]=\R[q_2]=0.5$, and the minimax optimal design $\mathbb T^*$, the variances of $\taudhat$ and $\taushat$ are given by
    \begin{align*}
        \mathrm{var}\{\taudhat\} &= \frac{1}{N^2 (T-p)^2}(A^d + B^d + C^d + D^d + E^d + F^d + G^d), \ q = q_1, q_2,\\
        \mathrm{var}\{\taushat\} &= \frac{1}{N^2 (T-p)^2}(A^s + B^s + C^s + D^s + E^s + F^s + G^s), \ z=0,1,
    \end{align*}
    where
    \begin{align*}
        A^d&=\sum_{k=1}^{K-2}\{\yk{q}{}{1}\}^{\top}\{\mathbf J_N+2(q^{-1}-1)\mathbf I_N\}\{\yk{q}{}{1}\}\\
        &\quad +\sum_{k=2}^{K-2}\{\ykc{q}{}{1}\}^{\top}\{2\mathbf J_N+2(q^{-1}-1)(2q^{-1}+1)\mathbf I_N\}\{\ykc{q}{}{1}\},\\
        B^d&=\sum_{k=1}^{K-2}\{\yk{q}{}{0}\}^{\top}\{\mathbf J_N+2(\bar q^{-1}-1)\mathbf I_N\}\{\yk{q}{}{0}\}\\
        &\quad +\sum_{k=2}^{K-2}\{\ykc{q}{}{0}\}^{\top}\{2\mathbf J_N+2(\bar q^{-1}-1)(2\bar q^{-1}+1)\mathbf I_N\}\{\ykc{q}{}{0}\},\\
        C^d&=\sum_{k=1}^{K-2}\{\yk{q}{}{1}\}^{\top}\{-2\mathbf J_N+4\mathbf I_N\}\{\yk{q}{}{0}\}\\
        &\quad +\sum_{k=2}^{K-2}\{\ykc{q}{}{1}\}^{\top}\{-4\mathbf J_N+4\mathbf I_N\}\{\ykc{q}{}{0}\},\\
        D^d&=2\sum_{k=1}^{K-3}\{\yk{q}{}{1}\}^{\top}\{\mathbf J_N+2(q^{-1}-1)\mathbf I_N\}\{\ykcplus{q}{}{1}\},\\
        E^d&=2\sum_{k=1}^{K-3}\{\yk{q}{}{0}\}^{\top}\{\mathbf J_N+2(\bar q^{-1}-1)\mathbf I_N\}\{\ykcplus{q}{}{0}\},\\
        F^d&=2\sum_{k=1}^{K-3}\{\yk{q}{}{1}\}^{\top}\{-\mathbf J_N+2\mathbf I_N\}\{\ykcplus{q}{}{0}\},\\
        G^d&=2\sum_{k=1}^{K-3}\{\yk{q}{}{0}\}^{\top}\{-\mathbf J_N+2\mathbf I_N\}\{\ykcplus{q}{}{1}\},
    \end{align*}
    \begin{align*}
        A^s&=\sum_{k=1}^{K-2}\{\yk{q_1}{z}{1}\}^{\top}\{\mathbf J_N+2(q_{1,z}^{-1}-1)\mathbf I_N\}\{\yk{q_1}{z}{1}\}\\
        &\quad +\sum_{k=2}^{K-2}\{\ykc{q_1}{z}{1}\}^{\top}\{2\mathbf J_N+2(q_{1,z}^{-1}-1)(2q_{1,z}^{-1}+1)\mathbf I_N\}\{\ykc{q_1}{z}{1}\},\\
        B^s&=\sum_{k=1}^{K-2}\{\yk{q_2}{z}{1}\}^{\top}\{\mathbf J_N+2(q_{2,z}^{-1}-1)\mathbf I_N\}\{\yk{q_2}{z}{1}\},\\
        &\quad +\sum_{k=2}^{K-2}\{\ykc{q_2}{z}{1}\}^{\top}\{2\mathbf J_N+2(q_{2,z}^{-1}-1)(2q_{2,z}^{-1}+1)\mathbf I_N\}\{\ykc{q_2}{z}{1}\},\\
        C^s&=\sum_{k=1}^{K-2}\{\yk{q_1}{z}{1}\}^{\top}\{2\mathbf J_N\}\{\yk{q_2}{z}{1}\},\\
        D^s&=2\sum_{k=1}^{K-3}\{\yk{q_1}{z}{1}\}^{\top}\{\mathbf J_N+2(q_{1,z}^{-1}-1)\mathbf I_N\}\{\ykcplus{q_1}{z}{1}\},\\
        E^s&=2\sum_{k=1}^{K-3}\{\yk{q_2}{z}{1}\}^{\top}\{\mathbf J_N+2(q_{2,z}^{-1}-1)\mathbf I_N\}\{\ykcplus{q_2}{z}{1}\},\\
        F^s&=2\sum_{k=1}^{K-3}\{\yk{q_1}{z}{1}\}^{\top}\{\mathbf J_N\}\{\ykcplus{q_2}{z}{1}\},\\
        G^s&=2\sum_{k=1}^{K-3}\{\yk{q_2}{z}{1}\}^{\top}\{\mathbf J_N\}\{\ykcplus{q_1}{z}{1}\}.
    \end{align*}
\end{theorem}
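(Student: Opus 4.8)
The plan is to expand each variance into a double sum of covariances and to reduce every covariance to a product of potential outcomes times a scalar that depends only on the overlap of the decision-point sets. Writing the summand of $\taudhat$ as $W_{i,t} = \Iit{q}{}{1}Y_{i,t}/\Pit{q}{}{1} - \Iit{q}{}{0}Y_{i,t}/\Pit{q}{}{0}$, consistency lets me replace $Y_{i,t}$ by the matching potential outcome on each indicator event, so that $\mathrm{var}\{\taudhat\} = N^{-2}(T-p)^{-2}\sum_{i,t}\sum_{j,t'}\mathrm{cov}(W_{i,t},W_{j,t'})$, with the analogous reduction for $\taushat$ in which the two treated probabilities $q_1,q_2$ replace the two statuses. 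First I would record the joint law of pairs of indicators.

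The key computation is the joint expectation $E[\Iit{}{}{}\,I_{j,t'}]$ of two indicators. Because the treated probability is drawn once per decision point and, conditional on it, statuses are i.i.d.\ Bernoulli across units and across decision points, this expectation factorizes over $\mathcal F_{\mathbb T}^p(t)\cup\mathcal F_{\mathbb T}^p(t')$, the shared points $\mathcal F_{\mathbb T}^p(t)\cap\mathcal F_{\mathbb T}^p(t')$ (of cardinality $J_{t,t'}^\circ$) carrying a coupled factor. Dividing by the product of marginals $\mathrm{pr}_{i,t}\,\mathrm{pr}_{j,t'}$ yields, in each case, a clean scalar: (i) when both indicators request the same treated probability $q$, the ratio is $\R[q]^{-J_{t,t'}^\circ}$ if $i\neq j$, equals $(\R[q]\rho)^{-J_{t,t'}^\circ}$ if $i=j$ with common requested status (where $\rho$ is the Bernoulli probability of that status under $q$, i.e.\ $q$ or $\bar q$ for the direct effect and $q_{1,z}$ or $q_{2,z}$ for the spillover effect), and is $0$ if $i=j$ with opposite statuses and $J_{t,t'}^\circ>0$; (ii) when the two indicators request different probabilities $q_1\neq q_2$, the ratio is $0$ for all $i,j$ whenever $J_{t,t'}^\circ>0$, since a single decision point cannot carry both probabilities. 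Subtracting $1$ and multiplying by the corresponding potential-outcome product gives $\mathrm{cov}(W_{i,t},W_{j,t'})$; the split of the resulting coefficient into an all-pairs part and an $i=j$ correction is exactly the split into $\mathbf J_N$ and $\mathbf I_N$, and case (ii) explains why the cross-probability blocks $C^s,F^s,G^s$ carry $\mathbf J_N$ alone.

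Next I would use the geometry of $\mathbb T^*$. Since consecutive decision points are spaced $\bop\geq p$ apart, every window $[t-p,t]$ meets at most two intervals, so $J_t\in\{1,2\}$, with $J_t=2$ exactly on the first $p$ times $[t_k,t_k+p-1]$ of each interval $k\geq 2$ and $J_t=1$ elsewhere; two distinct times share a decision point only if they lie in the same interval (then $J_{t,t'}^\circ\geq 1$) or in adjacent intervals through those first $p$ points (then $J_{t,t'}^\circ=1$). Summing the per-pair covariances over all $t,t'$ and grouping by interval then produces the stated quadratic forms: times combined at the baseline exponent $J^\circ=1$ aggregate into the full-interval vectors $\yk{q}{}{1}$, while the extra weight carried by the $J^\circ=2$ pairs among the first $p$ points is supplied precisely by the boundary vectors $\ykc{q}{}{1}$. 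A direct check confirms that the within-interval treated--treated coefficient is $\mathbf J_N+2(q^{-1}-1)\mathbf I_N$ at $J^\circ=1$ and that its $J^\circ=2$ residual equals $2\mathbf J_N+2(q^{-1}-1)(2q^{-1}+1)\mathbf I_N$, matching $A^d$; likewise the treated--control residual collapses to $-2\mathbf J_N+4\mathbf I_N$, matching $C^d$. The covariances between interval $k$ and the first $p$ points of interval $k+1$ supply $D^d$--$G^d$ and $D^s$--$G^s$.

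The main obstacle is this final bookkeeping rather than any single covariance. One must track, for every ordered pair of intervals and every status/probability combination, which time-point pairs have $J^\circ=1$ versus $J^\circ=2$, verify that the full-interval sums over-count the boundary pairs by exactly the amount absorbed into the $\check Y$ corrections, and confirm that the cross-interval contributions close up into the stated $\yk{q}{}{1}$--$\ykcplus{q}{}{1}$ forms, with the factors of two arising from symmetrizing $(t,t')$ against $(t',t)$. The first interval must be handled separately, since it begins at $p+1$ and has no earlier interval to overlap, which is why $\yikc{q}{}{1}=0$ there; and the design $\mathbb T^*_1$ with $\bop=p$ is the degenerate case $\yk{q}{}{1}=\ykc{q}{}{1}$ in which the interior region is empty. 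Once the coefficients are matched case by case, collecting terms yields the displayed expressions for $\mathrm{var}\{\taudhat\}$ and $\mathrm{var}\{\taushat\}$.
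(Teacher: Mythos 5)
Your proposal is correct and follows essentially the same route as the paper's proof: the pairwise covariances you derive from the joint law of the indicator pairs (factorizing over shared decision points, with exponent $J_{t,t'}^{\circ}$) are exactly the content of the paper's Lemma~\ref{lemma.properties}, and your classification of $J_t\in\{1,2\}$ with the $J=2$ residual carried by the first $p$ points of each block reproduces the paper's aggregation into the $\yk{q}{}{1}$ and $\ykc{q}{}{1}$ quadratic forms, including the correct $\mathbf J_N$/$\mathbf I_N$ split and the vanishing of covariances across non-adjacent blocks. Your spot checks of the $A^d$ and $C^d$ coefficients match the paper's expressions (note only that $-2\mathbf J_N+4\mathbf I_N$ is the $J^{\circ}=1$ baseline of $C^d$ rather than its $J^{\circ}=2$ residual, which is $-4\mathbf J_N+4\mathbf I_N$).
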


Notably, the expressions for variances can be applied to all minimax optimal designs discussed in \Cref{corollary.minimaxdesign}.
Some of the terms in Theorem~\ref{theorem.variance.sm} are challenging to estimate directly because we cannot observe all potential outcomes simultaneously. To address this issue, we propose upper bounds for the variances based on the Cauchy--Schwarz inequality, denoted as $ \mathrm{var}^U\{\taudhat\} $ and $ \mathrm{var}^U\{\taushat\} $, which serve as upper bounds for $ \mathrm{var}\{\taudhat\} $ and $ \mathrm{var}\{\taushat\} $, respectively. These upper bounds can be unbiasedly estimable and calculated using the observed data. The following corollary provides the specific forms of these upper bounds.

\begin{corollary}[Upper bounds of variances]
    \label{corollary.upperbound}
    Under Assumptions~\ref{assumption.nonanticipativity}--\ref{assumption.bounded}, $\R[q_1]=\R[q_2]=0.5$, and the minimax optimal design $\mathbb T^*$, an estimable upper bound of $\mathrm{var}\{\taudhat\}$ is
    \begin{align*}
        \mathrm{var}^U\{\taudhat\}=\frac 1{N^2(T-p)^2}(A^d + B^d + \bar C^d + D^d + E^d + \bar F^d + \bar G^d),
    \end{align*}
    and an estimable upper bound of $\mathrm{var}\{\taushat\}$ is 
    \begin{align*}
        \mathrm{var}^U\{\taushat\}=\frac 1{N^2(T-p)^2}(A^s + B^s + \bar C^s + D^s + E^s + \bar F^s + \bar G^s),
    \end{align*}
    where $A^d$, $B^d$, $D^d$, $E^d$, $A^s$, $B^s$, $D^s$ and $E^s$ are defined in Theorem~\ref{theorem.variance.sm}, and $\bar C^d$, $\bar F^d$, $\bar G^d$, $\bar C^s$, $\bar F^s$ and $\bar G^s$ are defined as follows:
    \begin{align*}
        \bar C^d&= \sum_{k=1}^{K-2}\{\yk{q}{}{1}\}^{\top}\{-2(\mathbf J_N-\mathbf I_N)\}\{\yk{q}{}{0}\}\\
        &\quad +\sum_{k=2}^{K-2}\{\ykc{q}{}{1}\}^{\top}\{-4(\mathbf J_N-\mathbf I_N)\}\{\ykc{q}{}{0}\}\\
        &\quad +\sum_{k=1}^{K-2}\{\yk{q}{}{1}\}^{\top}\{\mathbf I_N\}\{\yk{q}{}{1}\}+\sum_{k=1}^{K-2}\{\yk{q}{}{0}\}^{\top}\{\mathbf I_N\}\{\yk{q}{}{0}\},\\
        \bar F^d&= 2\sum_{k=1}^{K-3}\{\yk{q}{}{1}\}^{\top}\{-(\mathbf J_N-\mathbf I_N)\}\{\ykcplus{q}{}{0}\}\\
        &\quad +\sum_{k=1}^{K-3}\{\yk{q}{}{1}\}^{\top}\{\mathbf I_N\}\{\yk{q}{}{1}\}+\sum_{k=2}^{K-2}\{\ykc{q}{}{0}\}^{\top}\{\mathbf I_N\}\{\ykc{q}{}{0}\},\\
        \bar G^d&= 2\sum_{k=1}^{K-3}\{\yk{q}{}{0}\}^{\top}\{-(\mathbf J_N-\mathbf I_N)\}\{\ykcplus{q}{}{1}\}\\
        &\quad +\sum_{k=2}^{K-2}\{\ykc{q}{}{1}\}^{\top}\{\mathbf I_N\}\{\ykc{q}{}{1}\}+\sum_{k=1}^{K-3}\{\yk{q}{}{0}\}^{\top}\{\mathbf I_N\}\{\yk{q}{}{0}\}.
    \end{align*}
    \begin{align*}
        \bar C^s&= \sum_{k=1}^{K-2}\{\yk{q_1}{z}{1}\}^{\top}\{\mathbf J_N\}\{\yk{q_1}{z}{1}\}+\sum_{k=1}^{K-2}\{\yk{q_2}{z}{1}\}^{\top}\{\mathbf J_N\}\{\yk{q_2}{z}{1}\},\\
        \bar F^s&=\sum_{k=1}^{K-3}\{\yk{q_1}{z}{1}\}^{\top}\{\mathbf J_N\}\{\yk{q_1}{z}{1}\}+\sum_{k=2}^{K-2}\{\ykc{q_2}{z}{1}\}^{\top}\{\mathbf J_N\}\{\ykc{q_2}{z}{1}\},\\
        \bar G^s&=\sum_{k=2}^{K-2}\{\ykc{q_1}{z}{1}\}^{\top}\{\mathbf J_N\}\{\ykc{q_1}{z}{1}\}+\sum_{k=1}^{K-3}\{\yk{q_2}{z}{1}\}^{\top}\{\mathbf J_N\}\{\yk{q_2}{z}{1}\}.
    \end{align*}
\end{corollary}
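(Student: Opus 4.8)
The plan is to start from the exact variance decomposition in \Cref{theorem.variance.sm} and to bound only those pieces that cannot be written as functions of jointly observable potential outcomes, leaving every estimable piece untouched. The guiding principle is that a product of two potential outcomes admits an unbiased Horvitz--Thompson estimator precisely when the two treatment paths in its arguments are \emph{compatible}, i.e.\ can be realized simultaneously under the design; in that case dividing the product of the observed outcomes by the joint assignment probability is unbiased. When the two paths are incompatible, no unbiased estimator exists, and the offending cross term must be replaced by an estimable upper bound via the elementary inequality $\pm 2ab \le a^2 + b^2$.

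First I would classify the seven terms of each estimator. For the direct effect, the quadratic forms $A^d, B^d, D^d, E^d$ involve only products of potential outcomes sharing the same individual treatment status and the same treated probability, so the relevant paths are compatible (the overlap block $\check{\mathbf Y}_{k+1}$ under status $1$ merely pins the tail of period $k$ to status $1$, which matches $\tilde{\mathbf Y}_k$), and these terms are kept verbatim. The term $C^d$ mixes statuses $z=1$ and $z=0$ at the same period: its off-diagonal part (pairs $i\neq j$) remains estimable because treatments are assigned independently across units, whereas its diagonal forces one unit to take both statuses in a single period and is infeasible. Writing $\mathbf J_N=(\mathbf J_N-\mathbf I_N)+\mathbf I_N$ isolates this diagonal obstruction; the same decomposition handles $F^d,G^d$, where the overlap blocks span two adjacent periods and so clash on the diagonal with the status carried by $\tilde{\mathbf Y}_k$. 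I would then apply $\pm 2ab \le a^2+b^2$ to each non-estimable diagonal cross term, with the sign set by the diagonal coefficient (e.g.\ $+2$ for the diagonal of $C^d$), replacing $2\,\yik{q}{}{1}\,\yik{q}{}{0}$ by $\{\yik{q}{}{1}\}^2+\{\yik{q}{}{0}\}^2$ and likewise for $F^d,G^d$; reindexing the overlap sums reproduces $\bar C^d,\bar F^d,\bar G^d$ with the stated ranges.

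The spillover terms require one sharper observation. Because the treated probability is a population-level quantity shared by all units within a period, any product mixing $q_1$ and $q_2$ over a common (possibly overlapping) window is infeasible \emph{even across distinct units}: no realization splits the population into a $q_1$ part and a $q_2$ part at the same decision point. Hence the entire terms $C^s,F^s,G^s$ are non-estimable, not merely their diagonals. Setting $a_k=\sum_i \yik{q_1}{z}{1}$ and $b_k=\sum_i \yik{q_2}{z}{1}$ gives $C^s=2\sum_k a_k b_k$, and $2a_k b_k \le a_k^2+b_k^2$ yields $\bar C^s$; the analogous device with the overlap sums produces $\bar F^s,\bar G^s$.

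Finally I would check that every retained or newly introduced term is estimable---each square such as $\{\yik{q}{}{1}\}^2$ is recovered from a single observed outcome through the appropriate indicator and inverse probability---and that every replacement strictly increases the corresponding term, so that summing termwise gives $\mathrm{var}^U\{\taudhat\}\ge\mathrm{var}\{\taudhat\}$ and $\mathrm{var}^U\{\taushat\}\ge\mathrm{var}\{\taushat\}$. The main obstacle is the feasibility bookkeeping: one must track, window by window, which overlapping carryover blocks force incompatible assignments, and carefully separate the unit-level treatment status (where cross-unit mixing is allowed, so only diagonals need bounding) from the population-level treated probability (where no mixing is feasible, so whole terms must be bounded), all while keeping the index ranges of the square-sum replacements aligned so that each inequality is genuinely an upper bound.
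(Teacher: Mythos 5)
Your proposal matches the paper's own argument: the paper likewise keeps $A,B,D,E$ verbatim, identifies the non-estimable cross-products (the same-unit products $\yik{q}{}{1}\yik{q}{}{0}$, $\yik{q}{}{1}\yikcplus{q}{}{0}$, $\yik{q}{}{0}\yikcplus{q}{}{1}$ in $C^d,F^d,G^d$, and the entire $q_1$--$q_2$ mixed terms in $C^s,F^s,G^s$), and replaces each $2ab$ by $a^2+b^2$ via the Cauchy--Schwarz inequality, exactly your $\mathbf J_N=(\mathbf J_N-\mathbf I_N)+\mathbf I_N$ split and population-sum bound. Your additional observation that the spillover terms must be bounded in full (not just on the diagonal) because the treated probability is a population-level quantity is precisely why the paper's $\bar C^s,\bar F^s,\bar G^s$ carry $\mathbf J_N$ rather than $\mathbf I_N$, so the proposal is correct and essentially identical to the paper's proof.
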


\Cref{corollary.upperbound} is based on Theorem~\ref{theorem.variance.sm}. In the expression for $\mathrm{var}\{\taudhat\}$, the terms $A^d$, $B^d$, $D^d$, and $E^d$ can be unbiasedly estimated, while $C^d$, $F^d$, and $G^d$ can only be conservatively estimated using $\hat{\bar C}^d$, $\hat{\bar F}^d$, and $\hat{\bar G}^d$. A similar result holds for $\mathrm{var}\{\taushat\}$. \Cref{corollary.estimator} provides the explicit formulas for the conservative variance estimators.

\begin{theorem}[Conservative variance estimators]
    \label{corollary.estimator}
    Under Assumptions~\ref{assumption.nonanticipativity}--\ref{assumption.bounded}, $\R[q_1]=\R[q_2]=0.5$, and the minimax optimal design $\mathbb T^*$, the asymptotically conservative variance estimators of $\mathrm{var}\{\taudhat\}$ and $\mathrm{var}\{\taushat\}$ are given by
    \begin{align*}
        \widehat{\mathrm{var}}^U\{\taudhat\}&=\frac 1{N^2(T-p)^2}(\hat A^d + \hat B^d + \hat {\bar C}^d +\hat  D^d +\hat  E^d + \hat {\bar F}^d + \hat {\bar G}^d),\\
        \widehat{\mathrm{var}}^U\{\taushat\}&=\frac 1{N^2(T-p)^2}(\hat A^s +\hat  B^s + \hat {\bar C}^s +\hat  D^s +\hat  E^s + \hat {\bar F}^s + \hat {\bar G}^s).
    \end{align*}
    Define $\Ik{q}{z}{1}$ as the $N \times 1$ matrix whose $i$-th element is given by $\Iik{q}{z}{1} = I_{i,t_{k+1}-1}(q \mathbf{1}, z \mathbf{1})$. Define $\Ikc{q}{z}{1}$ as the $N \times 1$ matrix whose $i$-th element is given by $\check I_{i,1}(q\mathbf 1,z\mathbf 1) = I_{i,p+1}(q \mathbf{1}, z \mathbf{1})$ and $\Iikc{q}{z}{1} = I_{i,t_k}(q \mathbf{1}, z \mathbf{1})$ for $k \geq 2$. Let $\ykd{q}{}{1} = \yk{q}{}{1} - \ykc{q}{}{1}$. Note that $\circ$ denotes the Hadamard product, and we define the observed version $\ykco = \sum_{q=q_1, q_2; z=0, 1} \{\ykc{q}{z}{1} \circ \Ikc{q}{z}{1}\}$ and $\ykdo = \sum_{q=q_1, q_2; z=0, 1} \{\ykd{q}{z}{1} \circ \Ik{q}{z}{1}\}$. Then we can obtain
    
    \begin{align*}
        \hat A^d&= \sum_{k=1}^{K-2}\{\ykco\circ \Ikc{q}{}{1}\}^{\top}\left\{\frac{3(\mathbf J_N-\mathbf I_N)}{2^{-\eta_k}q^{2\eta_k}}+\frac{(4q^{-2}-1)\mathbf I_N}{2^{-\eta_k}q^{\eta_k}}\right\}\{\ykco\circ \Ikc{q}{}{1}\}\\
        & + \sum_{k=1}^{K-2}\{\ykdo\circ \Ik{q}{}{1}\}^{\top}\left\{\frac{2(\mathbf J_N-\mathbf I_N)}{2^{-\eta_k}q^{\eta_k+1}}+\frac{(4q^{-1}-2)\mathbf I_N}{2^{-\eta_k}q^{\eta_k}}\right\}\{\ykco\circ \Ikc{q}{}{1}\} \\
        & + \sum_{k=1}^{K-2}\{\ykdo\circ \Ik{q}{}{1}\}^{\top}\left\{\frac{\mathbf J_N-\mathbf I_N}{2^{-1}q^{2}}+\frac{(2q^{-1}-1)\mathbf I_N}{2^{-1}q}\right\}\{\ykdo\circ \Ik{q}{}{1}\},\\
        \hat B^d&= \sum_{k=1}^{K-2}\{\ykco\circ \Ikc{q}{}{0}\}^{\top}\left\{\frac{3(\mathbf J_N-\mathbf I_N)}{2^{-\eta_k}\bar q^{2\eta_k}}+\frac{(4\bar q^{-2}-1)\mathbf I_N}{2^{-\eta_k}\bar q^{\eta_k}}\right\}\{\ykco\circ \Ikc{q}{}{0}\}\\
        & + \sum_{k=1}^{K-2}\{\ykdo\circ \Ik{q}{}{0}\}^{\top}\left\{\frac{2(\mathbf J_N-\mathbf I_N)}{2^{-\eta_k}\bar q^{\eta_k+1}}+\frac{(4\bar q^{-1}-2)\mathbf I_N}{2^{-\eta_k}\bar q^{\eta_k}}\right\}\{\ykco\circ \Ikc{q}{}{0}\} \\
        & + \sum_{k=1}^{K-2}\{\ykdo\circ \Ik{q}{}{0}\}^{\top}\left\{\frac{\mathbf J_N-\mathbf I_N}{2^{-1}\bar q^{2}}+\frac{(2\bar q^{-1}-1)\mathbf I_N}{2^{-1}\bar q}\right\}\{\ykdo\circ \Ik{q}{}{0}\},\\
        \hat {\bar C}^d&= \sum_{k=1}^{K-2}\{\ykco\circ \Ikc{q}{}{1}\}^{\top}\left\{\frac{-6(\mathbf J_N-\mathbf I_N)}{2^{-\eta_k}q^{\eta_k}\bar q^{\eta_k}}\right\}\{\ykco\circ \Ikc{q}{}{0}\}\\
        &+\sum_{k=1}^{K-2}\{\ykco\circ \Ikc{q}{}{1}\}^{\top}\left\{\frac{-2(\mathbf J_N-\mathbf I_N)}{2^{-\eta_k}q^{\eta_k}\bar q}\right\}\{\ykdo\circ \Ik{q}{}{0}\}\\
        &+\sum_{k=1}^{K-2}\{\ykdo\circ \Ik{q}{}{1}\}^{\top}\left\{\frac{-2(\mathbf J_N-\mathbf I_N)}{2^{-\eta_k}q\bar q^{\eta_k}}\right\}\{\ykco\circ \Ikc{q}{}{0}\}\\
        &+\sum_{k=1}^{K-2}\{\ykdo\circ \Ik{q}{}{1}\}^{\top}\left\{\frac{-2(\mathbf J_N-\mathbf I_N)}{2^{-1}q\bar q}\right\}\{\ykdo\circ \Ik{q}{}{0}\}\\
        &+\sum_{k=1}^{K-2}\{\ykco\circ \Ikc{q}{}{1}\}^{\top}\left\{\frac{\mathbf I_N}{2^{-\eta_k}q^{\eta_k}}\right\}\{\ykco\circ \Ikc{q}{}{1}\}\\
        &+\sum_{k=1}^{K-2}\{\ykdo\circ \Ik{q}{}{1}\}^{\top}\left\{\frac{2\mathbf I_N}{2^{-\eta_k}q^{\eta_k}}\right\}\{\ykco\circ \Ikc{q}{}{1}\}\\
        &+\sum_{k=1}^{K-2}\{\ykdo\circ \Ik{q}{}{1}\}^{\top}\left\{\frac{\mathbf I_N}{2^{-1}q}\right\}\{\ykdo\circ \Ik{q}{}{1}\}\\
        &+\sum_{k=1}^{K-2}\{\ykco\circ \Ikc{q}{}{0}\}^{\top}\left\{\frac{\mathbf I_N}{2^{-\eta_k}\bar q^{\eta_k}}\right\}\{\ykco\circ \Ikc{q}{}{0}\}\\
        &+\sum_{k=1}^{K-2}\{\ykdo\circ \Ik{q}{}{0}\}^{\top}\left\{\frac{2\mathbf I_N}{2^{-\eta_k}\bar q^{\eta_k}}\right\}\{\ykco\circ \Ikc{q}{}{0}\}\\
        &+\sum_{k=1}^{K-2}\{\ykdo\circ \Ik{q}{}{0}\}^{\top}\left\{\frac{\mathbf I_N}{2^{-1}\bar q}\right\}\{\ykdo\circ \Ik{q}{}{0}\},\\
    \end{align*}
    \begin{align*}
        \hat D^d&=2\sum_{k=1}^{K-3}\{\ykco\circ\Ikc{q}{}{1}\}^{\top}\left\{\frac{\mathbf J_N-\mathbf I_N}{2^{-\eta_k-1}q^{\eta_k+\eta_{k+1}}}+\frac{(2q^{-1}-1)\mathbf I_N}{2^{-\eta_k-1}q^{\eta_k+1}}\right\}\{\ykcoplus\circ\Ikcplus{q}{}{1}\}\\
        &+2\sum_{k=1}^{K-3}\{\ykdo\circ\Ik{q}{}{1}\}^{\top}\left\{\frac{\mathbf J_N-\mathbf I_N}{2^{-2}q^{3}}+\frac{(2q^{-1}-1)\mathbf I_N}{2^{-2}q^{2}}\right\}\{\ykcoplus\circ\Ikcplus{q}{}{1}\},\\
        \hat E^d&=2\sum_{k=1}^{K-3}\{\ykco\circ\Ikc{q}{}{0}\}^{\top}\left\{\frac{\mathbf J_N-\mathbf I_N}{2^{-\eta_k-1}\bar q^{\eta_k+\eta_{k+1}}}+\frac{(2\bar q^{-1}-1)\mathbf I_N}{2^{-\eta_k-1}\bar q^{\eta_k+1}}\right\}\{\ykcoplus\circ\Ikcplus{q}{}{0}\}\\
        &+2\sum_{k=1}^{K-3}\{\ykdo\circ\Ik{q}{}{0}\}^{\top}\left\{\frac{\mathbf J_N-\mathbf I_N}{2^{-2}\bar q^{3}}+\frac{(2\bar q^{-1}-1)\mathbf I_N}{2^{-2}\bar q^{2}}\right\}\{\ykcoplus\circ\Ikcplus{q}{}{0}\},\\
        \hat {\bar F}^d&=2\sum_{k=1}^{K-3}\{\ykc{q}{}{1}\circ\Ikc{q}{}{1}\}^{\top}\left\{\frac{-(\mathbf J_N-\mathbf I_N)}{2^{-\eta_k-1}q^{\eta_k}\bar q^{\eta_{k+1}}}\right\}\{\ykcoplus\circ\Ikcplus{q}{}{0}\}\\
        &+2\sum_{k=1}^{K-3}\{\ykdo\circ\Ik{q}{}{1}\}^{\top}\left\{\frac{-(\mathbf J_N-\mathbf I_N)}{2^{-2}q\bar q^{2}}\right\}\{\ykcoplus\circ\Ikcplus{q}{}{0}\}\\
        &+\sum_{k=1}^{K-3}\{\ykco\circ\Ikc{q}{}{1}\}^{\top}\left\{\frac{\mathbf I_N}{2^{-\eta_k}q^{\eta_k}}\right\}\{\ykco\circ\Ikc{q}{}{1}\} \\
        &+\sum_{k=1}^{K-3}\{\ykco\circ\Ikc{q}{}{1}\}^{\top}\left\{\frac{2\mathbf I_N}{2^{-\eta_k}q^{\eta_k}}\right\}\{\ykdo\circ\Ik{q}{}{1}\}\\
        &+\sum_{k=1}^{K-3}\{\ykdo\circ\Ik{q}{}{1}\}^{\top}\left\{\frac{\mathbf I_N}{2^{-1}q}\right\}\{\ykdo\circ\Ik{q}{}{1}\}\\
        &+\sum_{k=2}^{K-2}\{\ykco\circ\Ikc{q}{}{0}\}^{\top}\left\{\frac{\mathbf I_N}{2^{-\eta_k}\bar q^{\eta_k}}\right\}\{\ykco\circ\Ikc{q}{}{0}\}, \\
        \hat {\bar G}^d&=2\sum_{k=1}^{K-3}\{\ykco\circ\Ikc{q}{}{0}\}^{\top}\left\{\frac{-(\mathbf J_N-\mathbf I_N)}{2^{-\eta_k-1}q^{\eta_{k+1}}\bar q^{\eta_{k}}}\right\}\{\ykcoplus\circ\Ikcplus{q}{}{1}\}\\
        &+2\sum_{k=1}^{K-3}\{\ykdo\circ\Ik{q}{}{0}\}^{\top}\left\{\frac{-(\mathbf J_N-\mathbf I_N)}{2^{-2}q^2\bar q}\right\}\{\ykcoplus\circ\Ikcplus{q}{}{1}\}\\
        &+\sum_{k=1}^{K-3}\{\ykco\circ\Ikc{q}{}{0}\}^{\top}\left\{\frac{\mathbf I_N}{2^{-\eta_k}\bar q^{\eta_k}}\right\}\{\ykco\circ\Ikc{q}{}{0}\} \\
        &+\sum_{k=1}^{K-3}\{\ykco\circ\Ikc{q}{}{0}\}^{\top}\left\{\frac{2\mathbf I_N}{2^{-\eta_k}\bar q^{\eta_k}}\right\}\{\ykdo\circ\Ik{q}{}{0}\}\\
        &+\sum_{k=1}^{K-3}\{\ykdo\circ\Ik{q}{}{0}\}^{\top}\left\{\frac{\mathbf I_N}{2^{-1}\bar q}\right\}\{\ykdo\circ\Ik{q}{}{0}\}\\
        &+\sum_{k=2}^{K-2}\{\ykco\circ\Ikc{q}{}{1}\}^{\top}\left\{\frac{\mathbf I_N}{2^{-\eta_k}q^{\eta_k}}\right\}\{\ykco\circ\Ikc{q}{}{1}\}, \\
    \end{align*}
    \begin{align*}
        \hat A^s&= \sum_{k=1}^{K-2}\{\ykco\circ \Ikc{q_1}{z}{1}\}^{\top}\left\{\frac{3(\mathbf J_N-\mathbf I_N)}{2^{-\eta_k}q_{1,z}^{2\eta_k}}+\frac{(4q_{1,z}^{-2}-1)\mathbf I_N}{2^{-\eta_k}q_{1,z}^{\eta_k}}\right\}\{\ykco\circ \Ikc{q_1}{z}{1}\}\\
        & + \sum_{k=1}^{K-2}\{\ykdo\circ \Ik{q_1}{z}{1}\}^{\top}\left\{\frac{2(\mathbf J_N-\mathbf I_N)}{2^{-\eta_k}q_{1,z}^{\eta_k+1}}+\frac{(4q_{1,z}^{-1}-2)\mathbf I_N}{2^{-\eta_k}q_{1,z}^{\eta_k}}\right\}\{\ykco\circ \Ikc{q_1}{z}{1}\} \\
        & + \sum_{k=1}^{K-2}\{\ykdo\circ \Ik{q_1}{z}{1}\}^{\top}\left\{\frac{\mathbf J_N-\mathbf I_N}{2^{-1}q_{1,z}^{2}}+\frac{(2q_{1,z}^{-1}-1)\mathbf I_N}{2^{-1}q_{1,z}}\right\}\{\ykdo\circ \Ik{q_1}{z}{1}\},\\
        \hat B^s&= \sum_{k=1}^{K-2}\{\ykco\circ \Ikc{q_2}{z}{1}\}^{\top}\left\{\frac{3(\mathbf J_N-\mathbf I_N)}{2^{-\eta_k}q_{2,z}^{2\eta_k}}+\frac{(4q_{2,z}^{-2}-1)\mathbf I_N}{2^{-\eta_k}q_{2,z}^{\eta_k}}\right\}\{\ykco\circ \Ikc{q_2}{z}{1}\}\\
        & + \sum_{k=1}^{K-2}\{\ykdo\circ \Ik{q_2}{z}{1}\}^{\top}\left\{\frac{2(\mathbf J_N-\mathbf I_N)}{2^{-\eta_k}q_{2,z}^{\eta_k+1}}+\frac{(4q_{2,z}^{-1}-2)\mathbf I_N}{2^{-\eta_k}q_{2,z}^{\eta_k}}\right\}\{\ykco\circ \Ikc{q_2}{z}{1}\} \\
        & + \sum_{k=1}^{K-2}\{\ykdo\circ \Ik{q_2}{z}{1}\}^{\top}\left\{\frac{\mathbf J_N-\mathbf I_N}{2^{-1}q_{2,z}^{2}}+\frac{(2q_{2,z}^{-1}-1)\mathbf I_N}{2^{-1}q_{2,z}}\right\}\{\ykdo\circ \Ik{q_2}{z}{1}\},\\
        \hat C^s&= \sum_{k=1}^{K-2}\{\ykco\circ \Ikc{q_1}{z}{1}\}^{\top}\left\{\frac{\mathbf J_N-\mathbf I_N}{2^{-\eta_k}q_{1,z}^{2\eta_k}}+\frac{\mathbf I_N}{2^{-\eta_k}q_{1,z}^{\eta_k}}\right\}\{\ykco\circ \Ikc{q_1}{z}{1}\}\\
        & + \sum_{k=1}^{K-2}\{\ykdo\circ \Ik{q_1}{z}{1}\}^{\top}\left\{\frac{2(\mathbf J_N-\mathbf I_N)}{2^{-\eta_k}q_{1,z}^{\eta_k+1}}+\frac{2\mathbf I_N}{2^{-\eta_k}q_{1,z}^{\eta_k}}\right\}\{\ykco\circ \Ikc{q_1}{z}{1}\} \\
        & + \sum_{k=1}^{K-2}\{\ykdo\circ \Ik{q_1}{z}{1}\}^{\top}\left\{\frac{\mathbf J_N-\mathbf I_N}{2^{-1}q_{1,z}^{2}}+\frac{\mathbf I_N}{2^{-1}q_{1,z}}\right\}\{\ykdo\circ \Ik{q_1}{z}{1}\},\\
        &+ \sum_{k=1}^{K-2}\{\ykco\circ \Ikc{q_2}{z}{1}\}^{\top}\left\{\frac{\mathbf J_N-\mathbf I_N}{2^{-\eta_k}q_{2,z}^{2\eta_k}}+\frac{\mathbf I_N}{2^{-\eta_k}q_{2,z}^{\eta_k}}\right\}\{\ykco\circ \Ikc{q_2}{z}{1}\}\\
        & + \sum_{k=1}^{K-2}\{\ykdo\circ \Ik{q_2}{z}{1}\}^{\top}\left\{\frac{2(\mathbf J_N-\mathbf I_N)}{2^{-\eta_k}q_{2,z}^{\eta_k+1}}+\frac{2\mathbf I_N}{2^{-\eta_k}q_{2,z}^{\eta_k}}\right\}\{\ykco\circ \Ikc{q_2}{z}{1}\} \\
        & + \sum_{k=1}^{K-2}\{\ykdo\circ \Ik{q_2}{z}{1}\}^{\top}\left\{\frac{\mathbf J_N-\mathbf I_N}{2^{-1}q_{2,z}^{2}}+\frac{\mathbf I_N}{2^{-1}q_{2,z}}\right\}\{\ykdo\circ \Ik{q_2}{z}{1}\},\\
        \hat D^s&=2\sum_{k=1}^{K-3}\{\ykco\circ\Ikc{q_1}{z}{1}\}^{\top}\left\{\frac{\mathbf J_N-\mathbf I_N}{2^{-\eta_k-1}q_{1,z}^{\eta_k+\eta_{k+1}}}+\frac{(2q_{1,z}^{-1}-1)\mathbf I_N}{2^{-\eta_k-1}q_{1,z}^{\eta_k+1}}\right\}\{\ykcoplus\circ\Ikcplus{q_1}{z}{1}\}\\
        &+2\sum_{k=1}^{K-3}\{\ykdo\circ\Ik{q_1}{z}{1}\}^{\top}\left\{\frac{\mathbf J_N-\mathbf I_N}{2^{-2}q_{1,z}^{3}}+\frac{(2q_{1,z}^{-1}-1)\mathbf I_N}{2^{-2}q_{1,z}^{2}}\right\}\{\ykcoplus\circ\Ikcplus{q_1}{z}{1}\},\\
    \end{align*}
    \begin{align*}
        \hat E^s&=2\sum_{k=1}^{K-3}\{\ykco\circ\Ikc{q_2}{z}{1}\}^{\top}\left\{\frac{\mathbf J_N-\mathbf I_N}{2^{-\eta_k-1}q_{2,z}^{\eta_k+\eta_{k+1}}}+\frac{(2q_{2,z}^{-1}-1)\mathbf I_N}{2^{-\eta_k-1}q_{2,z}^{\eta_k+1}}\right\}\{\ykcoplus\circ\Ikcplus{q_2}{z}{1}\}\\
        &+2\sum_{k=1}^{K-3}\{\ykdo\circ\Ik{q_2}{z}{1}\}^{\top}\left\{\frac{\mathbf J_N-\mathbf I_N}{2^{-2}q_{2,z}^{3}}+\frac{(2q_{2,z}^{-1}-1)\mathbf I_N}{2^{-2}q_{2,z}^{2}}\right\}\{\ykcoplus\circ\Ikcplus{q_2}{z}{1}\},\\
        \hat {\bar F}^s&=\sum_{k=1}^{K-3}\{\ykco\circ \Ikc{q_1}{z}{1}\}^{\top}\left\{\frac{\mathbf J_N-\mathbf I_N}{2^{-\eta_k}q_{1,z}^{2\eta_k}}+\frac{\mathbf I_N}{2^{-\eta_k}q_{1,z}^{\eta_k}}\right\}\{\ykco\circ \Ikc{q_1}{z}{1}\}\\
        & + \sum_{k=1}^{K-3}\{\ykdo\circ \Ik{q_1}{z}{1}\}^{\top}\left\{\frac{2(\mathbf J_N-\mathbf I_N)}{2^{-\eta_k}q_{1,z}^{\eta_k+1}}+\frac{2\mathbf I_N}{2^{-\eta_k}q_{1,z}^{\eta_k}}\right\}\{\ykco\circ \Ikc{q_1}{z}{1}\} \\
        & + \sum_{k=1}^{K-3}\{\ykdo\circ \Ik{q_1}{z}{1}\}^{\top}\left\{\frac{\mathbf J_N-\mathbf I_N}{2^{-1}q_{1,z}^{2}}+\frac{\mathbf I_N}{2^{-1}q_{1,z}}\right\}\{\ykdo\circ \Ik{q_1}{z}{1}\}\\
        &+\sum_{k=2}^{K-2}\{\ykco\circ \Ikc{q_2}{z}{1}\}^{\top}\left\{\frac{\mathbf J_N-\mathbf I_N}{2^{-\eta_k}q_{2,z}^{2\eta_k}}+\frac{\mathbf I_N}{2^{-\eta_k}q_{2,z}^{\eta_k}}\right\}\{\ykco\circ \Ikc{q_2}{z}{1}\},\\
        \hat {\bar G}^s&= \sum_{k=1}^{K-3}\{\ykco\circ \Ikc{q_2}{z}{1}\}^{\top}\left\{\frac{\mathbf J_N-\mathbf I_N}{2^{-\eta_k}q_{2,z}^{2\eta_k}}+\frac{\mathbf I_N}{2^{-\eta_k}q_{2,z}^{\eta_k}}\right\}\{\ykco\circ \Ikc{q_2}{z}{1}\}\\
        & + \sum_{k=1}^{K-3}\{\ykdo\circ \Ik{q_2}{z}{1}\}^{\top}\left\{\frac{2(\mathbf J_N-\mathbf I_N)}{2^{-\eta_k}q_{2,z}^{\eta_k+1}}+\frac{2\mathbf I_N}{2^{-\eta_k}q_{2,z}^{\eta_k}}\right\}\{\ykco\circ \Ikc{q_2}{z}{1}\} \\
        & + \sum_{k=1}^{K-3}\{\ykdo\circ \Ik{q_2}{z}{1}\}^{\top}\left\{\frac{\mathbf J_N-\mathbf I_N}{2^{-1}q_{2,z}^{2}}+\frac{\mathbf I_N}{2^{-1}q_{2,z}}\right\}\{\ykdo\circ \Ik{q_2}{z}{1}\}\\
        &+\sum_{k=2}^{K-2}\{\ykco\circ \Ik{q_1}{z}{1}\}^{\top}\left\{\frac{\mathbf J_N-\mathbf I_N}{2^{-\eta_k}q_{1,z}^{2\eta_k}}+\frac{\mathbf I_N}{2^{-\eta_k}q_{1,z}^{\eta_k}}\right\}\{\ykco\circ \Ik{q_1}{z}{1}\}. 
    \end{align*}
    
\end{theorem}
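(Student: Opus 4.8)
The plan is to show that each ``hatted'' quantity in the statement is an exactly unbiased Horvitz--Thompson estimator of the corresponding term in the upper bound $\mathrm{var}^U$ constructed in \Cref{corollary.upperbound}. By linearity of expectation this yields $E[\widehat{\mathrm{var}}^U\{\taudhat\}]=\mathrm{var}^U\{\taudhat\}\ge \mathrm{var}\{\taudhat\}$, and likewise $E[\widehat{\mathrm{var}}^U\{\taushat\}]=\mathrm{var}^U\{\taushat\}\ge \mathrm{var}\{\taushat\}$, so the estimators have non-negative bias; the qualifier ``asymptotically conservative'' then follows by supplementing this with a consistency argument for the studentizing denominator.

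First I would set up the probability bookkeeping. Under $\mathbb T^*$ with $\R[q_1]=\R[q_2]=0.5$, the draws at distinct decision points are mutually independent, the probability-value selection contributes a factor $1/2$ per decision point touched, and the Bernoulli treatment draws are i.i.d.\ across units within a decision point. Consequently, for any unit $i$ and block $k$ the assignment-path indicators have explicit marginal and joint probabilities that factorize over the decision points involved, and the integer $\eta_k$ (equal to $1$ for $k=1$ and $2$ otherwise) simply counts how many decision points block $k$ reaches through its lag-$p$ window. I would catalogue, for every pairing appearing in the quadratic forms of \Cref{theorem.variance.sm}, the joint probability that the relevant indicators equal one, separating (a) the diagonal case $i=j$ from the off-diagonal case $i\ne j$, which is exactly why each matrix splits into an $\mathbf I_N$ piece and a $(\mathbf J_N-\mathbf I_N)$ piece, and (b) the ``straddle'' contribution $\ykco$, whose window overlaps the previous decision point, from the ``interior'' contribution $\ykdo$, which carries different joint probabilities with its neighbours. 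The weights $2^{-\eta_k}q^{\eta_k}$, $2^{-\eta_k-1}q^{\eta_k+\eta_{k+1}}$, $2^{-1}q^{2}$ and their $\bar q$, $q_{1,z}$, $q_{2,z}$ analogues are precisely these joint probabilities. For the genuinely estimable terms $A^d,B^d,D^d,E^d$ and $A^s,B^s,D^s,E^s$, which involve only products of potential outcomes under a single fixed path $(q\mathbf 1,z\mathbf 1)$ and are therefore jointly observable, dividing the observed Hadamard products $\ykco\circ\Ikc{q}{z}{1}$ and $\ykdo\circ\Ik{q}{z}{1}$ by these joint probabilities gives an unbiased estimator of the corresponding quadratic form, reproducing the matrices $\mathbf J_N+2(q^{-1}-1)\mathbf I_N$ and the like after expanding each full-block outcome as $\ykco+\ykdo$.

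The cross terms $C^d,F^d,G^d$ (and $C^s,F^s,G^s$) require the key idea. These contain products of potential outcomes under opposite treatment statuses; their off-diagonal ($i\ne j$) parts remain jointly observable and are estimated unbiasedly, but the diagonal ($i=j$) parts are non-identifiable because a single unit cannot be simultaneously treated and control. For these I would invoke the elementary inequality $-2ab\le a^2+b^2$, which is exactly the Cauchy--Schwarz step converting $C^d$ into $\bar C^d$ in \Cref{corollary.upperbound}; the extra $\mathbf I_N$ blocks in $\hat{\bar C}^d,\hat{\bar F}^d,\hat{\bar G}^d$ are the unbiased estimators of the single-arm squared-outcome surrogates introduced by that bound, whence $E[\hat{\bar C}^d]=\bar C^d\ge C^d$, and similarly for the others. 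Summing the unbiased estimators of $A^d,B^d,D^d,E^d$ with $\hat{\bar C}^d,\hat{\bar F}^d,\hat{\bar G}^d$ then gives $E[\widehat{\mathrm{var}}^U\{\taudhat\}]=\mathrm{var}^U\{\taudhat\}$, and analogously for the spillover estimator. To pass from ``unbiased for an upper bound'' to ``asymptotically conservative'', I would use \Cref{assumption.bounded} to bound the second moment of $\widehat{\mathrm{var}}^U$ and establish $\widehat{\mathrm{var}}^U/\mathrm{var}^U\to 1$ in probability under Assumptions~\ref{assumption.variance}--\ref{assumption.NT}, so that, together with the central limit theorem of \Cref{theorem.CLT}, the studentized statistic has a limiting variance no larger than one and confidence intervals attain at least nominal coverage.

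The main obstacle is the exhaustive and error-prone accounting of joint assignment probabilities, especially at block boundaries where the lag-$p$ window straddles two decision points. There one must carefully isolate the straddle contribution $\ykco$ from the interior contribution $\ykdo$, track the four sub-cases (diagonal versus off-diagonal, same-block versus adjacent-block), and confirm that the resulting inverse-probability weights collapse to the compact $\eta_k$-indexed forms stated in the theorem. I expect the most laborious step to be verifying that the hatted quadratic forms reproduce \emph{exactly} the coefficient matrices of \Cref{theorem.variance.sm} (and of the surrogates in \Cref{corollary.upperbound}) across every block pairing, since this is where the combinatorial structure of overlapping windows and shared Bernoulli draws enters in full.
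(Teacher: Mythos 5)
Your proposal matches the paper's proof in both structure and substance: each hatted quantity is an unbiased Horvitz--Thompson estimator of the corresponding term in the upper bound of \Cref{corollary.upperbound} (itself obtained from \Cref{theorem.variance.sm} by replacing the non-identifiable diagonal cross-products with single-arm squares via $-2ab\le a^2+b^2$), and asymptotic conservativeness is then completed by an $L^2$ consistency argument for these estimators at the scale $\{N^2(T-p)\}^{-1}$. The only difference is cosmetic: where you propose a direct second-moment bound exploiting boundedness and the block-dependence structure, the paper invokes a general variance bound for Horvitz--Thompson estimators due to Delevoye and S{\"a}vje (2020), which yields the same $O(T^{-1})$ rate and hence convergence in probability.
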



\section{Exact optimal $\R[q_1]$}
\label{sec:B}

For fixed $N$ and $\R[q_1]+\R[q_2]\neq 1$, the exact optimal $\R[q_1]$ may vary across different scenarios. \Cref{fig.r} shows the optimal $\R[q_1]$ under different objective functions (combined risks), different population sizes $N$ and different combinations of $\mathcal J_j$ (including $\mathcal J_1:\mathcal J_2=2:1$, $\mathcal J_1:\mathcal J_2:\mathcal J_3=3:2:1$ and $\mathcal J_1:\mathcal J_2:\mathcal J_3:\mathcal J_4=4:3:2:1$) when $q_1=0.8$ and $q_2=0.5$. They all converge towards 0.5 as $N$ tends to infinity.

\begin{figure}[htbp]
    \centering
    \includegraphics[width=\linewidth]{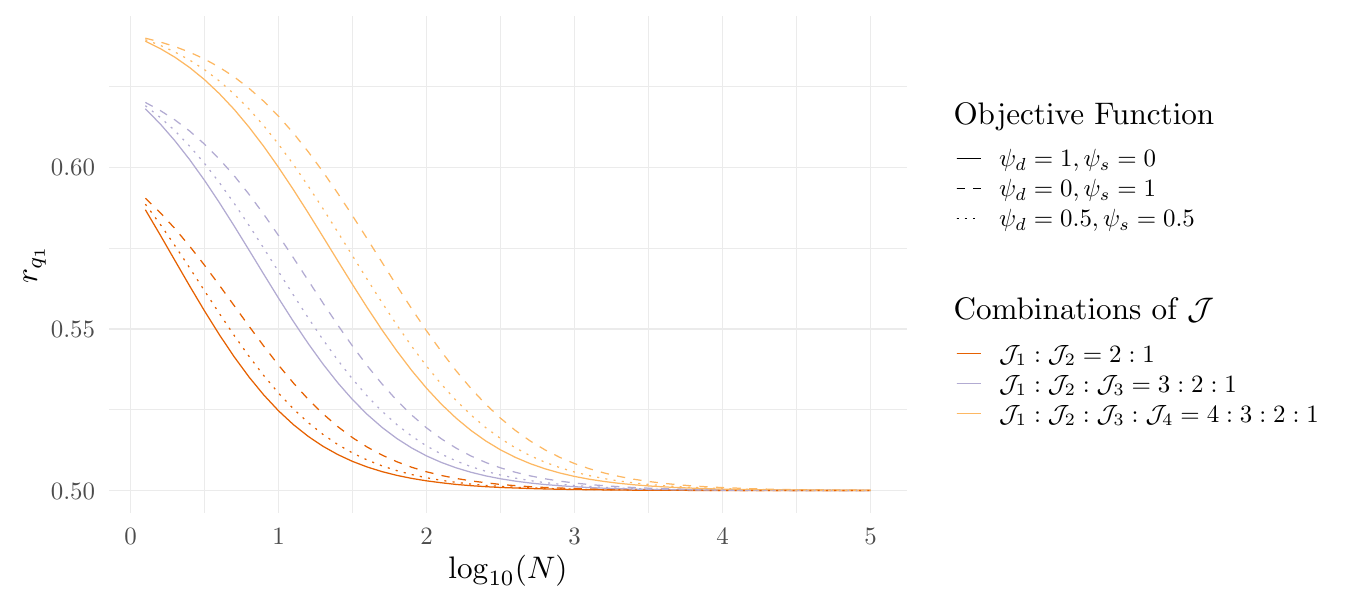}
    \caption{Optimal $\R[q_1]$ under different $N$ and combinations of $\mathcal J_j$. }\label{fig.r}
\end{figure}

\section{Proof of Theorem~\ref{theorem.r}}
\label{sec:sm.theorem.r}

Recall that $\mathbb T= \{t_0,\ldots,t_L \}$ is the set of decision points, $\R[q_1]$ and $\R[q_2]$ are the probabilities of choosing the treated probability from $q_1$ and $q_2$, $\mathbb{Q} = \{Q_{t_0}, Q_{t_1}, \ldots, Q_{t_L}\}$ is the set of treated probabilities, $\mathbb{Y} = \{Y_{i,t}(\mathbf q_{(t-p):t},\mathbf z_{i,(t-p):t})\mid i\in [N], \ t\in [T], \ \mathbf q_{(t-p):t}\in \{q_1,q_2\}^{p+1}, \ \mathbf z_{i,(t-p):t}\in \{0,1\}^{p+1}\}$ is the set of all potential outcomes.

We first introduce the following notation:
\begin{itemize}
    \item [1.] $\mathcal F_{\mathbb T}(t)=\max\{j\mid j\in \mathbb T,j\leq t\}$: This represents the maximum index $j$ from the set of randomization decision points $\mathbb T$ such that $j$ is less than or equal to $t$. In simpler terms, it denotes the randomization time point that determines the treatment assignment at time $t$.
    \item [2.] $\mathcal F_{\mathbb T}^p(t)=\{j\mid \exists i\in\{t-p,\ldots,t\},\text{ such that } j=\mathcal F_{\mathbb T}(i)\}$: This set includes all indices $j$ for which there exists some $i$ in the range ${t-p, \ldots, t}$ such that $j = \mathcal F_{\mathbb T}(i)$. In other words, it consists of the randomization time points in $\mathbb T$ that collectively determine the treatment assignments for the time periods ${t-p, \ldots, t}$. It is worth noting that $\mathcal F_{\mathbb T}^p(t)$ is not empty for all $t$.
    \item [3.] $J_t=|\mathcal F_{\mathbb T}^p(t)|$: This represents the number of elements in $\mathcal F_{\mathbb T}^p(t)$, denoting how many different points collectively determine the treatment assignment for the time period $t$. It is always greater than or equal to 1 and less than or equal to $p+1$.
    \item [4.] $O_{\mathbb T}(t,t')=\mathcal F_{\mathbb T}^p(t)\cap \mathcal F_{\mathbb T}^p(t')$: This set includes the common randomization points between the time periods ${t-p, \ldots, t}$ and ${t'-p, \ldots, t'}$. In other words, it is the set of overlapping randomization points that affect both of these time periods.
    \item [5.] $J_{t,t'}^{\circ}=|O_{\mathbb T}(t,t')|$: This represents the number of overlapping randomization points between the time periods ${t-p, \ldots, t}$ and ${t'-p, \ldots, t'}$ that determine treatment assignments. It is always less than or equal to both $J_t$ and $J_{t'}$.
\end{itemize}

Recall that the objective function is given by
\[
\mathcal{L}(\psi_d, \psi_s) = \psi_d \{\mathrm{risk}^d(q_1) + \mathrm{risk}^d(q_2)\} + \psi_s \{\mathrm{risk}^s(1) + \mathrm{risk}^s(0)\}=\psi_d \mathcal L(1,0)+\psi_s \mathcal L(0,1).
\]
This expression represents the weighted sum of the direct effect risk and the spillover effect risk. Here, $\psi_d$ and $\psi_s$ are the weights assigned to the direct and spillover risks, respectively. As a result, our primary goal is to obtain the theoretical results for $\mathcal L(1,0)$ and $\mathcal L(0,1)$; subsequently, we combine these results to derive the results for $\mathcal L(\psi_d,\psi_s)$.

Recall that $q_{1,z} = zq_1 + (1 - z)(1 - q_1)$ and $q_{2,z} = zq_2 + (1 - z)(1 - q_2)$. Given $\R[q_1], \R[q_2]$ and $\mathbb T$,
\begin{align}
\label{eq:pr}
    \Pit{q_1}{z}{1}=\R[q_1]^{J_t}q_{1,z}^{J_t},\quad \Pit{q_2}{z}{1}=\R[q_2]^{J_t}q_{2,z}^{J_t},\quad z=0,1.
\end{align}
Define
\begin{align*}
    \mathbf D_{i,t}(\mathbb T,q)&=\yit{q}{}{1}\left\{\frac{\Iit{q}{}{1}}{\R^{J_t}q^{J_t}}-1\right\}-\yit{q}{}{0}\left\{\frac{\Iit{q}{}{0}}{\R^{J_t}\bar q^{J_t}}-1\right\},\\
    \mathbf S_{i,t}(\mathbb T,z)&=\yit{q_1}{z}{1}\left\{\frac{\Iit{q_1}{z}{1}}{\R[q_1]^{J_t}q_{1,z}^{J_t}}-1\right\}-\yit{q_2}{z}{1}\left\{\frac{\Iit{q_2}{z}{1}}{\R[q_2]^{J_t}q_{2,z}^{J_t}}-1\right\},\\
    \mathbf D_t(\mathbb T,q)&=\frac 1N\sum_{i=1}^N\mathbf D_{i,t}(\mathbb T,q),\quad
    \mathbf S_t(\mathbb T,z)=\frac 1N\sum_{i=1}^N\mathbf S_{i,t}(\mathbb T,z) .
\end{align*}
Then
\begin{align*}
    \taudhat-\taud&=\frac 1{T-p}\sum_{t=p+1}^T\frac 1N\sum_{i=1}^N\mathbf D_{i,t}(\mathbb T,q)=\frac 1{T-p}\sum_{t=p+1}^T\mathbf D_t(\mathbb T,q),\\
    \taushat-\taus&=\frac 1{T-p}\sum_{t=p+1}^T\frac 1N\sum_{i=1}^N\mathbf S_{i,t}(\mathbb T,z)=\frac 1{T-p}\sum_{t=p+1}^T\mathbf S_t(\mathbb T,z).
\end{align*}
Note that $\taudhat$ and $\taushat$ depend on  $\R[q_1], \R[q_2]$, $\mathbb T$, and $\mathbb Y$. For simplicity, we will ignore these dependencies unless it leads to confusion.

\begin{lemma}[Properties of $\mathbf D_t$ and $\mathbf S_t$]
    \label{lemma.properties}
    Under Assumptions~\ref{assumption.nonanticipativity}--\ref{assumption.bounded}, $\mathbf D_t$ and $\mathbf S_t$ satisfy

    (1.1) $E \{ \mathbf D_t(\mathbb T,q) \} = 0$;

    (1.2) for $q=q_1,q_2$,
    \begin{align*}
        & N^2E\left\{\mathbf D_t(\mathbb T,q)\right\}^2\\
        =& (\R^{-J_t}-1)\left\{\sum_{i=1}^N\yit{q}{}{1}-\sum_{i=1}^N\yit{q}{}{0}\right\}^2+2\R^{-J_t}\sum_{i=1}^N\left\{\yit{q}{}{1}\yit{q}{}{0}\right\}\\
        & +\R^{-J_t}(q^{-J_t}-1)\sum_{i=1}^N\left\{\yit{q}{}{1}\right\}^2+\R^{-J_t}(\bar q^{-J_t}-1)\sum_{i=1}^N\left\{\yit{q}{}{0}\right\}^2;
    \end{align*}
    
    (1.3) for $q=q_1,q_2$ and $t\neq t'$, 
    \begin{align*}
        &N^2E\left\{\mathbf D_t(\mathbb T,q)\mathbf D_{t'}(\mathbb T,q)\right\}\\
        =& (\R^{-J_{t,t'}^{\circ}}-1)\left\{\sum_{i=1}^N\yit{q}{}{1}-\sum_{i=1}^N\yit{q}{}{0}\right\}\left\{\sum_{i=1}^N\yitp{q}{}{1}-\sum_{i=1}^N\yitp{q}{}{0}\right\}\\
        & +\R^{-J_{t,t'}^{\circ}}\sum_{i=1}^N\left\{\yit{q}{}{1}\yitp{q}{}{0}\right\}+\R^{-J_{t,t'}^{\circ}}\sum_{i=1}^N\left\{\yit{q}{}{0}\yitp{q}{}{1}\right\}\\
        & +\R^{-J_{t,t'}^{\circ}}(q^{-J_{t,t'}^{\circ}}-1)\sum_{i=1}^N\left\{\yit{q}{}{1}\yitp{q}{}{1}\right\}\\
        & +\R^{-J_{t,t'}^{\circ}}(\bar q^{-J_{t,t'}^{\circ}}-1)\sum_{i=1}^N\left\{\yit{q}{}{0}\yitp{q}{}{0}\right\};
    \end{align*}

    (2.1) $E\{\mathbf S_t(\mathbb T,z)\}=0$;

    (2.2) for $z=0,1$, 
    \begin{align*}
        &N^2E\left\{\mathbf S_t(\mathbb T,z)\right\}^2\\
        =&(\R[q_1]^{-J_t}-1)\left\{\sum_{i=1}^N\yit{q_1}{z}{1}\right\}^2+(\R[q_2]^{-J_t}-1)\left\{\sum_{i=1}^N\yit{q_2}{z}{1}\right\}^2\\
        & +2\left\{\sum_{i=1}^N\yit{q_1}{z}{1}\right\}\left\{\sum_{i=1}^N\yit{q_2}{z}{1}\right\}\\
        & +\R[q_1]^{-J_t}(q_{1,z}^{-J_t}-1)\sum_{i=1}^N\left\{\yit{q_1}{z}{1}\right\}^2+\R[q_2]^{-J_t}(q_{2,z}^{-J_t}-1)\sum_{i=1}^N\left\{\yit{q_2}{z}{1}\right\}^2;
    \end{align*}

    (2.3) for $z=0,1$ and $t\neq t'$, 
    \begin{align*}
        &N^2E\left\{\mathbf S_t(\mathbb T,z)\mathbf S_{t'}(\mathbb T,z)\right\}\\
        =&(\R[q_1]^{-J_{t,t'}^{\circ}}-1)\left\{\sum_{i=1}^N\yit{q_1}{z}{1}\right\}\left\{\sum_{i=1}^N\yitp{q_1}{z}{1}\right\}\\
        & +(\R[q_2]^{-J_{t,t'}^{\circ}}-1)\left\{\sum_{i=1}^N\yit{q_2}{z}{1}\right\}\left\{\sum_{i=1}^N\yitp{q_2}{z}{1}\right\}\\
        & +\left\{\sum_{i=1}^N\yit{q_1}{z}{1}\right\}\left\{\sum_{i=1}^N\yitp{q_2}{z}{1}\right\}+\left\{\sum_{i=1}^N\yitp{q_1}{z}{1}\right\}\left\{\sum_{i=1}^N\yit{q_2}{z}{1}\right\}\\
        & +\R[q_1]^{-J_{t,t'}^{\circ}}(q_{1,z}^{-J_{t,t'}^{\circ}}-1)\sum_{i=1}^N\left\{\yit{q_1}{z}{1}\yitp{q_1}{z}{1}\right\}\\
        & +\R[q_2]^{-J_{t,t'}^{\circ}}(q_{2,z}^{-J_{t,t'}^{\circ}}-1)\sum_{i=1}^N\left\{\yit{q_2}{z}{1}\yitp{q_2}{z}{1}\right\}.
    \end{align*}
\end{lemma}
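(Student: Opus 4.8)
The plan is to reduce everything to a single combinatorial fact: under the assignment mechanism the indicator $I_{i,t}(\cdot,\cdot)$ is determined by (a) the independent treated-probability selections at the decision points in $\mathcal F_{\mathbb T}^p(t)$ and (b) the independent Bernoulli draws of each unit at those same decision points. Every statement then follows from evaluating joint expectations of products of two centered inverse-probability weights $I/\mathrm{pr}-1$ by factorizing over decision points, using the normalization $\mathrm{pr}_{i,t}(q\mathbf 1,z\mathbf 1)=r_q^{J_t}q_z^{J_t}$ from \eqref{eq:pr}. Parts (1.1) and (2.1) are immediate: since $E\{I_{i,t}(\cdot,\cdot)\}=\mathrm{pr}_{i,t}(\cdot,\cdot)$, each centered weight has mean zero, so $E\{\mathbf D_{i,t}\}=E\{\mathbf S_{i,t}\}=0$, and averaging over $i$ gives the result.

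For the second moments (1.2) and (2.2), write $\mathbf D_{i,t}=A_{i,t}-B_{i,t}$ (treated-path minus control-path term), expand $N^2E\{\mathbf D_t^2\}=\sum_{i,j}E\{(A_{i,t}-B_{i,t})(A_{j,t}-B_{j,t})\}$, and evaluate the four pairwise expectations, separating the diagonal $i=j$ from $i\neq j$. The building blocks are: a single centered weight satisfies $E\{(I/\mathrm{pr}-1)^2\}=1/\mathrm{pr}-1$, equal to $r_q^{-J_t}q^{-J_t}-1$ on the treated path and $r_q^{-J_t}\bar q^{-J_t}-1$ on the control path; two like-path indicators of distinct units share the probability-selection event, giving $E\{(I_i/\mathrm{pr}-1)(I_j/\mathrm{pr}-1)\}=r_q^{-J_t}-1$; and opposite-path indicators of the \emph{same} unit have vanishing product, giving $-1$. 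Grouping the off-diagonal contributions into $(\sum_i Y)^2$-type terms and absorbing the diagonal corrections into $\sum_i Y^2$-type terms yields the stated form; the square $\{\sum_i Y^1-\sum_i Y^0\}^2$ emerges after collecting the four blocks carrying the common factor $r_q^{-J_t}-1$. The spillover case (2.2) is identical but cleaner: $I_{i,t}(q_1\mathbf 1,z\mathbf 1)$ and $I_{j,t}(q_2\mathbf 1,z\mathbf 1)$ require the decision points in $\mathcal F_{\mathbb T}^p(t)$ to select both $q_1$ and $q_2$, impossible for \emph{every} pair $(i,j)$, so all $q_1$-versus-$q_2$ cross products vanish and produce the single term $2\{\sum_i Y_{i,t}(q_1\mathbf 1,z\mathbf 1)\}\{\sum_i Y_{i,t}(q_2\mathbf 1,z\mathbf 1)\}$.

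For the cross moments (1.3) and (2.3), repeat the expansion for $t\neq t'$ and factorize each indicator product over the partition of $\mathcal F_{\mathbb T}^p(t)\cup\mathcal F_{\mathbb T}^p(t')$ into the overlap $O_{\mathbb T}(t,t')$ and the two remainders. Because selections and draws at distinct decision points are independent, the only surviving dependence is carried by the $J_{t,t'}^{\circ}=|O_{\mathbb T}(t,t')|$ shared decision points, so $J_{t,t'}^{\circ}$ replaces $J_t$ in every surviving coefficient (like-path distinct units give $r_q^{-J_{t,t'}^{\circ}}-1$, same-unit same-path give $r_q^{-J_{t,t'}^{\circ}}q^{-J_{t,t'}^{\circ}}-1$, and so on). The opposite-path (direct) or opposite-probability (spillover) conflict now bites only on the overlapping decision points, hence the relevant products vanish precisely when $J_{t,t'}^{\circ}\geq 1$; regrouping exactly as before gives the stated formulas.

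The main obstacle is the bookkeeping in the regrouping step: one must combine the $i=j$ and $i\neq j$ coefficients so that the off-diagonal parts assemble into products of sums while the diagonal remainders assemble into the $\sum_i$ terms, and one must treat separately the degenerate pairs with $J_{t,t'}^{\circ}=0$. For such pairs the two windows are disjoint, $\mathbf D_t,\mathbf D_{t'}$ (resp. $\mathbf S_t,\mathbf S_{t'}$) depend on independent randomness, and the covariance is zero; the formulas are therefore to be read for overlapping pairs $J_{t,t'}^{\circ}\geq 1$, which are the only ones contributing to the variance in \Cref{theorem.variance.sm}. The genuinely delicate point is verifying that the same-unit, opposite-path indicator products vanish — a treated path and a control path, or a $q_1$-path and a $q_2$-path, impose contradictory requirements on at least one shared decision point — and then tracking how the resulting $-1$ diagonal coefficient (rather than $r_q^{-J_{t,t'}^{\circ}}-1$) propagates through the final grouping to produce the bare cross terms in (1.3) and (2.3).
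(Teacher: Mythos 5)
Your proposal is correct and follows essentially the same route as the paper's proof: decompose $N^2E\{\cdot\}$ into diagonal ($i=j$) and off-diagonal ($i\neq j$) pairwise expectations of centered inverse-probability weights, evaluate each by factorizing over the shared decision points (which is exactly what makes $J_{t,t'}^{\circ}$ replace $J_t$ in the cross-time terms), and regroup the four blocks carrying the common factor $\R^{-J}-1$ into products of sums with the diagonal remainders absorbed into the $\sum_i$ terms. Your explicit handling of the degenerate $J_{t,t'}^{\circ}=0$ pairs matches the qualifier the paper adds inside its proof of (1.3) and (2.3), and all the key coefficient computations (the $-1$ from mutually exclusive paths of a single unit, the $\R^{-J}-1$ from distinct units sharing only the probability-selection event) agree with the paper.
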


\begin{proof}[Proof of Lemma~\ref{lemma.properties}]
    (1.1) By \Cref{eq:pr}, it is easy to see that $E\{\mathbf D_{i,t}(\mathbb T,q)\}=0$, which further implies that $E\{\mathbf D_t(\mathbb T,q)\}=0$. 
    
    (1.2) For $i\neq j$,
    \begin{align*}
        E\{\mathbf D_{i,t}(\mathbb T,q)\}^2=&(q^{-J_t}\R^{-J_t}-1)\{\yit{q}{}{1}\}^2+(\bar q^{-J_t}\R^{-J_t}-1)\{\yit{q}{}{0}\}^2\\
        &+2\yit{q}{}{1}\yit{q}{}{0},\\
        E\{\mathbf D_{i,t}(\mathbb T,q)\mathbf D_{j,t}(\mathbb T,q)\}=&(\R^{-J_t}-1)\{\yit{q}{}{1}-\yit{q}{}{0}\}\{\yjt{q}{}{1}-\yjt{q}{}{0}\}.
    \end{align*}
    The result follows from 
    \begin{align*}
    N^2E\{\mathbf D_t(\mathbb T,q)\}^2=\sum_{i=1}^NE\{\mathbf D_{i,t}(\mathbb T,q)\}^2+\sum_{i\neq j}E\{\mathbf D_{i,t}(\mathbb T,q)\mathbf D_{j,t}(\mathbb T,q)\}.
    \end{align*}

    (1.3) For $t\neq t'$ and $J_{t,t'}^{\circ}\neq 0$,
    \begin{align*}
        E\{\mathbf D_{i,t}(\mathbb T,q)\mathbf D_{i,t'}(\mathbb T,q)\}
        =&(q^{-J_{t,t'}^{\circ}}\R^{-J_{t,t'}^{\circ}}-1)\yit{q}{}{1}\yitp{q}{}{1}\\
        &+(\bar q^{-J_{t,t'}^{\circ}}\R^{-J_{t,t'}^{\circ}}-1)\yit{q}{}{0}\yitp{q}{}{0}\\
        &+\yit{q}{}{1}\yitp{q}{}{0}+\yit{q}{}{0}\yitp{q}{}{1},\\
        E\{\mathbf D_{i,t}(\mathbb T,q)\mathbf D_{j,t'}(\mathbb T,q)\}
        =&(\R^{-J_{t,t'}^{\circ}}-1)\{\yit{q}{}{1}-\yit{q}{}{0}\}\{\yjtp{q}{}{1}-\yjtp{q}{}{0}\}.
    \end{align*}
    The result follows from 
    \begin{align*}
    N^2E\{\mathbf D_t(\mathbb T,q)\mathbf D_{t'}(\mathbb T,q)\}=\sum_{i=1}^NE\{\mathbf D_{i,t}(\mathbb T,q)\mathbf D_{i,t'}(\mathbb T,q)\}+\sum_{i\neq j}E\{\mathbf D_{i,t}(\mathbb T,q)\mathbf D_{j,t'}(\mathbb T,q)\}.
    \end{align*}

    (2.1) By \Cref{eq:pr}, it is easy to see that $E\{\mathbf S_{i,t}(\mathbb T,z)\}=0$, which further implies that $E\{\mathbf S_t(\mathbb T,z)\}=0$, $z=0,1$.

    (2.2) For $i\neq j$,
    \begin{align*}
        E\{\mathbf S_{i,t}(\mathbb T,z)\}^2=&(q_{1,z}^{-J_t}\R[q_1]^{-J_t}-1)\{\yit{q_1}{z}{1}\}^2+(q_{2,z}^{-J_t}\R[q_2]^{-J_t}-1)\{\yit{q_2}{z}{1}\}^2\\
        &+2\yit{q_1}{z}{1}\yit{q_2}{z}{1},\\
        E\{\mathbf S_{i,t}(\mathbb T,z)\mathbf S_{j,t}(\mathbb T,z)\}=&(\R[q_1]^{-J_t}-1)\yit{q_1}{z}{1}\yjt{q_1}{z}{1}+(\R[q_2]^{-J_t}-1)\yit{q_2}{z}{1}\yjt{q_2}{z}{1}\\
        &+\yit{q_1}{z}{1}\yjt{q_2}{z}{1}+\yit{q_2}{z}{1}\yjt{q_1}{z}{1}.
    \end{align*}
    The result follows from 
    \begin{align*}
    N^2E\{\mathbf S_t(\mathbb T,z)\}^2=\sum_{i=1}^NE\{\mathbf S_{i,t}(\mathbb T,z)\}^2+\sum_{i\neq j}E\{\mathbf S_{i,t}(\mathbb T,z)\mathbf S_{j,t}(\mathbb T,z)\}.
    \end{align*}

    (2.3) For $t\neq t'$ and $J_{t,t'}^{\circ}\neq 0$,
    \begin{align*}
        E\{\mathbf S_{i,t}(\mathbb T,z)\mathbf S_{i,t'}(\mathbb T,z)\}=&(q_{1,z}^{-J_{t,t'}^{\circ}}\R[q_1]^{-J_{t,t'}^{\circ}}-1)\yit{q_1}{z}{1}\yitp{q_1}{z}{1}\\
        &+(q_{2,z}^{-J_{t,t'}^{\circ}}\R[q_2]^{-J_{t,t'}^{\circ}}-1)\yit{q_2}{z}{1}\yitp{q_2}{z}{1}\\
        &+\yit{q_1}{z}{1}\yitp{q_2}{z}{1}+\yit{q_2}{z}{1}\yitp{q_1}{z}{1},\\
        E\{\mathbf S_{i,t}(\mathbb T,z)\mathbf S_{j,t'}(\mathbb T,z)\}=&(\R[q_1]^{-J_{t,t'}^{\circ}}-1)\yit{q_1}{z}{1}\yjtp{q_1}{z}{1}\\
        &+(\R[q_2]^{-J_{t,t'}^{\circ}}-1)\yit{q_2}{z}{1}\yjtp{q_2}{z}{1}\\
        &+\yit{q_1}{z}{1}\yjtp{q_2}{z}{1}+\yit{q_2}{z}{1}\yjtp{q_1}{z}{1}.
    \end{align*}
    The result follows from
    \begin{align*}
    N^2E\{\mathbf S_t(\mathbb T,z)\mathbf S_{t'}(\mathbb T,z)\}= \sum_{i=1}^NE\{\mathbf S_{i,t}(\mathbb T,z)\mathbf S_{i,t'}(z)\}+\sum_{i\neq j}E\{\mathbf S_{i,t}(\mathbb T,z)\mathbf S_{j,t'}(\mathbb T,z)\}.
    \end{align*}
\end{proof}

Recall that $B$ is the bound of potential outcomes as stated in \Cref{assumption.bounded}.
\begin{lemma}
    \label{lemma.Y}
    Under Assumptions~\ref{assumption.nonanticipativity}--\ref{assumption.bounded}, given $\mathbb T^*$, $\R[q_1]$ and $\R[q_2]$, we have 

    (1) when $N\geq (1- \max\{\R[q_1],\R[q_2]\})^{-1}$, $\mathop{\arg\max}_{\mathbb Y\in \mathcal Y}\mathcal L(1,0)=\{Y_{i,t}(\mathbf q_{(t-p):t},\mathbf z_{i,(t-p):t})\mid \yit{q_1}{}{1}=C_1,\yit{q_1}{}{0}=-C_1,\yit{q_2}{}{1}=C_2,\yit{q_2}{}{0}=-C_2\}$ with $C_1=\pm B$, $C_2=\pm B$; when $N\leq (1-\min\{\R[q_1]^{p+1},\R[q_2]^{p+1}\})^{-1}$, $\mathop{\arg\max}_{\mathbb Y\in \mathcal Y}\mathcal L(1,0)=\{Y_{i,t}(\mathbf q_{(t-p):t},\mathbf z_{i,(t-p):t})\mid \yit{q_1}{}{1}=C_1,\yit{q_1}{}{0}=C_1,\yit{q_2}{}{1}=C_2,\yit{q_2}{}{0}=C_2\}$ with $C_1=\pm B$ and $C_2=\pm B$;

    (2) $\mathop{\arg \max}_{\mathbb Y\in \mathcal Y}\mathcal L(0,1)=\{Y_{i,t}(\mathbf q_{(t-p):t},\mathbf z_{i,(t-p):t})\mid \yit{q_1}{}{1}=C_3,\yit{q_1}{}{0}=C_4,\yit{q_2}{}{1}=C_3,\yit{q_2}{}{0}=C_4\}$ with $C_3=\pm B$ and $C_4=\pm B$;

    (3) for any $\psi_d>0$ and $\psi_s>0$, when $N\geq (1- \max\{\R[q_1],\R[q_0]\})^{-1}$, $\mathop{\arg \max}_{\mathbb Y\in \mathcal Y}\mathcal L(\psi_d,\psi_s)=\{Y_{i,t}(\mathbf q_{(t-p):t},\mathbf z_{i,(t-p):t})\mid \yit{q_1}{}{1}=C,\yit{q_1}{}{0}=-C,\yit{q_2}{}{1}=C,\yit{q_2}{}{0}=-C\}$ with $C=\pm B$; when $N\leq (1-\min\{\R[q_1]^{p+1},\R[q_0]^{p+1}\})^{-1}$, $\mathop{\arg \max}_{\mathbb Y\in \mathcal Y}\mathcal L(\psi_d,\psi_s)=\{Y_{i,t}(\mathbf q_{(t-p):t},\mathbf z_{i,(t-p):t})\mid \yit{q_1}{}{1}=C,\yit{q_1}{}{0}=C,\yit{q_2}{}{1}=C,\yit{q_2}{}{0}=C\}$ with $C=\pm B$.
\end{lemma}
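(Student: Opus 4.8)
The plan is to reduce the maximization over the box $\mathcal Y$ to a finite comparison of vertices, and then to pin down the optimal vertex through the signs of the coefficients supplied by Lemma~\ref{lemma.properties}. For every realization of the assignment, $\taudhat-\taud=(T-p)^{-1}\sum_t\mathbf D_t$ and $\taushat-\taus=(T-p)^{-1}\sum_t\mathbf S_t$ are linear in $\mathbb Y$, so each risk is an expectation of a squared linear functional and is therefore a convex (positive semidefinite) quadratic functional of $\mathbb Y$. A convex function on the box $\mathcal Y$ attains its maximum at a vertex, so it suffices to restrict attention to configurations in which every relevant potential outcome equals $\pm B$, and the $\argmax$ over $\mathcal Y$ coincides with the $\argmax$ over these vertices. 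Since $\mathcal L(1,0)$ depends only on $\{\yit{q}{}{1},\yit{q}{}{0}:q=q_1,q_2\}$ and $\mathcal L(0,1)$ only on $\{\yit{q_1}{z}{1},\yit{q_2}{z}{1}:z=0,1\}$, the remaining coordinates are irrelevant and may be fixed arbitrarily.

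Next I would handle parts (1) and (2) by expanding $\sum_{t,t'}E\{\mathbf D_t\mathbf D_{t'}\}$ and $\sum_{t,t'}E\{\mathbf S_t\mathbf S_{t'}\}$ with Lemma~\ref{lemma.properties}, splitting each into a \emph{between-unit} part built from the sums $\sum_i\yit{q}{}{1}$, $\sum_i\yit{q}{}{0}$ (coefficients of order $N^2$) and a \emph{within-unit} part built from $\sum_i(\cdot)(\cdot)$ (order $N$). For the direct effect the between-unit part is a quadratic form in $d_t:=\sum_i\{\yit{q}{}{1}-\yit{q}{}{0}\}$ whose kernel has entries $\R^{-J_{t,t'}^{\circ}}-1\ge 0$ (with $J^\circ_{t,t}=J_t$); nonnegativity of the entries already forces the maximizer to take every $d_t$ extreme and sign-aligned, i.e.\ $\yit{q}{}{1}\equiv B,\ \yit{q}{}{0}\equiv -B$ up to a global sign. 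Computing the net coefficient of the cross product $\yit{q}{}{1}\yit{q}{}{0}$, it is negative once $N\ge(1-\max\{\R[q_1],\R[q_2]\})^{-1}$ (opposite-sign constant configuration, case (1) first part) and positive once $N\le(1-\min\{\R[q_1]^{p+1},\R[q_2]^{p+1}\})^{-1}$ (same-sign constant configuration, second part); because $\mathrm{risk}^d(q_1)$ and $\mathrm{risk}^d(q_2)$ involve disjoint outcomes they decouple, so $C_1,C_2$ are selected independently. For the spillover effect the analogous expansion contains the cross term $2(\sum_i\yit{q_1}{z}{1})(\sum_i\yit{q_2}{z}{1})$ with a coefficient that is positive for every $N$, so the worst case always aligns the $q_1$- and $q_2$-outcomes at each fixed $z$; since $\mathrm{risk}^s(1)$ and $\mathrm{risk}^s(0)$ involve disjoint outcomes they decouple into independent choices $C_3,C_4$, and no threshold on $N$ appears, giving part (2).

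For part (3) I would use $\mathcal L(\psi_d,\psi_s)=\psi_d\mathcal L(1,0)+\psi_s\mathcal L(0,1)$ with $\psi_d,\psi_s>0$ together with the elementary fact that, whenever the two summands admit a common maximizer, $\argmax(\psi_d f+\psi_s g)=\argmax f\cap\argmax g$. It then remains to intersect the configuration sets from (1) and (2): when $N\ge(1-\max\{\R[q_1],\R[q_2]\})^{-1}$ the direct-effect condition $\yit{q}{}{1}=-\yit{q}{}{0}$ is compatible with the spillover condition $\yit{q_1}{z}{1}=\yit{q_2}{z}{1}$, forcing the single pattern $(C,-C,C,-C)$, while when $N\le(1-\min\{\R[q_1]^{p+1},\R[q_2]^{p+1}\})^{-1}$ both conditions collapse to all four outcomes equal, matching the stated sets.

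The main obstacle is the competition between the between-unit and within-unit contributions, which pull the optimal sign of $\yit{q}{}{1}\yit{q}{}{0}$ in opposite directions: the between-unit difference term (order $N^2$) favors opposite signs whereas the within-unit term $2\R^{-J_t}\sum_i\yit{q}{}{1}\yit{q}{}{0}$ (order $N$) favors equal signs. Summing the net coefficient over all diagonal and off-diagonal pairs and over the $J$- and $J^{\circ}$-structure induced by $\mathbb T^*$, I must show it is uniformly negative on $\{N\ge(1-\max\{\R[q_1],\R[q_2]\})^{-1}\}$ and uniformly positive on $\{N\le(1-\min\{\R[q_1]^{p+1},\R[q_2]^{p+1}\})^{-1}\}$. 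The per-pair sign flip occurs at $N=(1-\R^{J})^{-1}$; maximizing this threshold over $J\in\{1,\dots,p+1\}$ (attained at $J=1$) yields the large-$N$ bound, and minimizing it (attained at $J=p+1$) yields the small-$N$ bound, which is precisely why the two stated conditions are the correct sufficient thresholds rather than a vague large-$N$ statement.
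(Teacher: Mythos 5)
Your overall strategy is sound and the thresholds you identify are exactly right: the risk is a positive semidefinite quadratic form in $\mathbb Y$, so its maximum over the box $\mathcal Y$ sits at a vertex, and the per-index sign flip at $N=(1-\R^{J})^{-1}$, maximized over $J$ at $J=1$ and minimized at $J=p+1$, is precisely how the paper's two conditions arise. The decoupling of $q_1$ from $q_2$ in part (1), the observation that every coefficient in the spillover expansion is nonnegative in part (2), and the intersection argument in part (3) all match the paper. However, there are two concrete gaps in part (1). First, reducing to vertices leaves $4^{N}$ candidate configurations per time point, and your "net coefficient of the cross product" comparison only pits the two constant configurations ($b_i\equiv -a_i$ versus $b_i\equiv a_i$) against each other; it does not rule out mixed vertices in which some units have $a_ib_i=B^2$ and others $a_ib_i=-B^2$. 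The paper closes this by a monotone rearrangement (passing to $\{a_i'\},\{b_i'\}$ with $a_i'\ge b_i'$, then to $\{a_i''\equiv B\},\{b_i''\}$) that reduces the problem to a one-dimensional convex quadratic $f(x)$ in $x=\sum_i b_i''\in[-NB,NB]$, whose endpoint comparison $f(-NB)\gtrless f(NB)$ is what actually delivers the thresholds. You would need an equivalent step, e.g.\ parametrizing a vertex by $u=n_{+-}-n_{-+}$ and $v=n_{+-}+n_{-+}$ and invoking convexity in $v$, to make the two-configuration comparison legitimate.

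Second, your treatment of the cross-time terms $E\{\mathbf D_t(\mathbb T,q)\mathbf D_{t'}(\mathbb T,q)\}$ only addresses the between-unit quadratic form in $d_t$; the within-unit pieces $\R^{-J^{\circ}_{t,t'}}\sum_i\yit{q}{}{1}\yitp{q}{}{0}$ and $\R^{-J^{\circ}_{t,t'}}\sum_i\yit{q}{}{0}\yitp{q}{}{1}$ couple outcomes at different time points and are never handled in your proposal. The paper deals with them through the identity that rewrites the cross term as $-\tfrac12(\R^{-J^{\circ}}-1)(\sum_i a_i-\sum_i c_i)^2-\tfrac12(\R^{-J^{\circ}}-1)(\sum_i b_i-\sum_i d_i)^2$ plus two half-copies of the diagonal expression in $(a,d)$ and $(c,b)$; the negative squared differences are what force the maximizer to be constant across $t$ as well as across $i$, and the half-copies inherit the already-established diagonal analysis. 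Without this (or an equivalent) device, the claim that the worst case is constant in $t$ is unsupported.
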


\begin{proof}[Proof of Lemma~\ref{lemma.Y}]
    First of all, recall that 
    \begin{align*}
        \mathrm{risk}^d(q)&=E\{\taudhat-\taud\}^2=\frac 1{(T-p)^2}\left[\sum_{t=p+1}^TE\left\{\mathbf D_t(\mathbb T,q)\right\}^2+\sum_{t\neq t'}E\{\mathbf D_t(\mathbb T,q)\mathbf D_{t'}(\mathbb T,q)\}\right],\\
        \mathrm{risk}^s(z)&=E\{\taushat-\taus\}^2=\frac 1{(T-p)^2}\left[\sum_{t=p+1}^TE\left\{\mathbf S_t(\mathbb T,z)\right\}^2+\sum_{t\neq t'}E\{\mathbf S_t(\mathbb T,z)\mathbf S_{t'}(\mathbb T,z)\}\right].
    \end{align*}
    In the proof, we will respectively maximize $E\left\{\mathbf D_t(\mathbb T,q)\right\}^2$, $E\{\mathbf D_t(\mathbb T,q)\mathbf D_{t'}(\mathbb T,q)\}$, $E\left\{\mathbf S_t(\mathbb T,z)\right\}^2$ and $E\{\mathbf S_t(\mathbb T,z)\mathbf S_{t'}(\mathbb T,z)\}$ under Assumptions~\ref{assumption.nonanticipativity}--\ref{assumption.bounded}.

    (1) Recall that
    \begin{align*}
        & N^2E\left\{\mathbf D_t(\mathbb T,q)\right\}^2\\
        =& (\R^{-J_t}-1)\left\{\sum_{i=1}^N\yit{q}{}{1}-\sum_{i=1}^N\yit{q}{}{0}\right\}^2+2\R^{-J_t}\sum_{i=1}^N\left\{\yit{q}{}{1}\yit{q}{}{0}\right\}\\
        & +\R^{-J_t}(q^{-J_t}-1)\sum_{i=1}^N\left\{\yit{q}{}{1}\right\}^2+\R^{-J_t}(\bar q^{-J_t}-1)\sum_{i=1}^N\left\{\yit{q}{}{0}\right\}^2.
    \end{align*}
    The last two terms in \Cref{eq:lemmaY} are maximized when $|\yit{q}{}{1}|=|\yit{q}{}{0}|=B$ for specific $t$ and $i=1,\ldots,N$, where $B$ represents the bound of potential outcomes as stated in \Cref{assumption.bounded}. For simplicity, let us denote $a_i = \yit{q}{}{1}$ and $b_i = \yit{q}{}{0}$. The remaining term can be expressed as:
    \begin{align}
    \label{eq:lemmaY}
    (\R^{-J_t} - 1) \left( \sum_{i=1}^N a_i - \sum_{i=1}^N b_i \right)^2 + 2 \R^{-J_t} \sum_{i=1}^N a_i b_i.
    \end{align}
    Without loss of generality, we can assume that $\sum_{i=1}^N a_i - \sum_{i=1}^N b_i \geq 0$. If this is not the case, we can simply switch $\{a_i\}$ and $\{b_i\}$ to satisfy this condition with the value of \Cref{eq:lemmaY} unchanged.

    Next, consider two new sets defined as $\{a_i'\}$ and $\{b_i'\}$:
    \begin{align*}
        a_i'=a_i,\quad b_i'=b_i,\quad &\text{if}\quad a_i\geq b_i,\quad \max(a_i,b_i)\geq 0;\\
        a_i'=-b_i,\quad b_i'=-a_i,\quad &\text{if}\quad a_i\geq b_i,\quad \max(a_i,b_i)< 0;\\
        a_i'=b_i,\quad b_i'=a_i,\quad &\text{if}\quad a_i< b_i,\quad \max(a_i,b_i)\geq 0;\\
        a_i'=-a_i,\quad b_i'=-b_i,\quad &\text{if}\quad a_i< b_i,\quad \max(a_i,b_i)< 0.
    \end{align*}
    Then, $\{a_i'\}$ and $\{b_i'\}$ satisfies $a_i'\geq b_i'$ and $a_i'\geq 0$. Additionally, $a_i'-b_i'\geq a_i-b_i$ implies  $\sum_{i=1}^Na_i'-\sum_{i=1}^Nb_i'\geq \sum_{i=1}^Na_i-\sum_{i=1}^Nb_i\geq 0$, and $a_i'b_i'=a_ib_i$ implies $\sum_{i=1}^Na_i'b_i'=\sum_{i=1}^Na_ib_i$.
    Then we obtain $\{a_i'\}$ and $\{b_i'\}$ with $a_i'\geq b_i'$ that yield a value of \Cref{eq:lemmaY} at least as large as that of the original sets.

    Next, additionally consider two new sets defined as $\{a_i''\}$ and $\{b_i''\}$: $a_i''=B$ and $b_i''=B-(a_i'-b_i')$. We have $( \sum_{i=1}^Na_i''-\sum_{i=1}^Nb_i'' ) - ( \sum_{i=1}^Na_i'-\sum_{i=1}^Nb_i' ) = 0$ given that $a_i''-b_i''=a_i'-b_i'$; and $\sum_{i=1}^Na_i''b_i''\geq \sum_{i=1}^Na_i'b_i$ given that $a_i''b_i''-a_i'b_i'=Bb_i''-Bb_i'+Bb_i'-a_i'b_i'= B^2-Ba_i'+Bb_i'-a_i'b_i'=(B-a_i')(B+b_i')\geq 0$. Then, we obtain two new sets $\{a_i''\}$ and $\{b_i''\}$ with $a_i''=B$ and $b_i''\leq B$ that yield a value of \Cref{eq:lemmaY} at least as large as that of the original sets.

    Next, we maximize \Cref{eq:lemmaY} based on the sets $\{a_i''\}$ and $\{b_i''\}$. For simplicity, we define $x = \sum_{i=1}^N b_i''$. \Cref{eq:lemmaY} can be expressed as
    \begin{align*}
        f(x) = (\R^{-J_t} - 1)(NB - x)^2 + 2 \R^{-J_t} B x.
    \end{align*}
    Expanding this yields
    \begin{align*}
        f(x) = (\R^{-J_t} - 1)x^2 - 2B(N \R^{-J_t} - N - 2\R^{-J_t})x + (\R^{-J_t} - 1)N^2B^2,
    \end{align*}
    where $x \in [-NB, NB]$. Given that $J_t \geq 1$ and $\R \in (0, 1)$, we have $\R^{-J_t} - 1 > 0$.

    We can calculate that $f(-NB) = \{4N^2(\R^{-J_t} - 1) - 2N \R^{-J_t}\} B^2$
    and $f(NB) = 2N \R^{-J_t} B^2$. When $f(-NB)\geq f(NB)$, i.e., $N \geq (1 - \R)^{-1}$, the maximum is achieved when $x=-NB$, which corresponds to $b_i=-B$ for $i=1,\ldots,B$; when $f(-NB)\leq f(NB)$ for $J_t=1,\ldots,p+1$, i.e., $N \leq (1 - \R^{p+1})^{-1}$, the maximum is achieved when $x=NB$, which corresponds to $b_i=B$ for $i=1,\ldots,N$.

    In summary, when $N \geq (1 - \R)^{-1}$, we have demonstrated that any original sets $\{a_i\}$ and $\{b_i\}$ yield a value of \Cref{eq:lemmaY} that is at most equal to that obtained when $\yit{q}{}{1} = C$ and $\yit{q}{}{0} = -C$ for specific $t$ and $i=1,\ldots,N$, where $C = \pm B$. This is justified by the observation that the values of \Cref{eq:lemmaY} remain the same for $C = B$ and $C = -B$.

    Similarly, when $N \leq  (1 - \R^{p+1})^{-1}$, we can show that any original sets $\{a_i\}$ and $\{b_i\}$ yield a value of \Cref{eq:lemmaY} that is at most equal to that achieved when $\yit{q}{}{1} = \yit{q}{}{0} = C$ for specific $t$ and $i=1,\ldots,N$, where $C=\pm B$. 
    

    For the interaction term $\mathbf D_t(\mathbb T,q)\mathbf D_{t'}(\mathbb T,q)$, recall that
    \begin{align*}
        &N^2E\left\{\mathbf D_t(\mathbb T,q)\mathbf D_{t'}(\mathbb T,q)\right\}\\
        =& (\R^{-J_{t,t'}^{\circ}}-1)\left\{\sum_{i=1}^N\yit{q}{}{1}-\sum_{i=1}^N\yit{q}{}{0}\right\}\left\{\sum_{i=1}^N\yitp{q}{}{1}-\sum_{i=1}^N\yitp{q}{}{0}\right\}\\
        & +\R^{-J_{t,t'}^{\circ}}\sum_{i=1}^N\left\{\yit{q}{}{1}\yitp{q}{}{0}\right\}+\R^{-J_{t,t'}^{\circ}}\sum_{i=1}^N\left\{\yit{q}{}{0}\yitp{q}{}{1}\right\}\\
        & +\R^{-J_{t,t'}^{\circ}}(q^{-J_{t,t'}^{\circ}}-1)\sum_{i=1}^N\left\{\yit{q}{}{1}\yitp{q}{}{1}\right\}\\
        & +\R^{-J_{t,t'}^{\circ}}(\bar q^{-J_{t,t'}^{\circ}}-1)\sum_{i=1}^N\left\{\yit{q}{}{0}\yitp{q}{}{0}\right\}.
    \end{align*}
    The last two terms are maximized when $\yit{q}{z}{1}=\yitp{q}{z}{1}$ and $|\yit{q}{z}{1}|=|\yitp{q}{z}{1}|=B$ for $t$, $t'$, $z=0,1$ and $i=1,\ldots,N$. For simplicity, let us denote $a_i = \yit{q}{}{1}$, $b_i = \yit{q}{}{0}$, $c_i = \yitp{q}{}{1}$ and $d_i = \yitp{q}{}{0}$. The remaining term can be expressed as
    \begin{align*}
    &(\R^{-J_{t,t'}^{\circ}} - 1) \left( \sum_{i=1}^N a_i - \sum_{i=1}^N b_i \right)\left( \sum_{i=1}^N c_i - \sum_{i=1}^N d_i \right) + \R^{-J_{t,t'}^{\circ}} \sum_{i=1}^N a_i d_i + \R^{-J} \sum_{i=1}^N b_i c_i\\
    =&-\frac 12(\R^{-J_{t,t'}^{\circ}}-1)\left(\sum_{i=1}^Na_i-\sum_{i=1}^Nc_i\right)^2-\frac 12(\R^{-J_{t,t'}^{\circ}}-1)\left(\sum_{i=1}^Nb_i-\sum_{i=1}^Nd_i\right)^2\\
    &+\frac 12(\R^{-J_{t,t'}^{\circ}}-1)\left(\sum_{i=1}^Na_i-\sum_{i=1}^Nd_i\right)^2+\R^{-J_{t,t'}^{\circ}} \sum_{i=1}^N a_i d_i\\
    &+\frac 12(\R^{-J_{t,t'}^{\circ}}-1)\left(\sum_{i=1}^Nc_i-\sum_{i=1}^Nb_i\right)^2+ \R^{-J_{t,t'}^{\circ}} \sum_{i=1}^N b_i c_i.
    \end{align*}
    The terms in the second line are maximized when $\sum_{i=1}^Na_i=\sum_{i=1}^Nc_i$ and $\sum_{i=1}^Nb_i=\sum_{i=1}^Nd_i$; the terms in the third line can be interpreted as half of \Cref{eq:lemmaY} with $b_i$ substituted by $d_i$ and $J_t$ substituted by $J_{t,t'}^{\circ}$; the terms in the last line can be similarly interpreted as half of \Cref{eq:lemmaY} with $a_i$ substituted by $c_i$ and $J_t$ substituted by $J_{t,t'}^{\circ}$. Based on the results above, this function is maximized when $a_i=c_i=C$, $b_i=d_i=-C$ with $C=\pm B$ when $N\geq (1-\R)^{-1}$; it is maximized when $a_i=c_i=C$, $b_i=d_i=C$ with $C=\pm B$ when $N\leq (1-\R^{p+1})^{-1}$.

    In summary, when $N\geq (1-\R)^{-1}$, $\mathrm{risk}^d(q)$ is maximized under $\yit{q_1}{}{1}=C_1,\yit{q_1}{}{0}=-C_1,\yit{q_2}{}{1}=C_2,\yit{q_2}{}{0}=-C_2$ with $C_1=\pm B$, $C_2=\pm B$ and $B$ is the bound of potential outcomes as stated in \Cref{assumption.bounded}; when $N\leq (1-\R^{p+1})^{-1}$, $\mathrm{risk}^d(q)$ is minimized under $\yit{q_1}{}{1}=C_1,\yit{q_1}{}{0}=C_1,\yit{q_2}{}{1}=C_2,\yit{q_2}{}{0}=C_2$ with $C_1=\pm B$ and $C_2=\pm B$. 

    Combining results of $q_1$ and $q_2$, we can obtain that when $N\geq (1- \max\{\R[q_1],\R[q_2]\})^{-1}$, $\mathop{\arg\max}_{\mathbb Y\in\mathcal Y}\mathcal L(1,0)=\{Y_{i,t}(\mathbf q_{(t-p):t},\mathbf z_{i,(t-p):t})\mid \yit{q_1}{}{1}=C_1,\yit{q_1}{}{0}=-C_1,\yit{q_2}{}{1}=C_2,\yit{q_2}{}{0}=-C_2\}$ with $C_1=\pm B$, $C_2=\pm B$; when $N\leq (1-\min\{\R[q_1]^{p+1},\R[q_2]^{p+1}\})^{-1}$, $\mathop{\arg\max}_{\mathbb Y\in \mathcal Y}\mathcal L(1,0)=\{Y_{i,t}(\mathbf q_{(t-p):t},\mathbf z_{i,(t-p):t})\mid \yit{q_1}{}{1}=C_1,\yit{q_1}{}{0}=C_1,\yit{q_2}{}{1}=C_2,\yit{q_2}{}{0}=C_2\}$ with $C_1=\pm B$ and $C_2=\pm B$.

    (2) Recall that
    \begin{align*}
        &N^2E\left\{\mathbf S_t(\mathbb T,z)\right\}^2\\
        =&(\R[q_1]^{-J_t}-1)\left\{\sum_{i=1}^N\yit{q_1}{z}{1}\right\}^2+(\R[q_2]^{-J_t}-1)\left\{\sum_{i=1}^N\yit{q_2}{z}{1}\right\}^2\\
        & +2\left\{\sum_{i=1}^N\yit{q_1}{z}{1}\right\}\left\{\sum_{i=1}^N\yit{q_2}{z}{1}\right\}\\
        & +\R[q_1]^{-J_t}(q_{1,z}^{-J_t}-1)\sum_{i=1}^N\left\{\yit{q_1}{z}{1}\right\}^2+\R[q_2]^{-J_t}(q_{2,z}^{-J_t}-1)\sum_{i=1}^N\left\{\yit{q_2}{z}{1}\right\}^2.
    \end{align*}
    All terms in this function are maximized under $\{\yit{q}{z}{1}:\yit{q_1}{z}{1}=\yit{q_2}{z}{1}=C_3\}$ with $C_3=\pm B$. The proof can be extended for $E\{\mathbf S_{i,t}(\mathbb T,z)\mathbf S_{i,t'}(\mathbb T,z)\}$. Combining results of $z=0$ and $z=1$, we obtain that $\mathop{\arg \max}_{\mathbb Y\in \mathcal Y}\mathcal L(0,1)=\{Y_{i,t}(\mathbf q_{(t-p):t},\mathbf z_{i,(t-p):t})\mid \yit{q_1}{}{1}=C_3,\yit{q_1}{}{0}=C_4,\yit{q_2}{}{1}=C_3,\yit{q_2}{}{0}=C_4\}$ with $C_3=\pm B$ and $C_4=\pm B$.
    
    (3) With $\mathcal L(\psi_d,\psi_s)=\psi_d\mathcal L(1,0)+\psi_s\mathcal L(0,1)$, where $\psi_d>0$ and $\psi_s>0$, the maximum is obtained when $\mathbb Y$ satisfies the conditions in both (1) and (2).
\end{proof}

Now, we can prove \Cref{theorem.r}.
\begin{proof}[Proof of \Cref{theorem.r}]
    (1) When $N\geq (1- \max\{\R[q_1],\R[q_2]\})^{-1}$ and $\mathbb Y^* \in  \arg \max_{\mathbb Y\in \mathcal Y}\mathcal L(1,0)$ as shown in \Cref{lemma.Y}, we have
    \begin{align*}
        &E\{\mathbf D_t(\mathbb T,q_1)\}^2+E\{\mathbf D_t(\mathbb T,q_2)\}^2\\
        =&4(\R[q_1]^{-J_t}+\R[q_2]^{-J_t}-2)B^2+\left\{\R[q_1]^{-J_t}(q_1^{-J_t}+\bar q_1^{-J_t}-4)+\R[q_2]^{-J_t}(q_2^{-J_t}+\bar q_2^{-J_t}-4)\right\}\frac{B^2}N,\\
        &E\{\mathbf D_t(\mathbb T,q_1)\mathbf D_{t'}(\mathbb T,q_1)\}+E\{\mathbf D_t(\mathbb T,q_2)\mathbf D_{t'}(\mathbb T,q_2)\}\\
        =&4(\R[q_1]^{-J_{t,t'}^{\circ}}+\R[q_2]^{-J_{t,t'}^{\circ}}-2)B^2+\left\{\R[q_1]^{-J_{t,t'}^{\circ}}(q_1^{-J_{t,t'}^{\circ}}+\bar q_1^{-J_{t,t'}^{\circ}}-4)+\R[q_2]^{-J_{t,t'}^{\circ}}(q_2^{-J_{t,t'}^{\circ}}+\bar q_2^{-J_{t,t'}^{\circ}}-4)\right\}\frac{B^2}N,
    \end{align*}
    while when $N\leq (1-\min\{\R[q_1]^{p+1},\R[q_2]^{p+1}\})^{-1}$, we have
    \begin{align*}
        &E\{\mathbf D_t(\mathbb T,q_1)\}^2+E\{\mathbf D_t(\mathbb T,q_2)\}^2\\
        =&\left\{\R[q_1]^{-J_t}(q_1^{-J_t}+\bar q_1^{-J_t})+\R[q_2]^{-J_t}(q_2^{-J_t}+\bar q_2^{-J_t})\right\}\frac{B^2}N,\\
        &E\{\mathbf D_t(\mathbb T,q_1)\mathbf D_{t'}(\mathbb T,q_1)\}+E\{\mathbf D_t(\mathbb T,q_2)\mathbf D_{t'}(\mathbb T,q_2)\}\\
        =&\left\{\R[q_1]^{-J_{t,t'}^{\circ}}(q_1^{-J_{t,t'}^{\circ}}+\bar q_1^{-J_{t,t'}^{\circ}})+\R[q_2]^{-J_{t,t'}^{\circ}}(q_2^{-J_{t,t'}^{\circ}}+\bar q_2^{-J_{t,t'}^{\circ}})\right\}\frac{B^2}N.
    \end{align*}
    Letting $J=J_t$ or $J=J_{t,t'}^{\circ}$, $E\{\mathbf D_t(\mathbb T,q_1)\}^2+E\{\mathbf D_t(\mathbb T,q_2)\}^2$ or $E\{\mathbf D_t(\mathbb T,q_1)\mathbf D_{t'}(\mathbb T,q_1)\}+E\{\mathbf D_t(\mathbb T,q_2)\mathbf D_{t'}(\mathbb T,q_2)\}$ becomes
    \begin{align}
    \label{eq:r1}
        4(\R[q_1]^{-J}+\R[q_2]^{-J}-2)B^2+\left\{\R[q_1]^{-J}(q_1^{-J}+\bar q_1^{-J}-4)+\R[q_2]^{-J}(q_2^{-J}+\bar q_2^{-J}-4)\right\}\frac{B^2}N
    \end{align}
    when $N\geq (1-\max\{\R[q_1],\R[q_2]\})^{-1}$ and
    \begin{align}
    \label{eq:r2}
        \left\{\R[q_1]^{-J}(q_1^{-J}+\bar q_1^{-J})+\R[q_2]^{-J}(q_2^{-J}+\bar q_2^{-J})\right\}\frac{B^2}N
    \end{align}
    when $N\leq (1-\min\{\R[q_1]^{p+1},\R[q_2]^{p+1}\})^{-1}$.
    Given design $\mathbb T$, define $\mathcal{J}_j$ as the sum of the number of time points $t$ such that $J_t = j$ and the number of pairs $(t, t')$ with $t \neq t'$ such that $J_{t,t'}^{\circ} = j$. Combining terms from $j=1$ to $p+1$, and further letting $\zeta_j=\R[q_1]^{-j}q_1^{-j} + \R[q_1]^{-j}\bar{q}_1^{-j} + \R[q_2]^{-j}q_2^{-j} + \R[q_2]^{-j}\bar{q}_2^{-j}$, we know
    \begin{align*}
    \mathop{\max}_{\mathbb Y\in\mathcal Y}\mathcal L(1,0)=\sum_{j=1}^{p+1}\mathcal J_j \left\{ 4(\R[q_1]^{-j} + \R[q_2]^{-j})B^2(1-N^{-1}) - 8B^2 + \zeta_j B^2/N \right\}
    \end{align*}
    when $N \geq (1 - \max\{\R[q_1], \R[q_2]\})^{-1}$, and
    \begin{align*}
    \mathop{\max}_{\mathbb Y\in\mathcal Y}\mathcal L(1,0)=\sum_{j=1}^{p+1}\mathcal J_j \left\{ \zeta_j B^2/N \right\}
    \end{align*}
    when $N \leq (1 - \min\{\R[q_1]^{p+1}, \R[q_2]^{p+1}\})^{-1}$.

    (2) When $\mathbb Y^* \in \arg\max_{\mathbb Y\in \mathcal Y}\mathcal L(0,1)$ as shown in \Cref{lemma.Y}, we have
    \begin{align*}
        &E\{\mathbf S_t(\mathbb T,1)\}^2+E\{\mathbf S_t(\mathbb T,0)\}^2\\
        =&2(\R[q_1]^{-J_t}+\R[q_2]^{-J_t})B^2(1-N^{-1})+(\R[q_1]^{-J_t}q_1^{-J_t}+\R[q_1]^{-J_t}\bar q_1^{-J_t}+\R[q_2]^{-J_t}q_2^{-J_t}+\R[q_2]^{-J_t}\bar q_2^{-J_t})B^2/N,
        \\
        &E\{\mathbf S_t(\mathbb T,1)\mathbf S_{t'}(\mathbb T,1)\}+E\{\mathbf S_t(\mathbb T,0)\mathbf S_{t'}(\mathbb T,0)\}\\=&2(\R[q_1]^{-J^{\circ}_{t,t'}}+\R[q_2]^{-J^{\circ}_{t,t'}})B^2(1-N^{-1})\\
        &+(\R[q_1]^{-J^{\circ}_{t,t'}}q_1^{-J^{\circ}_{t,t'}}+\R[q_1]^{-J^{\circ}_{t,t'}}\bar q_1^{-J^{\circ}_{t,t'}}+\R[q_2]^{-J^{\circ}_{t,t'}}q_2^{-J^{\circ}_{t,t'}}+\R[q_2]^{-J^{\circ}_{t,t'}}\bar q_2^{-J^{\circ}_{t,t'}})B^2/N.
    \end{align*}
    Similarly letting $J=J_t$ or $J=J_{t,t'}^{\circ}$, $E\{\mathbf S_t(\mathbb T,1)\}^2+E\{\mathbf S_t(\mathbb T,0)\}^2$ or $E\{\mathbf S_t(\mathbb T,1)\mathbf S_{t'}(\mathbb T,1)\}+E\{\mathbf S_t(\mathbb T,0)\mathbf S_{t'}(\mathbb T,0)\}$ becomes
    \begin{align}
    \label{eq:r3}
        2(\R[q_1]^{-J}+\R[q_2]^{-J})B^2(1-N^{-1})+(\R[q_1]^{-J}q_1^{-J}+\R[q_1]^{-J}\bar q_1^{-J}+\R[q_2]^{-J}q_2^{-J}+\R[q_2]^{-J}\bar q_2^{-J})B^2/N.
    \end{align}
    Further considering $\zeta_j=\R[q_1]^{-j}q_1^{-j} + \R[q_1]^{-j}\bar{q}_1^{-j} + \R[q_2]^{-j}q_2^{-j} + \R[q_2]^{-j}\bar{q}_2^{-j}$, we have
    \begin{align*}
    \mathop{\max}_{\mathbb Y\in\mathcal Y}\mathcal L(0,1)=\sum_{j=1}^{p+1}\mathcal J_j \left\{ 2(\R[q_1]^{-j} + \R[q_2]^{-j})B^2(1 - N^{-1}) + \zeta_j B^2/N \right\}.
    \end{align*}

    (3) Based on $\mathcal L(\psi_d,\psi_s)=\psi_d\mathcal L(1,0)+\psi_s\mathcal L(0,1)$, and $\mathcal L(1,0)$ and $\mathcal L(0,1)$ can reach their maximum simultaneously, we have
    \begin{align*}
    \mathop{\max}_{\mathbb Y\in\mathcal Y}\mathcal L(\psi_d,\psi_s)&=\sum_{j=1}^{p+1}\mathcal J_j \left\{ (4\psi_d+2\psi_s)(\R[q_1]^{-j} + \R[q_2]^{-j})B^2(1-N^{-1}) - 8\psi_dB^2 + \zeta_j B^2/N \right\}\\
    &=B^2 \sum_{j=1}^{p+1}\mathcal J_j \left\{ (4\psi_d+2\psi_s)(\R[q_1]^{-j} + \R[q_2]^{-j})(1-N^{-1}) - 8\psi_d + \zeta_j N^{-1} \right\}
    \end{align*}
    when $N \geq (1 - \max\{\R[q_1], \R[q_2]\})^{-1}$, and
    \begin{align*}
    \mathop{\max}_{\mathbb Y\in\mathcal Y}\mathcal L(\psi_d,\psi_s)&=\sum_{j=1}^{p+1}\mathcal J_j \left\{2\psi_s(\R[q_1]^{-j}+\R[q_2]^{-j})B^2(1-N^{-1})+ \zeta_j B^2/N\right\}\\
    &=B^2\sum_{j=1}^{p+1}\mathcal J_j \left\{2\psi_s(\R[q_1]^{-j}+\R[q_2]^{-j})(1-N^{-1})+ \zeta_j N^{-1}\right\}
    \end{align*}
    when $N \leq (1 - \min\{\R[q_1]^{p+1}, \R[q_2]^{p+1}\})^{-1}$. 
    
    (i) As $N \to \infty$, the term with $N^{-1}$ becomes negligible, allowing us to focus on the remaining terms. For each $j$, these remaining terms are proportional to $\R[q_1]^{-j} + \R[q_2]^{-j}$. This expression reaches its minimum when $\R[q_1] = \R[q_2] = 0.5$. 
    
    (ii) When $q_1 + q_2 = 1$, although the term with $N^{-1}$ is non-negligible, the term $\zeta_j$ is proportional to $\R[q_1]^{-j} + \R[q_2]^{-j}$. Similar to the previous case, this expression is minimized at $\R[q_1] = \R[q_2] = 0.5$. 
\end{proof}

\section{Proof of Theorem~\ref{theorem.minimaxdesign}}
\label{sec:sm.theorem.minimaxdesign}

When potential outcomes $\mathbb Y^*  \in \arg \max_{\mathbb Y\in\mathcal Y}\mathcal L(\psi_d,\psi_s)$ shown in \Cref{lemma.Y} and $\R[q_1]=\R[q_2]=0.5$, the expectations in \Cref{lemma.properties} can be rewritten as
\begin{align}
    \label{eq:e}
    \begin{aligned}
        E\{\mathbf D_t(\mathbb T,q)\}^2&=4(2^{J_t}-1)B^2+2^{J_t}(q^{-J_t}+\bar q^{-J_t}-4)\frac{B^2}N,\quad \text{when}\quad  N\geq 2;\\
        E\{\mathbf D_t(\mathbb T,q)\mathbf D_{t'}(\mathbb T,q)\}&=4(2^{J_{t,t'}^{\circ}}-1)B^2+2^{J_{t,t'}^{\circ}}(q^{-J_{t,t'}^{\circ}}+\bar q^{-J_{t,t'}^{\circ}}-4)\frac{B^2}N,\quad \text{when}\quad  N\geq 2;\\
        E\{\mathbf D_t(\mathbb T,q)\}^2&=2^{J_t}(q^{-J_t}+\bar q^{-J_t})\frac{B^2}N,\quad \text{when}\quad N=1;\\
        E\{\mathbf D_t(\mathbb T,q)\mathbf D_{t'}(\mathbb T,q)\}&=2^{J_{t,t'}^{\circ}}(q^{-J_{t,t'}^{\circ}}+\bar q^{-J_{t,t'}^{\circ}})\frac{B^2}N,\quad \text{when}\quad  N=1;\\
        E\{\mathbf S_t(\mathbb T,z)\}^2&=2^{J_t+1}B^2+2^{J_t}(q_{1,z}^{-J_t}+q_{2,z}^{-J_t}-2)\frac{B^2}N,\\
        E\{\mathbf S_t(\mathbb T,z)\mathbf S_{t'}(\mathbb T,z)\}&=2^{J_{t,t'}+1}B^2+2^{J_{t,t'}}(q_{1,z}^{-J_{t,t'}}+q_{2,z}^{-J_{t,t'}}-2)\frac{B^2}N.
    \end{aligned}
\end{align}
All quantities in \Cref{eq:e} increase as $J_t$ or $J_{t,t'}$ increases, given that $q_1,q_2\in(0,1)$. In the remaining part of the proof, we define $\alpha_J(q)$ and $\beta_J(z)$ to simplify the notation: 
\begin{align}
    \label{eq:ab}
    \begin{aligned}
        \alpha_J(q)&=4(2^{J}-1)B^2+2^{J}(q^{-J}+\bar q^{-J}-4)\frac{B^2}N,\quad \text{when}\quad  N\geq 2;\\
        \alpha_J(q)&=2^{J}(q^{-J}+\bar q^{-J})\frac{B^2}N,\quad \text{when}\quad N=1;\\
        \beta_J(z)&=2^{J+1}B^2+2^{J}(q_{1,z}^{-J}+q_{2,z}^{-J}-2)\frac{B^2}N.
    \end{aligned}
\end{align}
Notably, $\alpha_J(q)$ equals to $E\{\mathbf D_t(\mathbb T,q)^2\}$ when $J_t=J$ and $E\{\mathbf D_t(\mathbb T,q)\mathbf D_{t'}(\mathbb T,q)\}$ when $J_{t,t'}^{\circ}=J$. Similarly, $\beta_J(z)$ equals to $E\{\mathbf S_t(\mathbb T,z)^2\}$ when $J_t=J$ and $E\{\mathbf S_t(\mathbb T,z)\mathbf S_{t'}(\mathbb T,z)\}$ when $J_{t,t'}^{\circ}=J$.

Additionally, since the objective function $\mathcal L(\psi_d,\psi_s)$ is simply a weighted sum of $\mathcal L(1,0)$ and $\mathcal L(0,1)$, we first analyze $\mathcal L(1,0)$ and $\mathcal L(0,1)$. When the results from these two components are identical, the results can be directly extended to $\mathcal L(\psi_d,\psi_s)$.

\begin{lemma}
    \label{lemma.t1tl}
    Under Assumptions~\ref{assumption.nonanticipativity}--\ref{assumption.bounded} and $\R[q_1]=\R[q_2]=0.5$, any design $\mathbb T$ that minimizes $\max_{\mathbb Y\in \mathcal Y}\mathcal L(\psi_d,\psi_s)$ must satisfy $t_1\geq p+2$ and $t_L\leq T-p$.
\end{lemma}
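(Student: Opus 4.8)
The plan is to reduce the claim to a monotonicity comparison of the worst-case objective across designs, and then, whenever $t_1\le p+1$ or $t_L\ge T-p+1$, to exhibit a competing design with strictly smaller worst-case risk. First I would record the consequence of \Cref{lemma.Y} together with \eqref{eq:e}--\eqref{eq:ab}: at $\R[q_1]=\R[q_2]=0.5$ the maximizer $\mathbb Y^{*}\in\arg\max_{\mathbb Y\in\mathcal Y}\mathcal L(\psi_d,\psi_s)$ depends only on $N$ and the weights, not on the design $\mathbb T$, so a single $\mathbb Y^{*}$ simultaneously maximizes the objective of any two designs. Evaluating at this common $\mathbb Y^{*}$ gives, for every design,
\begin{align*}
\max_{\mathbb Y\in\mathcal Y}\mathcal L(\psi_d,\psi_s)=\frac{1}{(T-p)^2}\left[\sum_{t=p+1}^{T}g(J_t)+\sum_{t\neq t'}g(J_{t,t'}^{\circ})\right],\qquad g(J)=\psi_d\{\alpha_J(q_1)+\alpha_J(q_2)\}+\psi_s\{\beta_J(1)+\beta_J(0)\},
\end{align*}
where the pair sum runs over $t,t'\in\{p+1,\dots,T\}$ and a pair with $J_{t,t'}^{\circ}=0$ contributes a vanishing covariance, so we set $g(0)=0$. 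Since $\alpha_J(q)$ and $\beta_J(z)$ are strictly increasing in $J$ for $q_1,q_2\in(0,1)$, the function $g$ is strictly increasing on $\{0,1,\dots,p+1\}$. Hence it suffices to produce a modification of $\mathbb T$ that weakly decreases every $J_t$ and every $J_{t,t'}^{\circ}$ and strictly decreases at least one of them.

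For the bound $t_1\ge p+2$, suppose toward a contradiction that an optimal design has $2\le t_1\le p+1$, and set $\mathbb T'=\mathbb T\setminus\{t_1\}$, which merges the first two intervals into $[1,t_2-1]$. The core step is a bookkeeping comparison of the governing-decision-point sets $\mathcal F_{\mathbb T}^p(\cdot)$ and $\mathcal F_{\mathbb T'}^p(\cdot)$. Because only $t_1$ is deleted, the two sets agree in their intersection with $\{t_2,\dots,t_L\}$, so the comparison reduces to the contribution of $\{1,t_1\}$ (under $\mathbb T$) versus $\{1\}$ (under $\mathbb T'$). Writing $a(t)$ for the event that the window $[t-p,t]$ meets $[1,t_1-1]$, the two facts I would verify are: (i) because $t_1\le p+1$, any estimand window ($t\ge p+1$) that meets $[1,t_1-1]$ necessarily also covers $t_1$, so ``$1\in\mathcal F_{\mathbb T}^p(t)$'' forces ``$t_1\in\mathcal F_{\mathbb T}^p(t)$''; and (ii) the coverage of $1$ under $\mathbb T'$ coincides exactly with the coverage of $t_1$ under $\mathbb T$. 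From (i)--(ii) it follows that $J_t^{\mathbb T'}=J_t^{\mathbb T}-a(t)$ and $J_{t,t'}^{\circ,\mathbb T'}=J_{t,t'}^{\circ,\mathbb T}-a(t)a(t')$, so every $J$ weakly decreases.

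To conclude strictness I would evaluate $a(t)$ at $t=p+1$: the window $[1,p+1]$ always meets $[1,t_1-1]$ since $t_1\ge 2$, hence $a(p+1)=1$ and $J_{p+1}$ drops (from $\ge 2$, as both $1$ and $t_1$ lie in $[1,p+1]$, to $\ge 1$). Strict monotonicity of $g$ then yields $\max_{\mathbb Y}\mathcal L_{\mathbb T'}(\psi_d,\psi_s)<\max_{\mathbb Y}\mathcal L_{\mathbb T}(\psi_d,\psi_s)$, contradicting optimality; thus $t_1\ge p+2$. The bound $t_L\le T-p$ follows by the mirror-image argument: if $t_L\ge T-p+1$, delete $t_L$ to absorb $[t_L,T]$ into the interval of $t_{L-1}$; here $t_L\ge T-p+1$ forces every estimand window meeting $[t_L,T]$ to also cover $t_{L-1}$, which is the endpoint analogue of (i)--(ii), so all $J$'s again weakly drop, and the variance term at $t=T$ drops strictly because $t_L\le T$ guarantees $[T-p,T]$ meets $[t_L,T]$.

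The main obstacle is precisely step (i)--(ii): one must rule out the a priori possibility that deleting an early (respectively late) decision point creates new overlaps between far-apart windows, thereby \emph{raising} some $J_{t,t'}^{\circ}$ and the objective. The resolution is the implication in (i): if two windows share the relabelled point $1$ after the merge, then both met $[t_1,t_2-1]$ and hence both already contained $t_1$ under $\mathbb T$, so they shared $t_1$ beforehand and the overlap is merely relabelled, not created; meanwhile any overlap realized through $1$ under $\mathbb T$ (the term $a(t)a(t')$) may be lost. Making this argument precise — and checking its endpoint analogue for $t_L$ — is the crux, and it is exactly where the window length $p+1$ and the position constraints $t_1\le p+1$, $t_L\ge T-p+1$ are used; the remaining manipulations are elementary counting.
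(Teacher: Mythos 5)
Your proposal is correct and follows essentially the same route as the paper's proof: both delete the offending decision point ($t_1$ or $t_L$), show that every $J_t$ and $J_{t,t'}^{\circ}$ weakly decreases with at least one strict decrease (your indicator $a(t)$ is exactly the paper's case split $t\le t_1+p-1$ versus $t\ge t_1+p$), and conclude via strict monotonicity of the worst-case variance and covariance contributions in $J$. Your observation (i)--(ii) — that $t_1\le p+1$ forces any window containing the point $1$ to also contain $t_1$, so the merge only relabels overlaps rather than creating new ones — is precisely the content of steps (1.2) and (2.2) in the paper's argument, packaged slightly more compactly.
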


\begin{proof}[Proof of \Cref{lemma.t1tl}]
    Suppose there exists a minimax optimal design $\mathbb T=\{t_0=1,t_1,t_2,\ldots,t_L\}$ such that $t_1\leq p+1$. Then we define $\tilde{\mathbb T}=\{t_0,t_2,t_3,\ldots,t_L\}$ and we want to prove that under $\tilde{\mathbb T}$, the objective function will decrease compared to $\mathbb T$, which suggests that $\mathbb T$ is not the minimax optimal design.

    (1) We first consider $E\{\mathbf D_t(\mathbb T,q)\}^2-E\{\mathbf D_t(\tilde{\mathbb T},q)\}^2$ and $E\{\mathbf S_t(\mathbb T,z)\}^2-E\{\mathbf S_t(\tilde{\mathbb T},z)\}^2$. We divide $\{p+1,\ldots,T\}$ into $[p+1,t_1+p-1]$ and $[t_1+p,T]$.

    (1.1) For $t\in[p+1,t_1+p-1]$. Under this circumstance, $t_1\in \mathcal F_{\mathbb T}^p(t)$ but $t_1\notin \mathcal F_{\tilde{\mathbb T}}^p(t)$. It means $\mathcal F_{\mathbb T}^p(t)-\{t_1\}=\mathcal F_{\tilde{\mathbb T}}^p(t)$ and $\tilde J_t=J_t-1$, where $\tilde J_t=|\mathcal F_{\tilde{\mathbb T}}^p(t)|$. Based on \Cref{eq:e} and $\tilde J_t<J_t$, we have $E\{\mathbf D_t(\mathbb T,q)\}^2-E\{\mathbf D_t(\tilde{\mathbb T},q)\}^2>0$ and $E\{\mathbf S_t(\mathbb T,z)\}^2-E\{\mathbf S_t(\tilde{\mathbb T},z)\}^2>0$. 
    
    (1.2) For $t\geq t_1+p$, $|\mathcal F_{\mathbb T}^p(t)|=|\mathcal F_{\tilde{\mathbb T}}^p(t)|$, because either (i) $\mathcal F_{\mathbb T}(t-p)=t_1$ and in this case $\mathcal F_{\tilde {\mathbb T}}(t-p)=t_0$, which means $\mathcal F_{\mathbb T}^p(t)-\{t_1\}=\mathcal F_{\tilde{\mathbb T}}^p(t)-\{t_0\}$; or (ii) $\mathcal F_{\mathbb T}(t-p)\geq t_2$, in which case $\mathcal F_{\mathbb T}^p(t)=\mathcal F_{\tilde{\mathbb T}}^p(t)$. Under both cases, we have $J_t=\tilde J_t$, which implies that $E\{\mathbf D_t(\mathbb T,q)\}^2-E\{\mathbf D_t(\tilde{\mathbb T},q)\}^2=0$ and $E\{\mathbf S_t(\mathbb T,z)\}^2-E\{\mathbf S_t(\tilde{\mathbb T},z)\}^2=0$. 
    
    Based on (1.1) and (1.2), we have
    \begin{align*}
        \sum_{t=p+1}^TE\{\mathbf D_t(\mathbb T,q)\}^2-\sum_{t=p+1}^TE\{\mathbf D_t(\tilde{\mathbb T},q)\}^2&>0,\\
        \sum_{t=p+1}^TE\{\mathbf S_t(\mathbb T,z)\}^2-\sum_{t=p+1}^TE\{\mathbf S_t(\tilde{\mathbb T},z)\}^2&>0.
    \end{align*}

    (2) Consider $E\{\mathbf D_t(\mathbb T,q)\mathbf D_{t'}(\mathbb T,q)\}-E\{\mathbf D_t(\tilde{\mathbb T},q)\mathbf D_{t'}(\tilde{\mathbb T},q)\}$ and $E\{\mathbf S_t(\mathbb T,z)\mathbf S_{t'}(\mathbb T,z)\}-E\{\mathbf S_t(\tilde{\mathbb T},z)\mathbf S_{t'}(\tilde{\mathbb T},z)\}$. 

    (2.1) For any $t$ and $t'$ such that $p+1\leq t<t'\leq t_1+p-1$, $t_1\in O_{\mathbb T}(t,t')$ but $t_1\notin O_{\tilde {\mathbb T}}(t,t')$, where $O_{\mathbb T}(t,t')=\mathcal F_{\mathbb T}^p(t)\cap \mathcal F_{\mathbb T}^p(t')$. Then, $\tilde J_{t,t'}^{\circ}=J_{t,t'}^{\circ}-1$. As a result, $E\{\mathbf D_t(\mathbb T,q)\mathbf D_t'(\mathbb T,q)\}-E\{\mathbf D_t(\tilde{\mathbb T},q)\mathbf D_t'(\tilde{\mathbb T},q)\}>0$ and $E\{\mathbf S_t(\mathbb T,z)\mathbf S_t'(\mathbb T,z)\}-E\{\mathbf S_t(\tilde{\mathbb T},z)\mathbf S_t'(\tilde{\mathbb T},z)\}>0$. 
    
    (2.2) For any $p+1\leq t<t'\leq T$ and $t'\geq t_1+p$, either (i) $\mathcal F_{\mathbb T}(t'-p)=t_1$ and in this case $\mathcal F_{\mathbb T}(t'-p)=t_0$, which implies that $O_{\mathbb T}(t,t')-\{t_1\}=O_{\tilde{\mathbb T}}(t,t')-\{t_0\}$; or (ii) $\mathcal F_{\mathbb T}(t'-p)=t_2$, in which case $O_{\mathbb T}(t,t')=O_{\tilde{\mathbb T}}(t,t')$. In both cases, $|O_{\mathbb T}(t,t')|=|O_{\tilde{\mathbb T}}(t,t')|$, which implies that $J_{t,t'}^{\circ}=\tilde J_{t,t'}^{\circ}$. Then, $E\{\mathbf D_t(\mathbb T,q)\mathbf D_t'(\mathbb T,q)\}-E\{\mathbf D_t(\tilde{\mathbb T},q)\mathbf D_t'(\tilde{\mathbb T},q)\}=0$ and $E\{\mathbf S_t(\mathbb T,z)\mathbf S_t'(\mathbb T,z)\}-E\{\mathbf S_t(\tilde{\mathbb T},z)\mathbf S_t'(\tilde{\mathbb T},z)\}=0$.

    Based on (2.1) and (2.2), we have
    \begin{align*}
        \sum_{p+1\leq t<t'\leq T}E\{\mathbf D_t(\mathbb T,q)\mathbf D_t'(\mathbb T,q)\}-\sum_{p+1\leq t<t'\leq T}E\{\mathbf D_t(\tilde{\mathbb T},q)\mathbf D_t'(\tilde{\mathbb T},q)\}&> 0,\\
        \sum_{p+1\leq t<t'\leq T}E\{\mathbf S_t(\mathbb T,z)\mathbf S_t'(\mathbb T,z)\}-\sum_{p+1\leq t<t'\leq T}E\{\mathbf S_t(\tilde{\mathbb T},z)\mathbf S_t'(\tilde{\mathbb T},z)\}&> 0.
    \end{align*}
    As a result, either $\mathcal L(0,1)$ or $\mathcal L(1,0)$ will decrease under $\tilde {\mathbb T}$ compared to $\mathbb T$. This result can extended to $\mathcal L(\psi_d,\psi_s)$ for it is the weighted sum of $\mathcal L(0,1)$ and $\mathcal L(1,0)$.
    Similarly, we can prove that $t_L\leq T-p$.
\end{proof}

\begin{lemma}
    \label{lemma.tp}
    Under Assumptions~\ref{assumption.nonanticipativity}--\ref{assumption.bounded} and $\R[q_1]=\R[q_2]=0.5$, any design $\mathbb T$ that minimizes $\max_{\mathbb Y\in \mathcal Y}\mathcal L(\psi_d,\psi_s)$ must satisfy $t_{l+1}-t_{l-1}\geq p$, for $l=1,\ldots,L$.
\end{lemma}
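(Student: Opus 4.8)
The plan is to argue by contradiction, mirroring the deletion argument used for \Cref{lemma.t1tl}. I will suppose some minimax optimal design $\mathbb T=\{t_0,\dots,t_L\}$ satisfies $t_{l+1}-t_{l-1}<p$ for an index $l\in\{1,\dots,L\}$, form $\tilde{\mathbb T}=\mathbb T\setminus\{t_l\}$, and show that the worst-case risk strictly decreases under $\tilde{\mathbb T}$. The first step is to put the objective in a form convenient for counting. By \Cref{lemma.Y} the worst-case potential outcomes do not depend on the design, and by \Cref{theorem.r} (at $\R[q_1]=\R[q_2]=0.5$) the maximal risk equals $B^2\sum_{j\ge1}\mathcal J_j c_j$, which I rewrite as $\Phi(\mathbb T)=\sum_{t}c_{J_t}+\sum_{t\ne t'}c_{J_{t,t'}^{\circ}}$, with $c_0:=0$ (disjoint windows are independent, so their cross term vanishes) and, for $j\ge1$, $c_j$ the common coefficient attached to a diagonal term with $J_t=j$ or a cross term with $J_{t,t'}^{\circ}=j$. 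I will separately record two analytic facts about $\{c_j\}$, valid for the direct weights $\mathcal L(1,0)$, the spillover weights $\mathcal L(0,1)$, and hence their convex combination: $c_j$ is strictly increasing, and $\{c_j\}_{j\ge0}$ is convex. The latter holds because each $c_j$ ($j\ge1$) is a positive combination of the convex exponentials $2^{j},(2q^{-1})^{j},(2\bar q^{-1})^{j}$ (and their spillover analogues), while the boundary inequality $c_2\ge 2c_1=2(c_1-c_0)$ can be checked directly for every $N\ge2$.

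Next I track how $J_t$ and $J_{t,t'}^{\circ}$ change under the deletion. Deleting $t_l$ only redirects the indices in $[t_l,t_{l+1}-1]$ from $t_l$ to $t_{l-1}$, so the only element that can leave a decision-point set or an intersection is $t_l$, and the only one that can enter is $t_{l-1}$; consequently $\widetilde J_{t,t'}^{\circ}-J_{t,t'}^{\circ}\in\{-1,0,1\}$ and $\widetilde J_t\le J_t$ for every $t$. The diagonal change is strictly negative: for $t\in[t_l,t_l+p-1]$ the window contains both $t_{l-1}$ and $t_l$, so $J_t\ge2$ drops by one, contributing $\sum_{t\in[t_l,t_l+p-1]}(c_{J_t-1}-c_{J_t})<0$, and no $J_t$ increases.

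The crux is the cross-term bookkeeping, and here the hypothesis $t_{l+1}-t_{l-1}<p$ enters. Writing $d_1=t_l-t_{l-1}$ and $d_2=t_{l+1}-t_l$ (so $d_1+d_2\le p-1$), a short case analysis of the sign of $-\mathbf 1(t_l\in O_{\mathbb T}(t,t'))+\mathbf 1(t_{l-1}\in O_{\tilde{\mathbb T}}(t,t'))-\mathbf 1(t_{l-1}\in O_{\mathbb T}(t,t'))$ shows that the classification is exhaustive: the only pairs with $\widetilde J_{t,t'}^{\circ}=J_{t,t'}^{\circ}+1$ are those with one index in $[t_{l-1},t_l-1]$ and the other in $[t_l+p,t_{l+1}+p-1]$ (decision-point disjoint windows, so $0\to1$), numbering $2d_1d_2$; the only pairs with $\widetilde J_{t,t'}^{\circ}=J_{t,t'}^{\circ}-1$ are those with both indices in $[t_l,t_l+p-1]$, numbering $p(p-1)$ and each having $J_{t,t'}^{\circ}\ge2$. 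By convexity each destroyed overlap has weight $c_{J_{t,t'}^{\circ}}-c_{J_{t,t'}^{\circ}-1}\ge c_2-c_1\ge c_1=c_1-c_0$, so the net cross change is at most $c_1\bigl(2d_1d_2-p(p-1)\bigr)$.

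Finally, the elementary inequality $2d_1d_2\le (d_1+d_2)(d_1+d_2-1)\le (p-1)(p-2)\le p(p-1)$ makes this nonpositive, and combined with the strictly negative diagonal change it yields $\Phi(\tilde{\mathbb T})<\Phi(\mathbb T)$, contradicting optimality. The main obstacle is precisely this cross-term accounting: verifying that the $\{-1,0,+1\}$ classification is complete and that the hypothesis forces the newly created overlaps ($2d_1d_2$) to be outnumbered by the destroyed ones ($p(p-1)$); the convexity of $\{c_j\}$ is what converts this favorable count into a favorable \emph{weighted} sum. Boundary indices $l=1$ and $l=L$ (where $t_{l-1}=1$ or $t_{l+1}=T+1$) are treated by the same argument after intersecting all index ranges with $[p+1,T]$, which only removes some of the overlap-creating pairs and hence strengthens the inequality; here I invoke $t_1\ge p+2$ and $t_L\le T-p$ from \Cref{lemma.t1tl}.
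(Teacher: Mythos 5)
Your proposal is correct and follows essentially the same route as the paper: argue by contradiction, delete the offending decision point $t_l$, verify that the diagonal terms strictly decrease, and balance the $O(d_1d_2)$ newly created unit overlaps against the $O(p^2)$ destroyed overlaps on $[t_l,t_l+p-1]$ using positivity/growth of the coefficients $\alpha_J,\beta_J$. The only differences are cosmetic — you package the coefficient inequalities as convexity of $\{c_j\}$ with $c_2\ge 2c_1$ and use $2d_1d_2\le(d_1+d_2)(d_1+d_2-1)\le(p-1)(p-2)$, where the paper uses $d_1d_2<p(p-1)/4$ together with $2\alpha_2-3\alpha_1>0$, and the paper dispatches $l=1,L$ directly from \Cref{lemma.t1tl} rather than rerunning the deletion argument.
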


\begin{proof}[Proof of \Cref{lemma.tp}]
    By Lemma~\ref{lemma.t1tl}, when $l=1$, $t_2-t_0>t_1-t_0\geq p+1$; when $l=L$, $t_{L+1}-t_{L-1}>t_{L+1}-t_L\geq p+1$. For $2\leq l\leq L-1$, we will prove the result by contradiction.

    Suppose there exists a minimax optimal design $\mathbb T$, and $\exists 2\leq l\leq L-1$, s.t. $t_{l+1} - t_{l-1}\leq p-1$. Denote $\mathbb L=\{l\in\{2:L-1\}\mid t_{l+1}-t_{l-1}\leq p-1\}$ and let $l'=\max_{l \in \mathbb L} l$. We know that $l'\leq L-1$, $t_{l'+1}-t_{l'-1}\leq p-1$ and $t_{l'+2}-t_{l'}\geq p$. We then construct another design $\tilde{\mathbb T}=\{\tilde t_0=1,\tilde t_1=t_1,\ldots,\tilde t_{l'-1}=t_{l'-1},\tilde t_{l'}=t_{l'+1},\ldots,\tilde t_{L-1}=t_L\}$.

    (1) We first consider $E\{\mathbf D_t(\mathbb T,q)\}^2-E\{\mathbf D_t(\tilde{\mathbb T},q)\}^2$ and $E\{\mathbf S_t(\mathbb T,z)\}^2-E\{\mathbf S_t(\tilde{\mathbb T},z)\}^2$.

    (1.1) When $t\leq t_{l'}-1$, $\mathcal F_{\mathbb T}^p(t)=\mathcal F_{\tilde{\mathbb T}}^p(t)$ and $J_t=\tilde J_t$, which further implies that $E\{\mathbf D_t(\mathbb T,q)\}^2-E\{\mathbf D_t(\tilde {\mathbb T},q)\}^2=0$ and $E\{\mathbf S_t(\mathbb T,z)\}^2-E\{\mathbf S_t(\tilde {\mathbb T},z)\}^2=0$.

    (1.2) When $t\leq t_{l'}+p-1$, $t_{l'}\in \mathcal F_{\mathbb T}^p(t)$ but $t_{l'}\notin \mathcal F_{\tilde{\mathbb T}}^p(t)$. We then have $\mathcal F_{\mathbb T}^p(t)-\{t_{l'}\}=\mathcal F_{\tilde{\mathbb T}}^p(t)$ ($J_t=\tilde J_t+1$) and $|\mathcal F_{\tilde{\mathbb T}}^p(t)|\geq 1$ ($\tilde J_t\geq 1$). By \Cref{eq:e}, $E\{\mathbf D_t(\mathbb T,q)\}^2-E\{\mathbf D_t(\tilde {\mathbb T},q)\}^2>0$ and $E\{\mathbf S_t(\mathbb T,z)\}^2-E\{\mathbf S_t(\tilde {\mathbb T},z)\}^2>0$.

    (1.3) When $t_{l'}+p\leq t\leq T$, either $\mathcal F_{\mathbb T}(t-p)=t_{l'}$ or $\mathcal F_{\mathbb T}(t-p)\geq t_{l'+1}$. In both cases, $|\mathcal F_{\mathbb T}^p(t)|=|\mathcal F_{\tilde{\mathbb T}}^p(t)|$ ($J_t=\tilde J_t$), which implies that $E\{\mathbf D_t(\mathbb T,q)\}^2-E\{\mathbf D_t(\tilde {\mathbb T},q)\}^2=0$ and $E\{\mathbf S_t(\mathbb T,z)\}^2-E\{\mathbf S_t(\tilde {\mathbb T},z)\}^2=0$. Combing the above arguments, we have 
    \begin{align*}
        \sum_{t=p+1}^TE\{\mathbf D_t(\mathbb T,q)\}^2-\sum_{t=p+1}^TE\{\mathbf D_t(\tilde {\mathbb T},q)\}^2&>0,\\
        \sum_{t=p+1}^TE\{\mathbf S_t(\mathbb T,z)\}^2-\sum_{t=p+1}^TE\{\mathbf S_t(\tilde {\mathbb T},z)\}^2&>0.
    \end{align*}

    (2) We then consider the cross-product terms, which are associated with $O_{\mathbb T}(t,t')$ and $O_{\tilde{\mathbb T}}(t,t')$. The time period $[p+1,T]$ can be divided by $t_{l'-1}$, $t_{l'}$ and $t_{l'}+p$ into four sub time periods: $[p+1,t_{l'-1}-1]$, $[t_{l'-1},t_{l'}-1]$, $[t_{l'},t_{l'}+p-1]$ and $[t_{l'}+p,T]$. Without loss of generality, we assume $t<t'$. Define $\tilde J_{t,t'}^{\circ}=|O_{\tilde{\mathbb T}}(t,t')|$.

    (2.1) When $t\in[p+1,t_{l'-1}-1]$, for any $t'>t$, the overlapping points are unchanged, and we have $J_{t,t'}^{\circ}=\tilde J_{t,t'}^{\circ}$. It implies that $E\{\mathbf D_t(\mathbb T,q)\mathbf D_{t'}(\mathbb T,q)\}-E\{\mathbf D_t(\tilde{\mathbb T},q)\mathbf D_{t'}(\tilde{\mathbb T},q)\}=0$ and $E\{\mathbf S_t(\mathbb T,z)\mathbf S_{t'}(\mathbb T,z)\}-E\{\mathbf S_t(\tilde{\mathbb T},z)\mathbf S_{t'}(\tilde{\mathbb T},z)\}=0$.

    (2.2) When $t\in[t_{l'-1},t_{l'}-1]$ and $t'\in[t+1,t_{l'}+p-1]$, $t_{l'-1}$ is in both $O_{\mathbb T}(t,t')$ and $O_{\tilde{\mathbb T}}(t,t')$ but $t_{l'}$ is in neither, which suggests that $J_{t,t'}^{\circ}=\tilde J_{t,t'}^{\circ}$ and then $E\{\mathbf D_t(\mathbb T,q)\mathbf D_{t'}(\mathbb T,q)\}-E\{\mathbf D_t(\tilde{\mathbb T},q)\mathbf D_{t'}(\tilde{\mathbb T},q)\}=0$, $E\{\mathbf S_t(\mathbb T,z)\mathbf S_{t'}(\mathbb T,z)\}-E\{\mathbf S_t(\tilde{\mathbb T},z)\mathbf S_{t'}(\tilde{\mathbb T},z)\}=0$.

    (2.3) When $t\in[t_{l'-1},t_{l'}-1]$ and $t'\in[t_{l'}+p,t_{l'+1}+p-1]$, $t_{l'-1}$ is in $O_{\tilde{\mathbb T}}(t,t')$ but not in $O_{\mathbb T}(t,t')$, which means that $O_{\mathbb T}(t,t')=O_{\tilde{\mathbb T}}(t,t')-\{t_{l'-1}\}$ ($J_{t,t'}=\tilde J_{t,t'}-1$). Also, $\forall a\in \mathcal F_{\tilde{\mathbb T}}^p(t')$, we have $a\geq t_{l'-1}$, given that $t'\geq t_{l'}+p>t_{l'-1}+p$; $\forall a\in \mathcal F_{\tilde{\mathbb T}}^p(t)$, we have $a\leq t_{l'-1}$, given that $t\leq t_{l'}-1$; and $t_{l'-1}\in O_{\tilde{\mathbb T}}(t,t')$, given that $t_{l'-1}\in \mathcal F_{\tilde{\mathbb T}}^p(t)$ when $t=t_{l'-1}$ and $t_{l'-1}\in \mathcal F_{\tilde{\mathbb T}}^p(t')$ when $t'=t_{l'}+p$. Thus, $O_{\tilde{\mathbb T}}(t,t')=\{t_{l'-1}\}$ and $O_{\mathbb T}(t,t')=\emptyset$. Therefore,
    \begin{align*}
        E\{\mathbf D_t(\mathbb T,q)\mathbf D_{t'}(\mathbb T,q)\}-E\{\mathbf D_t(\tilde{\mathbb T},q)\mathbf D_{t'}(\tilde{\mathbb T},q)\}&=-\alpha_1(q),\\
        E\{\mathbf S_t(\mathbb T,z)\mathbf S_{t'}(\mathbb T,z)\}-E\{\mathbf S_t(\tilde{\mathbb T},z)\mathbf S_{t'}(\tilde{\mathbb T},z)\}&=-\beta_1(z).
    \end{align*}

    (2.4) When $t\in[t_{l'-1},t_{l'}-1]$ and $t'\in[t_{l'+1}+p,T]$, both $O_{\mathbb T}(t,t')$ and $O_{\mathbb{\mathbb T}}(t,t')$ are empty sets, and then $E\{\mathbf D_t(\mathbb T,q)\mathbf D_{t'}(\mathbb T,q)\}-E\{\mathbf D_t(\tilde{\mathbb T},q)\mathbf D_{t'}(\tilde{\mathbb T},q)\}=0$ and $E\{\mathbf S_t(\mathbb T,z)\mathbf S_{t'}(\mathbb T,z)\}-E\{\mathbf S_t(\tilde{\mathbb T},z)\mathbf S_{t'}(\tilde{\mathbb T},z)\}=0$.

    (2.5) When $t\in[t_{l'},t_{l'}+p-1]$ and $t'\in[t,t_{l'}+p-1]$, $t_{l'}\in O_{\mathbb T}(t,t')$ but $t_{l'}\notin O_{\tilde{\mathbb T}}(t,t')$. The other overlapping points are unchanged. Furthermore, $t_{l'-1}\in O_{\tilde{\mathbb T}}(t,t')$. As a result, $J_{t,t'}^{\circ}-1=\tilde J_{t,t'}^{\circ}\geq 1$. Further notice that $\alpha_{J+1}(q)-\alpha_J(q)$ and $\beta_{J+1}(z)-\beta_J(z)$ increase with $J$ when $J\geq 1$. Thus,
    \begin{align*}
        E\{\mathbf D_t(\mathbb T,q)\mathbf D_{t'}(\mathbb T,q)\}-E\{\mathbf D_t(\tilde{\mathbb T},q)\mathbf D_{t'}(\tilde{\mathbb T},q)\}&\geq \alpha_2(q)-\alpha_1(q),\\
        E\{\mathbf S_t(\mathbb T,z)\mathbf S_{t'}(\mathbb T,z)\}-E\{\mathbf S_t(\tilde{\mathbb T},z)\mathbf S_{t'}(\tilde{\mathbb T},z)\}&\geq \beta_2(z)-\beta_1(z).
    \end{align*}

    (2.6) When $t\in[t_{l'},t_{l'}+p-1]$ and $t'\in[t_{l'}+p,T]$, either $\mathcal F_{\mathbb T}^p(t'-p)=t_{l'}$ or $\mathcal F_{\mathbb T}^p(t'-p)\geq t_{l'+1}$. Under both cases, $|O_{\mathbb T}(t,t')|=|O_{\tilde{\mathbb T}}(t,t')|$ and as a result, $E\{\mathbf D_t(\mathbb T,q)\mathbf D_{t'}(\mathbb T,q)\}-E\{\mathbf D_t(\tilde{\mathbb T},q)\mathbf D_{t'}(\tilde{\mathbb T},q)\}=0$ and $E\{\mathbf S_t(\mathbb T,z)\mathbf S_{t'}(\mathbb T,z)\}-E\{\mathbf S_t(\tilde{\mathbb T},z)\mathbf S_{t'}(\tilde{\mathbb T},z)\}=0$.

    (2.7) When $t_{l'}+p\leq t<t'\leq T$, either $\mathcal F_{\mathbb T}^p(t'-p)=t_{l'}$ or $\mathcal F_{\mathbb T}^p(t'-p)\geq t_{l'+1}$. Under both cases, $|O_{\mathbb T}(t,t')|=|O_{\tilde{\mathbb T}}(t,t')|$ and as a result, $E\{\mathbf D_t(\mathbb T,q)\mathbf D_{t'}(\mathbb T,q)\}-E\{\mathbf D_t(\tilde{\mathbb T},q)\mathbf D_{t'}(\tilde{\mathbb T},q)\}=0$ and $E\{\mathbf S_t(\mathbb T,z)\mathbf S_{t'}(\mathbb T,z)\}-E\{\mathbf S_t(\tilde{\mathbb T},z)\mathbf S_{t'}(\tilde{\mathbb T},z)\}=0$.

    The main difference occurs in (2.3) and (2.5). Combing (2.1)--(2.7), we have 
    \begin{align*}
        &\sum_{p+1\leq t<t'\leq T}E\{\mathbf D_t(\mathbb T,q)\mathbf D_{t'}(\mathbb T,q)\}-\sum_{p+1\leq t<t'\leq T}E\{\mathbf D_t(\tilde{\mathbb T},q)\mathbf D_{t'}(\tilde{\mathbb T},q)\}\\
        =&\sum_{t_{l'-1}\leq t\leq t_l'-1,t_{l'}+p\leq t'\leq t_{l'+1}+p-1}\left[E\{\mathbf D_t(\mathbb T,q)\mathbf D_{t'}(\mathbb T,q)\}-E\{\mathbf D_t(\tilde{\mathbb T},q)\mathbf D_{t'}(\tilde{\mathbb T},q)\}\right]\\
        &+\sum_{t_{l'}\leq t<t'\leq t_{l'}+p-1}\left[E\{\mathbf D_t(\mathbb T,q)\mathbf D_{t'}(\mathbb T,q)\}-E\{\mathbf D_t(\tilde{\mathbb T},q)\mathbf D_{t'}(\tilde{\mathbb T},q)\}\right]\\
        \geq  &-(t_{l'}-t_{l'-1})(t_{l'+1}-t_{l'})\alpha_1(q)+\frac{p(p-1)}2\{\alpha_2(q)-\alpha_1(q)\}\\
        >&\frac{p(p-1)}4\{2\alpha_2(q)-3\alpha_1(q)\}>0,
    \end{align*}
    and 
    \begin{align*}
        &\sum_{p+1\leq t<t'\leq T}E\{\mathbf S_t(\mathbb T,z)\mathbf S_{t'}(\mathbb T,z)\}-\sum_{p+1\leq t<t'\leq T}E\{\mathbf S_t(\tilde{\mathbb T},z)\mathbf S_{t'}(\tilde{\mathbb T},z)\}\\
        &=\sum_{t_{l'-1}\leq t\leq t_l'-1,t_{l'}+p\leq t'\leq t_{l'+1}+p-1}\left[E\{\mathbf S_t(\mathbb T,z)\mathbf S_{t'}(\mathbb T,z)\}-E\{\mathbf S_t(\tilde{\mathbb T},z)\mathbf S_{t'}(\tilde{\mathbb T},z)\}\right]\\
        &\quad +\sum_{t_{l'}\leq t<t'\leq t_{l'}+p-1}\left[E\{\mathbf S_t(\mathbb T,z)\mathbf S_{t'}(\mathbb T,z)\}-E\{\mathbf S_t(\tilde{\mathbb T},z)\mathbf S_{t'}(\tilde{\mathbb T},z)\}\right]\\
        &\geq -(t_{l'}-t_{l'-1})(t_{l'+1}-t_{l'})\beta_1(z)+\frac{p(p-1)}2\{\beta_2(z)-\beta_1(z)\}>\frac{p(p-1)}4\{2\beta_2(z)-3\beta_1(z)\}\\
        &>0.
    \end{align*}
    The inequality holds as $(t_{l'}-t_{l'-1})(t_{l'+1}-t_{l'})< p(p-1)/4$ when $t_{l'+1}-t_{l'-1}\leq p-1$. Therefore, $\mathrm{risk}^d(\mathbb T,q)>\mathrm{risk}^d(\tilde{\mathbb T},q)$ and $\mathrm{risk}^s(\mathbb T,q)>\mathrm{risk}^s(\tilde{\mathbb T},z)$, which is a contradiction.

    As a result, either $\mathcal L(0,1)$ or $\mathcal L(1,0)$ will decrease under $\tilde {\mathbb T}$ compared to $\mathbb T$. This result can extended to $\mathcal L(\psi_d,\psi_s)$ for it is the weighted sum of $\mathcal L(0,1)$ and $\mathcal L(1,0)$.
\end{proof}

Based on \Cref{lemma.t1tl}--\ref{lemma.tp}, any minimax optimal design must satisfy $t_1\geq p+2$, $t_L\leq T-p$ and $t_{l+1}-t_{l-1}\geq p$, for $l=1,\ldots,L$. Define $\mathcal T=\{\mathbb T\mid t_1\geq p+2,t_L\leq T-p,\text{ and }t_{l+1}-t_{l-1}\geq p\text{ for }l=1,\ldots,L\}$. Then solving the minimax problem $\min_{\mathbb T}\max_{\mathbb Y\in\mathcal L}\mathcal L(\psi_d,\psi_s)$ is equivalent to solving $\min_{\mathbb T\in \mathcal T}\max_{\mathbb Y\in\mathcal L}\mathcal L(\psi_d,\psi_s)$.

\begin{lemma}
    \label{lemma.riskexpression}
    Under Assumptions~\ref{assumption.nonanticipativity}--\ref{assumption.bounded} and $\R[q_1]=\R[q_2]=0.5$, when $\mathbb T\in\mathcal T$ and $\mathbb Y\in \arg \max_{\mathbb Y\in\mathcal Y}\mathcal L(\psi_d,\psi_s)$, $\mathrm{risk}^d(q)$ satisfies
    \begin{align*}
        &(T-p)^2\mathrm{risk}^d(q)
        =\left\{\sum_{l=0}^{L}(t_{l+1}-t_l)^2+(L-1)p^2+2p(t_L-t_1)\right\}\alpha_1(q)\\
        &+Lp^2\{\alpha_2(q)-3\alpha_1(q)\}+\left(\sum_{l=2}^{L}\left[\{(p-t_l+t_{l-1})^+\}^2\right]\right)\{\alpha_3(q)-2\alpha_2(q)+\alpha_1(q)\}
    \end{align*}
    where $\alpha_J(q)$ ($J=1,2,3$) are defined in \Cref{eq:ab} and 
    $\mathrm{risk}^s(z)$ satisfies
    \begin{align*}
        &(T-p)^2\mathrm{risk}^s(z)
        =\left\{\sum_{l=0}^{L}(t_{l+1}-t_l)^2+(L-1)p^2+2p(t_L-t_1)\right\}\beta_1(z)\\
        &+Lp^2\{\beta_2(z)-2\beta_1(z)\}+\left(\sum_{l=2}^{L}\left[\{(p-t_l+t_{l-1})^+\}^2\right]\right)\{\beta_3(z)-2\beta_2(z)+\beta_1(z)\},
    \end{align*}
    where $\beta_J(z)$ ($J=1,2,3$) are defined in \Cref{eq:ab}.
\end{lemma}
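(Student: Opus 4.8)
The plan is to reduce both risk identities to a single combinatorial counting problem. Starting from
\[
(T-p)^2\,\mathrm{risk}^d(q)=\sum_{t=p+1}^T E\{\mathbf D_t(\mathbb T,q)\}^2+\sum_{t\neq t'}E\{\mathbf D_t(\mathbb T,q)\mathbf D_{t'}(\mathbb T,q)\},
\]
and the analogous decomposition for $\mathrm{risk}^s(z)$, I would substitute the worst-case second moments already computed in \eqref{eq:e}: on $\arg\max_{\mathbb Y}\mathcal L(\psi_d,\psi_s)$ one has $E\{\mathbf D_t\}^2=\alpha_{J_t}(q)$ and $E\{\mathbf D_t\mathbf D_{t'}\}=\alpha_{J^\circ_{t,t'}}(q)$ whenever $J^\circ_{t,t'}\geq 1$, with $\alpha_J,\beta_J$ as in \eqref{eq:ab}; when $J^\circ_{t,t'}=0$ the windows $[t-p,t]$ and $[t'-p,t']$ share no decision point, so $\mathbf D_t$ and $\mathbf D_{t'}$ (resp. $\mathbf S_t,\mathbf S_{t'}$) are independent with mean zero by Lemma~\ref{lemma.properties}, and the covariance is exactly $0$. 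After this substitution the entire problem is to count, for a fixed $\mathbb T\in\mathcal T$, how many time points $t$ have each value of $J_t$ and how many ordered pairs $(t,t')$ have each value of $J^\circ_{t,t'}$.

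\textbf{Reduction to membership sets.} First I would record the preliminary fact that, under the constraints defining $\mathcal T$ (namely $t_1\geq p+2$, $t_L\leq T-p$, and $t_{l+1}-t_{l-1}\geq p$ from Lemmas~\ref{lemma.t1tl}--\ref{lemma.tp}), a window of length $p+1$ can meet at most three consecutive decision points, so $J_t\leq 3$ and $J^\circ_{t,t'}\leq 3$; this is exactly why only $\alpha_1,\alpha_2,\alpha_3$ (resp. $\beta_1,\beta_2,\beta_3$) appear. The key device is the membership set $W_l=\{t\in[p+1,T]:t_l\in\mathcal F^p_{\mathbb T}(t)\}$, which equals the integer interval $[t_l,\,t_{l+1}+p-1]$ (truncated at the two ends for $l=0$ and $l=L$). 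Then $J_t=\sum_l \mathbf 1\{t\in W_l\}$ and $J^\circ_{t,t'}=\sum_l \mathbf 1\{t\in W_l\}\mathbf 1\{t'\in W_l\}$. Writing $\alpha_{J}=\alpha_1\mathbf 1\{J\geq 1\}+(\alpha_2-\alpha_1)\mathbf 1\{J\geq 2\}+(\alpha_3-\alpha_2)\mathbf 1\{J\geq 3\}$ and using the binomial identities $\mathbf 1\{J\geq 3\}=\binom{J}{3}$, $\mathbf 1\{J\geq 2\}=\binom{J}{2}-2\binom{J}{3}$, $\mathbf 1\{J\geq 1\}=J-\binom{J}{2}+\binom{J}{3}$ (valid since $J\leq 3$), I can convert every count into the power sums $\sum_l|W_l|^2$, $\sum_{l<l'}|W_l\cap W_{l'}|^2$ and $\sum_{l<l'<l''}|W_l\cap W_{l'}\cap W_{l''}|^2$, together with their diagonal ($t=t'$) corrections $\sum_l|W_l|$, etc.

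\textbf{Evaluating the overlaps.} The remaining step is to evaluate these sums in closed form. I would compute $|W_l|=t_{l+1}-t_l+p$ in the interior with $|W_0|=t_1-1$ and $|W_L|=T-t_L+1$, so that $\sum_l|W_l|^2$ expands into $\sum_{l=0}^{L}(t_{l+1}-t_l)^2+(L-1)p^2+2p(t_L-t_1)$, i.e. the coefficient of $\alpha_1$; consecutive overlaps are $|W_l\cap W_{l+1}|=p$, giving the $Lp^2$ terms, while the second-neighbour and triple overlaps both equal $(p-t_{l+1}+t_l)^+$, giving the square sum $\sum_{l=2}^{L}\{(p-t_l+t_{l-1})^+\}^2$. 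Assembling and collecting by $\alpha_1,\alpha_2,\alpha_3$ (and separately $\beta_1,\beta_2,\beta_3$) yields the stated identities.

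\textbf{Main obstacle.} The crux is the overlap bookkeeping in the last step: correctly handling the boundary intervals $W_0$ and $W_L$ (the edge effects encoded in $2p(t_L-t_1)$ and the $(L-1)p^2$ term), subtracting the diagonal contributions, and verifying that no fourfold overlaps survive under $t_{l+1}-t_{l-1}\geq p$. A secondary subtlety is that, although the two risks share identical overlap counts, the per-level moments $\alpha_J$ and $\beta_J$ differ, and the vanishing of the $J^\circ=0$ covariance must be tracked carefully; this is precisely what produces the different middle coefficients in the direct and spillover expressions.
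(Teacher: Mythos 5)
Your proposal is correct and arrives at the same closed form, but it organizes the combinatorics in a genuinely different way from the paper. The paper's proof partitions the ordered pairs $(t,t')$ by the block containing $\min(t,t')$ and then, within each block, runs through four sub-cases according to whether $t_l-t_{l-1}$ and $t_{l+1}-t_l$ are at least $p$, reading off the pair counts at each level of $J^{\circ}_{t,t'}$ directly. Your membership-set device --- writing $J_t=\sum_l\mathbf 1\{t\in W_l\}$ with $W_l=[t_l,\,t_{l+1}+p-1]$ truncated at the two ends, converting $\mathbf 1\{J\geq k\}$ into binomial coefficients (valid since $J\leq 3$ on $\mathcal T$), and reducing everything to $P_1=\sum_l|W_l|^2$, $P_2=\sum_{l<l'}|W_l\cap W_{l'}|^2$ and the triple-overlap sum $P_3$ --- replaces that case analysis with a single inclusion--exclusion identity. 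Your interval geometry checks out: $|W_l\cap W_{l+1}|=p$ for all $L$ consecutive pairs (the truncations of $W_0$ and $W_L$ never reach these intersections because $t_1\geq p+2$ and $t_L\leq T-p$); the second-neighbour overlap $W_{l-1}\cap W_{l+1}=[t_{l+1},\,t_l+p-1]$ has length $(p-t_{l+1}+t_l)^+$ and coincides with the triple overlap since it is contained in $W_l$; and all wider overlaps vanish by $t_{l+1}-t_{l-1}\geq p$. One simplification: no separate diagonal correction is needed, because $E\{\mathbf D_t(\mathbb T,q)\}^2$ and $E\{\mathbf D_t(\mathbb T,q)\mathbf D_{t'}(\mathbb T,q)\}$ have the identical functional form in $J_t$ and $J^{\circ}_{t,t'}$ (and $J^{\circ}_{t,t}=J_t$), so the whole quantity is simply $\sum_{t,t'}\alpha_{J^{\circ}_{t,t'}}(q)$ over all ordered pairs with the convention $\alpha_0:=0$, which your $J^{\circ}=0$ observation justifies. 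What your route buys is a shorter, more systematic argument whose correctness rests on two one-time verifications (the binomial identities and the overlap lengths); what the paper's route buys is an entirely elementary, if lengthy, enumeration that needs no algebraic identity.

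One point to flag: assembling your decomposition gives the total as $\alpha_1P_1+(\alpha_2-\alpha_1)(P_2-2P_3)+(\alpha_3-\alpha_2)P_3+\alpha_1(P_3-P_2)$, i.e. $\alpha_1P_1+(\alpha_2-2\alpha_1)Lp^2+(\alpha_3-2\alpha_2+\alpha_1)\sum_{l=2}^{L}\{(p-t_l+t_{l-1})^+\}^2$. The middle coefficient is $\alpha_2(q)-2\alpha_1(q)$, which is what the paper's own proof derives and what is used downstream in the definition of $\gamma_2^d$; the lemma statement's printed $\alpha_2(q)-3\alpha_1(q)$ appears to be a typo (note the parallel spillover formula in the same statement reads $\beta_2(z)-2\beta_1(z)$). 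Your method correctly does not reproduce the printed $-3\alpha_1(q)$.
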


\begin{proof}[Proof of \Cref{lemma.riskexpression}]
    When $\mathbb T\in \mathcal T$, $\mathrm{risk}^d(q)$ satisfies
    \begin{align*}
        &(T-p)^2\mathrm{risk}^d(q)=(T-p)^2E\{\taudhat-\taud\}^2=E\left\{\sum_{t=p+1}^T\mathbf D_{t}(\mathbb T,q)\sum_{t'=p+1}^T\mathbf D_{t'}(\mathbb T,q)\right\}\\
        =&\sum_{\substack{p+1 \leq t, t' \leq T \\ \min \left\{t, t'\right\} \leq t_1-1}} E\left\{\mathbf{D}_t(\mathbb T,q) \mathbf{D}_{t'}(\mathbb T,q)\right\}+\sum_{l=1}^{L-1}\sum_{\substack{t_l\leq t,t'\leq T\\ \min(t,t')\leq t_{l+1}-1}}E\left\{\mathbf D_t(\mathbb T,q)\mathbf D_{t'}(\mathbb T,q)\right\}\\
        &+\sum_{t_L\leq t,t'\leq T}E\{\mathbf D_t(\mathbb T,q)\mathbf D_{t'}(\mathbb T,q)\},
    \end{align*}
    and $\mathrm{risk}^s(z)$ satisifies
    \begin{align*}
        &(T-p)^2\mathrm{risk}^s(z)=(T-p)^2E\{\taushat-\taus\}^2=E\left\{\sum_{t=p+1}^T\mathbf S_{t}(\mathbb T,z)\sum_{t'=p+1}^T\mathbf S_{t'}(\mathbb T,z)\right\}\\
        =&\sum_{\substack{p+1 \leq t, t' \leq T \\ \min \left\{t, t'\right\} \leq t_1-1}} E\left\{\mathbf{S}_t(\mathbb T,z) \mathbf{S}_{t'}(\mathbb T,z)\right\}+\sum_{l=1}^{L-1}\sum_{\substack{t_l\leq t,t'\leq T\\ \min(t,t')\leq t_{l+1}-1}}E\left\{\mathbf S_t(\mathbb T,z)\mathbf S_{t'}(\mathbb T,z)\right\}\\
        &+\sum_{t_L\leq t,t'\leq T}E\{\mathbf S_t(\mathbb T,z)\mathbf S_{t'}(\mathbb T,z)\}.
    \end{align*}
    We consider the risk function under three cases. Notably, by \Cref{lemma.t1tl} and \Cref{lemma.tp}, any minimax optimal design $\mathbb T\in \mathcal T$ satisfies $t_1\geq p+2$, $t_L\leq T-p$ and $t_{l+1}-t_{l-1}\geq p$, for $l=1,\ldots,L$.

    (1) For any $t,t'$ such that $p+1\leq \min(t,t')\leq t_1-1$, $p+1\leq \max(t,t')\leq t_1+p-1$, we have $J_{t,t'}^{\circ}=1$; for any $t,t'$ such that $p+1\leq \min(t,t')\leq t_1-1$, $t_1+p\leq \max(t,t')\leq T$, we have $J_{t,t'}^{\circ}=0$, $E\{\mathbf D_t(\mathbb T,q)\mathbf D_{t'}(\mathbb T,q)\}=0$, and $E\{\mathbf S_t(\mathbb T,z)\mathbf S_{t'}(\mathbb T,z)\}=0$. Therefore,
    \begin{align*}
        \sum_{\substack{p+1 \leq t, t' \leq T \\ \min \left\{t, t'\right\} \leq t_1-1}} E\left\{\mathbf{D}_t(\mathbb T,q) \mathbf{D}_{t'}(\mathbb T,q)\right\}=\{(t_1-1)^2-p^2\}\alpha_1(q),\\
        \sum_{\substack{p+1 \leq t, t' \leq T \\ \min \left\{t, t'\right\} \leq t_1-1}} E\left\{\mathbf{S}_t(\mathbb T,z) \mathbf{S}_{t'}(\mathbb T,z)\right\}=\{(t_1-1)^2-p^2\}\beta_1(z).
    \end{align*}
    
    (2) For any $l\in[L-1]$, we consider $t_l-t_{l-1}$ and $t_{l+1}-t_l$. We need to consider the values of $E\{\mathbf D_t(\mathbb T,q)\mathbf D_{t'}(\mathbb T,q)\}$ and $E\{\mathbf S_t(\mathbb T,z)\mathbf S_{t'}(\mathbb T,z)\}$ when $t_l\leq \min(t,t')\leq t_{l+1}-1$ and $t_l\leq \max(t,t')\leq T$.

    (2.1) When $t_l-t_{l-1}\geq p$ and $t_{l+1}-t_l\geq p$, for $t_l\leq \min(t,t')\leq t_l+p-1$ and $t_l\leq \max(t,t')\leq t_l+p-1$, we have $J_{t,t'}^{\circ}=2$; for $t_l\leq \min(t,t')\leq t_{l+1}-1$ and $t_l+p\leq \max(t,t')\leq t_{l+1}+p-1$, we have $J_{t,t'}^{\circ}=1$; for $t_l\leq \min(t,t')\leq t_{l+1}-1$ and $t_{l+1}+p\leq \max(t,t')\leq T$, we have $J_{t,t'}^{\circ}=0$. In these cases, we have
    \begin{align*}
        \sum_{\substack{t_l\leq t,t'\leq T\\ \min(t,t')\leq t_{l+1}-1}}E\{\mathbf D_t(\mathbb T,q)\mathbf D_{t'}(\mathbb T,q)\}=\{(p+t_{l+1}-t_l)^2-2p^2\}\alpha_1(q)+p^2\alpha_2(q),\\
        \sum_{\substack{t_l\leq t,t'\leq T\\ \min(t,t')\leq t_{l+1}-1}}E\{\mathbf S_t(\mathbb T,z)\mathbf S_{t'}(\mathbb T,z)\}=\{(p+t_{l+1}-t_l)^2-2p^2\}\beta_1(z)+p^2\beta_2(z).
    \end{align*}
    
    (2.2) When $t_l-t_{l-1}\geq p$ and $t_{l+1}-t_l< p$, for $t_l\leq \min(t,t')\leq t_{l+1}-1$ and $t_l\leq \max(t,t')\leq t_l+p+1$, we have $J_{t,t'}^{\circ}=2$; for $t_l\leq \min(t,t')\leq t_{l+1}-1$ and $t_l+p\leq \max(t,t')\leq t_{l+1}+p+1$, we have $J_{t,t'}^{\circ}=1$; for $t_l\leq \min(t,t')\leq t_{l+1}-1$ and $t_{l+1}-p\leq \max(t,t')\leq T$, we have $J_{t,t'}^{\circ}=0$. In these cases, we have
    \begin{align*}
        &\sum_{\substack{t_l\leq t,t'\leq T\\ \min(t,t')\leq t_{l+1}-1}}E\{\mathbf D_t(\mathbb T,q)\mathbf D_{t'}(\mathbb T,q)\}\\
        =&\{(p+t_{l+1}-t_l)^2-2p^2+(p-t_{l+1}+t_l)^2\}\alpha_1(q)+\{p^2-(p-t_{l+1}+t_l)^2\}\alpha_2(q),\\
        &\sum_{\substack{t_l\leq t,t'\leq T\\ \min(t,t')\leq t_{l+1}-1}}E\{\mathbf S_t(\mathbb T,z)\mathbf S_{t'}(\mathbb T,z)\}\\
        =&\{(p+t_{l+1}-t_l)^2-2p^2+(p-t_{l+1}+t_l)^2\}\beta_1(z)+\{p^2-(p-t_{l+1}+t_l)^2\}\beta_2(z).
    \end{align*}

    (2.3) When $t_l-t_{l-1}<p$ and $t_{l+1}-t_l\geq p$, for $t_l\leq \min(t,t')\leq t_{l-1}+p-1$ and $t_l\leq \max(t,t')\leq t_{l-1}+p-1$, we have $J_{t,t'}^{\circ}=3$; for $t_l\leq \min(t,t')\leq t_l+p-1$ and $t_{l-1}+p\leq \max(t,t')\leq t_l+p-1$, we have $J_{t,t'}^{\circ}=2$; for $t_l\leq \min(t,t')\leq t_{l+1}-1$ and $t_l+p\leq \max(t,t')\leq t_{l+1}+p-1$, we have $J_{t,t'}^{\circ}=1$; for $t_l\leq \min(t,t')\leq t_{l+1}-1$ and $t_{l+1}+p\leq \max(t,t')\leq T$, we have $J_{t,t'}^{\circ}=0$. In these cases, we have
    \begin{align*}
        &\sum_{\substack{t_l\leq t,t'\leq T\\ \min(t,t')\leq t_{l+1}-1}}E\{\mathbf D_t(\mathbb T,q)\mathbf D_{t'}(\mathbb T,q)\}\\
        =&\{(p+t_{l+1}-t_l)^2-2p^2\}\alpha_1(q)+\{p^2-(p-t_l+t_{l-1})^2\}\alpha_2(q)+(p-t_l+t_{l-1})^2\alpha_3(q),\\
        &\sum_{\substack{t_l\leq t,t'\leq T\\ \min(t,t')\leq t_{l+1}-1}}E\{\mathbf S_t(\mathbb T,z)\mathbf S_{t'}(\mathbb T,z)\}\\
        =&\{(p+t_{l+1}-t_l)^2-2p^2\}\beta_1(z)+\{p^2-(p-t_l+t_{l-1})^2\}\beta_2(z)+(p-t_l+t_{l-1})^2\beta_3(z).
    \end{align*}

    (2.4) When $t_l-t_{l-1}<p$ and $t_{l+1}-t_l<p$, for $t_l\leq \min(t,t')\leq t_{l-1}+p-1$ and $t_l\leq \max(t,t')\leq t_{l-1}+p-1$, we have $J_{t,t'}^{\circ}=3$; for $t_l\leq \min(t,t')\leq t_{l+1}-1$ and $t_{l-1}+p\leq \max(t,t')\leq t_l+p-1$, we have $J_{t,t'}^{\circ}=2$; for $t_l\leq \min(t,t')\leq t_{l+1}-1$ and $t_l+p\leq \max(t,t')\leq t_{l+1}+p-1$, we have $J_{t,t'}^{\circ}=1$; for $t_l\leq \min(t,t')\leq t_{l+1}-1$ and $t_{l+1}-p\leq \max(t,t')\leq T$, we have $J_{t,t'}^{\circ}=0$. In these cases, we have
    \begin{align*}
        &\sum_{\substack{t_l\leq t,t'\leq T\\ \min(t,t')\leq t_{l+1}-1}}E\{\mathbf D_t(\mathbb T,q)\mathbf D_{t'}(\mathbb T,q)\}
        =\{(p+t_{l+1}-t_l)^2-2p^2+(p-t_{l+1}+t_l)^2\}\alpha_1(q)\\
        &+\{p^2-(p-t_l+t_{l-1})^2-(p-t_{l+1}+t_l)^2\}\alpha_2(q)+(p-t_l+t_{l-1})^2\alpha_3(q),\\
        &\sum_{\substack{t_l\leq t,t'\leq T\\ \min(t,t')\leq t_{l+1}-1}}E\{\mathbf S_t(\mathbb T,z)\mathbf S_{t'}(\mathbb T,z)\}
        =\{(p+t_{l+1}-t_l)^2-2p^2+(p-t_{l+1}+t_l)^2\}\beta_1(z)\\
        &+\{p^2-(p-t_l+t_{l-1})^2-(p-t_{l+1}+t_l)^2\}\beta_2(z)+(p-t_l+t_{l-1})^2\beta_3(z).
    \end{align*}

    Under these four sub-circumstances (2.1)--(2.4), we have, for any $l=1,\ldots,L-1$,
    \begin{align*}
        &\sum_{\substack{t_l\leq t,t'\leq T\\ \min(t,t')\leq t_{l+1}-1}}E\{\mathbf D_t(\mathbb T,q)\mathbf D_{t'}(\mathbb T,q)\}
        =[(p+t_{l+1}-t_l)^2-2p^2+\{(p-t_{l+1}+t_l)^+\}^2]\alpha_1(q)\\
        &+[p^2-\{(p-t_l+t_{l-1})^+\}^2-\{(p-t_{l+1}+t_l)^+\}^2]\alpha_2(q)+\{(p-t_l+t_{l-1})^+\}^2\alpha_3(q),
    \end{align*}
    and 
    \begin{align*}
        &\sum_{\substack{t_l\leq t,t'\leq T\\ \min(t,t')\leq t_{l+1}-1}}E\{\mathbf S_t(\mathbb T,z)\mathbf S_{t'}(\mathbb T,z)\}
        =[(p+t_{l+1}-t_l)^2-2p^2+\{(p-t_{l+1}+t_l)^+\}^2]\beta_1(z)\\
        &+[p^2-\{(p-t_l+t_{l-1})^+\}^2-\{(p-t_{l+1}+t_l)^+\}^2]\beta_2(z)+\{(p-t_l+t_{l-1})^+\}^2\beta_3(z).
    \end{align*}

    (3) Finally, we consider $t_L\leq \min(t,t')\leq T$ and $t_L\leq \max(t,t')\leq T$.

    (3.1) When $t_L-t_{L-1}\geq p$, for $t_L\leq \min(t,t')\leq t_L+p-1$ and $t_L\leq \max(t,t')\leq t_L+p-1$, we have $J_{t,t'}^{\circ}=2$; for $t_L\leq \min(T,T')\leq T$ and $t_L\leq \max(t,t')\leq T$, we have $J_{t,t'}^{\circ}=1$; otherwise, $J_{t,t'}^{\circ}=0$. In these cases, we have
    \begin{align*}
        \sum_{t_L\leq t,t'\leq T}E\{\mathbf D_t(\mathbb T,q)\mathbf D_{t'}(\mathbb T,q)\}=\{(T+1-t_L)^2-p^2\}\alpha_1(q)+p^2\alpha_2(q),\\
        \sum_{t_L\leq t,t'\leq T}E\{\mathbf S_t(\mathbb T,z)\mathbf S_{t'}(\mathbb T,z)\}=\{(T+1-t_L)^2-p^2\}\beta_1(z)+p^2\beta_2(z).
    \end{align*}

    (3.2) When $t_L-t_{L-1}<p$, for $t_L\leq \min(t,t')\leq t_{L-1}+p-1$ and $t_L\leq \max(t,t')\leq t_{L-1}+p-1$, we have $J_{t,t'}^{\circ}=3$; for $t_L\leq \min(t,t')\leq t_L+p-1$ and $t_{L-1}+p\leq \max(t,t')\leq t_L+p-1$, we have $J_{t,t'}^{\circ}=2$; for $t_L\leq \min(t,t')\leq T$ and $t_L+p\leq \max(t,t')\leq T$, we have $J_{t,t'}^{\circ}=1$; otherwise, $J_{t,t'}^{\circ}=0$. In these cases, we have
    \begin{align*}
        &\sum_{t_L\leq t,t'\leq T}E\{\mathbf D_t(\mathbb T,q)\mathbf D_{t'}(\mathbb T,q)\}\\
        =&\{(T+1-t_L)^2-p^2\}\alpha_1(q)+\{p^2-(p-t_L+t_{L-1})^2\}\alpha_2(q)+(p-t_L+t_{L-1})^2\alpha_3(q).
    \end{align*}

    Combing (3.1)--(3.2), we have
    \begin{align*}
        &\sum_{t_L\leq t,t'\leq T}E\{\mathbf D_t(\mathbb T,q)\mathbf D_{t'}(\mathbb T,q)\}
        =\{(T+1-t_L)^2-p^2\}\alpha_1(q)\\
        &+[p^2-\{(p-t_L+t_{L-1})^+\}^2]\alpha_2(q)
        +\{(p-t_L+t_{L-1})^+\}^2\alpha_3(q),
    \end{align*}
    \begin{align*}
        &\sum_{t_L\leq t,t'\leq T}E\{\mathbf S_t(\mathbb T,z)\mathbf S_{t'}(\mathbb T,z)\}
        =\{(T+1-t_L)^2-p^2\}\beta_1(z)\\
        &+[p^2-\{(p-t_L+t_{L-1})^+\}^2]\beta_2(z)
        +\{(p-t_L+t_{L-1})^+\}^2\beta_3(z).
    \end{align*}

    Therefore,
    \begin{align*}
        &(T-p)^2\mathrm{risk}^d(q)
        =\left\{\sum_{l=0}^{L}(t_{l+1}-t_l)^2+(L-1)p^2+2p(t_L-t_1)\right\}\alpha_1(q)\\
        &+Lp^2\{\alpha_2(q)-2\alpha_1(q)\}
        +\left(\sum_{l=1}^{L-1}\left[\{(p-t_{l+1}+t_{l})^+\}^2\right]\right)\{\alpha_3(q)-2\alpha_2(q)+\alpha_1(q)\},
    \end{align*}
    \begin{align*}
        &(T-p)^2\mathrm{risk}^s(z)
        =\left\{\sum_{l=0}^{L}(t_{l+1}-t_l)^2+(L-1)p^2+2p(t_L-t_1)\right\}\beta_1(z)\\
        &+Lp^2\{\beta_2(z)-2\beta_1(z)\}
        +\left(\sum_{l=1}^{L-1}\left[\{(p-t_{l+1}+t_{l})^+\}^2\right]\right)\{\beta_3(z)-2\beta_2(z)+\beta_1(z)\}.
    \end{align*}
    Note that, when $N\geq 2$,
    \begin{align*}
        \alpha_1(q)&=\frac{B^2}N\{4N+2q^{-1}+2\bar q^{-1}-8\}>0,\\
        \alpha_2(q)-2\alpha_1(q)&=\frac{B^2}{N}\left\{4N+4q^{-2}+4\bar q^{-2}-4q^{-1}-4\bar q^{-1}\right\}>0, \\
        \alpha_3(q)-2\alpha_2(q)+\alpha_1(q)&=\frac{B^2}{N}\left\{2q^{-1}\left(2q^{-1}-1\right)^2+2\bar q^{-1}\left(2\bar q^{-1}-1\right)^2+8N\right\}\geq 0,
    \end{align*}
    and when $N=1$,
    \begin{align*}
        \alpha_1(q)&=\frac{B^2}N\{2q^{-1}+2\bar q^{-1}\}>0,\\
        \alpha_2(q)-2\alpha_1(q)&=\frac{B^2}{N}\left\{4q^{-2}+4\bar q^{-2}-4q^{-1}-4\bar q^{-1}\right\}>0, \\
        \alpha_3(q)-2\alpha_2(q)+\alpha_1(q)&=\frac{B^2}{N}\left\{2q^{-1}\left(2q^{-1}-1\right)^2+2\bar q^{-1}\left(2\bar q^{-1}-1\right)^2\right\}\geq 0,
    \end{align*}
    \begin{align*}
        \beta_1(z)&=\frac{B^2}N\{4N+2q_{1,z}^{-1}+2q_{2,z}^{-1}-4\}>0,\\
        \beta_2(z)-2\beta_1(z)&=\frac{4B^2}N\left\{q_{1,z}^{-1}(q_{1,z}^{-1}-1)+q_{2,z}^{-1}(q_{2,z}^{-1}-1) \right\}>0,\\
        \beta_3(z)-2\beta_2(z)+\beta_1(z)&=\frac{B^2}{N}\left\{2q_{1,z}^{-1}\left(2q_{1,z}^{-1}-1\right)^2+2q_{2,z}^{-1}\left(2q_{2,z}^{-1}-1\right)^2+4(N-1) \right\}\geq 0.
    \end{align*}
\end{proof}

Now, we can prove \Cref{theorem.minimaxdesign}.
\begin{proof}[Proof of Theorem~\ref{theorem.minimaxdesign}]
    When $\mathbb T\in \mathcal T$ and $\R[q_1]=\R[q_2]=0.5$,  $\max_{\mathbb Y\in\mathcal Y}\mathcal L(1,0)$ is given by
    \begin{align*}
        \left(\sum_{l=0}^{L}\left\{(t_{l+1}-t_l)^2+(L-1)p^2+2p(t_L-t_1)\right\}\gamma_1^d+Lp^2\gamma_2^d+\sum_{l=2}^{L}\left[\{(p-t_l+t_{l-1})^+\}^2\right]\gamma_3^d\right)B^2,
    \end{align*}
    where 
    \begin{align}
    \begin{aligned}
    \label{gamma.d}
        \gamma_1^d&=\alpha_1(q_1)+\alpha_1(q_2)= \frac{1}{N(T-p)^2} \{8N+2q_1^{-1}+2\bar q_1^{-1}+2q_2^{-1}+2\bar q_2^{-1}-16\},\\
        \gamma_2^d&=\{\alpha_2(q_1)-2\alpha_1(q_1)\}+\{\alpha_2(q_2)-2\alpha_1(q_2)\}\\
        &=\frac{1}{N(T-p)^2}\left\{8N+4q_1^{-2}+4\bar q_1^{-2}-4q_1^{-1}-4\bar q_1^{-1}+4q_2^{-2}+4\bar q_2^{-2}-4q_2^{-1}-4\bar q_2^{-1}\right\},\\
        \gamma_3^d&=\{\alpha_3(q_1)-2\alpha_2(q_1)+\alpha_1(q_1)\}+\{\alpha_3(q_2)-2\alpha_2(q_2)+\alpha_1(q_2)\}=\frac{1}{N(T-p)^2}\\
        &\quad \times\left\{2q_1^{-1}(2q_1^{-1}-1)^2+2\bar q_1^{-1}(2\bar q_1^{-1}-1)^2+2q_2^{-1}(2q_2^{-1}-1)^2+2\bar q_2^{-1}(2\bar q_2^{-1}-1)^2+16N\right\},
    \end{aligned}
    \end{align}
    when $N\geq 2$, and
    \begin{align}
    \begin{aligned}
    \label{gamma.d0}
        \gamma_1^d&=\alpha_1(q_1)+\alpha_1(q_2)= \frac{1}{N(T-p)^2} \{2q_1^{-1}+2\bar q_1^{-1}+2q_2^{-1}+2\bar q_2^{-1}\},\\
        \gamma_2^d&=\{\alpha_2(q_1)-2\alpha_1(q_1)\}+\{\alpha_2(q_2)-2\alpha_1(q_2)\}\\
        &=\frac{1}{N(T-p)^2}\left\{4q_1^{-2}+4\bar q_1^{-2}-4q_1^{-1}-4\bar q_1^{-1}+4q_2^{-2}+4\bar q_2^{-2}-4q_2^{-1}-4\bar q_2^{-1}\right\},\\
        \gamma_3^d&=\{\alpha_3(q_1)-2\alpha_2(q_1)+\alpha_1(q_1)\}+\{\alpha_3(q_2)-2\alpha_2(q_2)+\alpha_1(q_2)\}=\frac{1}{N(T-p)^2}\\
        &\quad \times\left\{2q_1^{-1}(2q_1^{-1}-1)^2+2\bar q_1^{-1}(2\bar q_1^{-1}-1)^2+2q_2^{-1}(2q_2^{-1}-1)^2+2\bar q_2^{-1}(2\bar q_2^{-1}-1)^2\right\},
    \end{aligned}
    \end{align}
    when $N=1$.

    Similarly, when $\mathbb T\in \mathcal T$ and $\R[q_1]=\R[q_2]=0.5$, $ \max_{\mathbb Y\in\mathcal Y}\mathcal L(0,1)$ is given by
    \begin{align*}
        &\left(\sum_{l=0}^{L}\left\{(t_{l+1}-t_l)^2+(L-1)p^2+2p(t_L-t_1)\right\}\gamma_1^s+Lp^2\gamma_2^s+\sum_{l=2}^{L}\left[\{(p-t_l+t_{l-1})^+\}^2\right]\gamma_3^s\right)B^2,
    \end{align*}
    where
    \begin{align}
    \begin{aligned}
    \label{gamma.s}
        \gamma_1^s&=\beta_1(1)+\beta_1(0)=\frac{1}{N(T-p)^2}\{8N+2q_1^{-1}+2\bar q_1^{-1}+2q_2^{-1}+2\bar q_2^{-1}-8\},\\
        \gamma_2^s&=\{\beta_2(1)-2\beta_1(1)\}+\{\beta_2(0)-2\beta_1(0)\}\\
        &=\frac{4}{N(T-p)^2}\left\{q_1^{-1}(q_1^{-1}-1)+\bar q_1^{-1}(\bar q_1^{-1}-1)+q_2^{-1}(q_2^{-1}-1)+\bar q_2^{-1}(\bar q_2^{-1}-1)\right\},\\
        \gamma_3^s&=\{\beta_3(1)-2\beta_2(1)+\beta_1(1)\}+\{\beta_3(0)-2\beta_2(0)+\beta_1(0)\}=\frac{1}{N(T-p)^2}\\
        &\quad \times\Big\{2q_1^{-1}\left(2q_1^{-1}-1\right)^2+2\bar q_1^{-1}\left(2\bar q_1^{-1}-1\right)^2+2q_2^{-1}\left(2q_2^{-1}-1\right)^2+2\bar q_2^{-1}\left(2\bar q_2^{-1}-1\right)^2\\
        &\qquad +4(N-1)\Big\}.
    \end{aligned}
    \end{align}
    The weighted sum $\max_{\mathbb Y\in\mathcal Y}\mathcal L(\psi_d,\psi_s)$, is given by
    \begin{align*}
        \left(\sum_{l=0}^{L}\left\{(t_{l+1}-t_l)^2+(L-1)p^2+2p(t_L-t_1)\right\}\gamma_1^*+Lp^2\gamma_2^*+\sum_{l=2}^{L}\left[\{(p-t_l+t_{l-1})^+\}^2\right]\gamma_3^*\right)B^2,
    \end{align*}
    with $\gamma_J^*=\psi_d\gamma_J^d+\psi_s\gamma_J^s$ for $J=1,2,3$. Based on $B^2>0$, obtaining the minimax optimal design is then equivalent to minimizing
    \begin{align*}
        &\max_{\mathbb Y\in\mathcal Y}\mathcal L(\psi_d,\psi_s)/B^2\\
        =&\sum_{l=0}^{L}\left\{(t_{l+1}-t_l)^2+(L-1)p^2+2p(t_L-t_1)\right\}\gamma_1^*+Lp^2\gamma_2^*+\sum_{l=2}^{L}\left[\{(p-t_l+t_{l-1})^+\}^2\right]\gamma_3^*.
    \end{align*}
\end{proof}

To simplify the notation, we define
\begin{align}
\label{eq:theta}
    \theta^*=\frac{\gamma_2^*}{\gamma_1^*}=\frac{4N\psi_d I(N\geq 2)+2q_1^{-2}+2\bar q_1^{-2}-2q_1^{-1}-2\bar q_1^{-1}+2q_2^{-2}+2\bar q_2^{-2}-2q_2^{-1}-2\bar q_2^{-1}}{(4N-4-4\psi_d)I(N\geq 2)+(q_1^{-1}+\bar q_1^{-1}+q_2^{-1}+\bar q_2^{-1})},
\end{align}
where $I(\cdot)$ is the indicator function. Then $\max_{\mathbb Y\in\mathcal Y}\mathcal L(\psi_d,\psi_s)/B^2$ can be rewritten as
\begin{align}
\label{eq:target}
    \left\{\sum_{l=0}^{L}(t_{l+1}-t_l)^2+(L-1+\theta^*L)p^2+2p(t_L-t_1)\right\}\gamma_1^*B^2+\left(\sum_{l=2}^{L}\left[\{(p-t_l+t_{l-1})^+\}^2\right]\right)\gamma_3^*B^2.
\end{align}

\section{A polynomial--time algorithm to solve the integer optimization problem}
\label{sec:algorithm}

In this section, we propose a polynomial--time algorithm to solve the integer optimization problem stated in \Cref{theorem.minimaxdesign}.

\begin{lemma}
\label{lemma.discussion}
    Under Assumptions \ref{assumption.nonanticipativity}--\ref{assumption.bounded} and with $\R[q_1]=\R[q_2]=0.5$, any design minimizing
    \begin{align}
    \label{eq:discussion}
        \sum_{l=0}^{L}(t_{l+1}-t_l)^2+(L-1+\theta^*L)p^2+2p(t_L-t_1)
    \end{align}
    must satisfy
    \begin{align*}
        |(t_1-t_0)-(t_{L+1}-t_L)|\leq 1,\quad |(t_{l+1}-t_l)-(t_{l'+1}-t_{l'})|\leq 1,\quad \forall 1\leq l, l'\leq L-1.
    \end{align*}
\end{lemma}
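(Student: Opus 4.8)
The plan is to recast the objective \eqref{eq:discussion} as a sum of squares of suitably ``adjusted'' interval lengths with a fixed total, and then apply a standard smoothing (unit-transfer) exchange argument, which forces those adjusted lengths to be mutually balanced. First I would introduce the interval lengths $d_l = t_{l+1}-t_l$ for $l=0,\dots,L$, so that $\sum_{l=0}^{L} d_l = t_{L+1}-t_0 = T$ is fixed and $t_L - t_1 = \sum_{l=1}^{L-1} d_l = T - d_0 - d_L$. Substituting into \eqref{eq:discussion} and completing the square on the two end intervals yields
\begin{align*}
\sum_{l=0}^{L} d_l^2 + (L-1+\theta^*L)p^2 + 2p(T-d_0-d_L)
= \sum_{l=1}^{L-1} d_l^2 + (d_0-p)^2 + (d_L-p)^2 + C(L),
\end{align*}
where $C(L) = (L-3+\theta^*L)p^2 + 2pT$ depends only on $L$, $p$, $T$, and $\theta^*$, not on the placement of the decision points. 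Setting the adjusted lengths $\tilde d_0 = d_0 - p$, $\tilde d_L = d_L - p$, and $\tilde d_l = d_l$ for $1\le l\le L-1$, the design-dependent part of the objective is exactly $\sum_{l=0}^{L}\tilde d_l^2$, subject to the fixed-sum constraint $\sum_{l=0}^{L}\tilde d_l = T - 2p$.

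Next, since a globally optimal design is in particular optimal among all designs with the same number of decision points, I would fix $L$ at its optimal value and compare only to designs with that same $L$, so that $C(L)$ is constant and the problem reduces to minimizing $\sum_{l=0}^{L}\tilde d_l^2$ over integer adjusted lengths with sum $T-2p$. The smoothing step is then: suppose two adjusted lengths satisfy $\tilde d_i \ge \tilde d_j + 2$; replacing the pair $(\tilde d_i,\tilde d_j)$ by $(\tilde d_i - 1, \tilde d_j + 1)$ preserves the total and changes the sum of squares by $-2(\tilde d_i - \tilde d_j - 1) \le -2 < 0$, a strict decrease. Because this transfer merely shifts the intervening decision points by one unit and keeps all adjusted lengths positive (the larger was at least $3$, the smaller only increases), the modified point set is still a strictly increasing sequence of valid decision points whose objective is strictly smaller, contradicting optimality. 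Hence at the optimum $|\tilde d_i - \tilde d_j|\le 1$ for every pair $i,j$. Translating back, $|d_0 - d_L| = |\tilde d_0 - \tilde d_L| \le 1$ gives the first claim, and $|d_l - d_{l'}| = |\tilde d_l - \tilde d_{l'}| \le 1$ for $1\le l,l'\le L-1$ gives the second.

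The main obstacle is the feasibility bookkeeping for the unit-transfer. Strict monotonicity of the decision points and containment in $[1,T]$ follow immediately from positivity of the adjusted lengths, but if the minimization is read as being over $\mathcal T$ (the set under which the risk expression \eqref{eq:discussion} was derived, with $t_1 \ge p+2$, $t_L \le T-p$, and $t_{l+1}-t_{l-1}\ge p$), I must verify that a single balancing transfer cannot create a new violation of $t_{l+1}-t_{l-1}\ge p$. I would handle this by always transferring between the largest and smallest adjusted lengths and checking that shrinking the maximal interval by one keeps the neighboring consecutive-pair sums at least $p$; the delicate case is a tight pair $d_{l-1}+d_l = p$ at the maximal interval, which I would rule out by a short case analysis (a maximal interval forces its neighbor to be comparatively small, so the pair sum is not tight, or else the configuration was already balanced). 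Equivalently, one may observe that \eqref{eq:discussion} is a purely algebraic functional of the design and carry out the minimization over all admissible point sets, for which feasibility of the perturbation is trivial, deferring the membership in $\mathcal T$ to the earlier lemmas. The remaining ingredients — the algebraic reformulation and the sign of the smoothing increment — are routine.
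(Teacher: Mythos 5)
Your proposal is correct and is essentially the same unit-transfer (smoothing) exchange argument as the paper's proof: the paper perturbs the design by shifting a block of decision points by one, computes the change in $\sum_l (t_{l+1}-t_l)^2$ as $-2\{(\text{larger})-(\text{smaller})-1\}<0$, and notes that $L$ and $t_L-t_1$ are unchanged so the remaining terms do not move. Your completing-the-square reformulation with $\tilde d_0=d_0-p$, $\tilde d_L=d_L-p$ is a clean way to absorb the linear term $2p(t_L-t_1)$ and unify the two cases, but for the two pairings actually asserted in the lemma it reduces to exactly the paper's computation, and like the paper you may treat \eqref{eq:discussion} as a purely algebraic functional so the feasibility bookkeeping you worry about is not needed.
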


\begin{proof}[Proof of \Cref{lemma.discussion}]
    We will prove the lemma by contradiction.

    (1) Assume there exists a design $\mathbb{T}$ such that $ |(t_1-t_0)-(t_{L+1}-t_L)| \geq 2 $. We will construct a new design $\tilde{\mathbb{T}}$ with a lower value of \Cref{eq:discussion}. Without loss of generality, let us assume $ (t_1-t_0)-(t_{L+1}-t_L) \geq 2 $. The case where $ (t_{L+1}-t_L)-(t_1-t_0) \geq 2 $ can be addressed similarly. Consider the design $\tilde{\mathbb{T}}=\{\tilde{t}_0=1,\tilde{t}_1=t_1-1,\tilde{t}_2=t_2-1,\ldots,\tilde{t}_{L-1}=t_{L-1}-1,\tilde{t}_L=t_L-1\}$. Notably, we have $ \tilde{t}_{l+1}-\tilde{t}_l=t_{l+1}-t_l $ for $ l=1,\ldots,L-1 $. The change in the first term of \Cref{eq:discussion} is given by:
    \begin{align*}
        &\sum_{l=0}^{L}(\tilde{t}_{l+1}-\tilde{t}_l)^2 - \sum_{l=0}^{L}(t_{l+1}-t_l)^2\\
        =&\{(\tilde{t}_{L+1}-\tilde{t}_L)^2+(\tilde{t}_1-\tilde{t}_0)^2\} - \{(t_{L+1}-t_L)^2+(t_1-t_0)^2\}\\
        =&\{(t_{L+1}-t_L+1)^2+(t_1-t_0-1)^2\} - \{(t_{L+1}-t_L)^2+(t_1-t_0)^2\}\\
        =&-2\{(t_1-t_0)-(t_{L+1}-t_L)-1\}< 0.
    \end{align*}
     The second and third terms remain unchanged because $L$ remains unchanged and $\tilde{t}_L-\tilde{t}_1=t_L-t_1$. Thus, there exists a design $\tilde{\mathbb{T}}$ with a lower value of \Cref{eq:discussion} than $\mathbb{T}$ when $ (t_1-t_0)-(t_{L+1}-t_L) \geq 2 $. The case where $ (t_{L+1}-t_L)-(t_1-t_0) \geq 2 $ can be handled similarly by defining $\tilde{\mathbb{T}}=\{\tilde{t}_0=1,\tilde{t}_1=t_1+1,\tilde{t}_2=t_2+1,\ldots,\tilde{t}_{L-1}=t_{L-1}+1,\tilde{t}_L=t_L+1\}$.

    (2) Assume there exists a design $\mathbb T$ such that $|(t_{l+1}-t_l)-(t_{l'+1}-t_{l'})|\geq 2$ for some $1\leq l<l'\leq L+1$. We will construct a new design $\tilde{\mathbb T}$ with a lower value of \Cref{eq:discussion}. Without loss of generality, let us assume that $(t_{l+1}-t_l)-(t_{l'+1}-t_{l'})\geq 2$. The case where $(t_{l'+1}-t_{l'})-(t_{l+1}-t_l)\geq 2$ can be addressed similarly. Consider the design $\tilde{\mathbb T}=\{\tilde t_0=1,\tilde t_1=t_1,\ldots,\tilde t_l=t_l,\tilde t_{l+1}=t_{l+1}-1,\ldots,\tilde t_{l'}=t_{l'}-1,\tilde t_{l'+1}=t_{l'+1},\ldots,\tilde t_L=t_L\}$. Notably, we have $\tilde t_{\ell+1}-\tilde t_{\ell}=t_{\ell+1}-t_{\ell}$ when $\ell \neq l, l'$. The change in the first term in \Cref{eq:discussion} is given by:
    \begin{align*}
        &\sum_{l=0}^{L}(\tilde t_{l+1}-\tilde t_l)^2-\sum_{l=0}^{L}(t_{l+1}-t_l)^2\\
        =&\{(\tilde t_{l'+1}-\tilde t_{l'})^2+(\tilde t_{l+1}-\tilde t_l)^2\}-\{(t_{l'+1}-t_{l'})^2+(t_{l+1}-t_l)^2\}\\
        =&\{(t_{l'+1}-t_{l'}+1)^2+(t_{l+1}-t_l-1)^2\}-\{(t_{L+1}-t_L)^2+(t_1-t_0)^2\}\\
        =&-2\{(t_{l+1}-t_l)-(t_{l'+1}-t_{l'})-1\}< 0.
    \end{align*}
    The second and third terms remain unchanged because $L$ remains unchanged and $\tilde t_L-\tilde t_1=t_L-t_1$. Thus, there exists a design $\tilde{\mathbb T}$ with a lower value of \Cref{eq:discussion} than $\mathbb T$ when $(t_{l+1}-t_l)-(t_{l'+1}-t_{l'})\geq 2$. The case where $(t_{l'+1}-t_{l'})-(t_{l+1}-t_l)\geq 2$ can be handled similarly by defining  $\tilde{\mathbb T}=\{\tilde t_0=1,\tilde t_1=t_1,\ldots,\tilde t_l=t_l,\tilde t_{l+1}=t_{l+1}+1,\ldots,\tilde t_{l'}=t_{l'}+1,\tilde t_{l'+1}=t_{l'+1},\ldots,\tilde t_L=t_L\}$.
\end{proof}

Based on \Cref{lemma.discussion}, a design that minimizes \Cref{eq:discussion} can be characterized by two integers: $a = \min\{t_1 - t_0, t_{L+1} - t_L\}$ and $b = \min_{l=1,\ldots,L-1}\{t_{l+1} - t_l\}$. Another important value in the objective function is $L$.
Although the integer $L$ is unknown, it can be computed using $T$, $a$, $b$ and $\theta^*$ under two cases, as discussed in \Cref{lemma.L}.

\begin{lemma}
    \label{lemma.L}
    Under Assumptions \ref{assumption.nonanticipativity}--\ref{assumption.bounded} and $\R[q_1]=\R[q_2]=0.5$, given a candidate integer pair $(a,b)$ to characterize a design $\mathbb T$, we consider two cases: $t_1 - t_0 = t_{L+1} - t_L$ (Case 1) or $t_1 - t_0 \neq t_{L+1} - t_L$ (Case 2). 
    
    (1.1) For the case where $t_1-t_0=t_{L+1}-t_L$ and $\theta^*\leq b(b+1)/p^2-1$, we have $L = \lfloor(T - 2a) / b\rfloor + 1$; 
    
    (1.2) for the case where $t_1-t_0=t_{L+1}-t_L$ and $\theta^*> b(b+1)/p^2-1$, we have $L=\lceil (T - 2a) / (b+1)\rceil+1$; 
    
    (2.1) for the case where $t_1-t_0\neq t_{L+1}-t_L$ and $\theta^*\leq b(b+1)/p^2-1$, we have $L = \lfloor(T - 2a-1) / b\rfloor + 1$; 
    
    (2.2) for the case where $t_1-t_0\neq t_{L+1}-t_L$ and $\theta^*> b(b+1)/p^2-1$, we have $L=\lceil (T - 2a-1) / (b+1)\rceil+1$.
\end{lemma}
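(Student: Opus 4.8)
The plan is to fix the candidate pair $(a,b)$, rewrite the objective \eqref{eq:discussion} as an explicit affine function of $L$, and read off the optimal $L$ from the sign of its slope. By \Cref{lemma.discussion}, once $\mathbb T$ is characterized by $a=\min\{t_1-t_0,t_{L+1}-t_L\}$ and $b=\min_{1\le l\le L-1}\{t_{l+1}-t_l\}$, the two end intervals are either both equal to $a$ (Case 1) or equal to $a$ and $a+1$ (Case 2), while each of the $L-1$ interior intervals $t_{l+1}-t_l$ ($1\le l\le L-1$) lies in $\{b,b+1\}$. Writing $M$ for the total length of the interior intervals, the identity $\sum_{l=0}^{L}(t_{l+1}-t_l)=t_{L+1}-t_0=T$ forces $M=T-2a$ in Case 1 and $M=T-2a-1$ in Case 2.

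First I would compute the interior contribution to the sum of squares. If $k$ of the interior intervals equal $b+1$ and the remaining $L-1-k$ equal $b$, then matching total length gives $k=M-(L-1)b$, and a short computation yields $\sum_{l=1}^{L-1}(t_{l+1}-t_l)^2=(L-1)b^2+k(2b+1)=M(2b+1)-(L-1)b(b+1)$. The end intervals contribute the $L$-independent constant $2a^2$ in Case 1 and $2a^2+2a+1$ in Case 2. Next I would observe that $t_L-t_1=\sum_{l=1}^{L-1}(t_{l+1}-t_l)=M$, so the cross term $2p(t_L-t_1)=2pM$ is also independent of $L$ once $a$ is fixed.

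Substituting these into \eqref{eq:discussion} and using $(L-1+\theta^*L)p^2=(L-1)(1+\theta^*)p^2+\theta^*p^2$, every $L$-dependent contribution collapses into a single affine expression in $L-1$ with slope $(1+\theta^*)p^2-b(b+1)$; all remaining terms depend only on $a,b,M,p$. The feasibility of packing $L-1$ interior intervals of sizes $b$ or $b+1$ that sum to $M$ is precisely $\lceil M/(b+1)\rceil\le L-1\le\lfloor M/b\rfloor$. Hence, when the slope is nonpositive, i.e. $\theta^*\le b(b+1)/p^2-1$, the objective is nonincreasing in $L$ and the optimum is the largest feasible value $L-1=\lfloor M/b\rfloor$; when the slope is positive, i.e. $\theta^*>b(b+1)/p^2-1$, the objective is increasing in $L$ and the optimum is the smallest feasible value $L-1=\lceil M/(b+1)\rceil$. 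Inserting $M=T-2a$ and $M=T-2a-1$ then produces the four formulas (1.1)--(2.2).

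The main obstacle I anticipate is the boundary bookkeeping rather than the algebra: I must verify that the extremal feasible $L$ is genuinely realized by a design in which $a$ and $b$ remain the stated minima. In particular, choosing $L-1=\lfloor M/b\rfloor$ must still leave at least one interior interval equal to $b$ (rather than forcing all of them up to $b+1$), and the ceiling choice must not require an interior interval below $b$; the end-interval gap in Case 2 must likewise be consistent with $a$ being the smaller end length. Handling these floor/ceiling edge cases, the degenerate small-$L$ regimes (e.g.\ $L=1$, where there are no interior intervals), and confirming that a tie in the slope does not alter the stated formulas is where the otherwise routine argument demands the most care.
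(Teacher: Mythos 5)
Your proposal is correct and follows essentially the same route as the paper: the paper likewise reduces the objective to an expression that is affine in the number of interior intervals with coefficient $(\theta^*+1)p^2-b(b+1)$ (phrased there as an exchange comparison between $b$ intervals of length $b+1$ and $b+1$ intervals of length $b$), and then selects the extreme feasible count $\lfloor M/b\rfloor$ or $\lceil M/(b+1)\rceil$ according to the sign, exactly as you do. Your explicit slope computation and the feasibility window $\lceil M/(b+1)\rceil\le L-1\le\lfloor M/b\rfloor$ match the paper's simplified forms of \eqref{eq:case1.1}--\eqref{eq:case2.2} and the accompanying nonnegativity constraints on the interval counts.
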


\begin{proof}[Proof of \Cref{lemma.L}]
    To determine the value of $ L $, it is crucial to explicitly assess the number of time periods of lengths $ b $ and $ b + 1 $. Our analysis primarily focuses on comparing the objective function related to $ b $ time periods of length $ b+1 $ with the objective function associated with $ b+1  $ time periods of length $ b $. The difference between the objective functions, as presented in \Cref{eq:discussion}, is given by
    $b(b+1) - (\theta^* + 1)p^2$.
    
    When this difference is greater than 0, it suggests that we should prioritize maximizing the use of time periods of length $ b $. Conversely, when the difference is less than 0, it indicates that we should maximize the utilization of time periods of length $ b + 1 $.
    
    When $b(b+1) - (\theta^* + 1)p^2\geq 0$, i.e., $\theta^*\leq b(b+1)/p^2-1$, it indicates that we should maximize the use of time periods of length $b$. Under this circumstance, we differentiate between two cases: 
    
    (1.1) In Case 1, $L = \lfloor (T - 2a) / b \rfloor + 1$, and the the time period between $t_1$ and $t_L$ will be divided into $\lfloor (T - 2a) / b \rfloor (b + 1) - (T - 2a)$ time periods of length $b$ and $(T - 2a) - \lfloor (T - 2a) / b \rfloor b$ time periods of length $b + 1$. The objective function in \Cref{eq:discussion} is given by
    \begin{align}
    \begin{aligned}
    \label{eq:case1.1}
        &2a^2 + \left\{\left\lfloor \frac{T - 2a}{b} \right\rfloor (b + 1) - (T - 2a)\right\} b^2 \\
        &+ \left\{(T - 2a) - \left \lfloor \frac{T - 2a}{b} \right\rfloor b\right\} (b + 1)^2 \\
        &+ \left\{\left\lfloor \frac{T - 2a}{b} \right\rfloor + \theta^*\left(\left\lfloor \frac{T - 2a}{b} \right\rfloor + 1\right)\right\} p^2 + 2p(T - 2a)\\
        =&2a^2+\theta^* p^2+(T-2a)(2b+2p+1)+\left\lfloor \frac{T-2a}b\right\rfloor \{(\theta^*+1)p^2-b(b+1)\}. 
    \end{aligned}
    \end{align}
    This equation implies that $\lfloor (T-2a)/b\rfloor\in[(T-2a)/(b+1),(T-2a)/b]$, for the coefficients of both $b^2$ and $(b+1)^2$ must be non-negative. 
    
    (2.1) In Case 2, $L = \lfloor (T - 2a - 1) / b \rfloor + 1$, and the time period between $t_1$ and $t_L$ will be divided into $\lfloor (T - 2a - 1) / b \rfloor (b + 1) - (T - 2a - 1)$ time periods of length $b$ and $(T - 2a - 1) - \lfloor (T - 2a - 1) / b \rfloor b$ time periods of length $b + 1$. The objective function \Cref{eq:discussion} is given by
    \begin{align}
    \begin{aligned}
    \label{eq:case2.1}
        &a^2 + (a + 1)^2 + \left\{\left\lfloor \frac{T - 2a - 1}{b} \right\rfloor (b + 1) - (T - 2a - 1)\right\} b^2 \\
        &+ \left\{(T - 2a - 1) - \left \lfloor \frac{T - 2a - 1}{b} \right\rfloor b\right\} (b + 1)^2 \\
        &+ \left\{\left\lfloor \frac{T - 2a - 1}{b} \right\rfloor + \theta^*\left(\left\lfloor \frac{T - 2a - 1}{b} \right\rfloor + 1\right)\right\} p^2 + 2p(T - 2a - 1)\\
        =&a^2+(a+1)^2+\theta^*p^2+(T-2a-1)(2b+2p+1)\\
        &+\left\lfloor \frac{T-2a-1}b\right\rfloor \{(\theta^*+1)p^2-b(b+1)\}.
    \end{aligned}
    \end{align}
    The equation implies that $\lfloor (T-2a-1)/b\rfloor\in[(T-2a-1)/(b+1),(T-2a-1)/b]$, for the coefficients of both $b^2$ and $(b+1)^2$ must be non-negative. 

    When $b(b+1) - (\theta^* + 1)p^2< 0$, i.e., $\theta^*\leq b(b+1)/p^2-1$, it indicates that we should maximize the use of time periods of length $b+1$. Under this circumstance, we differentiate between two cases: 
    
    (1.2) In Case 1, $L = \lceil (T - 2a) / (b + 1) \rceil + 1$, and the the time period between $t_1$ and $t_L$ will be divided into $\lceil (T - 2a) / (b + 1) \rceil (b + 1) - (T - 2a)$ time periods of length $b$ and $(T - 2a) - \lceil (T - 2a) / (b + 1) \rceil b$ time periods of length $b + 1$. The objective function in \Cref{eq:discussion} is given by
    \begin{align}
    \begin{aligned}
    \label{eq:case1.2}
        &2a^2 + \left\{\left\lceil \frac{T - 2a}{b+1} \right\rceil (b + 1) - (T - 2a)\right\} b^2 \\
        &+ \left\{(T - 2a) - \left\lceil \frac{T - 2a}{b+1} \right\rceil  b\right\} (b + 1)^2 \\
        &+ \left\{\left\lceil \frac{T - 2a}{b+1} \right\rceil + \theta^*\left(\left\lceil \frac{T - 2a}{b+1} \right\rceil  + 1\right)\right\} p^2 + 2p(T - 2a)\\
        =&2a^2+(T-2a)(2b+2p+1)+\left\lceil\frac{T-2a}{b+1}\right\rceil \{(\theta^*+1)p^2-b(b+1)\}.
    \end{aligned}
    \end{align}
    This equation implies that $\lceil (T-2a)/(b+1)\rceil\in[(T-2a)/(b+1),(T-2a)/b]$, for the coefficients of both $b^2$ and $(b+1)^2$ must be non-negative. 
    
    (2.2) In Case 2, $L = \lceil (T - 2a) / (b + 1) \rceil + 1$, and the time period between $t_1$ and $t_L$ will be divided into $\lceil (T - 2a - 1) / (b + 1) \rceil (b + 1) - (T - 2a - 1)$ time periods of length $b$ and $(T - 2a - 1) - \lceil (T - 2a - 1) / (b + 1) \rceil b$ time periods of length $b + 1$. The objective function in \Cref{eq:discussion} is given by
    \begin{align}
    \begin{aligned}
    \label{eq:case2.2}
        &a^2 + (a+1)^2 + \left\{\left\lceil \frac{T - 2a - 1}{b+1} \right\rceil (b + 1) - (T - 2a - 1)\right\} b^2 \\
        &+ \left\{(T - 2a - 1) - \left\lceil \frac{T - 2a - 1}{b+1} \right\rceil  b\right\} (b + 1)^2 \\
        &+ \left\{\left\lceil \frac{T - 2a - 1}{b+1} \right\rceil + \theta^*\left(\left\lceil \frac{T - 2a - 1}{b+1} \right\rceil  + 1\right)\right\} p^2 + 2p(T - 2a - 1) \\
        =&a^2+(a+1)^2+\theta^*p^2+(T-2a-1)(2b+2p+1)\\
        &+\left\lceil \frac{T-2a-1}{b+1}\right\rceil \{(\theta^*+1)p^2-b(b+1)\}.
    \end{aligned}
    \end{align}
    This equation implies that $\lceil (T-2a-1)/(b+1)\rceil\in[(T-2a-1)/(b+1),(T-2a-1)/b]$, for the coefficients of both $b^2$ and $(b+1)^2$ must be non-negative. 
\end{proof}

Without additional information, the number of candidate pairs $(a, b)$ is $O(T^2)$, since both $a$ and $b$ are integers less than $T$. Furthermore, we have the constraints $a \geq p + 1$ as specified in \Cref{lemma.t1tl}, and $b \geq p/2$ according to \Cref{lemma.tp}. For each candidate pair $(a, b)$, we need to evaluate the expression in \Cref{eq:discussion} under two scenarios: when $t_1 - t_0 = t_{L+1} - t_L$ and when $t_1 - t_0 \neq t_{L+1} - t_L$. By comparing the results from all candidate pairs across these two cases, we can determine the design that minimizes the value of \Cref{eq:discussion}.

It is important to note that whether $t_{l+1} - t_l = b$ or $t_{l+1} - t_l = b + 1$ does not affect the value of \Cref{eq:discussion}. Instead, the value depends solely on the counts of indices $l$ for which $t_{l+1} - t_l = b$ and $l'$ for which $t_{l'+1} - t_{l'} = b + 1$. Consequently, if $(t_L - t_1)$ is not a multiple of $b$, we can identify a class of designs with different arrangements of $b$ and $b+1$ that yield the same value of \Cref{eq:discussion}. Additionally, in the case where $t_1 - t_0 \neq t_{L+1} - t_L$, we can choose $t_1 - t_0 = a$ and $t_{L+1} - t_L = a + 1$ or vice versa, resulting in two designs that produce the same value for \Cref{eq:discussion}.

From the discussions outlined, we conclude that the time complexity of the proposed polynomial algorithm is $O(T^2)$, for we only need to iterate through all possible integer pairs $(a,b)$ to identify the design $\mathbb T^*$ with $(a^*,b^*)$ that minimizes \Cref{eq:discussion}. 

However, apart from the first term, which is associated with \Cref{eq:discussion}, there is another term in the objective function in \Cref{eq:target}:
\begin{align*}
    \left(\sum_{l=2}^{L}\left[\{(p-t_l+t_{l-1})^+\}^2\right]\right)\gamma_3^*B^2.
\end{align*}
Its value is not less than 0 and reaches its minimum 0 when $b\geq p$. We provide an additional lemma to show the design that minimizes \Cref{eq:discussion} is precisely the minimax optimal design.

\begin{lemma}
\label{lemma.discussion2}
    Under Assumptions \ref{assumption.nonanticipativity}--\ref{assumption.bounded} and $\R[q_1]=\R[q_2]=0.5$, if there exists a design $\mathbb T^*$ that minimizes \Cref{eq:discussion}, it is precisely the minimax optimal design as described in \Cref{theorem.minimaxdesign}.
\end{lemma}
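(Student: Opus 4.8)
The plan is to exploit that the full minimax objective \Cref{eq:target} splits additively as
$\Phi(\mathbb T)=f(\mathbb T)\,\gamma_1^*B^2+g(\mathbb T)\,\gamma_3^*B^2$, where $f(\mathbb T)$ is exactly the quantity in \Cref{eq:discussion} and $g(\mathbb T)=\sum_{l=2}^{L}\{(p-t_l+t_{l-1})^+\}^2$. Since $\gamma_1^*>0$, $\gamma_3^*\geq 0$, $B^2>0$, and $g(\mathbb T)\geq 0$ for every design (each summand being the square of a nonnegative number, with minimum value $0$ attained whenever all inner intervals have length at least $p$), it suffices to show that a minimizer $\mathbb T^*$ of $f$ also satisfies $g(\mathbb T^*)=0$. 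Indeed, if $g(\mathbb T^*)=0$ then $\mathbb T^*$ simultaneously minimizes $f$ and $g$, so for any competing design $\mathbb T'$ we have $f(\mathbb T^*)\leq f(\mathbb T')$ and $0=g(\mathbb T^*)\leq g(\mathbb T')$; multiplying by the nonnegative weights and adding gives $\Phi(\mathbb T^*)\leq\Phi(\mathbb T')$, i.e.\ $\mathbb T^*$ is the minimax optimal design of \Cref{theorem.minimaxdesign}.

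First I would invoke \Cref{lemma.discussion}: any minimizer $\mathbb T^*$ of \Cref{eq:discussion} is characterized by a pair $(a^*,b^*)$, with every inner interval $t_l-t_{l-1}$ ($l=2,\dots,L$) equal to $b^*$ or $b^*+1$. Consequently $(p-t_l+t_{l-1})^+=0$ for all inner $l$, hence $g(\mathbb T^*)=0$, as soon as $b^*\geq p$. The proof therefore reduces to establishing $b^*\geq p$.

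To do so, I would first verify $\theta^*>0$. From the definition of $\gamma_1^*,\gamma_2^*$ and the sign computations at the end of the proof of \Cref{lemma.riskexpression} ($\alpha_1,\beta_1>0$ and $\alpha_2-2\alpha_1,\beta_2-2\beta_1>0$), both $\gamma_1^*$ and $\gamma_2^*$ are strictly positive, so $\theta^*=\gamma_2^*/\gamma_1^*>0$; the identity $q^{-2}-q^{-1}=q^{-1}(1-q)/q>0$ confirms this in the $N=1$ case of \Cref{eq:theta} as well. Next, reading off the closed forms in \Cref{lemma.L} (e.g.\ \Cref{eq:case1.1}), the cost contributed by a long stretch partitioned into inner intervals of common length $b$ is, up to boundary terms, $(T-2a)\{b+2p+(\theta^*+1)p^2/b\}$. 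This bracket is convex in $b$ and strictly decreasing on $(0,\,p\sqrt{\theta^*+1}\,]$, and $p\sqrt{\theta^*+1}>p$ because $\theta^*>0$; hence any integer $b<p\le p\sqrt{\theta^*+1}$ is strictly improved by $b=p$, forcing $b^*\geq p$. Combined with the previous paragraph this yields $g(\mathbb T^*)=0$ and completes the argument.

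The hard part will be making this last step rigorous at the level of the integer program rather than its continuous relaxation: the exact objective also carries the boundary contributions $2a^2$ or $a^2+(a+1)^2$ and the discrete period counts $\lfloor(T-2a)/b\rfloor$ and $\lceil(T-2a-1)/(b+1)\rceil$ appearing in \Cref{lemma.L}, and one must rule out that a short-interval configuration with $b^*\leq p-1$ ever undercuts the balanced one once these terms are included. I would handle this by comparing the exact values in \Cref{eq:case1.1}--\Cref{eq:case2.2} across the two regimes separated by the sign of $b(b+1)-(\theta^*+1)p^2$, showing that shrinking $b$ below $p$ strictly increases \Cref{eq:discussion} whenever $\theta^*>0$. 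The feasibility constraint $t_{l+1}-t_{l-1}\geq p$ from \Cref{lemma.tp} already forces $b^*\geq p/2$, so only the finite window $p/2\leq b^*\leq p-1$ remains to be excluded, which the convexity-plus-boundary comparison accomplishes, and this is precisely where I would need to be most careful.
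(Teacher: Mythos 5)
Your proposal is correct and follows essentially the same route as the paper's proof: decompose the objective in \Cref{eq:target} into the \Cref{eq:discussion} part weighted by $\gamma_1^*$ and the nonnegative term $\sum_{l=2}^{L}\{(p-t_l+t_{l-1})^+\}^2$ weighted by $\gamma_3^*$, and then show the minimizer of \Cref{eq:discussion} must have $b^*\geq p$ so that the second term is already at its minimum value of zero. The integer-program step you flag as the hard part is handled in the paper exactly as you propose, by comparing a continuous-relaxation lower bound of \Cref{eq:case1.2} for $b\in[p/2,p-1]$ against an upper bound at $b=p$, using the monotonicity of $b+(\theta^*+1)p^2/b$ on $(0,p\sqrt{\theta^*+1}\,]$.
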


\begin{proof}[Proof of \Cref{lemma.discussion2}]
    The minimax optimal design is just the design that minimizes \Cref{eq:target}. This equation consists of two terms: the first term is linked to \Cref{eq:discussion} and is minimized when \Cref{eq:discussion} is minimized, while the second term is given by:
    \begin{align*}
    \left(\sum_{l=2}^{L}\left[\{(p-t_l+t_{l-1})^+\}^2\right]\right)\gamma_3^*B^2.
    \end{align*}
    Since $\gamma_3^* > 0$, to demonstrate that the design $\mathbb{T}^*$ minimizing \Cref{eq:discussion} is indeed the minimax optimal design, we must show that it also minimizes 
    \begin{align}
    \label{eq:discussion2}
    \left(\sum_{l=2}^{L}\left[\{(p-t_l+t_{l-1})^+\}^2\right]\right).
    \end{align}
    Notably, the value of \Cref{eq:discussion2} reaches its minimum 0 when $b \geq p$. We will prove that the lower bound of \Cref{eq:discussion} when $b \in [p/2, p-1]$ is greater than the upper bound of that when $b = p$. Therefore, we can conclude that the value of $b^*$ in $\mathbb{T}^*$ must be greater than $p$, which implies that it will also minimize \Cref{eq:discussion2}.
    In Case 1, when $b\in[p/2,p-1]$, we have $-b(b + 1) + (\theta^* + 1)p^2 > 0$, which is the Case (1.2) stated in \Cref{lemma.L}, and \Cref{eq:case1.2} is expressed as follows:
    \begin{align*}
        2a^2 + \theta^*p^2 + (T - 2a)\{2p + 2b + 1\} + \left\lceil \frac{T - 2a}{b+1} \right\rceil \{-b(b + 1) + (\theta^* + 1)p^2\}.
    \end{align*}
    Based on the property $\lceil (T-2a)/(b+1) \rceil \in [(T-2a)/(b+1), (T-2a)/b]$ as stated in the proof of \Cref{lemma.L}, the lower bound of \Cref{eq:case1.2} can be established as:
    \begin{align}
    \label{eq:casebound}
        2a^2 + \theta^*p^2 + (T - 2a)\{2p + (b+1) + (\theta^* + 1)p^2 / (b+1)\}.
    \end{align}
    When $b = p$ and $-p(p+1)+(\theta^*+1)p^2> 0$, which is the Case (1.2) stated in \Cref{lemma.L}, the upper bound of \Cref{eq:case1.2} is
    \begin{align*}
        2a^2+\theta^*p^2+(T-2a)\{2p+p+(\theta^*+1)p\},
    \end{align*}
    and when $b=p$ and $-p(p+1)+(\theta^*+1)p^2\leq 0$, which is the Case (1.1) stated in \Cref{lemma.L}, the upper bound of \Cref{eq:case1.2} is
    \begin{align*}
        2a^2+\theta^*p^2+(T-2a)\{2p+p+(\theta^*+1)p\}.
    \end{align*}
    Both values are lower than or equal to \Cref{eq:casebound}, implying that $b^*\geq p$.
    
    The proof for Case 2 follows a similar structure to that of Case 1.
\end{proof}

Based on Lemmas \ref{lemma.discussion}--\ref{lemma.discussion2}, we can give a polynomial--time algorithm with a time complexity of $O(T^2)$ to detect the minimax optimal design $\mathbb T^*$. The algorithm is outlined as follows: 

\begin{algorithm}
\caption{Minimax optimal design detection}
\label{poly.algorithm}
\begin{algorithmic}[1]
    \State \textbf{Input:} $T$, $N$, $q_1$, $q_2$, $p$
    \State Calculate $\theta^*$ as shown in \Cref{eq:theta}.
    
    \For{each candidate pair $(a, b)$ such that $a \geq p + 1$ and $b \geq p/2$}
        \State Calculate the objective function in Case 1:
        \If{$\theta^* \leq b(b+1)/p^2-1$}
            \State Use \Cref{eq:case1.1}
        \Else
            \State Use \Cref{eq:case1.2}
        \EndIf
        
        \State Calculate the objective function in Case 2:
        \If{$\theta^* \leq b(b+1)/p^2-1$}
            \State Use \Cref{eq:case2.1}
        \Else
            \State Use \Cref{eq:case2.2}
        \EndIf
    \EndFor

    \State Identify the pair and case that yields the lowest value of the objective function.
    \State \textbf{Output:} A class of minimax optimal designs characterized by $(a^*, b^*)$ and Case 1 or 2.
\end{algorithmic}
\end{algorithm}


\section{Proof of Corollary~\ref{corollary.minimaxdesign}}
\label{sec:proof-lemma.optimal}

\begin{lemma}[Minimax optimal design with specific $T$ and $\theta^*$]
    \label{lemma.optimal}
    Under Assumptions \ref{assumption.nonanticipativity}--\ref{assumption.bounded} and $\R[q_1]=\R[q_2]=0.5$, we have
    (1) when $p=0$, the minimax optimal design is given by $\{1,2,3,\ldots,T\}$;
    and (2) when $p>0$, we have:

    (2.1) for an integer pair $(a^*,b^*)$ with $b^*\in[\{-1+\sqrt{1+4(\theta^*+1)p^2}\}/2,\{1+\sqrt{1+4(\theta^*+1)p^2}\}/2]$ and $a^*\in[\{2p+b^*+(\theta^*+1)p^2/b^*-1\}/2,\{2p+b^*+(\theta^*+1)p^2/b^*+1\}/2]$, if $T-2a^*$ is a multiple of $b^*$ with $(T-2a^*)/b^*=K-4\geq 0$, then the minimax optimal design is given by $\{1,a^*+1,a^*+b^*+1,a^*+2b^*+1,\ldots,a^*+(K-4)b^*+1\}$;

    (2.2) for an integer pair $(a^*,b^*)$ with $b^*\in[\{-1+\sqrt{1+4(\theta^*+1)p^2}\}/2,\{1+\sqrt{1+4(\theta^*+1)p^2}\}/2]$ and $a^*\in[\{2p+b^*+(\theta^*+1)p^2/b^*\}/2,\{2p+b^*+(\theta^*+1)p^2/b^*+2\}/2]$, if $T-2a^*-1$ is a multiple of $b^*$ with $(T-2a^*-1)/b^*=K-4\geq 0$, then the minimax optimal design is given by
    $\{1,a^*+1,a^*+b^*+1,a^*+2b^*+1,\ldots,a^*+(K-4)b^*+1\}$ or $\{1,a^*+2,a^*+b^*+2,a^*+2b^*+2,\ldots,a^*+(K-4)b^*+2\}$.
\end{lemma}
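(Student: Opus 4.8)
The plan is to reduce the statement to a two-variable integer optimization and then carry out that optimization by discrete convexity, using the three preparatory lemmas to do the heavy lifting. By \Cref{lemma.t1tl}, \Cref{lemma.tp} and \Cref{lemma.discussion}, any minimizer of \eqref{eq:discussion} has interior gaps taking only the two values $b$ and $b+1$ and boundary gaps taking only $a$ and $a+1$, where $a=\min\{t_1-t_0,t_{L+1}-t_L\}$ and $b=\min_{1\le l\le L-1}(t_{l+1}-t_l)$; and by \Cref{lemma.discussion2} this minimizer is exactly the minimax optimal design and satisfies $b^*\ge p$, so the residual contribution $(\sum_{l=2}^{L}\{(p-t_l+t_{l-1})^+\}^2)\gamma_3^*$ in \eqref{eq:target} vanishes. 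Hence it suffices to minimize \eqref{eq:discussion} over admissible integer pairs $(a,b)$ in the two configurations of \Cref{lemma.L} (equal boundaries versus boundaries differing by one).

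For part (1), when $p=0$ the quantity \eqref{eq:discussion} collapses to $\sum_{l=0}^{L}(t_{l+1}-t_l)^2$ subject to $\sum_{l=0}^L(t_{l+1}-t_l)=T$ with every gap at least $1$. Splitting any gap of length $\ge 2$ strictly lowers the sum of squares, so the unique minimizer is the all-ones partition $\mathbb T^1=\{1,2,\dots,T\}$, which by \Cref{lemma.discussion2} is the minimax optimal design.

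For part (2), I would first record the closed form of the objective in the two divisible regimes. Writing $C(b)=b+2p+(1+\theta^*)p^2/b$, when $T-2a$ is a multiple of $b$ (equal boundaries) the floor in \eqref{eq:case1.1} is exact and the expression telescopes to $2a^2+(T-2a)C(b)+\theta^* p^2$, while when $T-2a-1$ is a multiple of $b$ (boundaries $a,a+1$) \eqref{eq:case2.1} telescopes to $a^2+(a+1)^2+(T-2a-1)C(b)+\theta^* p^2$. In each regime the dependence on $a$ is a discrete convex parabola whose vertex ($C(b)/2$, resp.\ $(C(b)-1)/2$) confines the optimal integer $a^*$ to a window of width one, which is the range recorded in the statement; the dependence on $b$ is carried entirely by $C(b)$, which is convex with continuous minimizer $p\sqrt{1+\theta^*}$, so the integer optimum $b^*$ is characterized by $C(b^*)\le C(b^*\pm1)$, equivalent after clearing denominators to $b^*(b^*-1)\le(1+\theta^*)p^2\le b^*(b^*+1)$ — precisely $b^*\in[\{-1+\rho\}/2,\{1+\rho\}/2]$ with $\rho=\sqrt{1+4(\theta^*+1)p^2}$.

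To upgrade these vertex computations into a genuine global optimality statement, I would construct a continuous lower bound valid for arbitrary $(a,b)$: in \eqref{eq:case1.1}–\eqref{eq:case2.2} the only non-smooth piece is the floor $\lfloor(T-2a)/b\rfloor$ multiplying $(\theta^*+1)p^2-b(b+1)$, and bounding this floor by $(T-2a)/b$ or $(T-2a)/(b+1)$ according to the sign of $(\theta^*+1)p^2-b(b+1)$ (exactly the case split of \Cref{lemma.L}) yields $\text{objective}\ge 2a^2+(T-2a)C(b)+\theta^* p^2$ near the optimum, resp.\ the Case 2 analog. Minimizing this bound over real $(a,b)$ and then rounding reproduces the windows for $a^*$ and $b^*$, and the divisibility hypotheses are exactly what force the candidate designs $\{1,a^*+1,a^*+b^*+1,\dots\}$ (and, in Case 2, its mirror $\{1,a^*+2,\dots\}$) to attain the bound with equality; combined with \Cref{lemma.discussion2} this identifies them as minimax optimal. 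The hard part will be this last step: the floor couples $a$ and $b$, so one must check that rounding the two variables separately remains jointly optimal and that the sign of $(\theta^*+1)p^2-b(b+1)$ stays controlled in a neighborhood of $b^*$, so that the continuous relaxation is tight precisely at the divisible candidate rather than merely asymptotically.
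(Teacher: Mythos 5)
Your proposal is correct and follows essentially the same route as the paper's proof: reduce to the integer pair $(a,b)$ via the preparatory lemmas, lower-bound the floor/ceiling term by $(T-2a)/b$ or $(T-2a)/(b+1)$ so that the objective becomes $2a^2+\theta^*p^2+(T-2a)\{2p+b+(\theta^*+1)p^2/b\}$ (attained with equality under the divisibility hypothesis), and then locate $a^*$ and $b^*$ by discrete convexity. The coupling issue you flag as the hard part is resolved exactly as your separable bound suggests and as the paper remarks: the optimal $b^*$ minimizing $C(b)=b+2p+(\theta^*+1)p^2/b$ does not depend on $a$, so optimizing the two variables sequentially is legitimate here.
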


\begin{proof}[Proof of \Cref{lemma.optimal}]
    By \Cref{lemma.t1tl} and \Cref{lemma.tp}, we need to minimize the objective function under the constraints $t_1\geq p+2$, $t_L\leq T-p$, and $t_{l+1}-t_{l-1}\geq p$ for $l=1,\ldots,L-1$. 
    
    (1) When $p=0$, we notice that the coefficient of $\gamma_1^*$ is $\sum_{l=0}^{L}(t_{l+1}-t_l)^2$, which is minimized when $t_{l+1}-t_l=1$ for $l=0,\ldots,L$. Furthermore, the coefficients of $\gamma_3^*$ is 0. As a result, the minimax optimal design is $\{1,2,3,\ldots,T\}$.

    (2.1.1) When $p\neq 0$ and $\theta^*\leq b(b+1)/p^2-1$, \Cref{eq:case1.1}, the coefficient of $\gamma_1^*$ in Case 1 ($t_1-t_0=t_{L+1}-t_L$), is expressed as follows:
    \begin{align*}
        2a^2 + \theta^*p^2 + (T - 2a)\{2p + 2b + 1\} + \left\lfloor \frac{T - 2a}{b} \right\rfloor \{-b(b + 1) + (\theta^* + 1)p^2\}.
    \end{align*}
    If we want to prove that the objective function reaches the minimum under an integer pair $(a^*,b^*)$, an approach is to compare the objective function under this integer pair with the lower bounds of the objective function under other candidate integer pairs. Based on $\lfloor(T-2a)/b\rfloor\leq (T-2a)/b$, the lower bound of the objective function is
    \begin{align*}
        &2a^2 + \theta^*p^2 + (T - 2a)\{2p + 2b + 1\} + \frac{T - 2a}{b}\{-b(b + 1) + (\theta^* + 1)p^2\}\\
        =&2a^2+\theta^*p^2+(T-2a)\{2p+b+(\theta^*+1)p^2/b\}.
    \end{align*}
    If $T-2a$ is a multiple of $b$, the objective function is also $2a^2+\theta^*p^2+(T-2a)\{2p+b+(\theta^*+1)p^2/b\}$.
    For any integer $a$, an integer $b^*$ that minimizes $2p+b+(\theta^*+1)p^2/b$ must satisfy $2p+b^*+(\theta^*+1)p^2/b^*\leq 2p+(b^*-1)+(\theta^*+1)p^2/(b^*-1)$ and $2p+b^*+(\theta^*+1)p^2/b^*\leq 2p+(b^*+1)+(\theta^*+1)p^2/(b^*+1)$. It implies that the optimal $b^*$ should satisfy $b^*\in[\{-1+\sqrt{1+4(\theta^*+1)p^2}\}/2,\{1+\sqrt{1+4(\theta^*+1)p^2}\}/2]$.
    
    Additionally, consider the remaining term associated with $a$: $2a^2+\theta^* p^2+(T-2a)\{2p+b^*+(\theta^*+1)p^2/b^*\}$. Similarly, an integer $a^*$ that minimizes $2a^2+\theta^* p^2+(T-2a)\{2p+b^*+(\theta^*+1)p^2/b^*\}$ must satisfy $2(a^*)^2+(T-2a^*)\{2p+b^*+(\theta^*+1)p^2/b^*\}\leq 2(a^*+1)^2+(T-2a^*-1)\{2p+b^*+(\theta^*+1)p^2/b^*\}$ and $2(a^*)^2+(T-2a^*)\{2p+b^*+(\theta^*+1)p^2/b^*\}\leq 2(a^*-1)^2+(T-2a^*+2)\{2p+b^*+(\theta^*+1)p^2/b^*\}$. It implies that $a^*\in[\{2p+b^*+(\theta^*+1)p^2/b^*-1\}/2,\{2p+b^*+(\theta^*+1)p^2/b^*+1\}/2]$. 

    As a result, for an integer pair $(a^*,b^*)$ with $b^*\in[\{-1+\sqrt{1+4(\theta^*+1)p^2}\}/2,\{1+\sqrt{1+4(\theta^*+1)p^2}\}/2]$ and $a^*\in[\{2p+b^*+(\theta^*+1)p^2/b^*-1\}/2,\{2p+b^*+(\theta^*+1)p^2/b^*+1\}/2]$, if $T-2a^*$ is a multiple of $b^*$ with $(T-2a^*)/b^*=K-4\geq 0$, the minimax optimal design  is given by $\{1,a^*+1,(a^*+1)+b^*,\ldots,a^*+(K-4)b^*+1\}$.
    

    (2.1.2) When $p\neq 0$ and $\theta^*> b(b+1)/p^2-1$, \Cref{eq:case1.2}, the objective function in Case 1 ($t_1-t_0=t_{L+1}-t_L$), is expressed as follows:
    \begin{align*}
        2a^2+\theta^* p^2+(T-2a)\{2p+2b+1\}+\left\lceil \frac{T - 2a}{b+1} \right\rceil\{-b(b+1)+(\theta^*+1)p^2\}.
    \end{align*}
    We similarly consider its lower bound. 
    Based on $\lceil (T-2a)/(b+1)\rceil \geq (T-2a)/(b+1)$, the lower bound is
    \begin{align*}
        &2a^2+\theta^* p^2+(T-2a)\{2p+2b+1\}+\frac{T - 2a}{b+1}\{-b(b+1)+(\theta^*+1)p^2\}\\
        =&2a^2+\theta^*p^2+(T-2a)\{2p+b+1+(\theta^*+1)p^2/(b+1)\}.
    \end{align*}
    If $T-2a$ is a multiple of $b+1$, the objective function is also $2a^2+\theta^*p^2+(T-2a)\{2p+b+1+(\theta^*+1)p^2/(b+1)\}$. With a similar manner to that in (2.1.1), we can prove that for any $a$, the optimal $(b^*+1)\in\mathbb N$ satisfies $\{-1+\sqrt{1+4(\theta^*+1)p^2}\}/2\leq b^*+1\leq \{1+\sqrt{1+4(\theta^*+1)p^2}\}/2$. Replacing $b^*+1$ with $b^*$, we find that the result is identical to (2.1.1). Consequently, we can derive (2.1) in \Cref{lemma.optimal} by combining (2.1.1) and (2.1.2).
    
    (2.2) 
    The proof for Case 2 ($t_1-t_0\neq t_{L+1}-t_L$) follows a similar structure to that of (2.1.1) and (2.1.2).

    While sequential optimization is typically unfeasible in integer programming, this problem is unique in that the optimal $b$ remains constant for different values of $a$. Suppose there exists a pair $(a^*, b^*)$ that minimizes the objective function. If $b^*$ were to take on any other value $b'$, it would allow for a pair $(a^*, b^*)$ with a larger value of the objective function. The result for $a$ is similar: other values of $a$ would result in a larger objective function value.

\end{proof}

Based on \Cref{lemma.optimal}, we can prove \Cref{corollary.minimaxdesign}.

\begin{proof}[Proof of \Cref{corollary.minimaxdesign}]
    We only need to consider the two typical cases: $\theta^*\leq 1/p$ and $\theta^*\in (1/p,(3p+2)/p^2]$. 

    When $\theta^*\leq 1/p$, we have $a^*=2p$ and $b^*=p$ based on \Cref{lemma.optimal}. Therefore, when $T-4p$ is a multiple of $p$ with $(T-4p)/p=K-4\geq 0$, the minimax optimal design is $\{1,2p+1,3p+1,\ldots,(K-2)p+1\}$.
    
    When $\theta^*\leq (3p+2)/p^2$, we have $a^*=2p+1$ and $b^*=p+1$. Therefore, when $T-4p-2$ is a multiple of $p+1$ with $(T-4p-2)/(p+1)=K-4\geq 0$, the minimax optimal design is $\mathbb T^*_2=\{1,2p+2,3p+3,\ldots,(K-2)(p+1)\}$.
\end{proof}

\section{Proof of Theorem \ref{theorem.variance.sm}}
\label{sec:sm.theorem.variance.sm}

\begin{proof}[Proof of Theorem~\ref{theorem.variance.sm}]
    Under the minimax optimal design $\mathbb T^*$ with $(T-2\aop)/\bop=K-4\geq 0$, define
    \begin{align*}
    \tilde {\mathbf D}_{i,1}(q\mathbf 1,z\mathbf 1)&=\sum_{t=p+1}^{t_2-1}\mathbf D_{i,t}(q\mathbf 1,z\mathbf 1),\  
    \tilde {\mathbf D}_{i,k}(q\mathbf 1,z\mathbf 1)=\sum_{t=t_k}^{t_{k+1}-1}\mathbf D_{i,t}(q\mathbf 1,z\mathbf 1),\quad \text{for}\quad k=2,\ldots,K-2,
    \end{align*}
   Then,
    \begin{align*}
        \taudhat-\taud=\frac 1{T-p}\sum_{k=1}^{K-2}\tilde{\mathbf D}_k(q),\quad  \mathrm{var}\{\taudhat\}=\frac 1{(T-p)^2}\mathrm{var}\left\{\sum_{k=1}^{K-2}\tilde{\mathbf D}_k(q)\right\},\\
        \taushat-\taus=\frac 1{T-p}\sum_{k=1}^{K-2}\tilde{\mathbf S}_k(z),\quad  \mathrm{var}\{\taushat\}=\frac 1{(T-p)^2}\mathrm{var}\left\{\sum_{k=1}^{K-2}\tilde{\mathbf S}_k(z)\right\}.
    \end{align*}
    
    (1) By \Cref{lemma.properties}, we have $E\{\mathbf D_t(\mathbb T^*,q)\}=E\{\mathbf S_t(\mathbb T^*,z)\}=0$. Thus,
    $E\{\tilde {\mathbf D}_k(q)\}=0$ and $E\{\tilde {\mathbf S}_k(z)\}=0$.

    (2) We consider the expectation of $\{\tilde{\mathbf D}_k(q)\}^2$:
    when $k=1$,
    \begin{align*}
        E\{\tilde{\mathbf D}_k(q)\}^2&=E\left\{\sum_{t=p+1}^{t_2-1}\mathbf D_t(\mathbb T^*,q)\right\}^2\\
        &=\sum_{t=p+1}^{t_2-1}E\{\mathbf D_t(\mathbb T^*,q)\}^2+2\sum_{p+1\leq t<t'\leq t_2-1}E\{\mathbf D_t(\mathbb T^*,q)\mathbf D_{t'}(\mathbb T^*,q)\},
    \end{align*}
    and when $k\geq 2$,
    \begin{align*}
        E\{\tilde{\mathbf D}_k(q)\}^2&=E\left\{\sum_{t=t_k}^{t_{k+1}-1}\mathbf D_t(\mathbb T^*,q)\right\}^2\\
        &=\sum_{t=t_k}^{t_{k+1}-1}E\{\mathbf D_t(\mathbb T^*,q)\}^2+2\sum_{t_k\leq t<t'\leq t_{k+1}-1}E\{\mathbf D_t(\mathbb T^*,q)\mathbf D_{t'}(\mathbb T^*,q)\}.
    \end{align*}
    By \Cref{lemma.properties}, we have
    \begin{align*}
        &N^2E\left\{\mathbf D_t(\mathbb T^*,q)\right\}^2\\
        =&(2^{J_t}-1)\left\{\sum_{i=1}^N\yit{q}{}{1}-\sum_{i=1}^N\yit{q}{}{0}\right\}^2\\
        &+2^{J_t+1}\sum_{i=1}^N\left\{\yit{q}{}{1}\yit{q}{}{0}\right\}\\
        &+2^{J_t}(q^{-J_t}-1)\sum_{i=1}^N\left\{\yit{q}{}{1}\right\}^2+2^{J_t}(\bar q^{-J_t}-1)\sum_{i=1}^N\left\{\yit{q}{}{0}\right\}^2,
    \end{align*}
    and when $p+1\leq t<t'\leq t_2-1$ or $t_k\leq t<t'\leq t_{k+1}-1$, we have
    \begin{align*}
        &N^2E\left\{\mathbf D_t(\mathbb T^*,q)\mathbf D_{t'}(\mathbb T^*,q)\right\}\\
        =&(2^{J_{t,t'}^{\circ}}-1)\left\{\sum_{i=1}^N\yit{q}{}{1}-\sum_{i=1}^N\yit{q}{}{0}\right\}\left\{\sum_{i=1}^N\yitp{q}{}{1}-\sum_{i=1}^N\yitp{q}{}{0}\right\}\\
        &+2^{J_{t,t'}^{\circ}}\sum_{i=1}^N\left\{\yit{q}{}{1}\yitp{q}{}{0}\right\}+2^{J_{t,t'}^{\circ}}\sum_{i=1}^N\left\{\yit{q}{}{0}\yitp{q}{}{1}\right\}\\
        &+2^{J_{t,t'}^{\circ}}(q^{-J_{t,t'}^{\circ}}-1)\sum_{i=1}^N\left\{\yit{q}{}{1}\yitp{q}{}{1}\right\}\\
        &+2^{J_{t,t'}^{\circ}}(\bar q^{-J_{t,t'}^{\circ}}-1)\sum_{i=1}^N\left\{\yit{q}{}{0}\yitp{q}{}{0}\right\}.
    \end{align*}
    Considering $J_t=J_{t'}=2$ when $t_k\leq t<t'\leq t_k+p-1$ and $J_t=J_{t'}=J_{t,t'}^{\circ}=1$ otherwise, and  $\eta_1=1$ and $\eta_k=2$ otherwise under the minimax optimal design $\mathbb T^*$, we have
    \begin{align*}
        &N^2E\{\tilde {\mathbf D}_k^2(q)\}\\
        =&\{\yk{q}{}{1}\}^{\top}\{\mathbf J_N+2(q^{-1}-1)\mathbf I_N\}\{\yk{q}{}{1}\}\\
        &+\{\yk{q}{}{0}\}^{\top}\{\mathbf J_N+2(\bar q^{-1}-1)\mathbf I_N\}\{\yk{q}{}{0}\}\\
        &+\{\yk{q}{}{1}\}^{\top}\{-2\mathbf J_N+4\mathbf I_N\}\{\yk{q}{}{0}\}\\
        &+\{\ykc{q}{}{1}\}^{\top}\{(2^{\eta_k}-2)\mathbf J_N+(2^{\eta_k}q^{-\eta_k}-2q^{-1}-2^{\eta_k}+2)\mathbf I_N\}\{\ykc{q}{}{1}\}\\
        &+\{\ykc{q}{}{0}\}^{\top}\{(2^{\eta_k}-2)\mathbf J_N+(2^{\eta_k}\bar q^{-\eta_k}-2\bar q^{-1}-2^{\eta_k}+2)\mathbf I_N\}\{\ykc{q}{}{0}\}\\
        &+\{\ykc{q}{}{1}\}^{\top}\{-2(2^{\eta_k}-2)\mathbf J_N+2(2^{\eta_k}-2)\mathbf I_N\}\{\ykc{q}{}{0}\}.
    \end{align*}
    Then we know
    \begin{align*}
        N^2\sum_{k=1}^{K-1}E\{\tilde {\mathbf D}_k^2(q)\}&=A^d+B^d+C^d.
    \end{align*}
    
    (3) Considering $\{\tilde{\mathbf S}_k(z)\}^2$, we have: when $k=1$,
    \begin{align*}
        E\{\tilde{\mathbf S}_k(z)\}^2&=E\left\{\sum_{t=p+1}^{t_2-1}\mathbf S_t(\mathbb T^*,z)\right\}^2\\
        &=\sum_{t=p+1}^{t_2-1}E\{\mathbf S_t(\mathbb T^*,z)\}^2+2\sum_{p+1\leq t<t'\leq t_2-1}E\{\mathbf S_t(\mathbb T^*,z)\mathbf S_{t'}(\mathbb T^*,z)\},
    \end{align*}
    and when $k\geq 2$,
    \begin{align*}
        E\{\tilde{\mathbf S}_k(z)\}^2&=E\left\{\sum_{t=t_k}^{t_{k+1}-1}\mathbf S_t(\mathbb T^*,z)\right\}^2\\
        &=\sum_{t=t_k}^{t_{k+1}-1}E\{\mathbf S_t(\mathbb T^*,z)\}^2+2\sum_{t_k\leq t<t'\leq t_{k+1}-1}E\{\mathbf S_t(\mathbb T^*,z)\mathbf S_{t'}(\mathbb T^*,z)\}.
    \end{align*}
    By \Cref{lemma.properties}, we have
    \begin{align*}
        &N^2E\left\{\mathbf S_t(\mathbb T^*,z)\right\}^2\\
        =&(2^{J_t}-1)\left\{\sum_{i=1}^N\yit{q_1}{z}{1}\right\}^2+(2^{J_t}-1)\left\{\sum_{i=1}^N\yit{q_2}{z}{1}\right\}^2\\
        &+2\left\{\sum_{i=1}^N\yit{q_1}{z}{1}\right\}\left\{\sum_{i=1}^N\yit{q_2}{z}{1}\right\}\\
        &+2^{J_t}(q_{1,z}^{-J_t}-1)\sum_{i=1}^N\left\{\yit{q_1}{z}{1}\right\}^2+2^{J_t}(q_{2,z}^{-J_t}-1)\sum_{i=1}^N\left\{\yit{q_2}{z}{1}\right\}^2,
    \end{align*}
    and when $p+1\leq t<t'\leq t_2-1$ or $t_k\leq t<t'\leq t_{k+1}-1$, we have
    \begin{align*}
        &N^2E\left\{\mathbf S_t(\mathbb T^*,z)\mathbf S_{t'}(\mathbb T^*,z)\right\}\\
        =&(2^{J_{t,t'}^{\circ}}-1)\left\{\sum_{i=1}^N\yit{q_1}{z}{1}\right\}\left\{\sum_{i=1}^N\yitp{q_1}{z}{1}\right\}\\
        &+(2^{J_{t,t'}^{\circ}}-1)\left\{\sum_{i=1}^N\yit{q_2}{z}{1}\right\}\left\{\sum_{i=1}^N\yitp{q_2}{z}{1}\right\}\\
        &+\left\{\sum_{i=1}^N\yit{q_1}{z}{1}\right\}\left\{\sum_{i=1}^N\yitp{q_2}{z}{1}\right\}+\left\{\sum_{i=1}^N\yitp{q_1}{z}{1}\right\}\left\{\sum_{i=1}^N\yit{q_2}{z}{1}\right\}\\
        &+2^{J_{t,t'}^{\circ}}(q_{1,z}^{-J_{t,t'}^{\circ}}-1)\sum_{i=1}^N\left\{\yit{q_1}{z}{1}\yitp{q_1}{z}{1}\right\}\\
        &+2^{J_{t,t'}^{\circ}}(q_{2,z}^{-J_{t,t'}^{\circ}}-1)\sum_{i=1}^N\left\{\yit{q_2}{z}{1}\yitp{q_2}{z}{1}\right\}.
    \end{align*}
    Considering $J_t=J_{t'}=2$ when $t_k\leq t<t'\leq t_k+p-1$ and $J_t=J_{t'}=J_{t,t'}^{\circ}=1$ otherwise, and  $\eta_1=1$ and $\eta_k=2$ otherwise under the minimax optimal design $\mathbb T^*$, we have
    \begin{align*}
        &N^2E\{\tilde {\mathbf S}_k(z)\}^2\\
        =&\{\yk{q_1}{z}{1}\}^{\top}\{\mathbf J_N+2(q_{1,z}^{-1}-1)\mathbf I_N\}\{\yk{q_1}{z}{1}\}\\
        &+\{\yk{q_2}{z}{1}\}^{\top}\{\mathbf J_N+2(q_{2,z}^{-1}-1)\mathbf I_N\}\{\yk{q_2}{z}{1}\}\\
        &+\{\yk{q_1}{z}{1}\}^{\top}\{2\mathbf J_N\}\{\yk{q_2}{z}{1}\}\\
        &+\{\ykc{q_1}{z}{1}\}^{\top}\{(2^{\eta_k}-2)\mathbf J_N+(2^{\eta_k}q_{1,z}^{-\eta_k}-2q_{1,z}^{-1}-2^{\eta_k}+2)\mathbf I_N\}\{\ykc{q_1}{z}{1}\}\\
        &+\{\ykc{q_2}{z}{1}\}^{\top}\{(2^{\eta_k}-2)\mathbf J_N+(2^{\eta_k}q_{2,z}^{-\eta_k}-2q_{2,z}^{-1}-2^{\eta_k}+2)\mathbf I_N\}\{\ykc{q_2}{z}{1}\}.\\
    \end{align*}
    Considering $\eta_1=1$ and $\eta_k=2$ when $k\geq 2$, we know
    \begin{align*}
        N^2\sum_{k=1}^{K-1}E\{\tilde {\mathbf S}_k^2(z)\}&=A^s+B^s+C^s.
    \end{align*}

    (4) We consider the expectations of $\tilde{\mathbf D}_k(q)\tilde{\mathbf D}_{k+1}(q)$ and $\tilde{\mathbf S}_k(z)\tilde{\mathbf S}_{k+1}(z)$, where $k=1,\ldots,K-3$, as other cross-terms are all 0. 
    Considering that $J_{t,t'}^{\circ}=1$ under the minimax optimal design $\mathbb T^*$ when $p+1\leq t\leq t_{2}-1$, $t_{2}\leq t'\leq t_{3}-1$ or $t_k\leq t\leq t_{k+1}-1$, $t_{k+1}\leq t'\leq t_{k+2}-1$, we have
    \begin{align*}
        N^2E\{\tilde{\mathbf D}_k(q)\tilde {\mathbf D}_{k+1}(q)\}
        &=\{\yk{q}{}{1}\}^{\top}\{\mathbf J_N+2(q^{-1}-1)\mathbf I_N\}\{\ykcplus{q}{}{1}\}\\
        &\quad +\{\yk{q}{}{0}\}^{\top}\{\mathbf J_N+2(\bar q^{-1}-1)\mathbf I_N\}\{\ykcplus{q}{}{0}\}\\
        &\quad +\{\yk{q}{}{1}\}^{\top}\{-\mathbf J_N+2\mathbf I_N\}\{\ykcplus{q}{}{0}\}\\
        &\quad +\{\yk{q}{}{0}\}^{\top}\{-\mathbf J_N+2\mathbf I_N\}\{\ykcplus{q}{}{1}\},
    \end{align*}
    and
    \begin{align*}
        N^2E\{\tilde{\mathbf S}_k(z)\tilde {\mathbf S}_{k+1}(z)\}
        &=\{\yk{q_1}{z}{1}\}^{\top}\{\mathbf J_N+2(q_{1,z}^{-1}-1)\mathbf I_N\}\{\ykcplus{q_1}{z}{1}\}\\
        &\quad +\{\yk{q_2}{z}{1}\}^{\top}\{\mathbf J_N+2(q_{2,z}^{-1}-1)\mathbf I_N\}\{\ykcplus{q_2}{z}{1}\}\\
        &\quad +\{\yk{q_1}{z}{1}\}^{\top}\{\mathbf J_N\}\{\ykcplus{q_2}{z}{1}\}\\
        &\quad +\{\yk{q_2}{z}{1}\}^{\top}\{\mathbf J_N\}\{\ykcplus{q_1}{z}{1}\}.
    \end{align*}
    Then we know
    \begin{align*}
        2N^2\sum_{k=1}^{K-3}E\{\tilde{\mathbf D}_k(q)\tilde {\mathbf D}_{k+1}(q)\}&=D^d+E^d+F^d+G^d,\\
        2N^2\sum_{k=1}^{K-3}E\{\tilde{\mathbf S}_k(z)\tilde {\mathbf S}_{k+1}(z)\}&=D^s+E^s+F^s+G^s.
    \end{align*}

    The variance can be calculated by
    \begin{align*}
        N^2(T-p)^2\mathrm{var}\{\taudhat\}&=N^2\mathrm{var}\left\{\sum_{k=1}^{K-1}\tilde{\mathbf D}_k(q)\right\}\\
        &=N^2\sum_{k=1}^{K-2}E\{\tilde{\mathbf D}_k(q)\}^2+2N^2\sum_{k=1}^{K-3}E\{\tilde {\mathbf D}_k(q)\tilde{\mathbf D}_{k+1}(q)\},\\
        N^2(T-p)^2\mathrm{var}\{\taushat\}&=N^2\mathrm{var}\left\{\sum_{k=1}^{K-2}\tilde{\mathbf S}_k(z)\right\}\\
        &=N^2\sum_{k=1}^{K-2}E\{\tilde{\mathbf S}_k(z)\}^2+2N^2\sum_{k=1}^{K-3}E\{\tilde {\mathbf S}_k(z)\tilde{\mathbf S}_{k+1}(z)\}.
    \end{align*}
\end{proof}


\section{Proof of \Cref{corollary.upperbound}}
\label{sec:sm.corollary.upperbound}

\begin{proof}[Proof of \Cref{corollary.upperbound}]
    To estimate $E\{\tilde {\mathbf D}_k(q)\}^2$, we notice that all terms can be unbiasedly estimated except the terms with $\yik{q}{}{1}\yik{q}{}{0}$ (in $C^d$), $\yikc{q}{}{1}\yikc{q}{}{0}$ (in $C^d$), $\yik{q}{}{1}\yikcplus{q}{}{0}$ (in $F^d$) and $\yik{q}{}{0}\yikcplus{q}{}{1}$ (in $G^d$). To solve this issue, we conservatively estimate them with $2^{-1}[\{\yik{q}{}{1}\}^2+\{\yik{q}{}{0}\}^2]$, $2^{-1}[\{\yikc{q}{}{1}\}^2+\{\yikc{q}{}{0}\}^2]$, $2^{-1}[\{\yik{q}{}{1}\}^2+\{\yikcplus{q}{}{0}\}^2]$, and $2^{-1}[\{\yik{q}{}{0}\}^2+\{\yikcplus{q}{}{1}\}^2]$, respectively, according to Cauchy--Schwarz inequality. Then an upper bound for $C^d$ is
    \begin{align*}
        \bar C^d&= \sum_{k=1}^{K-2}\{\yk{q}{}{1}\}^{\top}\{-2(\mathbf J_N-\mathbf I_N)\}\{\yk{q}{}{0}\}\\
        &\quad +\sum_{k=2}^{K-2}\{\ykc{q}{}{1}\}^{\top}\{-4(\mathbf J_N-\mathbf I_N)\}\{\ykc{q}{}{0}\}\\
        &\quad +\sum_{k=1}^{K-2}\{\yk{q}{}{1}\}^{\top}\{\mathbf I_N\}\{\yk{q}{}{1}\}+\sum_{k=1}^{K-2}\{\yk{q}{}{0}\}^{\top}\{\mathbf I_N\}\{\yk{q}{}{0}\},
    \end{align*}
    where $\mathbf J_N$ denotes the $N\times N$ matrix of ones and $\mathbf I_N$ denotes the $N\times N$ identity matrix.
    Similarly, upper bounds for $F^d$ and $G^d$ are
    \begin{align*}
     \bar F^d&= 2\sum_{k=1}^{K-3}\{\yk{q}{}{1}\}^{\top}\{-(\mathbf J_N-\mathbf I_N)\}\{\ykcplus{q}{}{0}\}\\
    &\quad +\sum_{k=1}^{K-3}\{\yk{q}{}{1}\}^{\top}\{\mathbf I_N\}\{\yk{q}{}{1}\}+\sum_{k=2}^{K-2}\{\ykc{q}{}{0}\}^{\top}\{\mathbf I_N\}\{\ykc{q}{}{0}\},\\
    \bar G^d&= 2\sum_{k=1}^{K-3}\{\yk{q}{}{0}\}^{\top}\{-(\mathbf J_N-\mathbf I_N)\}\{\ykcplus{q}{}{1}\}\\
    &\quad +\sum_{k=2}^{K-2}\{\ykc{q}{}{1}\}^{\top}\{\mathbf I_N\}\{\ykc{q}{}{1}\}+\sum_{k=1}^{K-3}\{\yk{q}{}{0}\}^{\top}\{\mathbf I_N\}\{\yk{q}{}{0}\}.
    \end{align*}

    Similar to direct effects, we can conservatively estimate the cross terms by Cauchy--Schwarz inequality. An upper bound for $C^s$ is
    \begin{align*}
        \bar C^s&= \sum_{k=1}^{K-2}\{\yk{q_1}{z}{1}\}^{\top}\{\mathbf J_N\}\{\yk{q_1}{z}{1}\}+\sum_{k=1}^{K-2}\{\yk{q_2}{z}{1}\}^{\top}\{\mathbf J_N\}\{\yk{q_2}{z}{1}\}.
    \end{align*}
    Upper bounds for $F^s$ and $G^s$ are
    \begin{align*}
        \bar F^s&=\sum_{k=1}^{K-3}\{\yk{q_1}{z}{1}\}^{\top}\{\mathbf J_N\}\{\yk{q_1}{z}{1}\}+\sum_{k=2}^{K-2}\{\ykc{q_2}{z}{1}\}^{\top}\{\mathbf J_N\}\{\ykc{q_2}{z}{1}\},\\
        \bar G^s&=\sum_{k=2}^{K-2}\{\ykc{q_1}{z}{1}\}^{\top}\{\mathbf J_N\}\{\ykc{q_1}{z}{1}\}+\sum_{k=1}^{K-3}\{\yk{q_2}{z}{1}\}^{\top}\{\mathbf J_N\}\{\yk{q_2}{z}{1}\}.
    \end{align*}

\end{proof}

\section{Proof of \Cref{corollary.estimator}}
\label{sec:sm.corollary.estimator}

To prove that the estimators are conservative, we first introduce  a lemma established by \citet{delevoye2020consistency}. Suppose that the quantity of interest is $\mu=N^{-1}\sum_{i=1}^Ny_i$ and its Horvitz--Thompson estimator is $\hat \mu=N^{-1}\sum_{i=1}^N\nu_i\omega_iy_i$, where $\nu_i$ is the indicator function taking the value 1 when the unit $i$ is in the sample and 0 otherwise, and $\omega_i=1/\pi_i$ with $\pi_i=E(\nu_i)$ being the inclusion probability that the unit $i$ is in the sample. The Horvitz--Thompson estimator is unbiased because $E(\nu_i \omega_i)=1$. Let $\bar \pi=N^{-1}\sum_{i=1}^N\pi_i$ denote the average inclusion probability, $\tilde \pi_i=\pi_i/\bar \pi$ denote the normalized version of $\pi_i$, and $\tilde \omega_i=\bar \pi \omega_i=1/\tilde \pi_i$ denote the normalized version of $\omega_i$.

\begin{assumption}[Condition 1 of \citet{delevoye2020consistency}]
\label{assumption.sm}
    There exist $\rho_y>2$ and $\rho_{\omega}>1$ with $\rho_y\rho_{\omega}\geq \rho_y+2\rho_{\omega}$ such that
    \begin{align*}
        \left\{\frac 1N\sum_{i=1}^N|y_i|^{\rho_y}\right\}^{1/\rho_y}\leq \kappa_y,\quad \left\{\frac 1N\sum_{i=1}^N\tilde \omega_i^{\rho_{\omega}}\right\}^{1/\rho_{\omega}}\leq \kappa_{\omega}.
    \end{align*}
\end{assumption}

\begin{lemma}[Lemma 1 of \citet{delevoye2020consistency}]
    \label{lemma.HT}
    Let $\pi_{i\mid j}$ be the inclusion probability of unit $i$ conditional on that unit $j$ is sampled and $\tilde \pi_{i\mid j}=\pi_{i\mid j}/\bar \pi$ be the normalized version of $\pi_{i\mid j}$. Define 
    \begin{align*}
        H(h)=\left\{\frac 1{N^2}\sum_{i=1}^N\sum_{j\neq i}|\tilde \pi_{i\mid j}-\tilde \pi_i|^{1/h}\right\}^h.
    \end{align*}
    Under \Cref{assumption.sm}, $\mathrm{var}(\hat \mu)\leq \kappa_y^2\kappa_{\omega}/\bar \pi N+\kappa_y^2\kappa_{\omega} H(1-1/\rho_y-1/\rho_{\omega})$.
\end{lemma}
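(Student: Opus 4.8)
The plan is to prove the bound by expanding $\mathrm{var}(\hat\mu)$ into its diagonal and off-diagonal contributions and then controlling each by a Hölder inequality whose exponents are exactly those licensed by \Cref{assumption.sm}. Since $E(\nu_i\omega_i)=1$, I would first write
\[
\mathrm{var}(\hat\mu)=\frac{1}{N^2}\sum_{i=1}^N\sum_{j=1}^N \omega_i\omega_j\, y_i y_j\,\mathrm{cov}(\nu_i,\nu_j),
\]
and split off the diagonal. On the diagonal, $\mathrm{cov}(\nu_i,\nu_i)=\pi_i(1-\pi_i)$, so the $i=j$ block equals $N^{-2}\sum_i \omega_i^2 y_i^2\pi_i(1-\pi_i)\le N^{-2}\sum_i y_i^2\omega_i=(\bar\pi N)^{-1}\,N^{-1}\sum_i y_i^2\tilde\omega_i$, using $\omega_i=\tilde\omega_i/\bar\pi$ and $1-\pi_i\le 1$.

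I would then apply Hölder to $N^{-1}\sum_i y_i^2\tilde\omega_i$ with the conjugate pair $\rho_y/2$ and $\rho_y/(\rho_y-2)$ (both exceeding $1$ since $\rho_y>2$), bounding the $y$-factor by $(N^{-1}\sum_i|y_i|^{\rho_y})^{2/\rho_y}\le\kappa_y^2$. The $\tilde\omega$-factor carries exponent $\rho_y/(\rho_y-2)$, which is at most $\rho_\omega$ precisely when $\rho_y\rho_\omega\ge\rho_y+2\rho_\omega$; invoking monotonicity of $L^r$ norms under the probability measure placing mass $N^{-1}$ on each unit, I raise this to the larger exponent $\rho_\omega$ and bound it by $\kappa_\omega$. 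This delivers the first term $(\bar\pi N)^{-1}\kappa_y^2\kappa_\omega$.

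For the off-diagonal part I would rewrite $\mathrm{cov}(\nu_i,\nu_j)=\pi_j(\pi_{i\mid j}-\pi_i)$ for $i\ne j$, so that after cancelling $\omega_j\pi_j=1$ the summand collapses to $y_iy_j\,\omega_i(\pi_{i\mid j}-\pi_i)=y_iy_j\,\tilde\omega_i(\tilde\pi_{i\mid j}-\tilde\pi_i)$. Taking absolute values, I apply a three-way generalized Hölder inequality over the pair index set to the factors $|y_iy_j|$, $\tilde\omega_i$, and $|\tilde\pi_{i\mid j}-\tilde\pi_i|$, with exponents $\rho_y$, $\rho_\omega$, and $1/h$ where $1/\rho_y+1/\rho_\omega+h=1$, i.e. $h=1-1/\rho_y-1/\rho_\omega$. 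The crucial point is to keep $|y_iy_j|$ as a \emph{single} factor: the double sum $N^{-2}\sum_{i\ne j}|y_iy_j|^{\rho_y}\le(N^{-1}\sum_i|y_i|^{\rho_y})^2$ nearly factorizes, so its $\rho_y$-th root is bounded by $\kappa_y^2$ at the exponent-cost of only $1/\rho_y$; the $\tilde\omega_i$ factor gives $\kappa_\omega$ after absorbing the harmless $N-1\le N$ from summing out the redundant index, and the last factor reproduces exactly $H(1-1/\rho_y-1/\rho_\omega)$. Note that $\rho_y\rho_\omega\ge\rho_y+2\rho_\omega$ implies $1/\rho_y+1/\rho_\omega\le 1-1/\rho_y<1$, so $h>0$ and $H(h)$ is well defined.

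The main obstacle is precisely this off-diagonal exponent bookkeeping. The tempting four-way split that charges each of $|y_i|$ and $|y_j|$ the full exponent $\rho_y$ forces $h=1-2/\rho_y-1/\rho_\omega$ and overshoots the target; the correct constant $1-1/\rho_y-1/\rho_\omega$ emerges only when one recognizes that the near-factorization of the double sum lets a single paired factor $|y_iy_j|$ produce the $\kappa_y^2$ at half the exponent cost. Getting this grouping right, together with carefully routing the $L^r$-monotonicity adjustments so that the two pieces assemble into $\kappa_y^2\kappa_\omega/(\bar\pi N)+\kappa_y^2\kappa_\omega H(1-1/\rho_y-1/\rho_\omega)$, is where I would concentrate the effort; the remaining manipulations are routine.
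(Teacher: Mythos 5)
Your proof is correct. Note that the paper does not prove this lemma at all --- it is imported verbatim as Lemma~1 of Delevoye and S\"avje (2020) and used as a black box in the proof of the conservative-variance result --- so there is no in-paper argument to compare against; your derivation is a valid self-contained reconstruction of the standard argument. The two delicate points are handled properly: (i) on the diagonal, the H\"older pair $(\rho_y/2,\,\rho_y/(\rho_y-2))$ together with $L^r$-monotonicity requires exactly $\rho_y/(\rho_y-2)\le\rho_\omega$, which is equivalent to the assumed $\rho_y\rho_\omega\ge\rho_y+2\rho_\omega$; and (ii) off the diagonal, treating $|y_iy_j|$ as a single factor at exponent $\rho_y$ (using the near-factorization $N^{-2}\sum_{i\ne j}|y_iy_j|^{\rho_y}\le(N^{-1}\sum_i|y_i|^{\rho_y})^2$) is what yields $h=1-1/\rho_y-1/\rho_\omega$ rather than the weaker $1-2/\rho_y-1/\rho_\omega$, and your verification that $h\ge 1/\rho_y>0$ and $1/h\ge 1$ makes the three-way H\"older legitimate on the sub-probability measure $N^{-2}$ over ordered pairs.
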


\begin{proof}[Proof of \Cref{corollary.estimator}]
    The conservative variance estimators $\hat A^*,\hat B^*,\hat{\bar C}^*,\hat D^*,\hat E^*,\hat {\bar F}^*,\hat{\bar G}^*$ ($*=d,s$) presented in \Cref{corollary.estimator} are, in fact, the Horvitz--Thompson estimators of the upper bounds provided in \Cref{corollary.upperbound}. Since all of the Horvitz--Thompson estimators are unbiased, it follows that the variance estimators in \Cref{corollary.estimator} are unbiased for the upper bounds of the variances. To prove the estimators are conservative, we are going to additionally prove that the estimators are consistent. Recall that
    \begin{align*}
        \hat A^d&= \sum_{k=1}^{K-2}\{\ykc{q}{}{1}\circ \Ikc{q}{}{1}\}^{\top}\left\{\frac{3(\mathbf J_N-\mathbf I_N)}{2^{-\eta_k}q^{2\eta_k}}+\frac{(4q^{-2}-1)\mathbf I_N}{2^{-\eta_k}q^{\eta_k}}\right\}\{\ykc{q}{}{1}\circ \Ikc{q}{}{1}\}\\
        & + \sum_{k=1}^{K-2}\{\ykd{q}{}{1}\circ \Ik{q}{}{1}\}^{\top}\left\{\frac{2(\mathbf J_N-\mathbf I_N)}{2^{-\eta_k}q^{\eta_k+1}}+\frac{(4q^{-1}-2)\mathbf I_N}{2^{-\eta_k}q^{\eta_k}}\right\}\{\ykc{q}{}{1}\circ \Ikc{q}{}{1}\} \\
        & + \sum_{k=1}^{K-2}\{\ykd{q}{}{1}\circ \Ik{q}{}{1}\}^{\top}\left\{\frac{\mathbf J_N-\mathbf I_N}{2^{-1}q^{2}}+\frac{(2q^{-1}-1)\mathbf I_N}{2^{-1}q}\right\}\{\ykd{q}{}{1}\circ \Ik{q}{}{1}\},
    \end{align*}
    where $\circ$ denotes the Hadamard product. Both $\hat A^d$ and its expectation $A^d$ scale as $O\{N^2(T-p)\}$; the terms with $\mathbf J_N-\mathbf I_N$ scale as $O\{N^2(T-p)\}$ and the terms with $\mathbf I_N$ scale as $O\{N(T-p)\}$. To prove that the variance estimators are conservative, we will show that $\{N^2(T-p)\}^{-1}\hat A^d$ converges in probability to $\{N^2(T-p)\}^{-1}A^d$. We consider 
    \begin{align*}
        \sum_{k=1}^{K-2}\{\ykc{q}{}{1}\circ \Ikc{q}{}{1}\}^{\top}\left\{\frac{3(\mathbf J_N-\mathbf I_N)}{2^{-\eta_k}q^{2\eta_k}}\right\}\{\ykc{q}{}{1}\circ \Ikc{q}{}{1}\}
    \end{align*}
    in $\hat A^d$. It can be further rewritten as 
    \begin{align*}
        3\sum_{k=1}^{K-2}\sum_{i\neq j}\frac{\Iikc{q}{}{1}\Ijkc{q}{}{1}}{2^{-\eta_k}q^{2\eta_k}}\yikc{q}{}{1}\yjkc{q}{}{1},
    \end{align*}
    where $\Iikc{q}{}{1}$ is defined in \Cref{corollary.estimator} and $\yik{q}{}{1}$ is defined in the first paragraph in \Cref{sec:sm.variance}.
    We further notice that the following term is a Horvitz--Thompson estimator:
    \begin{align*}
        \frac 3{(K-2)N(N-1)}\sum_{k=1}^{K-2}\sum_{i\neq j}\frac{\Iikc{q}{}{1}\Ijkc{q}{}{1}}{2^{-\eta_k}q^{2\eta_k}}\yikc{q}{}{1}\yjkc{q}{}{1}.
    \end{align*}
    In our setting, a sample can be expressed with $(ijk)$ with $i\neq j$. For example, the outcome is $y_{ijk}=3\yikc{q}{}{1}\yjkc{q}{}{1}$ and the indicator function is $\nu_{ijk}=\Iikc{q}{}{1}\Ijkc{q}{}{1}$. The total population size is then $(K-2)N(N-1)$. The inclusion probability is $\pi_{ijk}=2^{-\eta_k}q^{2\eta_k}=O(1)$, the average inclusion probability is $\bar \pi=\{(K-2)N(N-1)\}^{-1}\sum_{k=1}^{K-2}\sum_{i\neq j}\pi_{ijk}=O(1)$ and the normalized version of $\pi_{ijk}$ is $\tilde \pi_{ijk}=\pi_{ijk}/\bar \pi=O(1)$. Letting both $\rho_y$ and $\rho_{\omega}$ in \Cref{lemma.HT} tend to infinity, we have $\kappa_y=\max\{|y_{ijk}|\}\leq 3p^2B^2$ and  $\kappa_{\omega}=\max\{1/\tilde \pi_{ijk}\}=O(1)$. As a result, we can let both $\rho_y$ and $\rho_{\omega}$ tend to infinity. When $|k-k'|\geq 3$, the conditional inclusion probability $\pi_{ijk\mid i'j'k'}=\pi_{ijk}$ because $\Iikc{q}{}{1}$ and $\check I_{j,k'}(q\mathbf 1,\mathbf 1)$ are independent when $|k-k'|\geq 3$. When $|k-k'|\leq 2$, we have $\pi_{ijk\mid i'j'k'}\asymp \pi_{ijk}$, where $a\asymp b$ means that $a/b=O(1)$ and $b/a=O(1)$, resulting in $|\pi_{i'j'k'\mid ijk}-\pi_{ijk}|\asymp \pi_{ijk}$. Then, we have
    \begin{align*}
        H(1)&=\frac 1{(K-2)^2N^2(N-1)^2}\sum_{(ijk)}\sum_{(i'j'k')\neq (ijk)}|\tilde \pi_{ijk\mid i'j'k'}-\tilde \pi_{ijk}|\\
        &\asymp \frac 1{(K-2)^2N^2(N-1)^2}\sum_{k=2}^{K-1}\sum_{i\neq j}N^2|\tilde \pi_{ijk}|=O\left(\frac 1{K-2}\right),
    \end{align*}
    where $H(\cdot)$ is defined in \Cref{lemma.HT}.
    By \Cref{lemma.HT}, the variance of 
    \begin{align*}
        \frac 3{(K-2)N(N-1)}\sum_{k=1}^{K-1}\sum_{i\neq j}\frac{\Iikc{q}{}{1}\Ijkc{q}{}{1}}{2^{-\eta_k}q^{2\eta_k}}\yikc{q}{}{1}\yjkc{q}{}{1}
    \end{align*}
    is no greater than 
    \begin{align*}
        \frac{\kappa_y^2\kappa_{\omega}}{\bar \pi (K-2)N(N-1)}+\kappa_y^2\kappa_{\omega}H(1)=O\left(\frac 1{K-2}\right)=O(T^{-1}).
    \end{align*}
    It tends to 0 as $T$ approaches infinity. The $L^2$ convergence implies that this term converges in probability.
    Consistency for $\hat A^d$ can be established by considering other terms in $\hat A^d$, which are all Horvitz--Thompson estimators by similar arguments.
    Therefore, 
    \begin{align*}
        \frac{1}{(T-p)N^2}(\hat A^d-A^d)=o_P(1).
    \end{align*}
    The proofs for other quantities are similar. Consequently, the variance estimators are consistent for the upper bounds of the variances, leading us to conclude that they are conservative.
    
\end{proof}




\section{Proof of Theorem~\ref{theorem.CLT}}
\label{sec:sm.theorem.CLT}

\begin{definition}[$\phi$-dependent random variables, \citet{hoeffding1948}]
    For any sequence $\{X_1,X_2,\ldots\}$, if there exists $\phi$ such that for any $t$ and $s-r>\phi$, the two sets
    \begin{align*}
        (X_t,X_{t+1},\ldots,X_{t+r}),\quad (X_{t+s},X_{t+s+1},\ldots,X_{t+s+r})
    \end{align*}
    are independent, then the sequence is said to be $\phi$-dependent.
\end{definition}

\begin{lemma}[Theorem 2.1 of \citet{romano2000}; Lemma 5 of \citet{han2024}]
    \label{lemma.CLT}
    Let $\{X_{n,i}\}$ be a triangular array of zero-mean random variables and $\phi\in\mathbb N$. For each $n=1,2,\ldots$, let $r=r_n$, and suppose that $X_{n,1},X_{n,2},\ldots,X_{n,r}$ is a $\phi$-dependent sequence of random variables. Define
    \begin{align*}
        C_{n,e_1,e_0}=\mathrm{var}\left(\sum_{i=e_0}^{e_0+e_1-1}X_{n,i}\right),\quad C_n=C_{n,r,1}=\mathrm{var}\left(\sum_{i=1}^rX_{n,i}\right).
    \end{align*}
    For some $\delta>0$ and $-1\leq \gamma\leq 1$ and $\bar \phi>2\phi$, if the following conditions hold,

    1. $E|X_{n,i}|^{2+\delta}\leq \Delta_n$, for all $i$,

    2. $C_{n,e_1,e_0}/e_1^{1+\gamma}\leq K_n$, for all $e_0$ and $e_1\geq \phi$,

    3. $C_n/(r\phi^{\gamma})\geq L_n$,

    4. $\phi/\bar \phi\rightarrow 0$, $(K_n/L_n)\cdot(\phi/\bar \phi)\rightarrow 0$, and $(K_n/L_n)\cdot(\phi/\bar \phi)^{(1-\gamma)/2}\rightarrow 0$,

    5. $\Delta_n L_n^{-(2+\delta) / 2} \bar \phi^{\delta / 2+(1-\gamma)(2+\delta) / 2} r^{-\delta / 2}(\phi/\bar \phi)^{(1-\gamma)(2+\delta) / 2} \rightarrow 0$,
    
\noindent then 
    \begin{align*}
        \frac{\sum_{i=1}^rX_{n,i}}{C_n}\stackrel{d}{\rightarrow}\mathcal N(0,1).
    \end{align*}
\end{lemma}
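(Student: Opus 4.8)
The plan is to follow the classical Bernstein ``big-block/small-block'' decomposition, which is the standard device for reducing a $\phi$-dependent sum to an asymptotically independent one; this is exactly the route of \citet{romano2000}, adapted as in \citet{han2024}. Fix $n$ and abbreviate $S_n=\sum_{i=1}^r X_{n,i}$; since $C_n$ is its variance, the goal is to show that the standardized sum $S_n/\sqrt{C_n}$ converges to $\mathcal N(0,1)$. Partition $\{1,\dots,r\}$ into consecutive alternating blocks: big blocks of length $\bar\phi$ and small blocks of length $\phi$, giving $k\asymp r/(\bar\phi+\phi)\asymp r/\bar\phi$ big blocks with sums $U_1,\dots,U_k$ and intervening small blocks with sums $V_1,\dots,V_k$, plus a boundary remainder $R$ of length at most $\bar\phi+\phi$, so that $S_n=\sum_{j}U_j+\sum_{j}V_j+R$. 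The key structural observation is that, because consecutive big blocks are separated by a gap of width $\bar\phi>\phi$, and any two small blocks are separated by a big block of width $\bar\phi>\phi$, the $\phi$-dependence forces $\{U_j\}_j$ to be mutually independent and likewise $\{V_j\}_j$ to be mutually independent.

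I would first dispose of the small blocks and the remainder in $L^2$. By condition~2 applied with $e_1=\phi$ we have $\mathrm{var}(V_j)\le K_n\phi^{1+\gamma}$, and by their independence $\mathrm{var}(\sum_j V_j)=\sum_j\mathrm{var}(V_j)\le kK_n\phi^{1+\gamma}$. Dividing by the lower bound $C_n\ge L_n r\phi^{\gamma}$ of condition~3 and using $k\asymp r/\bar\phi$ collapses the ratio to a constant multiple of $(K_n/L_n)(\phi/\bar\phi)$, which tends to $0$ by condition~4; the remainder $R$ is controlled identically. A triangle inequality for standard deviations then yields $\mathrm{var}(\sum_j U_j)/C_n\to1$, so by Slutsky it suffices to prove a central limit theorem for the independent array $\{U_j\}_{j=1}^k$ standardized by its own standard deviation.

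For the independent big-block sums I would verify the Lyapunov condition with exponent $2+\delta$, namely $\sum_{j=1}^k E|U_j|^{2+\delta}\big/\bigl(\mathrm{var}\sum_j U_j\bigr)^{(2+\delta)/2}\to0$. The crucial input is a sharp moment bound on a single big-block sum: writing $U_j$ as the sum of the $\phi+1$ interleaved subsequences formed by positions congruent modulo $\phi+1$, each subsequence is a sum of $\asymp\bar\phi/\phi$ genuinely independent mean-zero terms, so Rosenthal's inequality together with condition~1 (and $\sigma_i^2\le\Delta_n^{2/(2+\delta)}$ by Lyapunov's inequality) bounds its $(2+\delta)$-th moment; summing the $\phi+1$ pieces by Minkowski's inequality gives $E|U_j|^{2+\delta}\lesssim(\phi\bar\phi)^{(2+\delta)/2}\Delta_n$. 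Substituting this, $k\asymp r/\bar\phi$, and $C_n\ge L_n r\phi^{\gamma}$ into the Lyapunov ratio reduces it, after cancellation of the exponents of $\phi$, $\bar\phi$, and $r$, to exactly the quantity in condition~5, which tends to $0$. The Lyapunov central limit theorem then gives $\sum_j U_j/\sqrt{\mathrm{var}\sum_j U_j}\stackrel{d}{\rightarrow}\mathcal N(0,1)$, and combining with the preceding paragraph completes the argument.

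The main obstacle is the calibrated bookkeeping of the two block lengths. The small blocks must be wide enough ($\ge\phi$) to decouple the big blocks, yet their aggregate variance must be negligible relative to $C_n$; the big blocks must be long enough that the Lyapunov ratio vanishes, yet the moment estimate for $E|U_j|^{2+\delta}$ must be sharp rather than crude, since a naive bound such as $E|U_j|^{2+\delta}\le\bar\phi^{2+\delta}\Delta_n$ produces the wrong power of $\bar\phi$ and fails to close. The delicate step is therefore the interleaving-plus-Rosenthal moment inequality for $\phi$-dependent block sums, whose exponents must match condition~5 precisely; tracking these exponents through the ratio, and confirming that conditions~4 and~5 are exactly the scalings that force the small-block remainder and the Lyapunov ratio to vanish, is where the real work lies.
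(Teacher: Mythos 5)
The paper does not prove this lemma: it is imported verbatim as Theorem 2.1 of \citet{romano2000} (restated as Lemma 5 of \citet{han2024}), so there is no in-paper proof to compare against, and your big-block/small-block argument is precisely the classical blocking proof underlying that citation. Your bookkeeping checks out: the small-block and boundary contributions are negligible at rate $(K_n/L_n)(\phi/\bar\phi)$ by conditions 2--4, and the interleaving-plus-Rosenthal bound $E|U_j|^{2+\delta}\lesssim(\phi\bar\phi)^{(2+\delta)/2}\Delta_n$ turns the Lyapunov ratio, after substituting $k\asymp r/\bar\phi$ and $C_n\ge L_n r\phi^{\gamma}$, into exactly the quantity in condition 5, as you claim. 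Two small slips, neither fatal: the big blocks are independent because they are separated by small blocks of width $\phi$ (a gap of $\phi$ intervening indices makes the index offset exceed $\phi$, which is all the $\phi$-dependence definition requires), not by gaps of width $\bar\phi$ as you wrote; and your normalization by $\sqrt{C_n}$ is the correct one --- the displayed conclusion in the lemma, which divides by the variance $C_n$ rather than its square root, is a typo in the statement.
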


\begin{proof}[Proof of Theorem~\ref{theorem.CLT}]
    In the proof, let $n=NT$, $r_n=N(T-p)$, $X_{i,t}^d(q)=\{N(T-p)\}^{-1/2}\mathbf D_{i,t}(\mathbb T^*,q)$ and $X_{i,t}^s(z)=\{N(T-p)\}^{-1/2}\mathbf S_{i,t}(\mathbb T^*,z)$, then the following two sequences are both $2\bop N$-dependent, for $E\{\mathbf D_{i,t}(\mathbb T^*,q)\mathbf D_{i,t+2\bop}(\mathbb T^*,q)\}=0$ and $E\{\mathbf S_{i,t}(\mathbb T^*,z)\mathbf S_{i,t+2\bop}(\mathbb T^*,z)\}=0$ (we omit $(q)$ and $(z)$ for notation simplicity):
    \begin{align*}
        \{X_{1,p+1}^d,X_{2,p+1}^d,\ldots,X_{N,p+1}^d,X_{1,p+2}^d,X_{2,p+2}^d,\ldots,X_{N,p+2}^d,\ldots,X_{N,T}^d\},\\
        \{X_{1,p+1}^s,X_{2,p+1}^s,\ldots,X_{N,p+1}^s,X_{1,p+2}^s,X_{2,p+2}^s,\ldots,X_{N,p+2}^s,\ldots,X_{N,T}^s\}.
    \end{align*}
    We then set $\phi=2\bop N$ in this framework. If we consider multi-center experiments, we can consider the sequence of each center first, and then connect them together. Then the whole sequence is $2\bop N_{\max}$-dependent, where $N_{\max}$ is the maximum number of units in all centers. Without loss of generality, we can let $\phi=2\bop N_{\max}$, as in single-center experiments, $N_{\max}=N$.
    Let $C_{e_1,e_0}^d(q)=\mathrm{var}\{\sum_{(i,t)=e_0}^{(i,t)=e_0+e_1-1}X_{i,t}^d(q)\}$, where $(i,t)=e$ means that $X_{i,t}^d$ is the $e$-th element in the sequence. Similarly, we can define $C_{e_1,e_0}^s(z)=\mathrm{var}\{\sum_{(i,t)=e_0}^{(i,t)=e_0+e_1-1}X_{i,t}^s(z)\}$. Then we have
    \begin{align*}
        C_{e_1,e_0}^d(q)&=\frac 1{N(T-p)}\mathrm{var}\left\{\sum_{(i,t)=e_0}^{(i,t)=e_0+e_1-1}\mathbf D_{i,t}(\mathbb T,q)\right\}\leq \frac 1{N(T-p)}(e_1M_1^d+2e_1\phi M_2^d)=O\left(\frac{e_1\phi}{NT}\right),\\
        C_{e_1,e_0}^s(z)&=\frac 1{N(T-p)}\mathrm{var}\left\{\sum_{(i,t)=e_0}^{(i,t)=e_0+e_1-1}\mathbf S_{i,t}(\mathbb T,z)\right\}\leq \frac 1{N(T-p)}(e_1M_1^s+2e_1\phi M_2^s)=O\left(\frac{e_1\phi}{NT}\right),
    \end{align*}
    where $M_1^d$, $M_2^d$, $M_1^s$, and $M_2^s$ are constants associated with the upper bound of variances and covariances.
    We further have $C^d(q)=C_{N(T-p),1}^d(q)=N(T-p)\mathrm{var}\{\taudhat\}=O(\phi)$ and $C^s(z)=C_{N(T-p),1}^s(z)=N(T-p)\mathrm{var}\{\taushat\}=O(\phi)$. Next, we will check the five conditions in Lemma~\ref{lemma.CLT} with $\gamma=0$. 

    (1) There exists $\Delta$ such that $E|X_{i,t}^d(q)|^{2+\delta}\leq \Delta$ and $E|X_{i,t}^s(z)|^{2+\delta}\leq \Delta$ for all $i$ and $t$, because all the potential outcomes are bounded. In the inequality, $\Delta=O\{(NT)^{-1-\delta/2}\}$.

    (2) There exist $K^d(q)$ and $K^s(z)$ such that $C_{e_1,e_0}^d(q)/e_1\leq K^d(q)$ and $C_{e_1,e_0}^s(z)/e_1\leq K^s(z)$, for all $e_0$ and $e_1\geq \phi$. In the inequality, $K^d(q)=O\{\phi/(NT)\}$ and $K^s(z)=O\{\phi/(NT)\}$.

    (3) There exists $L^d(q)$ and $L^s(z)$ such that $C^d(q)/\{N(T-p)\}\geq L^d(q)$ and $C^s(z)/\{N(T-p)\}\geq L^s(z)$. In the inequality, $L^d(q)=O\{\phi/(NT)\}$ and $L^s(z)=O\{\phi/(NT)\}$.

    (4) We set $\bar \phi=8\bop N^{\alpha}T^{\beta}N_{\max}>\phi$. Then the conditions are satisfied when $\phi/\bar \phi\rightarrow 0$, which is implied by $N^{-\alpha}T^{-\beta}\rightarrow 0$. 

    (5) This condition is satisfied when $N^{\alpha-1}T^{\beta-1}N_{\max}\rightarrow 0$.

    The conditions required in (4) and (5) are satisfied under Assumption~\ref{assumption.NT}. Therefore, we conclude that the asymptotic normality holds.
    
\end{proof}


\section{Proof of Theorem~\ref{theorem.misspecified}}
\label{sec:sm.theorem.misspecified}

In this section, we set $m$ as the order of the carryover effects, and $p$ is the experimenter's knowledge of $m$. When $p\geq m$, the estimation and inference methods will still hold, for $\yitmis{q}{z}{1}{p}=\yitmis{q}{z}{1}{m}$. However, when $p<m$, the estimator will no longer be unbiased and the effects are not well-defined. Let $\mathbf Q^{\mathrm{obs}}_{t_1:t_2}$ denote the observed treated probability and $\mathbf Z^{\mathrm{obs}}_{i,t_1:t_2}$ denote the observed treatment path for unit $i$ during the time period $[t_1,t_2]$. The $m$-misspecified lag-$p$ direct effect and spillover effect can be defined as 
\begin{align*}
    &\taudorder[m]=\frac 1N\sum_{i=1}^N\frac 1{T-p}\\
    &E\bigg[\sum_{t=p+1}^m\left\{Y_{i,t}(\mathbf Q_{1:t-p-1}^{\mathrm{obs}},q\mathbf 1_{p+1},\mathbf Z_{i,1:t-p-1}^{\mathrm{obs}},\mathbf 1_{p+1})-Y_{i,t}(\mathbf Q_{1:t-p-1}^{\mathrm{obs}},q\mathbf 1_{p+1},\mathbf Z_{i,1:t-p-1}^{\mathrm{obs}},\mathbf 0_{p+1}) \right\}\\
    &+\sum_{t=m+1}^T\Big\{Y_{i,t}(\mathbf Q_{t-m:t-p-1}^{\mathrm{obs}},q\mathbf 1_{p+1},\mathbf Z_{i,t-m:t-p-1}^{\mathrm{obs}},\mathbf 1_{p+1})\\
    &\qquad \qquad \quad -Y_{i,t}(\mathbf Q_{t-m:t-p-1}^{\mathrm{obs}},q\mathbf 1_{p+1},\mathbf Z_{i,t-m:t-p-1}^{\mathrm{obs}},\mathbf 0_{p+1}) \Big\}\bigg],\\
    &\tausorder[m]=\frac 1N\sum_{i=1}^N\frac 1{T-p}\\
    &E\bigg[\sum_{t=p+1}^m\left\{Y_{i,t}(\mathbf Q_{1:t-p-1}^{\mathrm{obs}},q_1\mathbf 1_{p+1},\mathbf Z_{i,1:t-p-1}^{\mathrm{obs}},z\mathbf 1_{p+1})-Y_{i,t}(\mathbf Q_{1:t-p-1}^{\mathrm{obs}},q_2\mathbf 1_{p+1},\mathbf Z_{i,1:t-p-1}^{\mathrm{obs}},z\mathbf 1_{p+1}) \right\}\\
    &+\sum_{t=m+1}^T\Big\{Y_{i,t}(\mathbf Q_{t-m:t-p-1}^{\mathrm{obs}},q_1\mathbf 1_{p+1},\mathbf Z_{i,t-m:t-p-1}^{\mathrm{obs}},z\mathbf 1_{p+1})\\
    &\qquad \qquad \quad -Y_{i,t}(\mathbf Q_{t-m:t-p-1}^{\mathrm{obs}},q_2\mathbf 1_{p+1},\mathbf Z_{i,t-m:t-p-1}^{\mathrm{obs}},z\mathbf 1_{p+1}) \Big\}\bigg].
\end{align*}

\begin{lemma}
    \label{lemma.misspecified}
    Under Assumptions~\ref{assumption.nonanticipativity}--\ref{assumption.bounded}, $\R[q_1]=\R[q_2]=0.5$, and the minimax optimal design $\mathbb T^*$, for $q=q_1,q_2$ and $z=0,1$, we have (1) for $p<m$, $t\geq m+1$ and $\mathcal F_{\mathbb T^*}(t-p)\leq t-m$, 
    \begin{align*}
        E\left\{Y_{i,t}\frac{\Iitmis{q}{}{1}{p}}{\Pitmis{q}{}{1}{p}}-Y_{i,t}\frac{\Iitmis{q}{}{0}{p}}{\Iitmis{q}{}{0}{p}}\right\}&=\yitmis{q}{}{1}{m}-\yitmis{q}{}{0}{m},\\
        E\left\{Y_{i,t}\frac{\Iitmis{q_1}{z}{1}{p}}{\Pitmis{q_1}{z}{1}{p}}-Y_{i,t}\frac{\Iitmis{q_2}{z}{1}{p}}{\Pitmis{q_2}{z}{1}{p}}\right\}&=\yitmis{q_1}{z}{1}{m}-\yitmis{q_2}{z}{1}{m};
    \end{align*}
    
    (2) for $p<m$, $t\geq m+1$, and $\mathcal F_{\mathbb T^*}(t-p)> t-m$,
    \begin{align*}
        E&\bigg[\left\{Y_{i,t}\frac{\Iitmis{q}{}{1}{p}}{\Pitmis{q}{}{1}{p}}-Y_{i,t}\frac{\Iitmis{q}{}{0}{p}}{\Iitmis{q}{}{0}{p}}\right\} \mid \mathbf Q^{\mathrm{obs}}_{t-m:\mathcal F_{\mathbb T^*}(t-p)-1},
        \mathbf Z^{\mathrm{obs}}_{i,t-m:\mathcal F_{\mathbb T^*}(t-p)-1}\bigg]\\
        =&Y_{i,t}(\mathbf Q^{\mathrm{obs}}_{t-m:\mathcal F_{\mathbb T^*}(t-p)-1},q\mathbf 1_{t-\mathcal F_{\mathbb T(t-p)+1}},\mathbf Z^{\mathrm{obs}}_{i,t-m:\mathcal F_{\mathbb T^*}(t-p)-1},\mathbf 1_{t-\mathcal F_{\mathbb T(t-p)+1}})\\
        & -Y_{i,t}(\mathbf Q^{\mathrm{obs}}_{t-m:\mathcal F_{\mathbb T^*}(t-p)-1},q\mathbf 1_{t-\mathcal F_{\mathbb T(t-p)+1}},\mathbf Z^{\mathrm{obs}}_{i,t-m:\mathcal F_{\mathbb T^*}(t-p)-1},\mathbf 0_{t-\mathcal F_{\mathbb T(t-p)+1}}),\\
        E&\bigg[\left\{Y_{i,t}\frac{\Iitmis{q_1}{z}{1}{p}}{\Pitmis{q_1}{z}{1}{p}}-Y_{i,t}\frac{\Iitmis{q_2}{z}{1}{p}}{\Pitmis{q_2}{z}{1}{p}}\right\} \mid \mathbf Q^{\mathrm{obs}}_{t-m:\mathcal F_{\mathbb T^*}(t-p)-1},\mathbf Z^{\mathrm{obs}}_{i,t-m:\mathcal F_{\mathbb T^*}(t-p)-1}\bigg]\\
        =&Y_{i,t}(\mathbf Q^{\mathrm{obs}}_{t-m:\mathcal F_{\mathbb T^*}(t-p)-1},q_1\mathbf 1_{t-\mathcal F_{\mathbb T(t-p)+1}},\mathbf Z^{\mathrm{obs}}_{i,t-m:\mathcal F_{\mathbb T^*}(t-p)-1},z\mathbf 1_{t-\mathcal F_{\mathbb T(t-p)+1}})\\
        & -Y_{i,t}(\mathbf Q^{\mathrm{obs}}_{t-m:\mathcal F_{\mathbb T^*}(t-p)-1},q_2\mathbf 1^{t-\mathcal F_{\mathbb T(t-p)+1}},\mathbf Z^{\mathrm{obs}}_{i,t-m:\mathcal F_{\mathbb T^*}(t-p)-1},z\mathbf 1_{t-\mathcal F_{\mathbb T(t-p)+1}});
    \end{align*}
    
    (3) for $p<m$, $p+1\leq t\leq m$, and $\mathcal F_{\mathbb T^*}(t-p)=1$,
    \begin{align*}
        E\left\{Y_{i,t}\frac{\Iitmis{q}{}{1}{p}}{\Pitmis{q}{}{1}{p}}-Y_{i,t}\frac{\Iitmis{q}{}{0}{p}}{\Pitmis{q}{}{0}{p}}\right\}&=\yitmis{q}{}{1}{t}-\yitmis{q}{}{0}{t},\\
        E\left\{Y_{i,t}\frac{\Iitmis{q_1}{z}{1}{p}}{\Pitmis{q_1}{z}{1}{p}}-Y_{i,t}\frac{\Iitmis{q_2}{z}{1}{p}}{\Pitmis{q_2}{z}{1}{p}}\right\}&=\yitmis{q_1}{z}{1}{t}-\yitmis{q_2}{z}{1}{t};
    \end{align*}
    
    (4) for $p<m$, $p+1\leq t\leq m$, and $\mathcal F_{\mathbb T^*}(t-p)>1$,
    \begin{align*}
        E&\bigg[\left\{Y_{i,t}\frac{\Iitmis{q}{}{1}{p}}{\Pitmis{q}{}{1}{p}}-Y_{i,t}\frac{\Iitmis{q}{}{0}{p}}{\Iitmis{q}{}{0}{p}}\right\} \mid \mathbf Q^{\mathrm{obs}}_{1:\mathcal F_{\mathbb T^*}(t-p)-1},
        \mathbf Z^{\mathrm{obs}}_{i,1:\mathcal F_{\mathbb T^*}(t-p)-1}\bigg]\\
        =&Y_{i,t}(\mathbf Q^{\mathrm{obs}}_{1:\mathcal F_{\mathbb T^*}(t-p)-1},q\mathbf 1_{t-\mathcal F_{\mathbb T(t-p)+1}},\mathbf Z^{\mathrm{obs}}_{i,1:\mathcal F_{\mathbb T^*}(t-p)-1},\mathbf 1_{t-\mathcal F_{\mathbb T(t-p)+1}})\\
        &\quad -Y_{i,t}(\mathbf Q^{\mathrm{obs}}_{1:\mathcal F_{\mathbb T^*}(t-p)-1},q\mathbf 1_{t-\mathcal F_{\mathbb T(t-p)+1}},\mathbf Z^{\mathrm{obs}}_{i,1:\mathcal F_{\mathbb T^*}(t-p)-1},\mathbf 0_{t-\mathcal F_{\mathbb T(t-p)+1}}),\\
        E&\bigg[\left\{Y_{i,t}\frac{\Iitmis{q_1}{z}{1}{p}}{\Pitmis{q_1}{z}{1}{p}}-Y_{i,t}\frac{\Iitmis{q_2}{z}{1}{p}}{\Pitmis{q_2}{z}{1}{p}}\right\} \mid \mathbf Q^{\mathrm{obs}}_{1:\mathcal F_{\mathbb T^*}(t-p)-1},\mathbf Z^{\mathrm{obs}}_{i,1:\mathcal F_{\mathbb T^*}(t-p)-1}\bigg]\\
        =&Y_{i,t}(\mathbf Q^{\mathrm{obs}}_{1:\mathcal F_{\mathbb T^*}(t-p)-1},q_1\mathbf 1_{t-\mathcal F_{\mathbb T(t-p)+1}},\mathbf Z^{\mathrm{obs}}_{i,1:\mathcal F_{\mathbb T^*}(t-p)-1},z\mathbf 1_{t-\mathcal F_{\mathbb T(t-p)+1}})\\
        &\quad -Y_{i,t}(\mathbf Q^{\mathrm{obs}}_{1:\mathcal F_{\mathbb T^*}(t-p)-1},q_2\mathbf 1^{t-\mathcal F_{\mathbb T(t-p)+1}},\mathbf Z^{\mathrm{obs}}_{i,1:\mathcal F_{\mathbb T^*}(t-p)-1},z\mathbf 1_{t-\mathcal F_{\mathbb T(t-p)+1}}).
    \end{align*}
\end{lemma}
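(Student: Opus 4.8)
The plan is to reduce each of the four identities in \Cref{lemma.misspecified} to the ordinary Horvitz--Thompson unbiasedness computation, applied after conditioning away the portion of the length-$(m+1)$ relevant window that the indicator event does not pin down. Fix a unit $i$, a time $t$, and consider a single summand such as $Y_{i,t}\,\Iitmis{q}{}{1}{p}/\Pitmis{q}{}{1}{p}$. Since the true order is $m$, the observed outcome $Y_{i,t}$ is a function of the realized treated-probability path and treatment path over $[t-m,t]$ (or over $[1,t]$ when $t\le m$, by \Cref{assumption.nonanticipativity}), whereas the indicator $\Iitmis{q}{}{1}{p}$ only constrains the last $p+1$ coordinates to equal $(q\mathbf 1,\mathbf 1)$. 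The first step is therefore to determine, using the block structure of $\mathbb T^*$, exactly which coordinates of the window are forced to $(q,1)$ on the event $\{\Iitmis{q}{}{1}{p}=1\}$ and which remain free.

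The key structural observation is that under $\mathbb T^*$ the treated probability and treatment status are constant on each decision interval, and the decision point governing time $t-p$ is $\mathcal F_{\mathbb T^*}(t-p)$. If $\mathcal F_{\mathbb T^*}(t-p)\le t-m$, then by the definition of $\mathcal F_{\mathbb T^*}$ there is no decision point in $(t-m,t-p]$, so the entire block $[t-m,t-p]$ is assigned the same value as time $t-p$; on the event $\{\Iitmis{q}{}{1}{p}=1\}$ this common value is $(q,1)$, and hence every coordinate of $[t-m,t]$ equals $(q,1)$. Consequently $Y_{i,t}=\yitmis{q}{}{1}{m}$ is deterministic on this event. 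If instead $\mathcal F_{\mathbb T^*}(t-p)>t-m$, the coordinates $[t-m,\mathcal F_{\mathbb T^*}(t-p)-1]$ are governed by strictly earlier decision points than $\mathcal F_{\mathbb T^*}(t-p)$ and so are left free by the indicator, while the suffix $[\mathcal F_{\mathbb T^*}(t-p),t]$, of length $t-\mathcal F_{\mathbb T^*}(t-p)+1$, is forced to $(q,1)$. The same dichotomy, with $t-m$ replaced by the experiment's start at time $1$, yields cases (3) and (4) for $p+1\le t\le m$.

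With this decomposition, the Horvitz--Thompson computation is routine. In the full-coverage cases (1) and (3), because $Y_{i,t}$ equals the deterministic potential outcome $\yitmis{q}{}{1}{m}$ (respectively $\yitmis{q}{}{1}{t}$) on $\{\Iitmis{q}{}{1}{p}=1\}$, we may factor it out of the expectation and use $E\{\Iitmis{q}{}{1}{p}\}=\Pitmis{q}{}{1}{p}$ to obtain $E\{Y_{i,t}\Iitmis{q}{}{1}{p}/\Pitmis{q}{}{1}{p}\}=\yitmis{q}{}{1}{m}$; subtracting the analogous control term gives the stated identities. In the partial-coverage cases (2) and (4), we condition on the observed prefix $(\mathbf Q^{\mathrm{obs}},\mathbf Z^{\mathrm{obs}})$ over $[t-m,\mathcal F_{\mathbb T^*}(t-p)-1]$; since the decision points controlling this prefix randomize independently of those controlling the suffix, the indicator $\Iitmis{q}{}{1}{p}$ is independent of the prefix, and the same factorization applied conditionally yields the conditional-expectation formulas, with the forced suffix of length $t-\mathcal F_{\mathbb T^*}(t-p)+1$ appearing in the argument of the potential outcomes.

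I expect the main obstacle to be the careful bookkeeping of the decision-interval structure: one must verify rigorously that $\mathcal F_{\mathbb T^*}(t-p)\le t-m$ is equivalent to the absence of a decision point in $(t-m,t-p]$, correctly identify the length $t-\mathcal F_{\mathbb T^*}(t-p)+1$ of the forced suffix, and confirm that the probability normalization $\Pitmis{q}{}{1}{p}$ factorizes over exactly the $J_t$ decision points intersecting $[t-p,t]$ so that the conditional and unconditional HT weights match. The boundary regime $p+1\le t\le m$, where the window truncates at time $1$ and $\mathcal F_{\mathbb T^*}(t-p)$ may coincide with the first decision point, also requires separate but parallel verification; once the forcing and independence structure is pinned down, however, each of the four expectation identities follows from the same one-line unbiasedness argument.
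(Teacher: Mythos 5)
Your proposal is correct and follows essentially the same route as the paper's proof: in the full-coverage cases the condition $\mathcal F_{\mathbb T^*}(t-p)\leq t-m$ forces the entire length-$(m+1)$ window to $(q\mathbf 1,z\mathbf 1)$ on the indicator event (the paper states this as $\mathcal F_{\mathbb T^*}^p(t)=\mathcal F_{\mathbb T^*}^m(t)$), so $Y_{i,t}$ is deterministic there and the identity reduces to $E\{I\}=\mathrm{pr}$; in the partial-coverage cases one conditions on the free prefix, whose governing decision points are independent of those constrained by the indicator, and applies the same factorization conditionally. The bookkeeping issues you flag are real but resolve exactly as you anticipate, so no gap remains.
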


\begin{proof}[Proof of Lemma~\ref{lemma.misspecified}]
    When $t\geq m+1$, the inequality $\mathcal F_{\mathbb T^*}(t-p)\leq t-m$ indicates that the set of decision points of time $t$ remains unchanged between $p$ and $m$, i.e, $\mathcal F_{\mathbb T^*}^p(t)=\mathcal F_{\mathbb T^*}^m(t)$.
    With probability $\Pitmis{q}{z}{1}{p}\neq 0$, $\Iitmis{q}{z}{1}{p}=1$, and then $Y_{i,t}=\yitmis{q}{z}{1}{m}$. Then we obtain
    \begin{align*}
        E\left\{Y_{i,t}\frac{\Iitmis{q}{z}{1}{p}}{\Pitmis{q}{z}{1}{p}}\right\}=\yitmis{q}{z}{1}{m}.
    \end{align*}

    When $\mathcal F_{\mathbb T^*}(t-p)>t-m$, $\mathcal F_{\mathbb T^*}^m(t)$ differs from $\mathcal F_{\mathbb T^*}^p(t)$. With conditional probability
    $\mathrm{pr}\{\mathbf Q_{t-p:t}=q\mathbf 1_{p+1},\mathbf Z_{i,t-p:t}=z\mathbf 1_{p+1}\mid \mathbf Q_{t-m:\mathcal F_{\mathbb T^*}(t-p)-1}^{\mathrm{obs}},\mathbf Z_{i,t-m:\mathcal F_{\mathbb T^*}(t-p)-1}^{\mathrm{obs}}\}\neq 0$ and conditional on $\mathbf Q_{t-m:\mathcal F_{\mathbb T^*}(t-p)-1}^{\mathrm{obs}}$ and $\mathbf Z_{i,t-m:\mathcal F_{\mathbb T^*}(t-p)-1}^{\mathrm{obs}}$, 
    $$
    Y_{i,t}=Y_{i,t}(\mathbf Q^{\mathrm{obs}}_{t-m:\mathcal F_{\mathbb T^*}(t-p)-1},q\mathbf 1_{t-\mathcal F_{\mathbb T(t-p)+1}},\mathbf Z^{\mathrm{obs}}_{i,t-m:\mathcal F_{\mathbb T^*}(t-p)-1},z\mathbf 1_{t-\mathcal F_{\mathbb T(t-p)+1}}).
    $$
    Then we obtain
    \begin{align*}
        E&\bigg[Y_{i,t}\frac{\Iitmis{q}{z}{1}{p}}{\Pitmis{q}{z}{1}{p}}-Y_{i,t}(\mathbf Q^{\mathrm{obs}}_{t-m:\mathcal F_{\mathbb T^*}(t-p)-1},q\mathbf 1_{t-\mathcal F_{\mathbb T(t-p)+1}},\mathbf Z^{\mathrm{obs}}_{i,t-m:\mathcal F_{\mathbb T^*}(t-p)-1},z\mathbf 1^{t-\mathcal F_{\mathbb T(t-p)+1}}) \mid \\
        &\quad \mathbf Q^{\mathrm{obs}}_{t-m:\mathcal F_{\mathbb T^*}(t-p)-1},\mathbf Z^{\mathrm{obs}}_{i,t-m:\mathcal F_{\mathbb T^*}(t-p)-1}\bigg]=0.
    \end{align*}
    The results for $p+1\leq t\leq m$ can be proved similarly.
\end{proof}

By \Cref{lemma.misspecified}, the expectations of the Horvitz--Thompson estimators are $\taudorder[m]$ and $\tausorder[m]$. Without additional information, these estimators exhibit a non-negligible bias when compared to $\taud$ and $\taus$. The expressions for the variances are more complex, making it challenging to derive conservative variance estimators from the observed data. However, despite the bias present in the estimators, they remain asymptotically normal, as demonstrated in \Cref{theorem.misspecified}. Next, we will prove \Cref{theorem.misspecified}.

\begin{proof}[Proof of Theorem~\ref{theorem.misspecified}]
    We only consider $p<m$, as the proof for the case of $p>m$ is similar. In this case, with the same trick as in the proof of \Cref{theorem.CLT}, we have
    \begin{align*}
        \taudhat-E\{\taudhat\}&=\frac 1{N(T-p)}\sum_{t=p+1}^T\sum_{i=1}^N\mathbf D_{i,t}^{(m)}(q),\\
        \taushat-E\{\taushat\}&=\frac 1{N(T-p)}\sum_{t=p+1}^T\sum_{i=1}^N\mathbf S_{i,t}^{(m)}(z),
    \end{align*}
    where 
    \begin{align*}
        \mathbf D_{i,t}^{(m)}(q)&=\left\{Y_{i,t}\frac{\Iit{q}{}{1}}{\Pit{q}{}{1}}-Y_{i,t}\frac{\Iit{q}{}{0}}{\Pit{q}{}{0}}\right\}-E\left\{Y_{i,t}\frac{\Iit{q}{}{1}}{\Pit{q}{}{1}}-Y_{i,t}\frac{\Iit{q}{}{0}}{\Pit{q}{}{0}}\right\},\\
        \mathbf S_{i,t}^{(m)}(q)&=\left\{Y_{i,t}\frac{\Iit{q_1}{z}{1}}{\Pit{q_1}{z}{1}}-Y_{i,t}\frac{\Iit{q_1}{z}{1}}{\Pit{q_1}{z}{1}}\right\}-E\left\{Y_{i,t}\frac{\Iit{q_1}{z}{1}}{\Pit{q_1}{z}{1}}-Y_{i,t}\frac{\Iit{q_2}{z}{1}}{\Pit{q_2}{z}{1}}\right\}.
    \end{align*}
    Similar to the proof of Theorem~\ref{theorem.CLT}, let $r_n=N(T-p)$, $\tilde X^{d}_{i,t}(q)=\{N(T-p)\}^{-1/2}\mathbf D_{i,t}^{(m)}(q)$ and $\tilde X^{s}_{i,t}(z)=\{N(T-p)\}^{-1/2}\mathbf S_{i,t}^{(m)}(z)$, then, $\{\tilde X_{1,1}^{d},\tilde X_{1,2}^{d},\ldots\}$ and $\{\tilde X_{1,1}^{s},\tilde X_{1,2}^{s},\ldots\}$ are sequences of $\tilde \phi$-independent random variables, respectively, where $\tilde \phi=(\lceil m/p\rceil+1)\bop N_{\max}$. Let $\tilde C_{e_1,e_0}^{d}(q)=\mathrm{var} \{ \sum_{(i,t)=e_0}^{(i,t)=e_0+e_1-1}\tilde X_{i,t}^{d}(q) \}$ and $\tilde C_{e_1,e_0}^{s}(z)= \mathrm{var} \{ \sum_{(i,t)=e_0}^{(i,t)=e_0+e_1-1}\tilde X_{n,i}^{s}(z) \}$, we have
    \begin{align*}
        \tilde C_{e_1,e_0}^{d}(q)&=\frac 1{N(T-p)}\mathrm{var}\left\{\sum_{(i,t)=e_0}^{(i,t)=e_0+e_1-1}\tilde X_{i,t}^{d}(q)\right\}
        \leq \frac 1{N(T-p)}(e_1\tilde M_1^d+2e_1\tilde \phi \tilde M_2^d)=O\left(\frac{e_1\tilde \phi}{NT}\right),\\
        \tilde C_{e_1,e_0}^{s}(z)&=\frac 1{N(T-p)}\mathrm{var}\left\{\sum_{(i,t)=e_0}^{(i,t)=e_0+e_1-1}\tilde X_{i,t}^{s}(z)\right\}\leq \frac 1{N(T-p)}(e_1\tilde M_1^s+2e_1\tilde \phi \tilde M_2^s)=O\left(\frac{e_1\tilde \phi}{NT}\right),
    \end{align*}
    where $\tilde M_1^d$, $\tilde M_2^d$, $\tilde M_1^s$ and $\tilde M_2^s$ are constants associated with the upper bound of variances and covariances. We further have $\tilde C^{d}(q)=\tilde C_{N(T-p),1}^{d}(q)=O(\tilde \phi)$ and $\tilde C^{s}(z)=\tilde C_{N(T-p),1}^{s}(z)=O(\tilde \phi)$.
    We will check the five conditions in Lemma~\ref{lemma.CLT} with $\gamma=0$.

    (1) There exists $\tilde \Delta$ such that $E|\tilde X_{i,t}^{d}(q)|^{2+\delta}\leq \tilde \Delta$ and $E|\tilde X_{i,t}^{s}(z)|^{2+\delta}\leq \tilde \Delta$ for all $i$ and $t$, because all the potential outcomes are bounded. In the inequality, $\tilde \Delta=O\{(NT)^{-1-\delta/2}\}$.

    (2) There exist $\tilde K^d(q)$ and $\tilde K^s(z)$ such that $\tilde C_{e_1,e_0}^{d}(q)/e_1\leq \tilde K^{d}(q)$ and $\tilde C_{e_1,e_0}^{s}(z)/e_1\leq \tilde K^{s}(z)$, for all $e_0$ and $e_1\geq \tilde \phi$. In the inequality, $\tilde K^{d}(q)=O\{\tilde \phi/(NT)\}$ and $\tilde K^{s}(z)=O\{\tilde \phi/(NT)\}$.

    (3) There exist $\tilde L^d(q)$ and $\tilde L^s(z)$ such that $\tilde C^{d}(q)/\{N(T-p)\}\geq \tilde L^{d}(q)$ and $\tilde C^{s}(z)/\{N(T-p)\}\geq \tilde L^{s}(z)$. In the inequality, $\tilde L^{d}(q)=O\{\tilde \phi/(NT)\}$ and $\tilde L^{s}(z)=O\{\tilde \phi/(NT)\}$.

    (4) We set $\tilde {\bar \phi}=4(\lceil m/p\rceil+1)\bop N^{\alpha}T^{\beta}N_{\max}>\tilde \phi$. Then, this condition is satisfied when $\tilde \phi/\tilde {\bar \phi}\rightarrow 0$, which is implied by $N^{-\alpha}T^{-\beta}\rightarrow 0$. 

    (5) This condition is satisfied when $N^{\alpha-1}T^{\beta-1}N_{\max}\rightarrow 0$.

    Therefore, the conclusion holds.
\end{proof}

\section{Additional simulation results}
\label{sec:sm.sim}

\Cref{table.single.sm}, \Cref{fig.mis.sm}, and \Cref{table.multi.sm} present the simulation results used to evaluate the asymptotic properties of the estimators under another typical optimal design, $\mathbb T^*_2$, as discussed in \Cref{corollary.minimaxdesign}. The conclusions drawn are consistent with those under $\mathbb T^*_1$. Additionally, \Cref{fig.compare.sm} and \Cref{fig.order.sm} show results similar to \Cref{fig.compare} and \Cref{fig.order} in the main text, respectively. 

\begin{table}[h]
    \centering
    \caption{Simulation results in single-center randomized experiments under $\mathbb T^*_2$}\label{table.single.sm}
    \begin{threeparttable}
        \begin{tabular}{ccccccccccc}
            \hline
            $N$ & $T$ & $p$ & Estimand & Value & Bias & $\mathrm{var}(\hat \tau)$ & $\widehat{\mathrm{var}}^U(\hat \tau)$ & CP\\
            \hline
            \multirow{4}{*}{10} & \multirow{4}{*}{610} & \multirow{4}{*}{2} & $\taud[q_1]$ & 6 & 0.01 & 2.33 & 2.43 & 0.943\\
            & & & $\taud[q_2]$ & 3 & 0.02 & 1.02 & 1.03 & 0.941\\
            & & & $\taus[1]$ & 6 & 0.00 & 7.96 & 9.01 & 0.966\\
            & & & $\taus[0]$ & 3 & 0.01 & 3.28 & 3.50 & 0.959\\
            \hline
            \multirow{4}{*}{10} & \multirow{4}{*}{610} & \multirow{4}{*}{3} & $\taud[q_1]$ & 6 & -0.08 & 3.57 & 3.58 & 0.938\\
            & & & $\taud[q_2]$ & 3 & 0.04 & 1.49 & 1.53 & 0.959\\
            & & & $\taus[1]$ & 6 & -0.08 & 12.01 & 13.16 & 0.954\\
            & & & $\taus[0]$ & 3 & 0.03 & 4.92 & 5.16 & 0.965\\
            \hline
            \multirow{4}{*}{20} & \multirow{4}{*}{610} & \multirow{4}{*}{2} & $\taud[q_1]$ & 6 & -0.01 & 1.50 & 1.51 & 0.948\\
            & & & $\taud[q_2]$ & 3 & 0.02 & 0.56 & 0.58 & 0.939\\
            & & & $\taus[1]$ & 6 & -0.01 & 7.36 & 8.40 & 0.959\\
            & & & $\taus[0]$ & 3 & 0.02 & 2.85 & 3.17 & 0.963\\
            \hline
            \multirow{4}{*}{20} & \multirow{4}{*}{610} & \multirow{4}{*}{3} & $\taud[q_1]$ & 6 & 0.00 & 2.15 & 2.21 & 0.945\\
            & & & $\taud[q_2]$ & 3 & 0.03 & 0.90 & 0.86 & 0.937\\
            & & & $\taus[1]$ & 6 & -0.07 & 11.15 & 12.26 & 0.961\\
            & & & $\taus[0]$ & 3 & -0.04 & 4.28 & 4.61 & 0.954\\
            \hline
            \multirow{4}{*}{20} & \multirow{4}{*}{1090} & \multirow{4}{*}{2} & $\taud[q_1]$ & 6 & 0.03 & 0.93 & 0.91 & 0.946\\
            & & & $\taud[q_2]$ & 3 & -0.01 & 0.37 & 0.35 & 0.934\\
            & & & $\taus[1]$ & 6 & 0.10 & 5.01 & 5.17 & 0.949\\
            & & & $\taus[0]$ & 3 & 0.06 & 2.00 & 2.06 & 0.949\\
            \hline
            \multirow{4}{*}{20} & \multirow{4}{*}{1090} & \multirow{4}{*}{3} & $\taud[q_1]$ & 6 & 0.02 & 1.41 & 1.32 & 0.938\\
            & & & $\taud[q_2]$ & 3 & 0.01 & 0.53 & 0.54 & 0.946\\
            & & & $\taus[1]$ & 6 & 0.02 & 6.82 & 7.56 & 0.962\\
            & & & $\taus[0]$ & 3 & 0.01 & 2.75 & 3.02 & 0.966\\
            \hline
        \end{tabular}
        \begin{tablenotes}
        \footnotesize
        \item[] Note: Value, true value; CP, coverage probability.
        \end{tablenotes}
    \end{threeparttable}
\end{table}

\begin{figure}[h]
    \centering
    
    \begin{subfigure}[b]{0.4\linewidth}
        \caption{Q-Q plot of $\taudhat[q_1]$}
    \includegraphics[width=\linewidth]{fig/fig_mis_1.pdf}
    \end{subfigure}
    \quad
    \begin{subfigure}[b]{0.4\linewidth}
        \caption{Q-Q plot of $\taudhat[q_2]$}
    \includegraphics[width=\linewidth]{fig/fig_mis_2.pdf}
    \end{subfigure}
    
    \medskip 
    
    \begin{subfigure}[b]{0.4\linewidth}
        \caption{Q-Q plot of $\taushat[1]$}
    \includegraphics[width=\linewidth]{fig/fig_mis_3.pdf}
    \end{subfigure}
    \quad
    \begin{subfigure}[b]{0.4\linewidth}
        \caption{Q-Q plot of $\taushat[0]$}
    \includegraphics[width=\linewidth]{fig/fig_mis_4.pdf}
    \end{subfigure}
    
    \caption{Q-Q plots of the estimators under $\mathbb T^*_2$ when $p=1$, $N=10$ and $T=480$. }\label{fig.mis.sm}
\end{figure}

\begin{table}[h]
    \centering
    \caption{Simulation results in multi-center randomized experiments under $\mathbb T^*_2$}\label{table.multi.sm}
    \begin{threeparttable}
        \begin{tabular}{ccccccccccc}
            \hline
            $N$ & $T$ & $p$ & Estimand & Value & Bias & $\mathrm{var}(\hat \tau)$ & $\widehat{\mathrm{var}}^U(\hat \tau)$ & CP\\
            \hline
            \multirow{4}{*}{$48\times 5$} & \multirow{4}{*}{130} & \multirow{4}{*}{2} & $\taud[q_1]$ & 6 & -0.01 & 0.33 & 0.32 & 0.954\\
            & & & $\taud[q_2]$ & 3 & -0.01 & 0.13 & 0.13 & 0.946\\
            & & & $\taus[1]$ & 6 & -0.04 & 0.76 & 0.75 & 0.951\\
            & & & $\taus[0]$ & 3 & -0.04 & 0.25 & 0.26 & 0.937\\
            \hline
            \multirow{4}{*}{$48\times 5$} & \multirow{4}{*}{130} & \multirow{4}{*}{3} & $\taud[q_1]$ & 6 & 0.01 & 0.50 & 0.48 & 0.944\\
            & & & $\taud[q_2]$ & 3 & -0.01 & 0.20 & 0.19 & 0.948\\
            & & & $\taus[1]$ & 6 & 0.01 & 1.16 & 1.12 & 0.946\\
            & & & $\taus[0]$ & 3 & -0.02 & 0.38 & 0.39 & 0.953\\
            \hline
            \multirow{4}{*}{$48\times 5$} & \multirow{4}{*}{490} & \multirow{4}{*}{2} & $\taud[q_1]$ & 6 & 0.00 & 0.10 & 0.11 & 0.958\\
            & & & $\taud[q_2]$ & 3 & 0.00 & 0.05 & 0.05 & 0.947\\
            & & & $\taus[1]$ & 6 & -0.01 & 0.25 & 0.25 & 0.955\\
            & & & $\taus[0]$ & 3 & -0.01 & 0.10 & 0.10 & 0.947\\
            \hline
            \multirow{4}{*}{$48\times 5$} & \multirow{4}{*}{490} & \multirow{4}{*}{3} & $\taud[q_1]$ & 6 & 0.01 & 0.16 & 0.16 & 0.954\\
            & & & $\taud[q_2]$ & 3 & 0.00 & 0.07 & 0.07 & 0.956\\
            & & & $\taus[1]$ & 6 & -0.01 & 0.35 & 0.37 & 0.965\\
            & & & $\taus[0]$ & 3 & -0.02 & 0.15 & 0.15 & 0.951\\
            \hline
        \end{tabular}
        \begin{tablenotes}
        \footnotesize
        \item[] Note: Value, true value; CP, coverage probability.
        \end{tablenotes}
    \end{threeparttable}
\end{table}

\begin{figure}[h]
    \centering
    
    \begin{subfigure}[b]{0.45\linewidth}
        \caption{$N=10$ and $T$ varies (single-center)}
    \includegraphics[width=\linewidth]{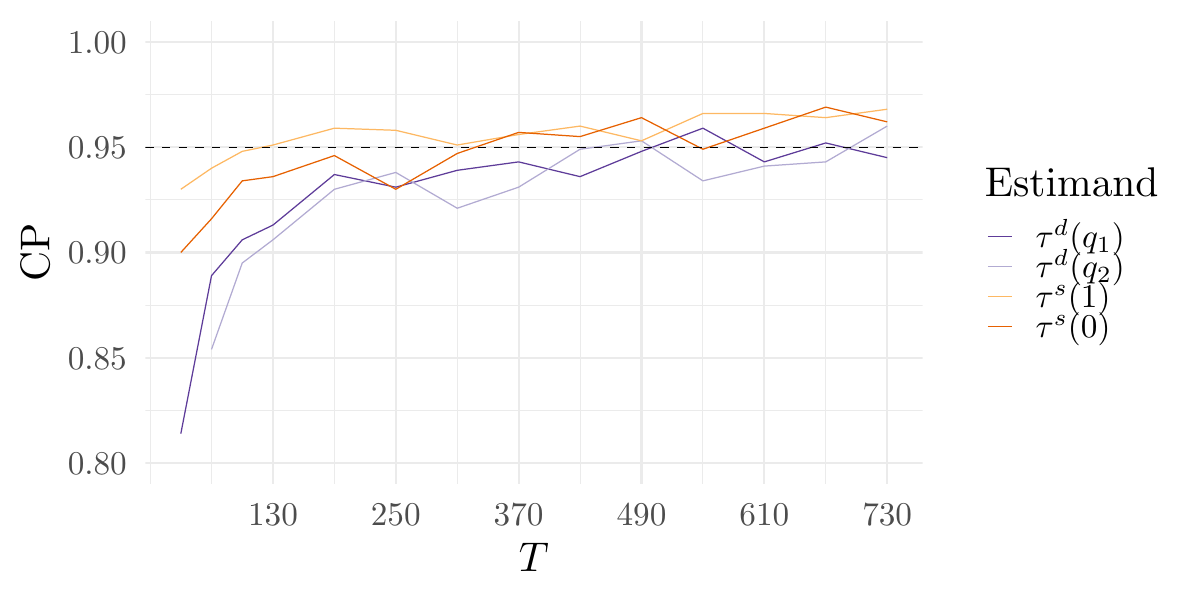}
    \end{subfigure}
    \quad 
    \begin{subfigure}[b]{0.45\linewidth}
        \caption{$T=490$ and $N$ varies (single-center)}
    \includegraphics[width=\linewidth]{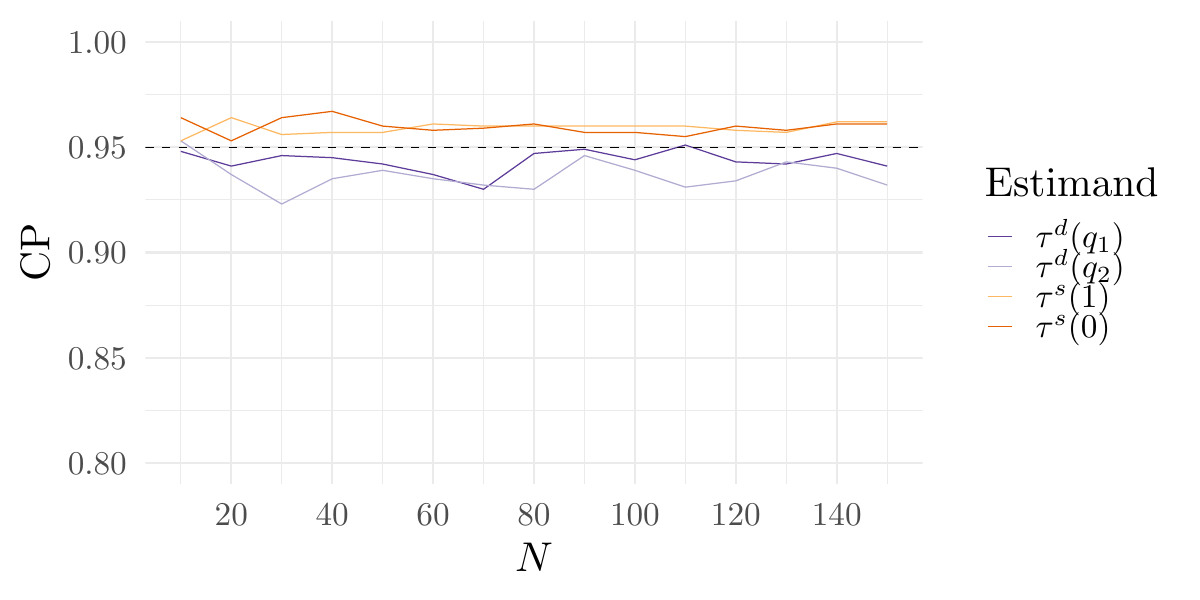}
    \end{subfigure}

    \medskip

    \begin{subfigure}[b]{0.45\linewidth}
        \caption{$N_{[g]}=5$ and $T$ varies (multi-center)}
    \includegraphics[width=\linewidth]{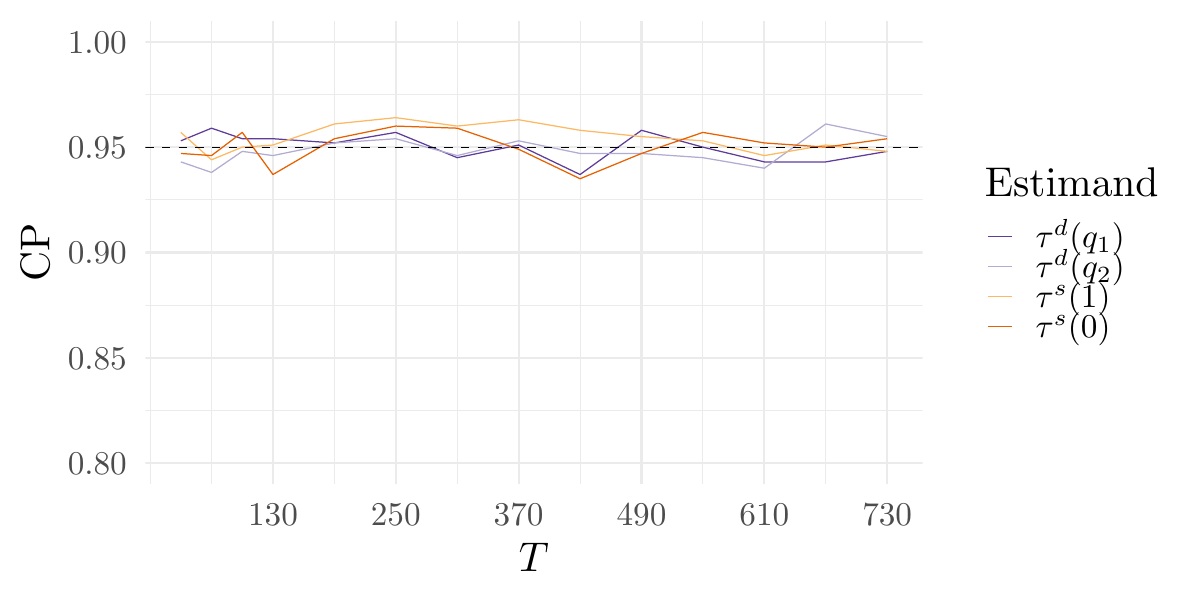}
    \end{subfigure}
    \quad
    \begin{subfigure}[b]{0.45\linewidth}
        \caption{$T=370$ and $N$ varies (multi-center)}
    \includegraphics[width=\linewidth]{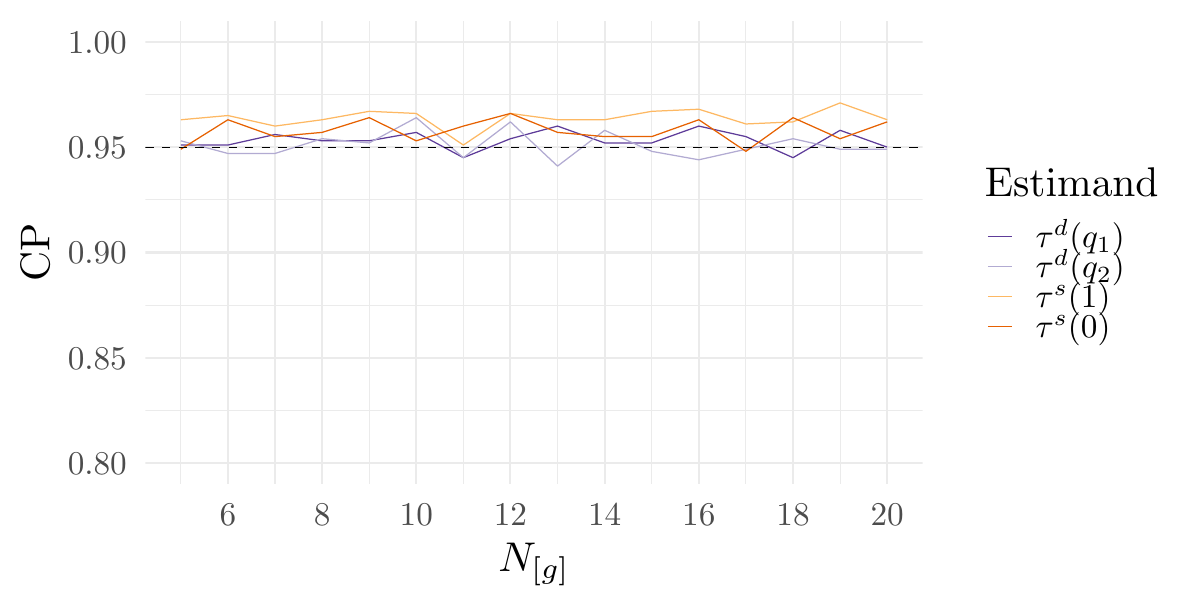}
    \end{subfigure}
    
    \caption{CP under $\mathbb T^*_2$ with different population size when $p=2$. }\label{fig.compare.sm}
\end{figure}

\begin{figure}[h]
    \centering
    
    \begin{subfigure}[b]{0.45\linewidth}
        \caption{$N=480$ and $T$ varies (single-center)}
    \includegraphics[width=\linewidth]{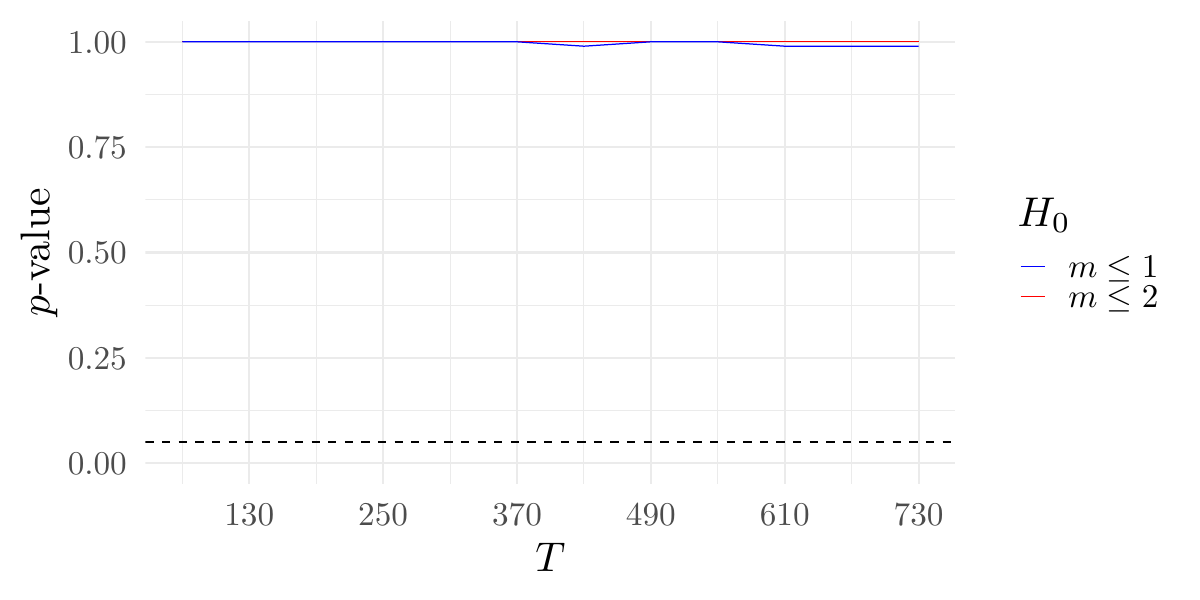}
    \end{subfigure}
    \quad 
    \begin{subfigure}[b]{0.45\linewidth}
        \caption{$T=490$ and $N$ varies (single-center)}
    \includegraphics[width=\linewidth]{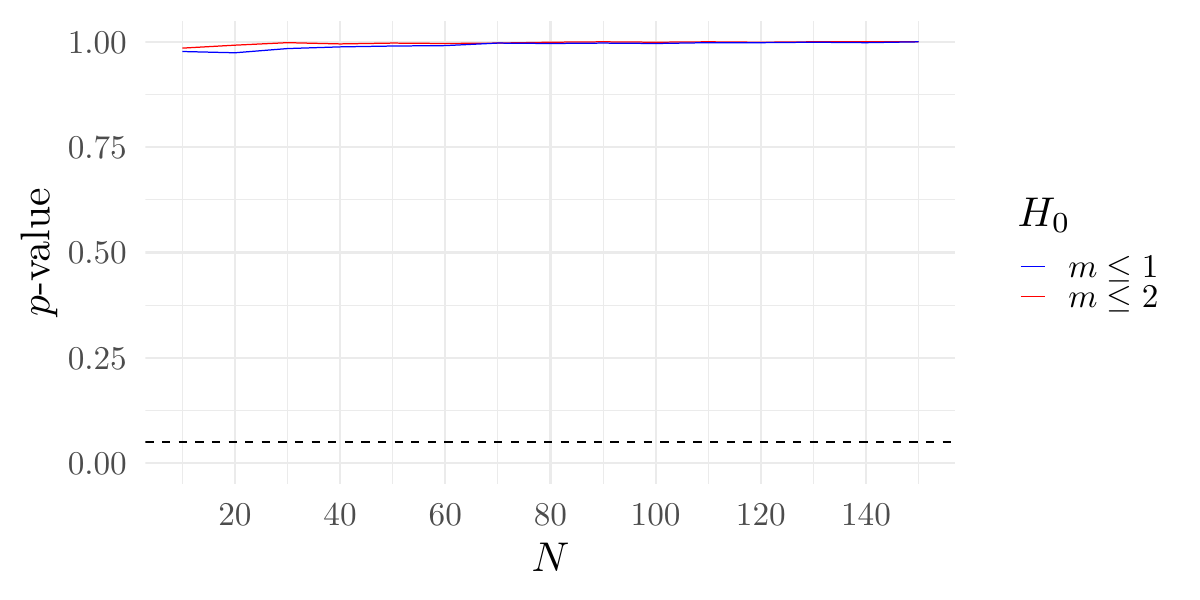}
    \end{subfigure}
    
    \medskip 

    \begin{subfigure}[b]{0.45\linewidth}
        \caption{$N_{[g]}=10$ and $T$ varies (multi-center)}
    \includegraphics[width=\linewidth]{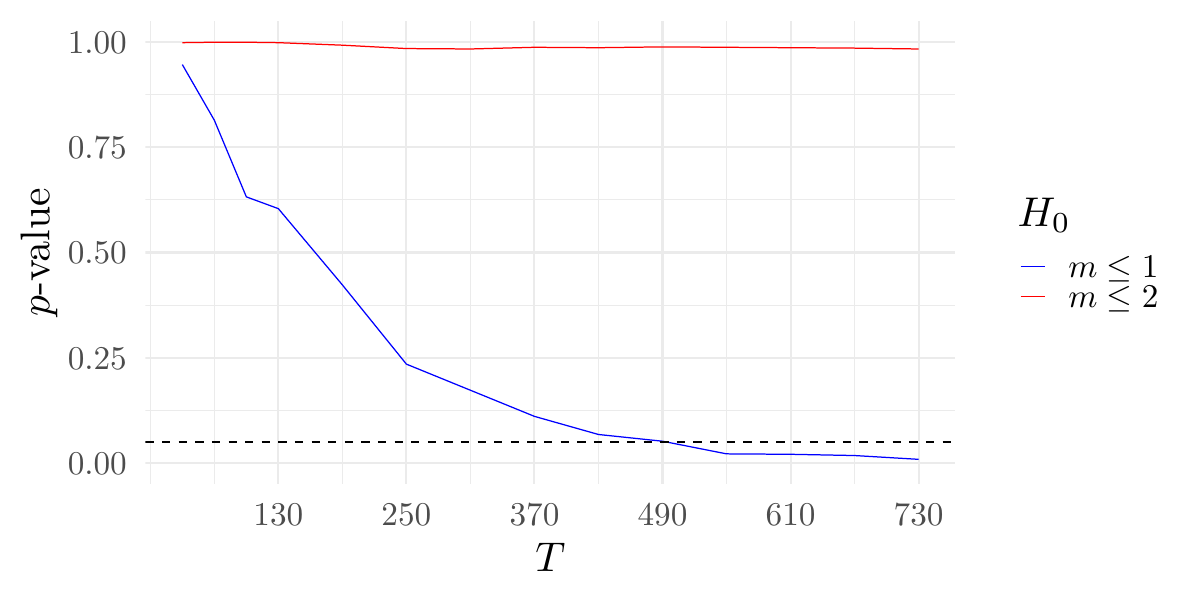}
    \end{subfigure}
    \quad 
    \begin{subfigure}[b]{0.45\linewidth}
        \caption{$T=490$ and $N$ varies (multi-center)}
    \includegraphics[width=\linewidth]{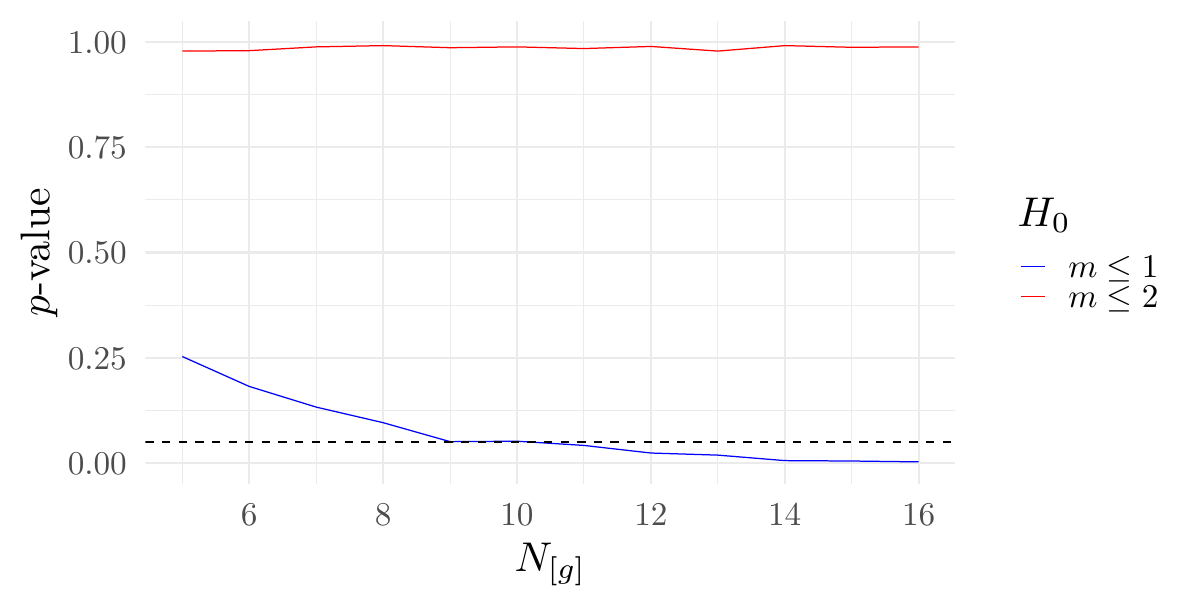}
    \end{subfigure}
    
    \caption{$p$-value from the Wald test for order identification under $\mathbb T^*_2$. }\label{fig.order.sm}
\end{figure}

\clearpage



\end{document}